
\documentclass[11pt,reqno]{amsart}
\usepackage{etex}
\usepackage{amsmath}
\usepackage{amssymb}
\usepackage{amsfonts}
\usepackage{amsthm}
\usepackage{a4wide}
\usepackage{bbm}
\usepackage{bm}
\usepackage{graphics}
\usepackage{color}
\usepackage[sans]{dsfont}
\usepackage[toc,page]{appendix}

\usepackage{hyperref}
\usepackage{protosem}
\usepackage{linearA}
\usepackage{pgfkeys}
\usepackage{longtable}
\usepackage{pdflscape}
\usepackage{float}
\usepackage{cancel}
\usepackage{dsfont}
\usepackage{setspace}
\usepackage{array}
\usepackage{appendix}
\usepackage{amssymb}
\usepackage{epstopdf}
 \usepackage{multirow}
\usepackage{epsfig}
\usepackage{lscape}
\usepackage{yhmath}
\usepackage[english]{babel}
\usepackage[latin1]{inputenc}
\usepackage[T1]{fontenc}
\usepackage{wasysym}
\usepackage{tikz,pgflibraryshapes}
\usepackage{pstricks,pst-plot,pstricks-add}
\usepackage{enumerate}
\usetikzlibrary{arrows}
\usetikzlibrary{snakes}
\usepackage{textcomp}
\usepackage{mathrsfs}

\newtheorem{theorem}{Theorem}[section]
\newtheorem{lemma}[theorem]{Lemma}

\newtheorem{definition}[theorem]{Definition}

\newtheorem*{remark}{Remarks}

\numberwithin{equation}{section}

\def\XXint#1#2#3{{\setbox0=\hbox{$#1{#2#3}{\int}$}
     \vcenter{\hbox{$#2#3$}}\kern-.5\wd0}}

% taille du papier et marges 
%\geometry{papersize={21.6cm,27.9cm}} 
%\geometry{papersize={21.6cm,29.7cm}} 

%\geometry{hmargin=2cm,bottom=2.5cm,top=2cm} % taille du papier et marges 

%\renewcommand{\eqref}[1]{Eq.~(\ref{#1})}%schreibt Referenzen zu Gleichungen so wie ich m\"{o}chte
%schreibt Referenzen zu Figures so wie ich m\"{o}chte
%\setlength{\parskip}{0.2cm} \setlength{\parindent}{0pt}%macht den h\"{a}sslichen Einzug bei Abs\"{a}tzen weg
%\pagestyle{headings} % f\"{u}gt das Kapitel zur Kopfzeile hinzu

%\renewcommand{\baselinestretch}{1.9} % 1.1facher Zeilenabstand

%\setlength{\oddsidemargin}{0.cm}
%\setlength{\evensidemargin}{-0.3cm} \setlength{\textwidth}{17.4cm}
%\setlength{\textheight}{22.7cm} \setlength{\topmargin}{-0.8cm}
%\setlength{\headheight}{0cm} \setlength{\headsep}{1.5cm}

% Different font in captions
\newcommand{\captionfonts}{\footnotesize}

%%%%%%%%%%%%%%%%%%%%%%%%%%%
\makeatletter  % Allow the use of @ in command names
\long\def\@makecaption#1#2{%
  \vskip\abovecaptionskip
  \sbox\@tempboxa{{\captionfonts #1: #2}}%
  \ifdim \wd\@tempboxa >\hsize
    {\captionfonts #1: #2\par}
  \else
    \hbox to\hsize{\hfil\box\@tempboxa\hfil}%
  \fi
  \vskip\belowcaptionskip}
\makeatother   % Cancel the effect of \makeatletter
%%%%%%%%%%%%%%%%%%%%%%%%%%%% makes nicer captions

%\makeatletter
%\let\original@addcontentsline\addcontentsline
%\newcommand{\dummy@addcontentsline}[3]{}
%\newcommand{\DeactivateToc}{\let\addcontentsline\dummy@addcontentsline}
%\newcommand{\ActivateToc}{\let\addcontentsline\original@addcontentsline}
%\makeatother

%Redefining the label used in captions.

\title[Analyticity of the self-energy]{Analyticity of the  self-energy in total momentum of an atom coupled to the quantized  radiation field}
 
\begin{document}

\author[J. Faupin]{J{\'e}r{\'e}my Faupin}
\address[J. Faupin]{Institut Elie Cartan de Lorraine, Universit\'e de Lorraine,\\
57045 Metz Cedex 1, France}
\email{jeremy.faupin@univ-lorraine.fr}
\author[J. Fr\"ohlich]{J\"urg Fr\"ohlich}
\address[J. Fr{\"o}hlich]{Institut f{\"u}r Theoretische Physik, ETH H{\"o}nggerberg, CH-8093 Z{\"u}rich, Switzerland}
\email{juerg@phys.ethz.ch}
\author[B. Schubnel]{Baptiste Schubnel}
\address[B. Schubnel]{Departement Mathematik, ETH Z{\"u}rich, CH-8092 Z{\"u}rich, Switzerland}
\email{baptiste.schubnel@math.ethz.ch}

\date \today

\maketitle

\begin{abstract}
 We study a neutral atom with a non-vanishing electric dipole moment coupled to the quantized electromagnetic field. For a sufficiently small dipole moment and small momentum, $\vec{p}$, the one-particle (self-) energy of an atom is proven to be a real-analytic function of its momentum. The main ingredient of our proof is a suitable form of the Feshbach-Schur spectral renormalization group.
\end{abstract}

%\tableofcontents
%\newpage
\section{Introduction}
\label{intro}

\subsection{Description of the model and statement of the main result}

In this paper, we consider models of a neutral atom with a non-zero electric dipole moment coupled to the quantized electromagnetic field.  We are interested in analyzing properties of the self-energy -- or dispersion law -- as a function of the  momentum $\vec{p}$ of the atom. When the coupling of the atom to the electromagnetic field is turned off, its dispersion law is taken to be $\vec{p}^2/2m$, where $m$ is its  bare mass. Our purpose is to study the radiative corrections to this law (self-energy) when the atom is coupled to the electromagnetic field. We will prove real analyticity of the self-energy in the momentum $\vec{p}$, for $\vert \vec{p} \vert<m$, provided the dipole moment of the atom is sufficiently small; (the speed of light and Planck's constant are set to $1$, throughout this paper). Our result has interesting applications to the study of Compton scattering of atoms and of the effective dynamics of an atom when it moves through an external potential. These matters, as well as the study of atomic resonances for atoms of finite total mass,  are left  for the subject of a forthcoming work.

\subsubsection{The Hamiltonian }
We study the simplest Hamiltonian  describing a freely moving neutral atom coupled to the electromagnetic field via an electric dipole moment. We consider the atom as a two-level system coupled to the electromagnetic field via  an interaction Hamiltonian 
\begin{equation}
H_{I}:=- \vec{d} \cdot \vec{E},  
\end{equation}
where $\vec{d}$ is the dipole moment of the atom and $\vec{E}$ the quantized electric field. For a two-level system, $\vec{d}$ can be expanded in the basis $(\sigma_x, \sigma_y,\sigma_z)$ of Pauli matrices. The quantization of the electromagnetic field in the Coulomb gauge is accomplished  within the usual   second quantization formalism. Readers  not familiar with it  are encouraged to consult \cite{Bratteli} and the references given therein.

The Hilbert space of the system is the tensor product
\begin{equation*}
\mathcal{H}=\mathcal{H}_{at} \otimes \mathcal{H}_f,
\end{equation*}
where  
\begin{equation*}
\mathcal{H}_{at}:=L^{2}(\mathbb{R}^{3}) \otimes \mathbb{C}^2  \qquad   \text{and}  \qquad \mathcal{H}_f:= \mathcal{F}_{+}(L^{2}( \underline{\mathbb{R}}^3 ))
\end{equation*}
are   the atomic and the field Hilbert spaces, with $ \mathcal{F}_{+}(L^{2}( \underline{\mathbb{R}}^3 ))$ the symmetric Fock space over  $L^{2}( \underline{\mathbb{R}}^3 )$. Here and in what follows, we use  the shorthand
\begin{equation}\label{notation_R_souligne}
\underline{\mathbb{R}}^3:= \mathbb{R}^3\times\{1,2\} = \left\{\underline{k} := (\vec{k},\lambda)\in\mathbb{R}^3\times\{1,2\} \right\},
\end{equation}
where $\lambda$ is the polarization index of the field. To shorten  notations, we also set $\underline{\mathbb{R}}^{3n}:=(\underline{\mathbb{R}}^3)^n$, and, for $A\subset\mathbb{R}^3$,   
\begin{equation}
\label{short2}
\underline{A}:=A\times\{1,2\}, \qquad \int_{\underline{A}} d \underline{k}:=\sum_{\lambda=1,2}\int_{A} d \vec{k}.
\end{equation}
Notations (\ref{notation_R_souligne}) and (\ref{short2}) are used  throughout  this paper.

 The  dynamics of the system  is given by the Hamiltonian
 \begin{equation}
 \label{H}
 H=H_{at}+H_{f} + \lambda_0 H_{I},
\end{equation}
where
\begin{equation}
\label{H1}
H_{at}:=-\frac{\Delta}{2m} \otimes \mathds{1}  \otimes \mathds{1} +    \mathds{1}  \otimes  \left( \begin{array}{cc} \omega_0 &0\\0&0 \end{array}   \right)\otimes \mathds{1}
\end{equation}
is the free atomic Hamiltonian,
\begin{equation}
\label{H2}
H_{f}= \mathds{1} \otimes \mathds{1} \otimes \int_{ \underline{\mathbb{R}}^3 } \vert \vec{k} \vert \text{ } a^{*}( \underline{k} ) a ( \underline{k} ) d \underline{k}
\end{equation}
is the Hamiltonian of the free electromagnetic field, and
\begin{equation}
\label{H3}
H_{I}=i  \int_{\underline{B}_{1} } \vert \vec{k} \vert^{\frac12} \left(  e^{i \vec{ k} \cdot \vec{x}}  \otimes   \vec{\epsilon} (\underline{k} ) \cdot   \vec{\sigma} \otimes  a ( \underline{k} ) - h.c. \right) d \underline{k}
\end{equation}
is the interaction Hamiltonian;  $\lambda_0 \ge 0$, in \eqref{H}, is the coupling constant,  $\Delta$, in \eqref{H1}, is the Laplace operator,  $m>0$ and $\omega_0>0$, in  \eqref{H1}, are the mass of the atom and the  energy of the  internal excited state of  the atom,   respectively, and
\begin{equation}   
a (\underline{k}):=a _{\lambda}(\vec{k}), \qquad a ^*(\underline{k}):=a^{*}_{\lambda}(\vec{k}),  \qquad \vec{\epsilon}(\underline{k}):=\vec{\epsilon}_{\lambda}(\vec{k}),
\end{equation}
are the annihilation and creation operators on $\mathcal{H}_f$  and the polarization vectors of the electromagnetic field in the Coulomb gauge. Furthermore, we denoted by $B_{1}$ in \eqref{H3} the closed ball in $\mathbb{R}^3$ centered at $0$ with radius $1$, by  $\vec{\sigma}$  the vector  of Pauli matrices and by $\vec{x}$ the position operator of the atom.

\begin{remark}
$\quad$ 
\begin{itemize}
\item  The Hamiltonian $H$ in \eqref{H} can be derived from  the  Hamiltonian of a localized neutral system of charges interacting with the quantized electromagnetic field,  under the assumption that the typical size   of the system of charges is very small in comparison to the typical wavelength   of the  radiation field. This approximation is called the ``dipole'' approximation. We refer the reader to \cite{photons_atoms} for  details concerning the unitary transformation (the G\"oppert-Mayer transformation) that brings the Hamiltonian of the system of charges in interaction with the electromagnetic field  to an electric dipole Hamiltonian similar to the one in \eqref{H}.

\item Standard estimates show that $H_I$ is $H_{f}^{1/2}$ relatively bounded. Therefore, by Kato's Theorem, $H$ is self-adjoint  on the domain of $H_{at}+H_f$. We also notice that the interaction term $H_I$ in \eqref{H3} behaves well in the infrared and that  our model has no infrared  catastrophe. This feature simplifies  our analysis, as compared to the analysis of charged particles.

\item To ease notations, the ultraviolet cut-off imposed in the interaction Hamiltonian $H_I$ has been chosen to be equal to $1$.  We point out that  our  result and our proofs  hold for an arbitrary ultraviolet cut-off.

\item The   Hamiltonian $H$  has the very  important property to be translation invariant. Indeed,  an easy calculation shows  that each component of  the total momentum operator
\begin{equation}
\vec{P}_{tot}:=-i \vec{\nabla}_{x} \otimes \mathds{1} \otimes \mathds{1} + \mathds{1} \otimes \mathds{1} \otimes \int_{ \underline{ \mathbb{R} }^3 } \vec{k} \text{ }a^{*}  (\underline{k}) a  (\underline{k}) d \underline{k} ,
\end{equation}
commutes with  $H$.
\end{itemize}
\end{remark}

The  property of translation invariance implies the existence of a unitary  map  $U:\mathcal{H} \rightarrow \tilde{\mathcal{H}}$, where 
\begin{equation}
\label{dec}
\tilde{\mathcal{H}} := \int_{\mathbb{R}^{3}}^{\oplus} \mathcal{H}_{\vec{p}} \text{ } d \vec{p}. 
\end{equation}
More precisely, $U$ is the generalized Fourier transform, defined, for any $\psi \in \mathcal{H}$ decaying sufficiently rapidly at infinity, by,  
\begin{equation}
\label{UU}
(U \psi) ( \vec{p}) = \frac{1}{(2 \pi)^{3/2}}\int_{\mathbb{R}^{3}}  e^{-i(\vec{p} -\vec{P}_f) \cdot \vec{y}} \psi(\vec{y}) d \vec{y},
\end{equation} 
where $$\vec{P}_f := \int_{ \underline{ \mathbb{R} }^3 } \vec{k} \text{ }  
a^*( \underline{k} ) a( \underline{k} ) d \underline{k}.$$ 
The Hilbert spaces $ \mathcal{H}_{\vec{p}}$ are isomorphic to $ \mathbb{C}^{2 } \otimes \mathcal{F}_{+}( L^{2} ( \underline{ \mathbb{R} }^3 ) )$. We rewrite the Hamiltonian $H$ in the representation (\ref{dec}). We introduce the notations 
\begin{equation}
b  (\underline{k}):= U e^{   i \vec{k} \cdot \vec{x}} a (\underline{k}) U^{-1} , \qquad \qquad b^{*} (\underline{k}):= U e^{ - i \vec{k} \cdot \vec{x}} a^{*} (\underline{k}) U^{-1}.
\end{equation}
 The operator-valued distributions $b (\underline{k})$, $b^{*} (\underline{k}')$  satisfy the canonical commutation relations $[b (\underline{k}),b^{*} (\underline{k}')]= \delta_{\lambda,\lambda'} \delta(\vec{k}-\vec{k}')$ and  $[b^{\sharp} (\underline{k}),b^{\sharp} (\underline{k}')] =0$, where $b^{\sharp} $ stands for $b$ or $b^{*}$. Furthermore,
\begin{equation*}
\left(\left(U  \frac{-\Delta}{2m} U^{-1} \right) \psi \right)(\vec{p})=\frac{\left(\vec{p}-\vec{P}_f \right)^2}{2m} \psi(\vec{p}).
\end{equation*}
It follows that the Hamiltonian $H$ can be decomposed as a direct integral of  fiber Hamiltonians,  $H(\vec{p})$, with 
\begin{align}
H(\vec{p}) =& \frac{\left(\vec{p}-\vec{P}_f \right)^2}{2m} +  \omega_0 \left(\begin{array}{cc} 1 & 0 \\ 0& 0 \end{array} \right) \notag \\
& + i \lambda_0 \int_{ \underline{B}_{1} } \vert \vec{k} \vert^{\frac12} \vec{\epsilon} (\underline{k}) \cdot \vec{\sigma}     \left( b (\underline{k})  - b^{*} (\underline{k}) \right) d \underline{k} +  \int_{ \underline{ \mathbb{R} }^3 }  \vert \vec{k}  \vert b ^{*}(\underline{k}) b (\underline{k}) d \underline{k}. \label{Hp}
\end{align}
Here and in what follows, we suppressed the tensor product notation,  $\mathds{1} \otimes \dots$. Using standard estimates and the Kato-Rellich theorem, one can prove that the fiber Hamiltonians $H(\vec{p})$  are self-adjoint on the dense domain $D:=D(H_f+\vec{P}_f^2)$ of $\mathcal{H}_{ \vec{p} }$,  and  bounded from below.

The  question to know whether
\begin{equation}
E(\vec{p}):= \inf\left( \sigma(H(\vec{p}) )\right)
\end{equation}
is an eigenvalue of $H(\vec{p})$ and whether it has some regularity property  with respect to the total momentum $\vec{p}$ has been already addressed in the literature for different models \cite{AbHa12_01,AmGrGu06_01,BaChFrSi07_01,Ch08_01,ChFr07_01,ChFrPi09_01,DyPi13_01,FrPi10_01,HaHe08_01,LoMiSp07_01,Pi03_01,ChFrPi10_01,FrGrSc04_01,Mo05_01}. In particular, for the Nelson model, it has been shown in \cite{Froehlich}   that $E(\vec{p})$ is a non-degenerate eigenvalue if and only if   an infrared regularization is imposed. In \cite{DyPi13_01,Pi03_01}, using iterative perturbation theory and a multiscale analysis, it has been proven that $\vec{p} \mapsto E( \vec{p} )$ is twice differentiable near $\vec{0}$. Without imposing any infrared regularization to the  form factor in the Nelson model, \cite{AbHa12_01} has established using in particular a cluster expansion that $E( \vec{p} )$ is a real analytic function of $\vec{p}$ and  the coupling constant. For charged particles in non-relativistic QED, similar results have been obtained in \cite{BaChFrSi07_01,Ch08_01,ChFr07_01,ChFrPi09_01,FrPi10_01,HaHe08_01}: It is known that $E( \vec{p} )$ is an eigenvalue of $H( \vec{p} )$ if and only if $\vec{p} = \vec{0}$ \cite{Ch08_01,ChFr07_01,HaHe08_01} (unless and infrared regularization is imposed) and that $\vec{p} \mapsto E( \vec{p} )$ is twice differentiable near $\vec{0}$. The latter property has been  proven using an application of the spectral renormalization group (see \cite{Ch08_01}) and by iterative perturbation theory \cite{ChFrPi09_01,FrPi10_01}. For results concerning models describing moving atoms or ions coupled to the electromagnetic fields, we refer to \cite{AmGrGu06_01,FrGrSc07_01,HaHe08_01,LoMiSp07_01,faupin2008}

For the Hamiltonian model (\ref{H}) studied in this paper, we can prove:

\begin{theorem}
\label{main1} Let $0 < \nu < m$. There exists  a constant $\lambda_c ( \nu )>0$ such that, for any coupling constant $\lambda_0 \ge 0$ satisfying $ \lambda_0 < \lambda_c( \nu )$, the map $\vec{p} \mapsto E(\vec{p})$ and its associated eigenprojection $\vec{p} \mapsto \Pi( \vec{p})$ are  real analytic on $B_\nu = \{ \vec{p} \in \mathbb{R}^3 , | \vec{p} | < \nu \}$. %Moreover, there exists an analytic eigenstate $\psi(\vec{p})$  on $D[\vec{p}^*]$.
\end{theorem}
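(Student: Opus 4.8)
The plan is to use that a function on $B_\nu$ is real analytic precisely when it extends holomorphically to a complex neighbourhood: I would complexify the total momentum, replacing $\vec p$ by $\vec z\in\C^3$, and show that $E$ and $\Pi$ extend to holomorphic functions of $\vec z$ near $B_\nu$, the stated real analyticity then following by restriction. Ordinary analytic perturbation theory is unavailable, because the photon dispersion $|\vec k|$ is gapless, so $E(\vec p)$ sits at the bottom edge of the continuous spectrum of $H(\vec p)$ rather than at an isolated point. The instrument that copes with this, and simultaneously with the non-self-adjointness introduced by complex $\vec z$, is the Feshbach--Schur map of Bach--Fr\"ohlich--Sigal together with its iteration, the spectral renormalization group.

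First I would fix $\vec p_0\in B_\nu$, choose a small polydisc $\mathcal O\subset\C^3$ around $\vec p_0$ with $\sup_{\vec z\in\mathcal O}|\vec z|<m$, and define $H(\vec z)$ by formula \eqref{Hp} with $\vec p$ replaced by $\vec z$. The standard $H_f^{1/2}$-relative bounds, uniform on $\mathcal O$, make $\{H(\vec z)\}_{\vec z\in\mathcal O}$ a holomorphic family of type (A) on the domain $D=D(H_f+\vec P_f^2)$. For $\lambda_0=0$ the spectrum of $H(\vec z)$ is the simple eigenvalue $\vec z^{\,2}/2m$, with eigenvector $\Omega\otimes\binom{0}{1}$, together with translated copies of $[0,\infty)$ from the $n$-photon sectors and the excited atomic level (gap $\omega_0$); the hypothesis $|\vec z|<m$ is exactly what forces $(\vec z-\vec P_f)^2/2m+|\vec k|>\vec z^{\,2}/2m$ in the one-photon sector, i.e.\ what places $\vec z^{\,2}/2m$ at the edge of, not inside, the spectrum. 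This is the seed for the renormalization group.

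Next I would run the renormalization group. Apply the Feshbach--Schur map with respect to the projection $P_\rho:=P_{\mathrm{at},0}\otimes\mathds{1}_{[0,\rho)}(H_f)$ ($0<\rho<1$) onto the atomic ground state and low photon energies (in practice a smooth cutoff is used); the inequality above gives a uniform bound for $\big((\vec z-\vec P_f)^2/2m+H_f-\zeta\big)^{-1}$ on $\operatorname{Ran}(\mathds{1}-P_\rho)$ with $\zeta$ near $\vec p_0^{\,2}/2m$, hence a well-defined Feshbach pair for $\lambda_0$ small. Wick-ordering and rescaling photon momenta by $\rho$ (energies by $\rho^{-1}$) turn $\mathcal F_{P_\rho}(H(\vec z)-\zeta)$ into an effective Hamiltonian $\mathcal R_\rho[H(\vec z)-\zeta]=\sum_{m,n}W_{m,n}$ inside a polydisc $\mathcal B(\alpha,\beta,\gamma)$ of the Bach--Fr\"ohlich--Sigal type; each kernel $w_{m,n}(\vec z,\zeta;\cdot)$ is jointly analytic in $(\vec z,\zeta)$, being assembled from the analytic interaction and the analytic resolvent of the complement. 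Iterating $\mathcal R_\rho$, for $\lambda_0<\lambda_c(\nu)$ one shows that the off-diagonal kernels $w_{m,n}$, $(m,n)\neq(0,0)$, contract to zero geometrically, while the diagonal kernel is driven to the trivial fixed point $r\mapsto r$ modulo an additive constant, from whose flow the spectral parameter is extracted. Concretely, $\zeta=E(\vec z)$ is the unique value near $\vec p_0^{\,2}/2m$ for which $\{\mathcal R_\rho^{\,N}[H(\vec z)-\zeta]\}_{N\ge0}$ remains in the polydisc, equivalently for which an explicit sequence of analytic functions $e^{(N)}(\vec z,\zeta)$ converges to $0$; an analytic implicit-function argument --- legitimate because $\partial_\zeta e^{(N)}$ is controlled and invertible --- yields that $\vec z\mapsto E(\vec z)$ is holomorphic on $\mathcal O$ and identifies it as a genuine eigenvalue of $H(\vec z)$. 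Its eigenvector is recovered from that of the effective operator by the Feshbach formula $Q_{P_\rho}\big(H(\vec z)-E(\vec z)\big)\psi_{\mathrm{eff}}$, so that the Riesz projection $\Pi(\vec z)=\tfrac{1}{2\pi\ii}\oint\big(H(\vec z)-\zeta\big)^{-1}\d\zeta$ around $E(\vec z)$ is holomorphic on $\mathcal O$ as well. Finally, letting $\vec p_0$ range over $B_\nu$ and restricting to real $\vec z=\vec p$ completes the proof.

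The hard part will be the momentum-dependent cross term $-\vec z\cdot\vec P_f/m$, which has no analogue in the analysis at $\vec p=\vec 0$: it is not a function of $H_f$, it persists in $w_{0,0}^{(N)}$, and under rescaling it is marginal. One must show that the effective velocity $\partial_r w_{0,0}^{(N)}(0)$ does not drift toward the critical modulus $1$ at which the Feshbach resolvent $\big(w_{0,0}^{(N)}(H_f)+|\vec k|-\zeta\big)^{-1}$ loses its uniform bound; this is precisely where $\nu<m$ enters essentially (initially $|\partial_r w_{0,0}^{(0)}(0)|\lesssim|\vec p_0|/m<1$), and keeping this flow under control \emph{while} propagating the joint $(\vec z,\zeta)$-analyticity with uniform bounds is the technical core of the argument. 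The infrared side is benign by contrast: the form factor $|\vec k|^{1/2}$ vanishes at $\vec k=\vec 0$, so there is no infrared renormalization to carry out, which is what makes this model more tractable than non-relativistic QED with charged particles.
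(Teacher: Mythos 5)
Your proposal follows essentially the same route as the paper: complexify the total momentum, apply the smooth Feshbach--Schur map relative to the atomic ground state and a low-energy photon cutoff, iterate the renormalization transformation on a Banach space of Wick kernels depending jointly on $(H_f,\vec P_f)$ while propagating joint $(\vec z,\zeta)$-analyticity, read off the eigenvalue and eigenvector from the contraction of the flow, and use $\nu<m$ to keep the marginal term $-\vec z\cdot\vec P_f/m$ away from the critical surface. The only cosmetic differences are that the paper obtains analyticity of $E$ as a uniform limit of analytic maps $e_{0,n}$ (Lemma~\ref{44}) rather than by an implicit-function argument, and constructs $\Pi(\vec p)$ through the Feshbach auxiliary operators $Q_\chi$ rather than a Riesz contour integral; also note that the quantity that must stay away from the critical modulus is the $\vec l$-gradient $\vec\nabla_{\vec l}\, w_{0,0}^{(N)}(0,\vec 0)\approx -\vec p/m$, not $\partial_r w_{0,0}^{(N)}$, which stays near $1$.
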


\begin{remark}
$\quad$
\begin{itemize}
\item The restriction to momenta $\vert \vec{p} \vert <m$ is crucial for our result to hold. Indeed, for  $\vert \vec{p} \vert > m$, we expect $E(\vec{p})$ to dissolve in  the continuum. This feature is the mathematical relic of the Cherenkov radiation emitted by the particle when it travels faster than the velocity of the light in the medium;  see  e.g. \cite{Ceren}.

\item  The absence of infrared divergences in our model implies that the one-particle states (or fiber ground states) $\psi(\vec{p})$ stay in $\mathcal{H}_{\vec{p}}$ for all $\vert \vec{p} \vert<m $. This result would not hold for a particle with a  non-zero net charge,  because the infrared catastrophe would force the eigenstate to leave the Fock space, for all $\vec{p} \neq \vec{0}$. We refer the reader to \cite{Froehlich} for an extensive discussion of this problem.

\item  The critical constant $\lambda_c(\nu)$ decays like a power law in $m-\nu$ when $\nu$approaches  $m$.  
\item We observe that, by rotation invariance, $E( \vec{p} )$ only depends on the norm of $\vec{p}$. Theorem \ref{main1} shows that the map $| \vec{p} | \mapsto E( | \vec{p} | )$ is real analytic on the interval $[ 0 , \mu )$. The regularity of $E( | \vec{p} | )$ with respect to $|\vec{p}|$ is an important physical property, that allows one, in particular, to define the renormalized mass of the dipole by the formula $m_{\mathrm{eff}} = ( \partial^2_{ | \vec{p} | } E( | \vec{p} | ) )^{-1}$. In a companion paper \cite{FaFrSc13_01}, we study the effective dynamics of the dipole placed in a slowly varying external potential; We justify that the renormalized mass and the kinetic mass of the dipole coincide; see \ref{faupin} for a similar result in the case of an electron placed in a slowly varying potential. The results of Theorem \ref{main1} are also expected to be useful in the framework of scattering theory, as we plan to consider in  future work; (see also \cite{FrGrSc07_01,Pi05_01}). 
\end{itemize}
\end{remark}

\subsection{Outline of the strategy}
\label{out}
Our analysis is based on the operator theoretic renormalization group method introduced in   \cite{BaFrSi98_01}, \cite{BaFrSi98_02} and \cite{BaChFrSi03_01}. For a concise review, the reader  is referred to \cite{FrGrIr08}. This method, based on an iteration of the smooth Feshbach-Schur map, has been useful in the study of  the existence of  ground states and resonances in models of matter coupled to a  quantized field;  Some results on the analyticity of ground states and  their associated eigenvalues have been obtained recently with this method, but the models that have been investigated  in the literature so far,  (as far as the question of analyticity with respect to a parameter is concerned) all deal with fixed or confined atoms; see \cite{GrHa09_01,HaHe10_01,HaHe11_01}. For instance, it has been shown in  \cite{HaHe11_01} that the ground state of the spin boson model is analytic in the coupling constant $\lambda_0$.  

In this paper, we investigate the analyticity of the dispersion law with respect to the total momentum $\vec{p}$. The novelty of our result lies in the fact that $\vec{p}$ appears in the marginal term $\vec{p} \cdot \vec{P}_f$ in the initial Hamiltonian $H(\vec{p})$, and not in the perturbation. The perturbation contracts under the iteration of the renormalization map, whereas the size of the marginal term  does not change much. The control of the evolution of  the marginal terms under the renormalization flow is therefore an important issue in our analysis. A similar issue appears in \cite{BaChFrSi07_01,Ch08_01}, but our approach to deal with it differs  from that in  \cite{BaChFrSi07_01,Ch08_01} and is slightly simpler in some respects. In particular, we consider a different Banach space of effective Hamiltonians, and we apply the Feshbach-Schur map at each step of the renormalization procedure in a different way. Besides, \cite{Ch08_01} shows that $\vec{p} \mapsto E( \vec{p} )$ is twice differentiable near $\vec{0}$, while our main result establishes the real analyticity of this map.

Next, we describe the different steps of our analysis and give an overview of  the way  the spectral renormalization group works in the present setup.

\subsubsection{ Complexification of the total momentum}
We start  by fixing a vector $\vec{p}^*$ in $\mathbb{R}^3$  of length smaller than $m$. We set $2 \mu =(m-\vert \vec{p}^* \vert )/m$, and consider the open set
\begin{equation*}
U[\vec{p}^*]:=\lbrace \vec{p} \in \mathbb{C}^3  \mid  \vert \vec{p} - \vec{p}^* \vert < \mu m \rbrace,
\end{equation*}
centered at $\vec{p}^*$.  For any $\vec{p} \in U[\vec{p}^*]$, we subtract the constant term $\vec{p}^2/2m$ in \eqref{Hp}, and consider the  operator
\begin{equation}
\label{Hp2}
H(\vec{p}) := \frac{\vec{P}_f^2}{2m} - \frac{\vec{p}}{m} \cdot \vec{P_f}+  \omega_0 \left(\begin{array}{cc} 1 & 0 \\ 0& 0 \end{array} \right) \notag +\lambda_0  H_I + H_{f}, 
\end{equation}
where  
\begin{equation}
\label{Hredef}
 H_I := i \int_{\underline{B}_1} \vert \vec{k} \vert^{\frac12} \vec{\epsilon} ( \underline{k} ) \cdot \vec{\sigma}    \left( b ( \underline{k} )  - b^{*}(\underline{k}) \right) d \underline{k},  \quad H_{f}:=  \int_{ \underline{ \mathbb{R} }^3 } \vert \vec{k} \vert  b^{*}( \underline{k} ) b ( \underline{k} ) d \underline{k}.
\end{equation}
The subtraction of $\vec{p}^2/2m$  leads to a shift in the spectrum of the original Hamiltonian  by  $\vec{p}^2/2m$. For $\vec{p} \in U[\vec{p}^*] \cap (\mathbb{C}^3 \setminus \mathbb{R}^3)$, the operator $H(\vec{p})$ is not self-adjoint, anymore,  but it is   closed on $D$.   We  show that the spectral renormalization group method can be used to investigate the spectrum of  the operator $H(\vec{p})$  near the origin of the complex plane. We  will find an eigenvalue  of $H(\vec{p})$ in this neighborhood of the origin, for any $\vec{p} \in U[\vec{p}^*]$. This eigenvalue  turns out to  be equal to $\text{inf}\left(\sigma(H(\vec{p})) \right)$,  for $\vec{p} \in U[\vec{p}^*] \cap \mathbb{R}^3$; see Subsection \ref{itere}.

\subsubsection{The first decimation step and the Feshbach-Schur map} 
  The purpose of the Feshbach-Schur map introduced in  \cite{BaFrSi98_02,BaFrSi98_01}, is to construct a new operator -- called ``effective Hamiltonian'' -- that acts on a subspace of ``low-energy photons'' of  $\mathcal{H}$, with the property that the spectrum and the eigenstates of $H(\vec{p})$  near the origin can be uniquely reconstructed from the study of the kernel of  this effective Hamiltonian. In this paper, we use the smooth version of the Feshbach-Schur map that has been developed in \cite{BaChFrSi03_01}. The precise relation between the spectrum, resolvent and eigenvectors of a closed operator and its Feshbach-Schur  transform are recalled in Subsection \ref{Fesch}. We restrict the values of $z$ to the complex open disc 
  \begin{equation}
  D_{\mu/2}:= \{ z \in \mathbb{C} \mid \vert z \vert < \frac{\mu}{2}\}.
  \end{equation}
  The effective Hamiltonian constructed from $H(\vec{p})- z \mathds{1}$, using the Feshbach-Schur map,  can be cast into the form
\begin{equation}
\label{H(0)}
H^{(0)}(\vec{p},z)=  \sum_{m+n \geq 0}  W^{(0)}_{m,n}(\vec{p},z),
\end{equation}
where the operators $W^{(0)}_{m,n}(\vec{p},z)$, $m+n \geq 0$,  are called "Wick monomials". The operator $H^{(0)}(\vec{p},z)$ acts on the subspace $\mathcal{H}_{\text{red}}:=\mathds{1}_{H_f \leq 1} \mathcal{H}_f$ of  $\mathcal{H}_f$. The Wick monomials are bounded operators on $\mathcal{H}_{\text{red}}$. They are associated to a sequence $$\underline{w}^{(0)}:=(w_{m,n}^{(0)})_{m+n\geq 0}$$ of bounded measurable functions $$w^{(0)}_{m,n}:  U[\vec{p}^*] \times D_{\mu/2} \times \mathbb{R}_+ \times \mathbb{R}^3  \times \underline{\mathbb{R}}^{m} \times  \underline{\mathbb{R}}^{n}  \rightarrow \mathbb{C}$$  that are $\mathrm{C}^1$ in their third and fourth arguments, and symmetric in $ \underline{\mathbb{R}}^{m}$ and  $ \underline{\mathbb{R}}^{n}$.  The bounded operators $W^{(0)}_{m,n}(\vec{p},z)$, $m+n \geq 0$, are defined in the sense of quadratic forms by 
\begin{align}
W^{(0)}_{m,n}(\vec{p},z) := \mathds{1}_{H_f \leq 1} \int_{\underline{B}_{1}^{m+n}} &    \left( \prod_{i=1}^{m} b^{*}(\underline{k}_i) \right) w^{(0)}_{m,n} (\vec{p},z, H_f , \vec{P}_f ,\underline{k}_1,...,\underline{k}_{m},\underline{\tilde{k}}_1,...,\underline{\tilde{k}}_n ) \notag \\
&\quad  \left( \prod_{j=1}^{n} b(\underline{\tilde{k}}_j) \right) \prod_{i=1}^{m} d \underline{k}_i    \prod_{j=1}^{n} d \underline{\tilde{k}}_j   \mathds{1}_{H_f \leq 1}  , \label{cmoii}
\end{align} 
where $ \mathds{1}_{H_f \leq 1}$  projects   on the subspace of photons with energy smaller than one. The functions $w^{(0)}_{m,n}$ are called "kernels" in the literature. They satisfy additional properties that  are stated in Section \ref{renorm}.

\subsubsection{Functional calculus}
\label{funcpar}
Before continuing the outline of our proof, we explain how the operators  $w^{(0)}_{m,n} (\vec{p},z, H_f , \vec{P}_f ,\underline{k}_1,...,\underline{k}_{m},\underline{\tilde{k}}_1,...,\underline{\tilde{k}}_n )$ and $w^{(0)}_{0,0}(\vec{p},z,H_f, \vec{P_f})$   in \eqref{cmoii} are defined. Let $f: \mathbb{R} \times \mathbb{R}^3 \rightarrow \mathbb{C}$ be a measurable function.  Any element $\Psi \in \mathcal{F}_{+}(L^{2}( \underline{ \mathbb{R} }^3 ) )$ can be written as a sequence of totally symmetric functions $( \psi^{(n)})_{n \geq 0}$ of momenta, with $\psi^{(n)} \in L^2_s( \underline{ \mathbb{R} }^{3n} )$. We set
\begin{equation}
\label{defi}
f(H_f,\vec{P}_f) \Psi :=  ( f^{(n)} \psi^{(n)})_{n \geq 0},
\end{equation}
where
\begin{equation}
f^{(0)} \psi^{(0)} := f(0,\vec{0}) \psi^{(0)} ,
\end{equation}
and, for $n \geq 1$,  $( \underline{k_1}, \dots , \underline{k_n})$ in $ \underline{ \mathbb{R} }^{3n}$,
\begin{equation}
(f^{(n)} \psi^{(n)})(\underline{ k_1}, \dots ,\underline{ k_n}):=f(\vert \vec{ k}_1 \vert + \dots +\vert \vec{ k}_n \vert, \vec{k_1}+ \cdots + \vec{k_n}) \text{ } \psi^{(n)}  ( \underline{k_1}, \dots , \underline{k_n}).
\end{equation}
 \eqref{defi} defines an (unbounded) operator on $\mathcal{F}_{+}(L^{2}( \underline{ \mathbb{R} }^3 ) )$ with domain
$$ D( f ( H_f , \vec{P}_f ) ) := \{ \Psi \in \mathcal{F}_{+}(L^{2}( \underline{ \mathbb{R} }^3 ) ) , \| f(H_f,\vec{P}_f) \Psi \| <\infty \}.$$ 
Assuming that   $f$ is essentially bounded on the subset 
\begin{equation}
\label{RR}
\mathcal{R}:=\lbrace (r, \vec{l}) \in   \mathbb{R} \times \mathbb{R}^3 \mid  \vert \vec{l} \vert \leq r \rbrace \subset \mathbb{R}^4,
\end{equation}
it is easy to check that $f ( H_f , \vec{P}_f )$ is bounded with
\begin{equation}
\label{essbo}
\| f ( H_f , \vec{P}_f ) \| \le \text{ess sup} \, \{ | f( r , \vec{l} ) | , ( r , \vec{l} ) \in \mathcal{R}  \}.
\end{equation}

\subsubsection{The renormalization map and its iteration} 
Under certain conditions on the norm of the operators $W_{m,n}^{(0)}(\vec{p},z)$, it is possible to apply to $H^{(0)}(\vec{p},z)$ a renormalization transformation $\mathcal{R}_{\rho}$ that leads to a new effective Hamiltonian $H^{(1)}(\vec{p},z)$ acting on $\mathcal{H}_{\text{red}}$. $H^{(1)}(\vec{p},z)$ can also be cast into the form
 \begin{equation}
\label{H(1)}
H^{(1)}(\vec{p},z)=  \sum_{m+n \geq 0}  W^{(1)}_{m,n}(\vec{p},z),
\end{equation}
where the operators $ W^{(1)}_{m,n}(\vec{p},z)$, $m+n \geq 0$, are Wick monomials associated to a sequence $\underline{w}^{(1)}$ of kernels. The renormalization map  $\mathcal{R}_{\rho}$, constructed in Section \ref{renorm}, is ``isospectral'' in the same sense  as  the Feshbach-Schur map  discussed below. The map $\mathcal{R}_{\rho}$ removes the photon degrees of freedom of energy higher than $\rho$. Under certain assumptions  for $H^{(0)}(\vec{p},z)$,  the map $\mathcal{R}_{\rho}$ can be iterated indefinitely. We then get a sequence of operators $H^{(N)}(\vec{p},z):=(\mathcal{R}^{N}_{\rho} H^{(0)})(\vec{p},z)$ on $\mathcal{H}_{\text{red}}$. We  prove  in Subsection \ref{itere} that  the sequence of operators $(W^{(N)}_{m,n}(\vec{p},z))_{N \in \mathbb{N}} $, $m+n \geq 1$,  tends to zero in norm when $N$ tends to infinity and that  $W^{(N)}_{0,0}(\vec{p},0)=w_{0,0}^{(N)}(\vec{p},0,H_f, \vec{P_f})$ tends to an element $\alpha(\vec{p})  H_f + \vec{\beta} (\vec{p}) \cdot \vec{P}_{f}$ of  the fixed-points manifold $\mathcal{M}_{fp}:=\mathbb{C} H_{f}+ \mathbb{C}P_{f,x}+\mathbb{C} P_{f,y}+ \mathbb{C}P_{f,z}$ of  the flow   $\mathcal{R}^{N}_{\rho}$. This marginal operator has a non-zero eigenvector, the vacuum,  and, using the isospectral character of the renormalization map,  we can reconstruct the one-particle state of momentum $\vec{p}$  and its associated eigenvalue, for any $\vec{p} \in U[\vec{p}^*]$.   This procedure is explained in Subsection \ref{itere}.  One of the key points of the proof concerns the control of $ W^{(N)}_{0,0}(\vec{p},z)=w_{0,0}^{(N)}(\vec{p},z,H_f,\vec{P}_f)$. In particular, to iterate the renormalization map, we need to show that the restriction of  $w_{0,0}^{(N)}(\vec{p},z,H_f,\vec{P}_f)$ to the range of $\mathds{1}_{H_f \geq 3\rho/4}$ is bounded invertible, for all $N \in \mathbb{N}$ and for   sufficiently small values of $\vert z \vert$.  For $N$ large enough, $w_{0,0}^{(N)}(\vec{p},0,H_f,\vec{P}_f)$ is close to  $\alpha(\vec{p})  H_f + \vec{\beta} (\vec{p}) \cdot \vec{P}_{f}$, and it is important to check that $\alpha(\vec{p})$ stays  close to $1$ and that $\vec{\beta} (\vec{p})$ stays  close to $-\vec{p}/m$ if we want $w_{0,0}^{(N)}(\vec{p},z,H_f,\vec{P}_f)$ to be invertible on the range of  $\mathds{1}_{H_f \geq 3\rho/4}$. For momenta $\vec{p}$ in the open set $U[\vec{p}^*]$, we are able to control the size of the deviations $\vert \alpha(\vec{p})- 1 \vert$ and $\vert \vec{\beta}(\vec{p})+ \vec{p}/m \vert$ by a fine tuning of the coupling constant $\lambda_0$. This fine tuning depends on the parameter $ \mu= (m-\vert \vec{p}^*\vert)/2m$.  

\subsubsection{Renormalization preserves analyticity} 
A key ingredient of the proof of Theorem \ref{main1} is the fact that the renormalization map preserves analyticity. If $(\vec{p},z) \mapsto H^{(0)}(\vec{p},z)$ is analytic on $U[\vec{p}^*] \times D_{\mu/2}$, we can show inductively  that the operator-valued functions $(\vec{p},z) \mapsto H^{(N)}(\vec{p},z)$ are analytic on $U[\vec{p}^*] \times D_{\mu/2}$, for all $N \geq 1$. We prove this result in Subsection  \ref{presan}. The analyticity of the eigenvalues and the eigenprojections is   established in Section \ref{fin}.

\subsubsection*{Acknowledgement} J. Fa. is grateful to I.M. Sigal for many useful discussions. His research is supported by ANR grant ANR-12-JS01-0008-01. J.Fr. thanks V.Bach, T.Chen, A.Pizzo and I.M. Sigal for numerous illuminating discussions  on problems related to the ones studied in this paper. B.S. thanks V.Bach and M. Ballesteros for interesting discussions.

\section{Spectral analysis tools}
\label{renorm}
In this section, we explain in details the spectral tools sketched in Subsection \ref{out}. We begin with the ``smooth'' isospectral decimation method introduced in  \cite{BaChFrSi03_01} and further improved in \cite{GrHa08_01}. This method maps a closed operator acting on a Hilbert space $\mathcal{H}$ to a new operator acting on a subspace of $\mathcal{H}$. We refer the reader to  \cite{BaChFrSi03_01,GrHa08_01} for proofs.

\subsection{The Feshbach-Schur map}
\label{Fesch}
\subsubsection{Heuristic derivation of the Feshbach-Schur map}
\label{Fesch1}
Let $\mathcal{H}$ be a separable Hilbert space and let $H$ be a self-adjoint operator on  $\mathcal{H}$. We assume that $H$ has an eigenvalue $E$, with associated eigenvector $\Psi$. Let $P$ be an orthogonal projection with the property that $P\Psi  \neq 0$. Introducing $P$ and $P^{\perp}:=1-P$ on both sides of the identity $H \Psi=E \Psi$, we obtain that
\begin{align}
( PHP +PHP^{\perp})\Psi &=E P \Psi,\\
 ( P^{\perp}HP +P^{\perp}HP^{\perp})\Psi&=E P^{\perp}\Psi.
 \label{333}
\end{align}
Now we  assume that $P^{\perp}(H-E)P^{\perp}$ is bounded invertible on $P^{\perp} \mathcal{H}$.  In this case, we deduce from \eqref{333} that
\begin{equation} 
P^{\perp}\Psi=-(P^{\perp} (H-E)P^{\perp})^{-1} P^{\perp} HP\Psi.
\end{equation}
Reporting into (2.1),
\begin{equation} 
P(H-E)P\Psi- PH  (P^{\perp} (H-E)P^{\perp})^{-1} P^{\perp} HP  \Psi=0.
\end{equation}
 $F_{P}(H ):=PHP- PH  (P^{\perp}HP^{\perp})^{-1} P^{\perp} HP$   is called the Feshbach-Schur map. The operator $F_{P}(H)$ is defined on  $P  \mathcal{H}$ and has the remarkable property that $ \Psi \in \text{ker}(F_{P}(H-E))$. There is a one-to-one correspondence between  $\text{Ker}(F_{P}(H-E))$ and $\text{Ker}(H-E)$. In particular,  $F_{P}(H-E) P   \phi =0$ if and only if $(H-E) (P-P^{\perp} (P^{\perp}(H-E)P^{\perp})^{-1} P^{\perp} HP) \phi=0$. We can therefore reconstruct the eigenvectors of $H$ with eigenvalue $E$  from the kernel of $F_{P}(H-E)$.

\subsubsection{The smooth Feshbach-Schur map} The smooth Feshbach-Schur map is a generalization of the discrete Feshbach-Schur map discussed above. We generalize the construction to closed operators $H$ and to  bounded operators $\chi$ and $\overline{\chi}$ that share some common properties with the orthogonal projection $P$ above. Let $\chi$ and $\overline{\chi}$ be two commuting and non zero bounded operators on $\mathcal{H}$ satisfying the identity $\chi^2+\overline{\chi}^2=\mathds{1}$. For any subspace $V \subset \mathcal{H}$, we say that an operator $B: D(B) \subset \mathcal{H} \rightarrow \mathcal{H}$ is bounded invertible in $V$ if $B: D(B) \cap V \rightarrow V$ is a bijection with bounded inverse.

\begin{definition}[Feshbach pair]
\label{Fesh}
Let $H$ and $T$ be two closed operators defined on the same domain $D \subset \mathcal{H}$. We define $W=H-T :D \rightarrow \mathcal{H}$ and introduce the operators
\begin{equation}
H_{\chi}:= T+\chi W \chi, \qquad \qquad H_{\overline{\chi}}:= T+ \overline{\chi} W \overline{\chi}.
\end{equation}
We say that $(H,T)$ is a Feshbach pair for $\chi$ if it satisfies the  following properties:
\begin{enumerate}[(a)]
\item $\chi T \subset T \chi$,  $\overline{\chi} T \subset T \overline{\chi}$,
\item $T$ and $H_{\overline{\chi}}$ are bounded invertible from $D \cap \mathrm{Ran}(\overline{\chi})$ to $\mathrm{Ran}(\overline{\chi})$,
\item $\overline{\chi} H^{-1}_{\overline{\chi}} \overline{\chi} W \chi :D \rightarrow \mathcal{H}$ is bounded.
\end{enumerate}
\end{definition}

If $(H,T)$ is a Feshbach pair for $\chi$, we can define the Feshbach-Schur map $F_{\chi}(H,T): D \rightarrow \mathcal{H}$ by
\begin{equation}
\label{Fe}
F_{\chi}(H,T) := H_{\chi}-\chi W \overline{\chi} (H_{\overline{\chi}})^{-1} \overline{\chi}  W \chi.
\end{equation}
The most important feature of the map $(H,T) \mapsto F_{\chi}(H,T)$ is its isospectrality, meant in the following sense:
\begin{theorem}[\cite{BaChFrSi03_01},\cite{GrHa08_01}] 
\label{th1}
Let $(H,T)$ be a Feshbach pair for $\chi$ with associated Feshbach-Schur map $ F_{\chi}(H,T)$.  Let $V$ be a subspace  with $\mathrm{Ran}(\chi) \subset V \subset \mathcal{H}$, and  such that
$$T: D \cap V \rightarrow V, \qquad \overline{\chi}T^{-1} \overline{\chi} V \subset V.$$
 
\begin{enumerate}[(1)]
\item   $H: D \rightarrow \mathcal{H}$ is bounded invertible iff $ F_{\chi}(H,T): V \cap D \rightarrow V$ is bounded invertible in $V$.  Furthermore, if we define the auxiliary operators
\begin{equation}
\label{QQ}
Q_{\chi}:=\chi - \overline{\chi} H^{-1}_{\overline{\chi}} \overline{\chi} W \chi, \qquad Q^{\sharp}_{\chi}:=\chi - \chi W  \overline{\chi}  H^{-1}_{\overline{\chi}} \overline{\chi}, 
\end{equation}
then
\begin{equation}
H^{-1}= \overline{\chi} H^{-1}_{\overline{\chi}}\overline{\chi} + Q_{\chi} F_{\chi}^{-1}(H,T)  Q^{\sharp}_{\chi}, \qquad  F^{-1}_{\chi}(H,T) =\chi H^{-1} \chi + \overline{\chi} T^{-1} \overline{\chi}.
\end{equation}
\item  The restrictions $\chi: \mathrm{Ker}(H) \rightarrow \mathrm{Ker}(F_{\chi}(H,T))$ and $Q_{\chi} : \mathrm{Ker}(F_{\chi}(H,T)) \rightarrow \mathrm{Ker}(H) $ are linear isomorphisms and  inverse to each other.
\end{enumerate}
\end{theorem}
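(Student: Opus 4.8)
\emph{Proof proposal.} The statement is purely algebraic: the three defining properties of a Feshbach pair are precisely what make a short list of operator identities legitimate, and the conclusions then follow by substituting the correct ansatz into the definitions. Concretely, (a) lets one move $\chi$ and $\overline\chi$ past $T$; (b) makes $T^{-1}$ and $H_{\overline\chi}^{-1}$, hence $Q_\chi$, $Q_\chi^\sharp$ and $F_\chi(H,T)$, well defined; and (c) guarantees that all compositions written below are genuine (bounded) operators, so that the formal rearrangements involving the possibly unbounded $H$, $T$, $W$ are rigorous. The guiding picture is the block case $\chi=P$, $\overline\chi=P^\perp$ of Subsection~\ref{Fesch1}: there $F_\chi(H,T)$ is the Schur complement of the $P^\perp$-block of $H-E$, and every formula below reduces to the classical $2\times2$ block-inversion identity. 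The proof consists in checking that the same manipulations survive when $P+P^\perp=\mathds 1$ is replaced by $\chi^2+\overline\chi^2=\mathds 1$, while resisting the now-illegal simplification $\chi^2\mapsto\chi$.

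\emph{Part (2).} I would establish the kernel correspondence first, since the computation is short and exhibits the mechanism. Let $\psi\in\mathrm{Ker}(H)$ and set $\varphi:=\chi\psi$. Multiplying $H\psi=0$ on the left by $\overline\chi$ and inserting $\mathds 1=\chi^2+\overline\chi^2$ gives $H_{\overline\chi}(\overline\chi\psi)+\overline\chi W\chi\varphi=0$, hence $\overline\chi\psi=-H_{\overline\chi}^{-1}\overline\chi W\chi\varphi$; feeding this back into $\psi=\chi^2\psi+\overline\chi^2\psi$ produces exactly $\psi=Q_\chi\varphi$. Multiplying $H\psi=0$ on the left by $\chi$ instead, and again eliminating $\overline\chi\psi$, collapses the result to $0=\chi H\psi=F_\chi(H,T)\varphi$, so $\varphi\in\mathrm{Ker}(F_\chi(H,T))$. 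Thus $\chi:\mathrm{Ker}(H)\to\mathrm{Ker}(F_\chi(H,T))$ is well defined and $Q_\chi\circ\chi=\mathrm{id}$ on $\mathrm{Ker}(H)$. Conversely, for $\varphi\in\mathrm{Ker}(F_\chi(H,T))$ put $\psi:=Q_\chi\varphi$; expanding $H\psi=(T+W)Q_\chi\varphi$ and using $H_{\overline\chi}\bigl(H_{\overline\chi}^{-1}\overline\chi W\chi\varphi\bigr)=\overline\chi W\chi\varphi$ to rewrite the contribution of $T$ collapses everything to $H\psi=\chi\bigl(T\varphi+\chi W\psi\bigr)=\chi F_\chi(H,T)\varphi=0$, so $\psi\in\mathrm{Ker}(H)$. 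That $Q_\chi$ is injective on $\mathrm{Ker}(F_\chi(H,T))$ — whence $\chi$ and $Q_\chi$ are mutually inverse — follows by feeding $Q_\chi\varphi=0$ back into $F_\chi(H,T)\varphi=0$: this forces $T\varphi=0$ and $\chi\varphi\in\mathrm{Ran}(\overline\chi)$, and the invertibility of $T$ on $\mathrm{Ran}(\overline\chi)$ in (b), together with the hypotheses on $V$, then forces $\varphi=0$.

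\emph{Part (1).} For the resolvent statement I would proceed by exhibiting inverses. Assume $F:=F_\chi(H,T)$ is bounded invertible in $V$ and set $S:=\overline\chi H_{\overline\chi}^{-1}\overline\chi+Q_\chi F^{-1}Q_\chi^\sharp$. One shows $HS=\mathds 1=SH$ by a direct substitution of the definitions of $H_{\overline\chi}$, $W=H-T$, $Q_\chi$, $Q_\chi^\sharp$ and $F$: writing $H=H_{\overline\chi}+\bigl(W-\overline\chi W\overline\chi\bigr)$ reduces the term $H\,\overline\chi H_{\overline\chi}^{-1}\overline\chi$ to $\overline\chi^{\,2}$ plus remainders, the term containing $F^{-1}$ contributes $\chi^2$ plus remainders by the same type of computation as in Part (2), and the two families of remainders cancel exactly because $\chi^2+\overline\chi^2=\mathds 1$. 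This yields both that $H$ is bounded invertible and the formula $H^{-1}=S$. Conversely, if $H$ is bounded invertible, the same algebra run backwards shows that $\chi H^{-1}\chi+\overline\chi T^{-1}\overline\chi$ is a two-sided inverse of $F$ in $V$; here the hypotheses $T:D\cap V\to V$ and $\overline\chi T^{-1}\overline\chi V\subset V$ are invoked precisely to guarantee that this candidate inverse maps $V$ into $V$. The two displayed formulas are then the two inversions just established.

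\emph{Main obstacle.} The difficulty is bookkeeping rather than conceptual. Because $\chi^2\neq\chi$ in the smooth setting, the factorization identities do not take the clean form they have in the block case of Subsection~\ref{Fesch1} — each carries remainder terms involving commutators such as $[W,\overline\chi]$ — and the argument closes only because, once everything is assembled, these remainders cancel against one another by $\chi^2+\overline\chi^2=\mathds 1$. One must therefore avoid premature simplifications, keep careful track of which subspace each operator lives on (the distinctions between $D$, $D\cap V$, $\mathrm{Ran}(\overline\chi)$ and $\mathcal H$) so that every composition is defined, and insert the $V$-stability hypotheses exactly where they are needed; the boundedness supplied by (c) is what makes it legitimate to commute the possibly unbounded $H$, $T$, $W$ past the bounded operators throughout. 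The single most delicate point is closing the loop in Part (2) — showing that $\chi$ and $Q_\chi$ between the two kernels are genuinely mutually inverse rather than merely one-sided inverses — which is where the full force of (b) together with the hypotheses on $V$ is used.
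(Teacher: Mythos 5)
The paper does not actually prove Theorem~\ref{th1}; it is quoted from \cite{BaChFrSi03_01,GrHa08_01}, so there is no internal proof to compare against. Your general strategy — verify the algebraic identities directly, treating the $2\times2$ block case as the guiding picture, and carefully avoid the illegal simplification $\chi^2\mapsto\chi$ — is the standard and correct one, and the forward direction of Part~(2) (the identities $Q_\chi\chi\psi=\psi$ on $\mathrm{Ker}(H)$ and $HQ_\chi\varphi=\chi F_\chi(H,T)\varphi$) is set up correctly. But there is a genuine gap in your closure of Part~(2), namely the step where you argue that $Q_\chi$ is injective on $\mathrm{Ker}(F_\chi(H,T))$.

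From $Q_\chi\varphi=0$ and $F_\chi(H,T)\varphi=0$ you correctly extract $T\varphi=0$ and $\chi\varphi\in\mathrm{Ran}(\overline\chi)$, and you then assert that ``the invertibility of $T$ on $\mathrm{Ran}(\overline\chi)$ in (b), together with the hypotheses on $V$, forces $\varphi=0$.'' This does not follow as written: condition~(b) gives injectivity of $T$ only on $D\cap\mathrm{Ran}(\overline\chi)$, and you do not know that $\varphi$ itself lies in $\mathrm{Ran}(\overline\chi)$ — only that $\chi\varphi$ does. So injectivity of $T$ cannot be applied to $\varphi$. The hypotheses on $V$ are also a red herring here; they are needed in Part~(1) to ensure the candidate inverse $\chi H^{-1}\chi+\overline\chi T^{-1}\overline\chi$ maps $V$ into $V$, but they play no role in this kernel computation. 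The step can be repaired, but it requires two applications of injectivity of $T$ on $D\cap\mathrm{Ran}(\overline\chi)$ with $\chi^2+\overline\chi^2=\mathds 1$ interposed: from condition~(a), $T\chi\varphi=\chi T\varphi=0$; since $\chi\varphi\in D\cap\mathrm{Ran}(\overline\chi)$ by your own observation, injectivity of $T$ there gives $\chi\varphi=0$; then $\varphi=(\chi^2+\overline\chi^2)\varphi=\overline\chi^2\varphi\in\mathrm{Ran}(\overline\chi)$, i.e.\ $\mathrm{Ker}(\chi)\subset\mathrm{Ran}(\overline\chi)$; and only now does $T\varphi=0$ with $\varphi\in D\cap\mathrm{Ran}(\overline\chi)$ force $\varphi=0$. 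The inclusion $\mathrm{Ker}(\chi)\subset\mathrm{Ran}(\overline\chi)$ is precisely the smooth substitute for the block-case disjointness $\mathrm{Ran}(P)\cap\mathrm{Ran}(P^{\perp})=\{0\}$; omitting it is exactly the place where the argument as you wrote it breaks.
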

We reformulate the results (1) and (2) in order to  highlight  the isospectral property of the Feshbach-Schur map:  (1) implies that $0$ is in the resolvent set of $H$ if and only if  it  is in the resolvent set of $F_{\chi}(H,T)$. (2) means that the kernels of $H$ and $F_{\chi}(H,T)$ are isomorphic and that   the eigenvectors of $H$ and $F_{\chi}(H,T)$ corresponding to the eigenvalue $0$ are in one-to-one correspondence: If $\psi$ satisfies $H \psi=0$, then $\phi:= \chi \psi$ solves  $F_{\chi}(H,T) \phi=0$. If $\phi \in \text{Ran}(\chi)$ satisfies $F_{\chi}(H,T) \phi=0$, then $\psi:=Q_{\chi} \phi$ solves  $H \psi=0$.

\begin{remark}
$\quad$
\begin{itemize}
\item As we wish to study the spectrum of $H$, we consider the operator $H- z \mathds{1}=T+ W-z \mathds{1}$ instead of $H$.  If   $H \psi=z\psi$, then $F_{\chi}(H-z \mathds{1},T-z \mathds{1}) \chi \psi=0$. Furthermore, if  $F_{\chi}(H-z \mathds{1},T-z \mathds{1})\phi=0$, then $HQ_{\chi} \phi =zQ_{\chi} \phi$.
\vspace{1.5mm}
\item In the following, we choose for $\chi$ an operator of the type $\chi(H_f)$, where $H_{f}$ is the free field Hamiltonian operator defined in \eqref{Hredef}, and $\chi$ is a positive smooth function with compact support included in the interval $[0,1]$. The operator  $\chi(H_f)$ is defined by  the functional calculus for self-adjoint operators.  The smoothness of $\chi$ is useful in our setting because we will need to take (and to bound) derivatives of $\chi$.  Moreover, we choose for $V$  a closed subspace of $\mathcal{H}$ that contains the closure of the range of $\chi(H_{f})$.
\end{itemize}
\end{remark}

\subsection{Wick monomials and their kernels}
We already outlined in Subsection \ref{out} that the operator constructed from $H(\vec{p})-z \mathds{1}$ with the help of the smooth Feshbach-Schur map can be cast into the form
\begin{equation}
\label{eqH0}
H^{(0)}(\vec{p},z) =  \sum_{m+n \geq 0}  W_{m,n}^{(0)}(\vec{p},z), 
\end{equation}
where the operators $W^{(0)}_{m,n}(\vec{p},z)$, $m+n \geq 0$, are defined in the sense of quadratic forms by 
\begin{align}
W^{(0)}_{m,n}(\vec{p},z) := \mathds{1}_{H_f \leq 1} \int_{\underline{B}_{1}^{m+n}} &    \left( \prod_{i=1}^{m} b^{*}(\underline{k}_i) \right) w^{(0)}_{m,n} (\vec{p},z, H_f , \vec{P}_f ,\underline{k}_1,...,\underline{k}_{m},\underline{\tilde{k}}_1,...,\underline{\tilde{k}}_n ) \notag \\
&\quad  \left( \prod_{j=1}^{n} b(\underline{\tilde{k}}_j) \right) \prod_{i=1}^{m} d \underline{k}_i    \prod_{j=1}^{n} d \underline{\tilde{k}}_j   \mathds{1}_{H_f \leq 1}.   \label{eq:Wmn}
\end{align} 

\noindent The calculation leading to \eqref{eqH0} is displayed explicitly in Appendix \ref{polydlem}. For the time being, we focus on the properties of the kernels $w^{(0)}_{m,n}$, for $m+n \geq 0$. Lemma \ref{preserved} below  shows that  these properties are preserved under the renormalization map if  the initial sequence of kernels $\underline{w}^{(0)}$ lies in a suitably small polydisc $\mathcal{B}(\gamma , \delta  , \varepsilon)$.

\subsubsection{Notation}
\label{notation}
It is a little difficult  to come up with concise formulae involving the  kernels $w_{m,n}$,  because  these kernels depend on many arguments.  This is the reason why we introduce the following shorthand notations that are used throughout our text:
\begin{align*}
&\underline{k}^{(m)} := (\underline{k}_1,\dots,\underline{k}_m)\in \underline{\mathbb{R}}^{3m}, \quad \underline{\tilde{k}} \phantom{}^{(n)} := (\underline{\tilde{k}}_1,\dots,\underline{\tilde{k}}_n)\in \underline{\mathbb{R}}^{3n} , \\
&\underline{K}^{(m,n)} := ( \underline{k}^{(m)} , \underline{\tilde{k}} \phantom{}^{(n)} ) , \quad d \underline{K}^{(m,n)} := \prod_{i=1}^m d \underline{k}_i  \prod_{j=1}^n d \underline{\tilde{k}}_j  , \\
& | \underline{K}^{(m,n)} | := | \underline{k}^{(m)} | \, | \underline{\tilde{k}} \phantom{}^{(n)} | , \quad | \underline{k}^{(m)} | := \prod_{i=1}^m | \vec{k}_i | , \quad |Ê\underline{\tilde{k}} \phantom{}^{(n)} | := \prod_{j=1}^n | \vec{\tilde{k}}_j | , \\
& \Sigma [ \underline{k}^{(m)} ] := \sum_{i=1}^m | \vec{k}_i | , \quad \Sigma[ \underline{\tilde{k}} \phantom{}^{(n)} ] := \sum_{j=1}^n | \vec{\tilde{k}}_j | , \quad \vec{\Sigma} [ \underline{k}^{(m)} ] := \sum_{i=1}^m \vec{k}_i  , \quad \vec{\Sigma}[ \underline{\tilde{k}} \phantom{}^{(n)} ] := \sum_{j=1}^n \vec{\tilde{k}}_j ,  \\
&b^*( \underline{k}^{(m)} ) := \prod_{ i = 1 }^m b_{\lambda_i}^*( \vec{k}_i ) , \quad b( \underline{\tilde{k}} \phantom{}^{(n)} ) := \prod_{ j = 1 }^n b_{\lambda_j}( \vec{\tilde{k}}_j ) .
\end{align*}
\vspace{1mm}
\noindent
For $\rho \in \mathbb{C}$, we set 
\begin{align*}
\rho \underline{k}^{(m)} := ( \rho \vec{k}_1 , \lambda_1 , \dots , \rho \vec{k}_m , \lambda_m ), \quad \rho \underline{K}^{(m,n)} := ( \rho \underline{k}^{(m)} , \rho \underline{ \tilde{k} } \phantom{}^{(n)} ).
\end{align*}
We remind the reader that  $B_1 = \{ \vec{k} \in \mathbb{R}^3 , | \vec{ k} | \le 1 \}$ and we  introduce  the set
\begin{equation}
\label{BB}
\mathcal{B} := \{ ( r , \vec{l} ) \in [ 0 , 1 ] \times B_1 , | \vec{l} | \le r \}.
\end{equation}

Many numerical constants appear in our estimates. Keeping track of all these constants would be very cumbersome and is not necessary for mathematical rigor.  The shorthand
 \begin{equation}
\text{``}\text{for all } a \ll b , \dots \text{''}
 \end{equation}
means that ``there exists a (possibly very small, but) positive numerical constant $C$ such that, for all $a \le C b$, \dots'' .
\vspace{1mm}

\subsubsection{Kernels}
In Section \ref{firstep}, we will show that, for all $\vec{p} \in U[\vec{p}^*] =\lbrace \vec{p} \in \mathbb{C}^3  \mid  \vert \vec{p} - \vec{p}^* \vert < \mu m \rbrace$ and $z \in D_{ \mu / 2 } = \{ z \in \mathbb{C} , |z| < \mu / 2 \}$, the kernels $w^{(0)}_{m,n}$, $m+n \ge 0$, defined in \eqref{eqH0}--\eqref{eq:Wmn} belong to certain Banach spaces of functions that we introduce in the following definition.

%We have already seen that the kernels $w_{m,n}$ depend on the momentum vector $\vec{p}$ and the spectral parameter $z$ in a non trivial way. However,  to simplify the exposition, we begin with defining kernels that do not depend on $\vec{p}$ and $z$, and introduce the $(\vec{p},z)$ arguments only  at the end of the present paragraph.

\begin{definition}[The Banach spaces $\mathcal{W}^{\sharp}_{0,0}$ and  $\mathcal{W}^{\sharp}_{m,n}$]\label{def_kernels}
$\quad$ 
\begin{itemize} 
\item  The set of functions $w_{0,0} \in \mathrm{C}^1( \mathcal{B} ; \mathbb{C} )$ equipped with the norm
\begin{equation}
\label{taunorm}
 \| w_{0,0} \|^{\sharp} := | w_{0,0}( 0 , \vec{0} ) | + \| \partial_{r} w_{0,0} \|_{\infty} + \sum_{i=1}^{3} \| \partial_{l_i} w_{0,0} \|_\infty\end{equation}
is denoted by $\mathcal{W}^{\sharp}_{0,0}$; $\mathcal{W}^{\sharp}_{0,0}$ defines a Banach space. We remind the reader that $f \in \mathrm{C}^1( \mathcal{B} ; \mathbb{C} )$ if $f$ is continuous on $\mathcal{B} $, $\mathrm{C}^1$ in the interior of  $\mathcal{B}$, and if its partial derivatives can be extended to $\partial \mathcal{B}$ by continuity.
\vspace{2mm}

  \item For $m+n \ge 1$, the set of functions $w_{m,n} : \mathcal{B} \times \underline{B}_{1}^{m}\times \underline{B}_{1}^{n}    \rightarrow  \mathbb{C}$ that are measurable on  $ \underline{B}_{1}^{m}\times \underline{B}_{1}^{n} $,  totally symmetric on $ \underline{B}_{1}^{m}$ and $ \underline{B}_{1}^{n}$, of class $\mathrm{C}^1(\mathcal{B} )$ for almost every $\underline{K}^{(m,n)} \in \underline{B}_{1}^{m}\times \underline{B}_{1}^{n}$,  and that obey the norm bound
 \begin{equation}
  \|w_{m,n}\|^{\sharp}  :=\|w_{m,n}\|_{\frac{1}{2}}+\|\partial_r w_{m,n} \|_{\frac{1}{2}} + \sum_{i=1}^{3} \| \partial_{l_i} w_{m,n} \|_{\frac{1}{2}}< \infty , \label{sharp}
\end{equation}
where
\begin{equation}\label{def_norme_f_m,n}
\| w_{m,n} \|_{\frac{1}{2}} := \underset{ ( r , \vec{l} ) \in \mathcal{B} }{ \sup } \text{ } \text{ }\underset{  \underline{K}^{(m,n)} \in \underline{B}_1^{m+n}}{\mathrm{ess} \ \mathrm{sup}} \left| w_{m,n}( r , \vec{l} ,\underline{K}^{(m,n)} ) \right |  | \underline{K}^{(m,n)}|^{-1/2} ,
\end{equation}
  defines a Banach space that we denote by  $\mathcal{W}^{\sharp}_{m,n}$.
  \end{itemize}
  \end{definition}

 The choice of the exponent in the factor $| \underline{K}^{(m,n)}|^{-1/2}$ in \eqref{def_norme_f_m,n} % In Section \ref{firstep}, we will show in addition that, for $m+n \ge 1$, $\| w_{m,n}^{(0)}(\vec{p}, z , r , \vec{l}) \|^{\sharp}$ tends to $0$ as $\lambda \to 0$. The choice of the exponent in $| \underline{K}^{(m,n)}|^{-1}$ 
is related to the infrared behavior of the model we consider, and insures an optimal rate of convergence to $0$ of the renormalized kernels $w_{m,n}^{(N)}$, $m+n \ge 1$, obtained by the renormalization procedure. More precisely, if the form factor $|k|^{1/2}$ in the interaction Hamiltonian $H_I$ in \eqref{Hredef} is replaced by $|k|^{-1/2 + \varepsilon}$, our method would work in the same way, provided that $| \underline{K}^{(m,n)}|^{-1/2}$ in \eqref{def_norme_f_m,n} is replaced by $| \underline{K}^{(m,n)}|^{1/2-\varepsilon}$. Furthermore,  we will prove below estimates of the form $\| w_{m,n}^{(N)} \|^{\sharp} = \mathcal{O}( \rho^{N+1} )$ where $\rho < 1$ is a scale parameter. Replacing $| \underline{K}^{(m,n)}|^{-1/2}$ in \eqref{def_norme_f_m,n} by $| \underline{K}^{(m,n)}|^{1/2-\varepsilon}$ with $0 < \varepsilon \le 1$ would lead to estimates of the form $\| w_{m,n}^{(N)} \|^{\sharp} = \mathcal{O}( \rho^{\varepsilon (N+1)} )$, $m+n \geq 1$.

To a kernel $w_{0,0} \in \mathcal{W}_{0,0}^{\sharp}$, we can associate the bounded operator $w_{0,0}( H_f , \vec{P}_f ) \mathds{1}_{ H_f \le 1}$ defined by the functional calculus of Subsection \ref{out}. The choice of the norm $\| \cdot \|^{\sharp}$ will allow us to express the fact that the ``free'' effective Hamiltonian $w_{0,0}^{(0)} (\vec{p},z,H_f, \vec{P_f})$ in \eqref{eqH0} is close to the operator $H_f - m^{-1} \vec{p} \cdot \vec{P}_f - z$, in the sense that
\begin{equation*}
\| w_{0,0}^{(0)}(\vec{p}, z , r , \vec{l}) - ( r -  m^{-1}  \vec{  p}  \cdot \vec{l} - z) \|^{\sharp}
 \end{equation*}
tends to $0$ as the coupling constant $\lambda_0 \to 0$ (see Section \ref{firstep}). We remark that another possible choice for $\| w_{0,0} \|^{\sharp}$ would be given by the equivalent norm
\begin{equation*}
\| w_{0,0} \|_\infty + \| \partial_{r} w_{0,0} \|_{\infty} + \sum_{i=1}^{3} \| \partial_{l_i} w_{0,0} \|_\infty.
\end{equation*}

To a kernel $w_{m,n} \in \mathcal{W}_{m,n}^{\sharp}$, we associate the Wick monomial
\begin{align}\label{definition_WMN0}
&W_{m,n}( w_{m,n} ) :=  \mathds{1}_{ H_f \le 1 }  \int_{\underline{B}_{1}^{m+n}}  b^{*}(\underline{k}^{(m)})  w_{m,n} [ H_f , \vec{P}_f , \underline{K}^{(m,n)} ] b(\underline{\tilde{k}} \phantom{}^{(n)}) d \underline{K}^{(m,n)}  \mathds{1}_{ H_f \le 1 } ,
\end{align}
where, for a.e. $\underline{K}^{(m,n)} \in \underline{B}_{1}^{m+n}$, $w_{m,n} [ H_f , \vec{P}_f , \underline{K}^{(m,n)} ]$ is defined by the functional calculus of Subsection \ref{out}. The expression in  \eqref{definition_WMN0} defines a quadratic form on the set of vectors in $\mathcal{H}_{ f }$ with finitely many particles. The following lemma shows that it extends to a bounded quadratic form on $\mathcal{H}_{ \mathrm{red} }$. The associated bounded operator will be denoted by the same symbol.
\begin{lemma}\label{opboundl}
Let $m+n \ge 1$. For all $w_{m,n} \in \mathcal{W}^{\sharp}_{m,n}$, the quadratic form defined in \eqref{definition_WMN0} extends to a bounded quadratic form with norm satisfying
\begin{equation}
\label{opbound}
\| W_{m,n} ( w_{m,n} ) \| \le (m! n!)^{-\frac12}  (8 \pi)^{(m+n)/2} \| w_{m,n} \|_{\frac{1}{2}}.
\end{equation}
\end{lemma}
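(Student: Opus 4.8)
The plan is to estimate the quadratic form $\langle \Phi , W_{m,n}(w_{m,n}) \Psi \rangle$ for $\Phi, \Psi$ in the dense set of vectors with finitely many particles, and to bound it by $\| \Phi \| \, \| \Psi \|$ times the right-hand side of \eqref{opbound}. First I would expand $\Phi$ and $\Psi$ in the Fock-space decomposition, writing $\Phi = (\phi^{(p)})_{p \ge 0}$, $\Psi = (\psi^{(q)})_{q \ge 0}$, and pull the creation and annihilation operators onto the test vectors: acting with $b(\underline{\tilde k}^{(n)})$ on $\psi^{(q)}$ produces (up to a combinatorial factor controlling the choice of which $n$ of the $q$ variables are contracted) a function of $q-n$ remaining variables, and similarly $b^*(\underline{k}^{(m)})$ against $\phi^{(p)}$. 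The key point is that the functional-calculus operator $w_{m,n}[H_f, \vec P_f, \underline K^{(m,n)}]$ acts as multiplication by $w_{m,n}(\Sigma[\cdot] + \Sigma[\underline{\tilde k}^{(n)}], \vec\Sigma[\cdot] + \vec\Sigma[\underline{\tilde k}^{(n)}], \underline K^{(m,n)})$ on the $q-n = p-m$ spectator variables, and the cutoffs $\mathds 1_{H_f \le 1}$ restrict all the relevant photon energies to lie below $1$, so the argument $(r, \vec l)$ always stays inside $\mathcal B$ and one may replace $|w_{m,n}|$ by $\|w_{m,n}\|_{1/2} \, |\underline K^{(m,n)}|^{1/2}$ pointwise.

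Next I would apply the Cauchy-Schwarz inequality in the integration variables $\underline K^{(m,n)}$ together with the elementary "pull-through" bound $\| b(\underline{\tilde k}^{(n)}) \mathds 1_{H_f \le 1} \Psi \|$ and its adjoint, absorbing the factor $|\underline K^{(m,n)}|^{1/2} = \prod_i |\vec k_i|^{1/2} \prod_j |\vec{\tilde k}_j|^{1/2}$ against the $\prod |\vec k_i|^{-1/2}$-type weights one gains from the commutation relations. Concretely, for each annihilation variable $\vec{\tilde k}_j$ one uses $\int_{\underline B_1} |\vec{\tilde k}_j|^{-1} \, d\underline{\tilde k}_j \cdot \| b_{\lambda_j}(\vec{\tilde k}_j)(\cdots) \|^2$ and the standard identity $\sum_\lambda \int |\vec k|^{-1} \| b_\lambda(\vec k)\Psi\|^2 d\vec k \le \| H_f^{1/2}\Psi\|^2 \le \|\mathds 1_{H_f\le 1}\Psi\|^2$ on the range of $\mathds 1_{H_f \le 1}$. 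The volume factor $\int_{B_1} |\vec k|^{-1} d\vec k = 4\pi \int_0^1 r\, dr = 2\pi$ per variable, doubled to $4\pi$ by the polarization sum, gives the constant $(8\pi)^{(m+n)/2}$; the combinatorial factors from choosing which variables to contract and symmetrizing produce, after Cauchy-Schwarz between the $\Phi$ and $\Psi$ sides and summation over $p$, the $(m!n!)^{-1/2}$. This is essentially the Wick-monomial bound of \cite{BaChFrSi03_01,BaFrSi98_01}; I would follow that computation.

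The main obstacle — really the only place that requires care rather than bookkeeping — is organizing the combinatorics and the Cauchy-Schwarz step so that the $n!$ and $m!$ appear with the correct power $-1/2$ and not, say, $-1$ or $0$: one must count the $\binom{q}{n}$ ways to distribute annihilation variables, note that the totally symmetric kernel lets one symmetrize and replace the sum over choices by $n!$ times a single term, then balance this against the $\sqrt{q!/(q-n)!}$ normalization factors of the Fock states, and finally sum the resulting series in $p$ (or $q$) using Cauchy-Schwarz so that $\sum_p$ of the products of $\|\phi^{(p)}\|$-type and $\|\psi^{(q)}\|$-type pieces closes up to $\|\Phi\|\|\Psi\|$. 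Once the exponents are tracked correctly the estimate \eqref{opbound} follows, and the density of finite-particle vectors in $\mathcal H_{\mathrm{red}}$ lets one extend $W_{m,n}(w_{m,n})$ to a bounded operator with the stated norm bound.
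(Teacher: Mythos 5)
Your strategy coincides with the route the paper itself defers to (an adaptation of the Wick-monomial estimate of \cite{BaChFrSi03_01}); the passage to quadratic forms, the pointwise bound $|w_{m,n}|\le\|w_{m,n}\|_{1/2}\,|\underline K^{(m,n)}|^{1/2}$, and the Cauchy--Schwarz/$H_f$ steps are the right ingredients. However, as sketched, the computation cannot produce the factor $(m!n!)^{-1/2}$, and the key idea that supplies it is missing. Taking $\int_{\underline B_1}|\vec k|^{-1}\,d\underline k = 4\pi$ ``per variable'' closes only to $\|W_{m,n}\|\le(4\pi)^{(m+n)/2}\|w_{m,n}\|_{1/2}$, and the Fock normalization bookkeeping you invoke gives $\int_{\underline B_1^m}|\underline k^{(m)}|\,\|b(\underline k^{(m)})\Phi\|^2\,d\underline k^{(m)}\le\langle\Phi,H_f^m\Phi\rangle$ with no $m!$ gain, so $(4\pi)^{(m+n)/2}$ is all you get. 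Since this exceeds $(m!n!)^{-1/2}(8\pi)^{(m+n)/2}$ once $m+n$ is moderately large, the lemma is not proved this way. The missing ingredient is that the outer cutoffs impose a simplex constraint: by the pull-through formula, $b(\underline k^{(m)})\mathds 1_{H_f\le1}=\mathds 1_{H_f\le1-\Sigma[\underline k^{(m)}]}\,b(\underline k^{(m)})$, so the creation-side integral is supported on $\{\Sigma[\underline k^{(m)}]\le1\}$, not on all of $\underline B_1^m$. The corresponding volume factor becomes a Dirichlet integral, $\int_{\sum r_i\le1}\prod_i r_i\,dr_i=1/(2m)!$, so that $(4\pi)^m$ is replaced by $(8\pi)^m/(2m)!\le(8\pi)^m/(m!)^2$, which more than accounts for the $(m!)^{-1/2}$ in \eqref{opbound}. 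The factorial suppression comes from the energy cutoff, not from the Fock combinatorics.

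A smaller but concrete error: the ``standard identity'' you quote is false as written. The correct statement is $\sum_\lambda\int|\vec k|\,\|b_\lambda(\vec k)\Psi\|^2\,d\vec k = \|H_f^{1/2}\Psi\|^2$, with weight $|\vec k|^{+1}$, not $|\vec k|^{-1}$; with the inverse weight the left side is the expectation of the second quantization of $|\vec k|^{-1}$, which on $B_1$ \emph{dominates} $\langle\Psi,H_f\Psi\rangle$, so your inequality points the wrong way. After the Cauchy--Schwarz split $|\underline K|^{1/2}=|\underline K|^{-1/2}\cdot|\underline K|$ the operator-side factor is $\int|\vec k|^2\,\|b(\vec k)\cdots\|^2\,d\vec k$, and an intermediate step $|\vec k|^2\le|\vec k|$ on $B_1$ is needed before the (correct) identity and $H_f\le\mathds 1$ on $\mathcal H_{\mathrm{red}}$ can be invoked.
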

Lemma \ref{opboundl} asserts that we can control the norm of Wick monomials by controlling the norm of their associated kernels.  We remark that we considered a $L^\infty$-norm in \eqref{def_norme_f_m,n} while a $L^2$-norm is used in \cite{BaChFrSi03_01}. Moreover the operators $w_{m,n} [ H_f , \vec{P}_f , \underline{K}^{(m,n)} ]$ in \eqref{definition_WMN0} depend both on $H_f$ and $\vec{P}_f$ while the corresponding operators considered in \cite{BaChFrSi03_01} only depend on $H_f$.  Nevertheless, the proof of Lemma \ref{opboundl} is a straightforward adaptation of the one of \cite[Theorems 3.1]{BaChFrSi03_01} (the proof is in fact slightly easier with the $L^\infty$-norm defined in \eqref{def_norme_f_m,n} than with the $L^2$-norm used in \cite{BaChFrSi03_01}).

\subsubsection{Hamiltonians associated with a sequence of kernels}  
We want to bound the series of Wick monomials in \eqref{eqH0}. This amounts to assume that the kernels $w_{m,n}$ satisfy some summability properties. We  introduce the Banach space
\begin{equation}
\label{Wsharp}
\mathcal{W} ^{\sharp}:=  \bigoplus_{m+n\geq 0} \mathcal{W}^{\sharp}_{m,n},
\end{equation}
equipped with the norm 
\begin{equation}
\|\underline{w}\|^{\sharp}_{\xi}:=\sum_{m+n\geq 0}\xi^{-(m+n)}\|w_{m,n}\|^{\sharp},
\end{equation}
where $\xi$ is a fixed positive number smaller than one, and $\underline{w}=(w_{m,n})_{m+n \geq 0}$. Let $\mathcal{H}_{\text{red}}$ be the subspace of photons of energy smaller than 1, i.e.
\begin{equation}
\mathcal{H}_{ \mathrm{red} } := \mathds{1}_{ H_f \le 1 } \mathcal{H}_f .
\end{equation}
In order to associate with any element of $\mathcal{W}^{\sharp}$ a bounded  operator (similar to the one in \eqref{eqH0}), we introduce a map $H:\mathcal{W} ^{\sharp}\rightarrow \mathcal{B}(\mathcal{H}_{\text{red}})$.
\begin{definition}[The linear map $H(\underline{w})$]\label{def:def_H(w)}
For all $\underline{w} \in \mathcal{W}^{\sharp}$, we set
\begin{equation}\label{def_H(w)}
H(\underline{w}) := \sum_{m+n\geq 0} W_{m,n} ( \underline{w} ) = W_{0,0}( \underline{w} ) + W_{ \ge 1 }( \underline{w} ),
\end{equation}
where
\begin{equation}\label{W00}
W_{0,0}( \underline{w} ) := w_{0,0}[H_f , \vec{P}_f]   \mathds{1}_{ H_f \le 1 }  , \quad W_{ \ge 1 }( \underline{w} ) := \sum_{m+n\geq 1}W_{m,n}( \underline{w} ), 
\end{equation}
and, for $m+n\geq 1$, $W_{m,n}( \underline{w} ) := W_{m,n}( w_{m,n} )$ is the Wick monomial defined in \eqref{definition_WMN0}.
%\begin{equation}\label{definition_WMN}
%\begin{split}
%&W_{m,n}( \underline{w} ) :=  \mathds{1}_{ H_f \le 1 }  \int_{\underline{B}_{1}^{m+n}}  b^{*}(\underline{k}^{(m)})  \text{ }w_{m,n} [ H_f , \vec{P}_f , \underline{K}^{(m,n)} ] \text{ } b(\underline{\tilde{k}} \phantom{}^{(n)}) \frac{ \mathrm{d} \underline{K}^{(m,n)} }{ | K^{(m,n)} |^{\frac12} } \mathds{1}_{ H_f \le 1 }.
%\end{split}
%\end{equation}
\end{definition}
The following lemma shows that $H( \underline{w} )$ defines a bounded operator on $\mathcal{H}_{ \mathrm{red} }$ for all $\underline{w} \in \mathcal{W}^{\sharp}$. It is a direct consequence of Lemma \ref{opboundl}.
\begin{lemma}\label{lm:Hbounded_injective}
For all $0 < \xi < 1/\sqrt{8\pi}$, the linear map $H : \mathcal{W}^{\sharp} \to \mathcal{B} ( \mathcal{H}_{ \mathrm{red} } )$ defined by $\eqref{def_H(w)}$  satisfies, 
\begin{equation}
\label{ineqnorm}
\left \| H(\underline{w}) \right\|  \le \| \underline{w} \|^{\sharp}_{\xi},
\end{equation}
for all $\underline{w} \in \mathcal{W}^{\sharp}$.
\end{lemma}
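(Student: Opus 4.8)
The plan is to estimate $\|H(\underline{w})\|$ term by term in the series \eqref{def_H(w)} and then resum, using Lemma \ref{opboundl} for the Wick monomials with $m+n\ge 1$ and the functional-calculus bound \eqref{essbo} for the diagonal term $W_{0,0}(\underline{w})$. First I would treat $W_{0,0}(\underline{w}) = w_{0,0}[H_f,\vec{P}_f]\,\mathds{1}_{H_f\le 1}$: by \eqref{essbo} its norm is bounded by the essential supremum of $|w_{0,0}|$ over the part of $\mathcal{R}$ with $r\le 1$, which is exactly the set $\mathcal{B}$ in \eqref{BB}. To relate $\sup_{\mathcal{B}}|w_{0,0}|$ to $\|w_{0,0}\|^{\sharp}$, I would use that $w_{0,0}\in\mathrm{C}^1(\mathcal{B})$ together with the fundamental theorem of calculus along the segment $t\mapsto(tr,t\vec{l})$, $t\in[0,1]$, which stays in $\mathcal{B}$ precisely because $|\vec{l}|\le r$; this yields $|w_{0,0}(r,\vec{l})|\le |w_{0,0}(0,\vec{0})| + r\,\|\partial_r w_{0,0}\|_\infty + \sum_{i=1}^3 |l_i|\,\|\partial_{l_i}w_{0,0}\|_\infty \le \|w_{0,0}\|^{\sharp}$, since $r\le 1$ and $|l_i|\le|\vec{l}|\le 1$ on $\mathcal{B}$. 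Hence $\|W_{0,0}(\underline{w})\|\le\|w_{0,0}\|^{\sharp}$.

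Next, for each pair $(m,n)$ with $m+n\ge 1$, Lemma \ref{opboundl} gives $\|W_{m,n}(w_{m,n})\|\le (m!n!)^{-1/2}(8\pi)^{(m+n)/2}\|w_{m,n}\|_{\frac12}\le(8\pi)^{(m+n)/2}\|w_{m,n}\|^{\sharp}$, where I bounded $(m!n!)^{-1/2}\le 1$ and $\|w_{m,n}\|_{\frac12}\le\|w_{m,n}\|^{\sharp}$ directly from \eqref{sharp}. Summing over all $m+n\ge 1$ and adding the diagonal contribution, the triangle inequality yields
\[
\|H(\underline{w})\| \le \|w_{0,0}\|^{\sharp} + \sum_{m+n\ge 1}(8\pi)^{(m+n)/2}\|w_{m,n}\|^{\sharp}.
\]
Because $\xi<1/\sqrt{8\pi}$, we have $(8\pi)^{(m+n)/2}=(\sqrt{8\pi})^{m+n}\le\xi^{-(m+n)}$ for every $m+n\ge 1$, while $1=\xi^{-0}$ covers the $(0,0)$ term; therefore the right-hand side is bounded by $\sum_{m+n\ge 0}\xi^{-(m+n)}\|w_{m,n}\|^{\sharp}=\|\underline{w}\|^{\sharp}_\xi$, which is \eqref{ineqnorm}. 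The same estimate shows the series $\sum_{m+n\ge 0}W_{m,n}(\underline{w})$ converges absolutely in operator norm, so that $H(\underline{w})\in\mathcal{B}(\mathcal{H}_{\mathrm{red}})$ is indeed well defined.

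There is essentially no serious obstacle here; as noted in the text, the statement is a direct consequence of Lemma \ref{opboundl}. The only points requiring a little care are (i) checking that the $L^\infty$ bound on the $(0,0)$ kernel over $\mathcal{B}$ is controlled by the $\mathrm{C}^1$-type norm $\|\cdot\|^{\sharp}$ via the fundamental theorem of calculus along a path that remains inside $\mathcal{B}$, and (ii) verifying that the threshold $\xi<1/\sqrt{8\pi}$ is exactly what is needed to absorb the combinatorial factor $(8\pi)^{(m+n)/2}$ from Lemma \ref{opboundl} into the geometric weight $\xi^{-(m+n)}$ defining $\|\cdot\|^{\sharp}_\xi$.
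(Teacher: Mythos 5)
Your proof is correct and follows exactly the route the paper intends: the paper simply declares the lemma ``a direct consequence of Lemma~\ref{opboundl}'' without spelling out details, and your argument supplies those details in the natural way. In particular, the two small verifications you flag are exactly the right ones: bounding $\sup_{\mathcal{B}}|w_{0,0}|$ by $\|w_{0,0}\|^{\sharp}$ via the fundamental theorem of calculus along the segment $t\mapsto(tr,t\vec{l})$, which lies in $\mathcal{B}$ because $|\vec{l}|\le r$, and absorbing the factor $(8\pi)^{(m+n)/2}$ from Lemma~\ref{opboundl} into $\xi^{-(m+n)}$ using $\xi<1/\sqrt{8\pi}$.
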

\begin{remark}
As in \cite[Theorem 3.3]{BaChFrSi03_01} (see also \cite[Theorem 5.4]{HaHe11_01}), we can verify that the map $H$ is injective provided that the kernels $w_{m,n}$ are restricted to the set 
\begin{equation*}
\mathcal{B}_{m,n} := \{ ( r , \vec{l} , \underline{K}^{(m,n)}) \in \mathcal{B} \times \underline{B}_{1}^{m}\times \underline{B}_{1}^{n} ,  r \le 1 - \max \big ( \Sigma [ \underline{k}^{(m)} ] , \Sigma [ \underline{k}^{(n)} ] \big ) \big \} .
\end{equation*}
The proof is similar to that of Lemma \ref{compo} given in Appendix \ref{lemacom}.
\end{remark}

Next, we take into account the fact that the operators considered in the following depend on the total momentum $\vec{p}$ and the spectral parameter $z$. We remind the reader that $U[\vec{p}^*] $ is the set of complex momenta $\vec{p}$ such that $\vert \vec{p}-\vec{p}^* \vert<\mu m$, where $\mu=( m-\vert \vec{p}^* \vert)/2m$, and that $D_{\mu/2}$ is the complex open disc of radius $\mu/2$ around the origin. We will assume that $(\vec{p} , z ) \in U[ \vec{p}^* ] \times D_{ \mu / 2 }$.

The kernels $w_{m,n}(\vec{p},z,r,\vec{l}, \underline{K}^{(m,n)})$ can  be viewed as functions 
\begin{align*}
U [ \vec{p}^* ] \times D_{ \mu / 2 } \ni (\vec{p},z) \mapsto  w_{m,n}(\vec{p},z) \in \mathcal{W}_{m,n}^{\sharp}.
\end{align*}
The set of functions from $U [ \vec{p}^* ] \times D_{ \mu / 2 }$ to $\mathcal{W}_{m,n}^{\sharp}$ is denoted by $ \mathcal{W}_{m,n}$. Likewise, the set of functions from $U [ \vec{p}^* ] \times D_{ \mu / 2 }$ to $\mathcal{W}^{\sharp}$ is denoted by $\mathcal{W}$.

Following \cite{FrGrIr08}, we introduce a notation that proves to  be convenient for our analysis: We denote the set of  functions $U [ \vec{p}^* ] \times D_{ \mu / 2 } \ni (\vec{p},z) \mapsto H(\underline{w}(\vec{p},z)) \in \mathcal{B}(\mathcal{H}_{\text{red}})$ by $\mathcal{W}_{\text{op}}$.

\subsection{The renormalization map}
We define the renormalization map that we shall use in the sequel. Our construction is similar to the one in \cite{BaChFrSi03_01}. The main difference, as already appears in Definition \ref{def_kernels}, is that the kernels we consider depend on the variables $(r,\vec{l})$ corresponding to the operators $(H_f,\vec{P}_f)$, while the kernels in \cite{BaChFrSi03_01} only depend on $r$.

Let $0<\rho<1$ be a fixed scale parameter. The renormalization map maps any operator $H(\underline{w}) \in \mathcal{B}(\mathcal{H}_{\text{red}})$  defined in \eqref{def_H(w)}, to a new operator $\mathcal{R}_{\rho}H(\underline{w}) \in \mathcal{B}(\mathcal{H}_{\text{red}})$. The spectrum  of $\mathcal{R}_{\rho}H(\underline{w})$ may be easier to analyze than the spectrum of $H(\underline{w})$, because $\mathcal{R}_{\rho}$ eliminates the degrees of freedom of energy bigger than $\rho$. Thanks to the isospectrality of $\mathcal{R}_{\rho}$, we can reconstruct some spectral properties of the initial operator $H(\underline{w})$. The renormalization map is  a composition of three distinct transformations:
\begin{itemize}
\item An analytic transformation, $E_\rho$, of the spectral parameter $z$,
\item An application of the Feshbach-Schur map,
\item A scale transformation $S_\rho$.
\end{itemize}
\vspace{2mm}

In what follows, we give an overview of each of these transformations.  The $(\vec{p},z)$ dependence of the kernels is kept explicit when needed.

\subsubsection{The scale transformation $S_\rho$ }
\label{rescaling}
For $\rho > 0$,  we define the unitary map $\Gamma_\rho$ on $\mathcal{H}_f$ by
\begin{align*}
& \Gamma_\rho \, \Omega := \Omega , \\ 
& \big ( \Gamma_\rho \, \Psi \big )^{(p)} ( \underline{k}^{(p)} ) :=  \rho^{\frac{3p}{2} } \Psi^{(p)} ( \rho \underline{k}^{(p)}) .
\end{align*}
We recall the following scaling properties that can be verified easily, using the definitions of the operators involved:
\begin{align}\label{scaling_properties}
\Gamma_\rho  H_f \Gamma_\rho^* = \rho H_f , \qquad \Gamma_\rho  \vec{P}_f \Gamma_\rho^* = \rho \vec{P}_f .
\end{align}
This implies that $\Gamma_\rho : \mathrm{Ran} ( \mathds{1}_{ H_f \le \rho } ) \to \mathrm{Ran} ( \mathds{1}_{ H_f \le 1 } )$. Recalling the definition $\mathcal{H}_{ \mathrm{red} } = \mathds{1}_{ H_f \le 1 } \mathcal{H}_f$, we define the map $S_\rho : \mathcal{B}( \mathcal{H}_{ \mathrm{red} } ) \to \mathcal{B}( \mathcal{H}_{ \mathrm{red} } )$ by
\begin{align*}
S_\rho(A) := \rho^{-1} \Gamma_\rho A \Gamma_\rho^* ,
\end{align*}
for any bounded operator $A$ on $\mathcal{H}_{ \mathrm{red} }$. It follows from \eqref{scaling_properties} that
\begin{align*}
S_\rho( H_f ) = H_f , \qquad S_\rho( \vec{P}_f ) = \vec{P}_f ,\qquad   S_\rho( f(H_f,\vec{P}_{f} )) =  \rho^{-1} f(\rho H_f, \rho \vec{P_f}),
\end{align*}
for any measurable function $f$. Furthermore, one can verify that, for any $m,n \in \mathbb{N}\cup \{ 0 \}$,
\begin{equation*}
S_\rho ( W_{m,n}( \underline{w} ) ) = W_{m,n}( s_\rho( \underline{w} ) ),
\end{equation*}
with
\begin{equation}
\label{scaling}
 s_\rho ( \underline{w} )_{m,n} ( H_f , \vec{P}_f , \underline{K}^{(m,n)} )  := \rho^{ \frac{3}{2}(m+n)-1} w_{ m , n } (\rho H_f , \rho \vec{P}_f , \rho\underline{K}^{(m,n)} ).
\end{equation}
\vspace{1mm}

\noindent The definition (\ref{def_norme_f_m,n})  of the norm $ \|  \cdot  \|_{\frac{1}{2}}$ implies that  
\begin{align*}
 \|  s_\rho ( \underline{w} )_{m,n} \|_{\frac{1}{2}} &= \rho^{ \frac{3}{2}(m+n)-1} \underset{ ( r , \vec{l} ) \in \mathcal{B} }{ \sup } \ \  \underset{  \underline{K}^{(m,n)} \in \underline{B}_1^{m+n}}{\mathrm{ess} \ \mathrm{sup}} \left| w_{m,n}( \rho r , \rho \vec{l} , \rho \underline{K}^{(m,n)} ) \right |  | \underline{K}^{(m,n)}|^{-1/2} \\
 &= \rho^{ 2(m+n)-1} \underset{ ( r , \vec{l} ) \in \mathcal{B} }{ \sup } \ \  \underset{  \underline{K}^{(m,n)} \in \underline{B}_1^{m+n}}{\mathrm{ess} \ \mathrm{sup}} \left| w_{m,n}( \rho r , \rho \vec{l} , \rho \underline{K}^{(m,n)} ) \right |  | \rho \underline{K}^{(m,n)}|^{-1/2} \\
& \le  \rho^{ 2(m+n)-1}   \| w_{m,n} \|_{\frac{1}{2}}, 
\end{align*}
and likewise that
\begin{align*}
& \| \partial_r  s_\rho ( \underline{w} )_{m,n} \|_{\frac{1}{2}} \le \rho^{ 2(m+n)} \| \partial_r w_{m,n} \|_{\frac{1}{2}} , \\
& \| \partial_{l_i}  s_\rho ( \underline{w} )_{m,n} \|_{\frac{1}{2}} \le \rho^{ 2(m+n)} \| \partial_{l_i} w_{m,n} \|_{\frac{1}{2}} .
\end{align*}

As  $\|  s_\rho ( \underline{w} )_{m,n} \|_{\frac{1}{2}} $ controls the norm of the operators $S_\rho ( W_{m,n}( \underline{w} ) )$, these estimates show that the operators  $W_{m,n}( \underline{w} )$, $m+n \geq 1$, contract under   rescaling. Operators possessing this property are called \emph{irrelevant}. In contrast, $\mathds{1}$ expands under $S_\rho$ by a factor $\rho^{-1}$ and is called a \emph{relevant} operator. Terms linear in $H_{f}$ and $\vec{P}_f$ stay unchanged and are called \emph{marginal} operators. 
\vspace{2mm}

\subsubsection{Transformation of the spectral parameter $E_\rho$}
The operator $H(\underline{w})$, introduced in Definition \eqref{def:def_H(w)}, can be canonically decomposed into the sum of three bounded operators on $\mathcal{H}_{ \mathrm{red} }$:
\begin{itemize}
\item A marginal operator $w_{0,0}( H_f , \vec{P}_f ) - w_{0,0}( 0 , \vec{0} ) \mathds{1}$,
\item A relevant operator $w_{0,0}( 0 , \vec{0} ) \mathds{1}$,
\item An irrelevant operator $ W_{ \ge 1 }( \underline{w} )$.
\end{itemize}
We introduce a polydisc,  $\mathcal{B}(\gamma , \delta , \varepsilon)$,  contained in the space of operator-valued functions $\mathcal{W}_{\text{op}}$ defined above (the elements of $\mathcal{W}_{\text{op}}$ are the maps $U [ \vec{p}^* ] \times D_{ \mu / 2 } \ni (\vec{p},z) \mapsto H(\underline{w}(\vec{p},z))$). 
\begin{definition}[The polydisc $\mathcal{B}(\gamma , \delta , \varepsilon)$]
\label{discc}
Let $\gamma,\delta,\varepsilon>0$. The polydisc  $\mathcal{B}(\gamma , \delta , \varepsilon) \subset \mathcal{W}_{ \mathrm{op} }$ is defined by
\begin{align*}
\mathcal{B}(\gamma , \delta , \varepsilon) := \Big \{ & H(\underline{w}(\cdot,\cdot)) \in\mathcal{W}_{\mathrm{op}} , \phantom{\underset{z\in D_{1/2}}{\sup}} \\
&\underset{(\vec{p},z) \in U[\vec{p}^*] \times D_{\mu/2}}{\sup} \big \| w_{0,0}(\vec{p}, z , r , \vec{l})-w_{0,0}(\vec{p}, z , 0 , \vec{0})  - ( r -  m^{-1}  \vec{  p}  \cdot \vec{l} ) \big \|^{\sharp} \leq \gamma, \\
& \underset{(\vec{p},z) \in U[\vec{p}^*] \times D_{\mu/2}}{\sup} \big | w_{0,0}(\vec{p},z,0,\vec{0})+ z \big | \le \delta, \\
& \underset{(\vec{p},z) \in U[\vec{p}^*] \times D_{\mu/2}}{\sup} \big \| \underline{w}(\vec{p},z) \big \|_{\xi,\ge 1}^{\sharp}\le \varepsilon \Big \} ,
\end{align*}
where 
\begin{equation}
\| \underline{w} \|_{\xi,\ge 1}^{\sharp}:=\sum_{m+n \geq 1} \xi^{-(m+n)} \| w_{m,n} \|^{\sharp}, \qquad \xi < \frac{1}{\sqrt{8 \pi}}.
\end{equation}
\end{definition} 
An element of $\mathcal{B}(\gamma , \delta , \varepsilon)$ is close to the operator-valued functions $(\vec{p},z) \mapsto H_f - m^{-1} \vec{p} \cdot \vec{P}_f - z$, in the sense that 
\begin{itemize}
\item The marginal part $w_{0,0}(\vec{p}, z , r , \vec{l})-w_{0,0}(\vec{p}, z , 0 , \vec{0})$ is at a distance $\le \gamma$ of $r -  m^{-1}  \vec{  p}  \cdot \vec{l}$ (for the norm $\| \cdot \|^{\sharp}$), uniformly in $(\vec{p},z)$,
\item The relevant part $w_{0,0}(\vec{p},z,0,\vec{0})$ is at a distance $\le \delta$ of $-z$, uniformly in $(\vec{p},z)$,
\item The irrelevant part $(w_{m,n}(\vec{p} , z ))_{m+n\ge1}$ is smaller than $\varepsilon$ for the norm $\| \cdot \|_{\xi}^{\sharp}$, uniformly in $(\vec{p},z)$. 
\end{itemize}

The instability of  the identity operator $\mathds{1}$ under $S_{\rho}$ forces us to fine-tune the choice of the spectral parameter.  For $w_{0,0} \in \mathcal{W}_{0,0}$, we introduce the  set
\begin{align*}
\mathcal{U}[ w_{0,0} ] := \big \{ (\vec{p},z) \in  U[\vec{p}^*] \times D_{\mu/2} , \ | w_{0,0}(\vec{p},z,0,\vec{0}) | < \frac{ \mu \rho}{2} \big \}.
\end{align*}
If $(\vec{p},z) \mapsto w_{0,0}(\vec{p},z,0,\vec{0})$ is continuous,  $\mathcal{U}[ w_{0,0} ] $ is an open subset of $\mathbb{C}^4$. Setting
\begin{align*}
\mathcal{U}[ w_{0,0}(\vec{p}) ] := \big \{ z  \in    D_{\mu/2} , \ | w_{0,0}(\vec{p},z,0,\vec{0}) | < \frac{ \mu \rho}{2} \big \},
\end{align*}
we have that 
$$\mathcal{U}[ w_{0,0} ] = \lbrace (\vec{p},z)  \in  U[\vec{p}^*] \times D_{\mu/2} \mid  \mathcal{U}[ w_{0,0}(\vec{p}) ] \neq \emptyset \text{ and } z \in \mathcal{U}[ w_{0,0}(\vec{p}) ] \rbrace. $$
 The following picture illustrates how the values of the spectral parameter $z$ are restricted.

\begin{center}
\begin{tikzpicture}

      \fill[ color=gray!60] (0,0) circle (2) ;
      \fill[fill=black!55] (0,0) circle (1) ;

      \fill[color=gray!60] (6, 0) circle (1) ;
     
            \color{black!80}

    \fill[black!55]     (6,0.3)to[bend left=45] (6.2,0.25)  to[bend left=-60] (6.3,0.2)   to[bend left=40] (6.4,0.1)  to[bend left=60] (6,-0.2) to[bend left=-40] (5.8,-0.19) to[bend left=60] (5.7,-0.15) to[bend left=-10] (5.8,0.2) to[bend left=60] (6,0.3);

 \draw[<-] (0,2)--(6,1);
 \draw[<-] (0,-2)--(6,-1);
 
  \draw[<-,thick,dotted] (0,1)--(6,0.3);
 \draw[<-,thick,dotted] (0,-1)--(6,-0.2);
 
    \draw (6,-1) node[above] {\small $ D_{\mu/2}$};
   \draw (0,-1) node[above] {\small$ D_{\mu/2}$};
     \draw (8.2,0.15) node[above] { \small $\mathcal{ U}[w_{0,0}(\vec{p})] $};
    
     \draw (3,1.6) node[above] {\small$ \rho^{-1} w_{0,0}(\vec{p},\cdot, 0,\vec{0})$}; 
     \draw[-] (6,0.14)--(7.4,0.3);
     
\end{tikzpicture}

\end{center}

\begin{definition}[The rescaling map   $E_\rho$]
Let $w_{0,0} \in \mathcal{W}_{0,0}$. The rescaling map $E_\rho : \mathcal{U}[ w_{0,0} ] \to  U[\vec{p}^*] \times D_{\mu/2}$,  is defined by 
\begin{align}
\label{Erho}
 E_\rho (\vec{p}, z ) := (\vec{p}, - \rho^{-1} w_{0,0}(\vec{p},z,0,\vec{0})).
\end{align}
\end{definition}
\vspace{1mm}

Note that we only rescale $z$, and do not change the value of $\vec{p}$.
 The following lemma enables us to control the unstable manifold by rescaling the spectral parameter $z$ before carrying out the scale transformation $S_{\rho}$.
\begin{lemma}\label{lm:bihol}
Let $0 < \rho < 1/2$, $0 < \xi < 1/\sqrt{8 \pi}$, $ \gamma > 0$, $ 0 < \delta \ll \rho \mu$ and $\varepsilon > 0$. Let $H(\underline{w}(\cdot,\cdot)) \in \mathcal{B}( \gamma , \delta , \varepsilon )$ and assume that the map $( \vec{p} , z ) \mapsto w_{0,0}( \vec{p} , z ,  0 , \vec{0} ) \in \mathbb{C}$ is analytic on $U[ \vec{p}^* ]\times D_{ \mu / 2 }$. Then $\mathcal{U}[ w_{0,0} ] \neq \emptyset$ and the map $E_{\rho} : \mathcal{U}[ w_{0,0} ] \to U[\vec{p}^*] \times D_{\mu/2} $ is biholomorphic. Its inverse is denoted by $E_{\rho}^{-1}$.
\end{lemma}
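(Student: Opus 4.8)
\emph{Plan of proof.} Write $w_{0,0}(\vec{p},z,0,\vec{0}) = -z + h(\vec{p},z)$, where $h(\vec{p},z) := w_{0,0}(\vec{p},z,0,\vec{0}) + z$. Since $H(\underline{w}(\cdot,\cdot)) \in \mathcal{B}(\gamma,\delta,\varepsilon)$, the second bound in Definition \ref{discc} gives $|h(\vec{p},z)| \le \delta$ for all $(\vec{p},z) \in U[\vec{p}^*] \times D_{\mu/2}$, and by hypothesis $h$ is analytic there. For fixed $\vec{p} \in U[\vec{p}^*]$ set $g_{\vec{p}}(z) := -\rho^{-1} w_{0,0}(\vec{p},z,0,\vec{0}) = \rho^{-1}\big(z - h(\vec{p},z)\big)$, so that $E_\rho(\vec{p},z) = (\vec{p}, g_{\vec{p}}(z))$ and $\mathcal{U}[w_{0,0}(\vec{p})] = \{ z \in D_{\mu/2} : |g_{\vec{p}}(z)| < \mu/2 \}$. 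The point is that $g_{\vec{p}}$ is a small holomorphic perturbation of the expanding linear map $z \mapsto \rho^{-1}z$, so for each $\vec{p}$ it should be a biholomorphism of $\mathcal{U}[w_{0,0}(\vec{p})]$ onto $D_{\mu/2}$, with $\vec{p}$-uniform control; joint holomorphy of the inverse in $(\vec{p},w)$ will then follow from the holomorphic implicit function theorem.

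First I would prove an a priori localization: if $z \in D_{\mu/2}$ and $|g_{\vec{p}}(z)| < \mu/2$, then $|z - h(\vec{p},z)| < \rho\mu/2$, hence $|z| < \rho\mu/2 + \delta =: R_0$. Because $\rho < 1/2$ and $\delta \ll \rho\mu$, the closed disc $\overline{D_{R_0}}$ lies strictly inside $D_{\mu/2}$, at distance $\mu/2 - R_0 \ge \mu/8$ (say) from its boundary; in particular $\mathcal{U}[w_{0,0}(\vec{p})] \subset D_{R_0}$. Applying Cauchy's estimate to $h(\vec{p},\cdot)$, analytic on $D_{\mu/2}$ and bounded there by $\delta$, yields $|\partial_z h(\vec{p},z)| \le \delta/(\mu/2 - R_0) \le 8\delta/\mu \le 1/2$ for all $|z| \le R_0$. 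Next, for $\vec{p} \in U[\vec{p}^*]$ and $w \in D_{\mu/2}$, solving $g_{\vec{p}}(z) = w$ is the same as finding a fixed point of $\Phi_w(z) := \rho w + h(\vec{p},z)$. Since $|\Phi_w(z)| \le \rho\mu/2 + \delta = R_0$ and, by the derivative bound, $|\Phi_w(z_1) - \Phi_w(z_2)| \le \tfrac12|z_1 - z_2|$ on $\overline{D_{R_0}}$, the contraction mapping principle produces a unique fixed point $z^\ast \in \overline{D_{R_0}} \subset D_{\mu/2}$; together with the a priori bound, $z^\ast$ is the unique solution of $g_{\vec{p}}(z) = w$ in all of $D_{\mu/2}$, and it lies in $\mathcal{U}[w_{0,0}(\vec{p})]$. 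Hence $g_{\vec{p}} : \mathcal{U}[w_{0,0}(\vec{p})] \to D_{\mu/2}$ is a bijection for every $\vec{p} \in U[\vec{p}^*]$; in particular each $\mathcal{U}[w_{0,0}(\vec{p})]$ is nonempty, so $\mathcal{U}[w_{0,0}] \neq \emptyset$, and $E_\rho : \mathcal{U}[w_{0,0}] \to U[\vec{p}^*] \times D_{\mu/2}$, $(\vec{p},z) \mapsto (\vec{p}, g_{\vec{p}}(z))$, is a bijection.

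It remains to check that $E_\rho^{-1}$ is holomorphic (one could also invoke the general fact that a holomorphic bijection between open subsets of $\mathbb{C}^4$ is automatically biholomorphic, but the implicit function argument is self-contained and gives explicit control). Put $G(\vec{p},z,w) := g_{\vec{p}}(z) - w$, holomorphic on $U[\vec{p}^*] \times D_{\mu/2} \times D_{\mu/2}$. At any zero of $G$ with $z \in \mathcal{U}[w_{0,0}(\vec{p})] \subset D_{R_0}$ one has $\partial_z G = \rho^{-1}\big(1 - \partial_z h(\vec{p},z)\big) \neq 0$ by the Cauchy bound above, so the holomorphic implicit function theorem provides a local holomorphic solution $z = \zeta(\vec{p},w)$; since $\zeta$ stays near a point of $D_{R_0}$, it still lands in $D_{\mu/2}$, and uniqueness of the solution in $D_{\mu/2}$ forces $\zeta$ to coincide locally with the global inverse. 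Hence $E_\rho^{-1}(\vec{p},w) = (\vec{p}, \zeta(\vec{p},w))$ is holomorphic, and since $E_\rho$ is holomorphic (because $w_{0,0}(\cdot,\cdot,0,\vec{0})$ is analytic by assumption), $E_\rho$ is biholomorphic. The one delicate point of the argument is the derivative bound on $h$: a priori one only knows $|h| \le \delta$ on $D_{\mu/2}$, which controls nothing near $\partial D_{\mu/2}$; it is the localization $\mathcal{U}[w_{0,0}(\vec{p})] \subset D_{R_0}$ with $R_0$ bounded away from $\mu/2$ that makes Cauchy's estimate effective, and this is exactly where the hypotheses $\rho < 1/2$ and $\delta \ll \rho\mu$ are used. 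The rest is a routine fixed-point and implicit-function computation.
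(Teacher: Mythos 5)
Your argument is correct and delivers the same conclusion as the paper's, but it takes a genuinely different route for the global bijectivity of $E_\rho$. The paper's proof proceeds in three steps: (1) the sandwich $U[\vec{p}^*]\times D_{\mu\rho/2-\delta}\subset \mathcal{U}[w_{0,0}]\subset U[\vec{p}^*]\times D_{\mu\rho/2+\delta}$, which is exactly your a priori localization and gives nonemptiness; (2) local biholomorphy via the inverse function theorem, showing $\det dE_\rho = -\rho^{-1}\partial_z w_{0,0}(\vec{p},z,0,\vec{0})\neq 0$ through a Cauchy estimate on $\partial_z(w_{0,0}+z)$, which is your bound $|\partial_z h|\le 1/2$ on $D_{R_0}$; and (3) global surjectivity and injectivity via Rouch\'e's theorem, comparing $-w_{0,0}(\vec{p}_0,z,0,\vec{0})-\rho u_0$ to $z-\rho u_0$ on a circle of radius close to $\mu/2$. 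You replace step (3) by a Banach fixed-point argument: rewrite $g_{\vec{p}}(z)=w$ as $z=\Phi_w(z)=\rho w+h(\vec{p},z)$, use the Cauchy estimate to show $\Phi_w$ is a $\tfrac12$-contraction of $\overline{D_{R_0}}$ into itself (with image in the open disc $D_{R_0}$ because $|w|<\mu/2$ strictly), and conclude existence and uniqueness of the preimage; then you obtain holomorphy of the inverse from the holomorphic implicit function theorem applied to $G(\vec{p},z,w)=g_{\vec{p}}(z)-w$. The two routes differ in emphasis: Rouch\'e consumes only the $L^\infty$ bound $|h|\le\delta$ on the boundary circle and counts zeros topologically, while your contraction argument first converts that $L^\infty$ bound into a derivative bound on the localized disc and then iterates — this is more constructive (it produces the preimage by successive approximation) but makes the Cauchy estimate carry double duty, serving both for the Lipschitz constant and for the implicit-function step. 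Both proofs use $\rho<1/2$ and $\delta\ll\rho\mu$ in exactly the same place, namely to keep $R_0=\rho\mu/2+\delta$ well inside $D_{\mu/2}$ so that the Cauchy estimate has room to operate; your closing remark identifies this correctly as the one delicate point. Either device is standard in this type of spectral-renormalization localization lemma; Rouch\'e is the traditional choice, and your fixed-point version is an equally rigorous alternative.
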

\begin{proof}
The proof is similar to  \cite[Lemma 3.4]{BaChFrSi03_01} and \cite[Lemma 6.1]{HaHe11_01}, except that the models treated in these references describe non moving atoms  for which the map $E_{\rho}$ only depends on $z$. We  give a detailed proof  in our case, since $E_\rho$  also depends on the total momentum $\vec{p}$.
\vspace{2mm}

\noindent \textbf{Step 1}. First, we show that
\begin{align}
\label{inclu}
U[\vec{p}^*] \times D_{\mu \rho / 2 - \delta} \subset  \mathcal{U}[ w_{0,0} ]  \subset   U[\vec{p}^*] \times D_{ \mu  \rho / 2 + \delta }.
\end{align}
Let $(\vec{p},z) \in \mathcal{U}[ w_{0,0} ]$. Since $H(\underline{w}(\cdot,\cdot)) \in  \mathcal{B}( \gamma , \delta , \varepsilon )$, we have that
\begin{equation*}
\vert z \vert \leq \vert z+w_{0,0}(\vec{p},z,0,\vec{0}) \vert + \vert w_{0,0}(\vec{p},z,0,\vec{0}) \vert < \mu \rho / 2 + \delta ,
\end{equation*}
and hence the second inclusion in \eqref{inclu} follows.

Let $(\vec{p},z) \in U[\vec{p}^*] \times D_{\mu \rho / 2 - \delta}$. Using again that $H(\underline{w}(\cdot,\cdot)) \in  \mathcal{B}( \gamma , \delta , \varepsilon )$ gives
\begin{equation*}
\vert w_{0,0}(\vec{p},z,0,\vec{0}) \vert \le \vert w_{0,0}(\vec{p},z,0,\vec{0}) + z \vert + | z | < \mu \rho / 2  ,
\end{equation*}
and hence the first inclusion in \eqref{inclu} is proven. This shows that $\mathcal{U}[ w_{0,0} ] \neq \emptyset$, because $\delta \ll \rho \mu$ and, therefore, $ D_{\mu \rho / 2 - \delta} \neq \emptyset$. Furthermore, $\mathcal{U}[ w_{0,0}(\vec{p}) ] \neq \emptyset$ for all $\vec{p} \in U[\vec{p}^*]$.
\vspace{2mm}

\noindent \textbf{Step 2}.
We prove that $E_\rho : \mathcal{U}[ w_{0,0} ] \to E_\rho( \mathcal{U}[ w_{0,0} ] )$ is locally biholomorphic. To this end, by the inverse function theorem, it suffices to show that $\mathrm{det} \, d E_{\rho}(\vec{p},z) \neq 0$ for any $(\vec{p},z) \in \mathcal{U}[ w_{0,0} ]$. It follows directly from \eqref{Erho} that
\begin{equation*}
\mathrm{det} \, d E_{\rho}(\vec{p},z) = - \rho^{-1} \partial_z w_{0,0}( \vec{p} , z , 0 , \vec{0} ). 
\end{equation*}
Now, let $(\vec{p},z) \in \mathcal{U}[ w_{0,0} ]$. By Step 1, $| z |Ê< \mu \rho / 2 + \delta$, and hence, with $\mathcal{C} := \{ z' \in \mathbb{C} , |z'| = \mu / 3 \}$, Cauchy's formula gives
\begin{equation*}
\big \vert  \partial_z \big(  w_{0,0}(\vec{p} , z  ,0,\vec{0}) +z \big)  \big \vert  \leq \frac{1}{2\pi}  \Big \vert  \int_ { \mathcal{C} } \frac{   w_{0,0}(\vec{p},z' ,0,\vec{0}) +z'}{(z'-z)^2}  dz' \Big \vert  <  \frac{ \delta \mu }{   (\frac{\mu}{3}( 1 - 3 \rho/2 ) - \delta)^2} ,
\end{equation*}
where in the last inequality we used  that $H(\underline{w}(\cdot,\cdot)) \in  \mathcal{B}( \gamma , \delta , \varepsilon )$. For $\rho < 1/2$ and $\delta \ll \rho \mu$, the right-hand side is strictly smaller than $1$, which implies that $\partial_z  w_{0,0}(\vec{p} , z  ,0,\vec{0}) \neq 0$.
\vspace{2mm}

\noindent \textbf{Step 3}. We prove that $E_{\rho} ( \mathcal{U}[ w_{0,0} ]  ) = U[\vec{p}^*] \times D_{\mu/2} $ and that $E_{\rho}$ is injective. Let $( \vec{p}_0 , u _0 ) \in  U[\vec{p}^*] \times D_{\mu /2}$. For any $z$ with $\vert z  \vert = \mu / 2 - \beta$ and $0 < \beta \ll \mu$, we have that $| z - \rho u_0 | \ge ( 1 - \rho ) \mu / 2 - \beta > \delta$, provided that $\delta$ is small enough. This implies that
\begin{equation*}
\vert  (- w_{0,0}(\vec{p}_0,z,0,\vec{0}) - \rho  u_0) -(z- \rho u_0) \vert \leq  \delta <  \vert z  - \rho u_0 \vert.
\end{equation*}
By Rouch\'e's Theorem, the number of zeros of $z \mapsto - w_{0,0}(\vec{p}_0,z,0,\vec{0}) - \rho  u_0$ in $D_{\mu/2-\beta}$ is equal to the number of zeros of $z \mapsto z- \rho  u_0$, which is  equal to 1 if $\beta$ is small enough. Since $\beta > 0$ is arbitrary, there is one (and only one) $z_0$ in $D_{\mu/2}$ such that $-\rho^{-1} w_{0,0}(\vec{p}_0,z_0,0,\vec{0})=u_0$. In other words, $(\vec{p}_0 , u_0 ) = E_\rho ( \vec{p}_0 , z_0 )$.
\end{proof}

\subsubsection{The smooth Feshbach-Schur map} \label{smoothfunc} We choose  a  smooth function $\chi: \mathbb{R} \rightarrow [ 0 , 1 ]$ with compact support included in $[0,1]$, having the property that $\chi(x)=1$ for any $x \in [0,3/4]$.  For any $\rho>0$, we  introduce the smooth function $\chi_{\rho}:  \mathbb{R} \rightarrow [ 0 , 1 ]$, defined by the rescaling
\begin{equation}
\chi_{\rho }(r):=\chi\left( \frac{ r }{ \rho }\right).
\end{equation}
We set
\begin{equation*}
\overline{\chi}_{\rho}(r) := \sqrt{ 1 - \chi_{\rho}^2(r) }.
\end{equation*}
The functional calculus for self-adjoint operators allows us to construct the operators $\chi_{\rho}(H_f)$ and $\overline{\chi}_{\rho}(H_f)$ on the space $\mathcal{H}_f$. The next lemma shows that the pair $(H(\underline{w}(\vec{p},z)),  W_{0,0}( \underline{w}(\vec{p},z) ))$ is a Feshbach pair for $\chi_{\rho}(H_{f})$ for suitable values of the parameters $\rho, \xi, \gamma, \delta, \varepsilon$.
\begin{lemma}
\label{cbo}
Let $0 < \rho < 1/2$, $0 < \xi < 1/\sqrt{8 \pi}$, $0 < \gamma \ll \mu$, $\delta > 0$ and $0 < \varepsilon \ll \rho \mu$.  For all $H(\underline{w}(\cdot,\cdot)) \in \mathcal{B} (\gamma, \delta , \varepsilon)$ and  $(\vec{p},z) \in \mathcal{U} [ w_{0,0} ]$,
\begin{align*}
\big ( H(\underline{w}(\vec{p},z)),  W_{0,0}( \underline{w}(\vec{p},z) ) \big ),
\end{align*}
is a Feshbach pair for $\chi_\rho(H_f)$.
\end{lemma}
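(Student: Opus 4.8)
The goal is to verify the three conditions (a), (b), (c) of Definition \ref{Fesh} with $H = H(\underline{w}(\vec{p},z))$, $T = W_{0,0}(\underline{w}(\vec{p},z)) = w_{0,0}(\vec{p},z,H_f,\vec{P}_f)\mathds{1}_{H_f\le 1}$, and $\chi = \chi_\rho(H_f)$, $\overline{\chi} = \overline{\chi}_\rho(H_f)$. With these choices, $W := H - T = W_{\ge 1}(\underline{w}(\vec{p},z))$ is the sum of the Wick monomials with $m+n\ge 1$, whose norm is bounded by $\varepsilon$ (up to the $\xi$-summation constant) by Lemma \ref{lm:Hbounded_injective} and the hypothesis $H(\underline{w}(\cdot,\cdot))\in\mathcal{B}(\gamma,\delta,\varepsilon)$. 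Condition (a) is essentially free: since $T$ is a function of $H_f$ and $\vec{P}_f$ alone and $\chi_\rho(H_f)$, $\overline{\chi}_\rho(H_f)$ are functions of $H_f$, they all commute on the functional-calculus domain, so $\chi T = T\chi$ and $\overline{\chi}\,T = T\overline{\chi}$ (in particular the inclusions $\chi T\subset T\chi$ etc. hold as equalities of operators with the natural domains).

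\textbf{The main work is condition (b).} One must show that $T$ and $H_{\overline{\chi}} = T + \overline{\chi}W\overline{\chi}$ are bounded invertible from $D\cap\mathrm{Ran}(\overline{\chi})$ onto $\mathrm{Ran}(\overline{\chi})$. The key point is the lower bound on $T$ restricted to $\mathrm{Ran}(\overline{\chi}_\rho(H_f))$, i.e. on the spectral subspace where $H_f \ge 3\rho/4$ (using $\chi_\rho(r)=1$ for $r\le 3\rho/4$, so $\overline{\chi}_\rho$ is supported in $\{r\ge 3\rho/4\}$). Write $w_{0,0}(\vec{p},z,r,\vec{l}) = (r - m^{-1}\vec{p}\cdot\vec{l} - z) + \big(w_{0,0}(\vec{p},z,r,\vec{l}) - w_{0,0}(\vec{p},z,0,\vec{0}) - (r - m^{-1}\vec{p}\cdot\vec{l})\big) + \big(w_{0,0}(\vec{p},z,0,\vec{0}) + z\big)$. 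On the set $\mathcal{B}$ with $r\ge 3\rho/4$, the leading term $r - m^{-1}\vec{p}\cdot\vec{l} - z$ has modulus bounded below: since $|\vec{l}|\le r$, $|\vec{p}|<m$ (indeed $\vec{p}\in U[\vec{p}^*]$ so $|\vec{p}|/m \le |\vec{p}^*|/m + \mu = 1-\mu$), and $|z| < \mu\rho/2$ by $(\vec{p},z)\in\mathcal{U}[w_{0,0}]$ combined with Step 1 of Lemma \ref{lm:bihol} — one gets $\mathrm{Re}(r - m^{-1}\vec{p}\cdot\vec{l}) \ge \mu r$ roughly, hence $|w_{0,0}(\vec{p},z,r,\vec{l})| \gtrsim \mu\rho$ uniformly on $\{r\ge 3\rho/4\}$, after absorbing the error terms of size $\gamma$ (with $\gamma\ll\mu$) and $\delta$. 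This gives bounded invertibility of $T$ on $\mathrm{Ran}(\overline{\chi})$ via the functional calculus bound \eqref{essbo}, with $\|T^{-1}\mathds{1}_{H_f\ge 3\rho/4}\| \lesssim (\mu\rho)^{-1}$. Then $H_{\overline{\chi}} = T(1 + T^{-1}\overline{\chi}W\overline{\chi})$ on $\mathrm{Ran}(\overline{\chi})$, and since $\|T^{-1}\overline{\chi}W\overline{\chi}\| \lesssim (\mu\rho)^{-1}\varepsilon < 1$ because $\varepsilon\ll\rho\mu$, a Neumann series argument yields bounded invertibility of $H_{\overline{\chi}}$.

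\textbf{Condition (c)} then follows easily: $\overline{\chi}H_{\overline{\chi}}^{-1}\overline{\chi}W\chi$ is a product of the bounded operator $\overline{\chi}H_{\overline{\chi}}^{-1}\overline{\chi}$ (bounded by the previous step, of order $(\mu\rho)^{-1}$) with the bounded operator $W\chi$ (bounded by $\|W_{\ge 1}\|\lesssim\varepsilon$ times $\|\chi_\rho(H_f)\|\le 1$), hence bounded on all of $D$, indeed on $\mathcal{H}_{\mathrm{red}}$. I expect the main obstacle to be the careful bookkeeping in the lower bound for $|w_{0,0}(\vec{p},z,r,\vec{l})|$ on $\{3\rho/4\le r\le 1\}$: one has to track that the complex momentum $\vec{p}\in U[\vec{p}^*]$ still gives $\mathrm{Re}(r - m^{-1}\vec{p}\cdot\vec{l})$ comparably large, using $|\vec{l}|\le r$ and the precise radius $\mu m$ of $U[\vec{p}^*]$, and then check that the smallness conditions $\gamma\ll\mu$, $\delta>0$ arbitrary but with the $\mathcal{U}[w_{0,0}]$ constraint $|w_{0,0}(\vec{p},z,0,\vec0)|<\mu\rho/2$, and $\varepsilon\ll\rho\mu$ are exactly what is needed to close the estimates; everything else is a routine Neumann-series/functional-calculus argument. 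One should also note that $D = D(H_f + \vec{P}_f^2)$ and all operators in play preserve $\mathcal{H}_{\mathrm{red}}$ and the relevant spectral subspaces of $H_f$, so the domain issues in Definition \ref{Fesh} are handled by the functional calculus without further ado.
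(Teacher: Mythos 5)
Your overall plan is the paper's: check (a) trivially, prove the lower bound on $w_{0,0}$ on $\{r\ge 3\rho/4\}$ for (b), then Neumann-series for $H_{\overline{\chi}}$ and (c). But two of your estimates do not actually close under the stated hypotheses, and both are fixed by the paper's slightly different bookkeeping.

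First, the decomposition you propose — splitting off $(r - m^{-1}\vec{p}\cdot\vec{l} - z)$ and $(w_{0,0}(\vec{p},z,0,\vec{0})+z)$ separately — forces you to bound $|z|$ and the relevant remainder $|w_{0,0}(0,\vec{0})+z|\le\delta$ independently, which drags $\delta$ into the final lower bound. You also assert $|z|<\mu\rho/2$; Step~1 of Lemma~\ref{lm:bihol} only gives $\mathcal{U}[w_{0,0}]\subset U[\vec{p}^*]\times D_{\mu\rho/2+\delta}$, so the correct bound is $|z|<\mu\rho/2+\delta$. Neither of these is compatible with the hypothesis, which only assumes $\delta>0$ with no upper bound. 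The paper avoids $z$ entirely: it peels off $w_{0,0}(\vec{p},z,0,\vec{0})$ (not $z$) and controls it by the defining inequality $|w_{0,0}(\vec{p},z,0,\vec{0})|<\mu\rho/2$ of $\mathcal{U}[w_{0,0}]$, so $\delta$ never enters. Your own closing remark correctly identifies the $\mathcal{U}[w_{0,0}]$ constraint as the operative one, but the decomposition you actually write down is not the one that exploits it; you need to recombine the $z$-terms (equivalently, peel off $w_{0,0}(0,\vec{0})$ directly).

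Second, you absorb "error terms of size $\gamma$." That is not quite enough: the polydisc gives $\|t(\vec{p},z,\cdot,\cdot)-(r-m^{-1}\vec{p}\cdot\vec{l})\|^{\sharp}\le\gamma$, which is a bound on the \emph{derivatives} (the function itself vanishes at $(0,\vec{0})$). Integrating along a path of length $\lesssim r$ (using $|\vec{l}|\le r$) yields the pointwise bound $|t-(r-m^{-1}\vec{p}\cdot\vec{l})|\le r\gamma$, and it is this $r$-factor that saves the estimate: one then gets $|w_{0,0}|\ge r(\mu-\gamma)-\mu\rho/2\ge\rho\mu/4-3\rho\gamma/4>0$ for $\gamma\ll\mu$, uniformly in $0<\rho<1/2$. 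A flat $\gamma$ bound on the marginal error would require $\gamma\ll\rho\mu$, which is strictly stronger than the hypothesis $\gamma\ll\mu$, and would fail for small $\rho$. So you need to spell out the fundamental-theorem-of-calculus step; the rest of your argument (the $\mu r$ lower bound from $|\vec{p}|/m\le 1-\mu$ and $|\vec{l}|\le r$, the Neumann series for $H_{\overline{\chi}}$ using $\varepsilon\ll\rho\mu$, and condition (c)) is exactly the paper's.
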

\begin{proof}
The proof consists in checking the conditions (a), (b) and (c) of Definition \ref{Fesh}. Condition (a)  is trivial. 
To prove that $W_{0,0}( \underline{w}(\vec{p}, z ) )$ is bounded invertible on $\mathrm{Ran} \, \overline{\chi}_\rho(H_f)$, we use the functional calculus of  Paragraph \ref{funcpar}.  It is sufficient to show that there exists a constant $C>0$  such that $\vert w_{0,0}(\vec{p},z, r , \vec{l} ) \vert \ge C  > 0$ for all $r \in [ \frac{3\rho}{4} , 1 ]$, $| \vec{l} | \le r$, and $(\vec{p},z) \in \mathcal{U}[ w_{0,0} ]$.  To shorten the length of the formulas, we set $t(\vec{p},z,r,\vec{l})=w_{0,0}(\vec{p},z, r , \vec{l} ) -w_{0,0}(\vec{p},z, 0, \vec{0} )$. One has 
\begin{align}
| w_{0,0}( \vec{p},z,r , \vec{l}  ) | & \ge r - | \frac{ \vec{p} }{ m } \cdot \vec{ l } | -Ê\big | t (\vec{p},z , r , \vec{l} )- \big ( r - \frac{\vec{p}}{ m } \cdot \vec{ l } \big ) \big | - |w_{0,0}(\vec{p},z, 0, \vec{0} ) |. \label{eq:aaa}
\end{align}
But
\begin{align*}
& t ( \vec{p},z, r , \vec{l} ) - \big ( r - \frac{ \vec{p}  }{ m } \cdot \vec{ l } \big ) \\
& = \sum_{j=1}^{3}  \int_{0}^{ l_j }   \left( (\partial_{l_j} t )(\vec{p},z,r,  \hat{l}_{j}(l')) + \frac{p_{j}}{m} \right) dl'  + \int_{0}^{r}   \left( (\partial_{r'} t )(\vec{p},z, r',   \vec{0}  ) -1\right) dr',
 \end{align*}
   with $ \hat{l}_{1}(l')=(l', 0 , 0)$,  $\hat{l}_{2}(l')=(l_1,l', 0)$,  $ \hat{l}_{3}(l')=(l_1 ,l_2,l' )$.  Since $H(\underline{w} ( \cdot , \cdot ))$ belongs to the polydisc $ \mathcal{B} ( \gamma  , \delta ,  \varepsilon )$, we deduce that
   $\| t( \vec{p},z , r , \vec{l} ) - ( r -  m^{-1}  \vec{p} \cdot \vec{l} ) \|^{\sharp} \le \gamma$,  and hence the previous relation implies
\begin{align*}
\big | t (\vec{p}, z , r , \vec{l} ) - \big ( r - \frac{ \vec{p} }{ m } \cdot \vec{ l } \big ) \big | \le r \gamma .
\end{align*}   
Since $(\vec{p},z) \in \mathcal{U}[ w_{0,0} ]$,   $| w_{0,0}(\vec{p},z,0,\vec{0}) | < \mu \rho / 2$, and since $\vert \vec{p}-\vec{p}^* \vert<\mu m$, where $\mu=( m-\vert \vec{p}^* \vert)/2m$, we have that $| \vec{p} | < ( 1 - \mu ) m$. Therefore \eqref{eq:aaa} gives
\begin{align}
| w_{0,0}( \vec{p},z ,r, \vec{l} ) | & \ge r \left( \mu - \gamma \right)- \frac{ \mu \rho}{2} . \label{eq:a10}
\end{align}
Since $r \geq 3 \rho/4$, we conclude that, for $\gamma \ll \mu$, $W_{0,0}( \underline{w}( \vec{p},z ))$ is bounded invertible on $\mathrm{Ran} \, \overline{\chi}_\rho(H_f)$ with an inverse bounded by $\mathcal{O} ( ( \mu \rho )^{-1} )$.

To prove that the operator $W_{0,0}( \underline{w}(\vec{p},z ) ) + \overline{\chi}_\rho (H_f) W_{ \ge 1 }( \underline{w}(\vec{p},z ) ) \overline{\chi}_\rho(H_f)$ is bounded invertible on $\mathrm{Ran} \, \overline{\chi}_\rho(H_f)$,  we use  Lemma \ref{opboundl} together with the fact that $H(\underline{w}) \in \mathcal{B} ( \gamma  , \delta ,  \varepsilon )$, which implies that $ \| W_{ \ge 1 }( \underline{w}(\vec{p},z ) )  \| \leq  \varepsilon$. For $ 0 < \varepsilon \ll \rho \mu$, it follows that
\begin{align}
&\big \| \big ( W_{0,0}( \underline{w}(\vec{p},z )  ) \big )^{-1} \overline{\chi}_\rho(H_f)  W_{ \ge 1 }( \underline{w}(\vec{p},z )  ) \big \| <1, \notag \\
& \big \| W_{ \ge 1 }( \underline{w}(\vec{p},z )  ) \big ( W_{0,0}( \underline{w}(\vec{p},z )  ) \big )^{-1} \overline{\chi}_\rho(H_f) \big \| <1, \label{eq:a1}
\end{align}
and hence  $W_{0,0}( \underline{w}(\vec{p},z ) ) + \overline{\chi}_\rho (H_f) W_{ \ge 1 }( \underline{w}(\vec{p},z ) ) \overline{\chi}_\rho(H_f)$ is bounded invertible on $\mathrm{Ran} \, \overline{\chi}_\rho(H_f)$. Condition (c) is a direct consequence of \eqref{eq:a1}.
\end{proof}

\subsubsection{Definition of the renormalization map}
The renormalization map is the composition of the scale transformation, $S_{\rho}$, the smooth Feshbach-Schur map, $F_{\chi_\rho(H_f)}$, and the inverse of the rescaling of the spectral parameter $E_{\rho}$.

\begin{definition}[The renormalization map $\mathcal{R}_\rho$]\label{def_Rrho}
Let $0 < \rho < 1/2$, $0 < \xi < 1/(4\sqrt{8 \pi})$, $0 < \delta ,\varepsilon \ll \rho \mu$ and $0 < \gamma \ll \mu $. The renormalization transformation $\mathcal{R}_\rho : \mathcal{B} ( \gamma,\delta,\varepsilon) \to  \mathcal{W}_{\mathrm{op}}$ is defined by
\begin{equation}\label{eq:def_Rrho}
\mathcal{R}_\rho (H (\underline{w}) )(\vec{p}, \zeta ) := S_\rho \left( F_{ \chi_\rho(H_f) } \left( H \left( \underline{w} \left( E_\rho^{-1}( \vec{p},\zeta )\right)  \right) , W_{0,0} \left( \underline{w} \left(E_\rho^{-1}(\vec{p}, \zeta ) \right) \right) \right) \right) ,
\end{equation}
for  any $( \vec{p} , \zeta ) \in U[ \vec{p}^* ] \times D_{ \mu / 2 }$ and $H(\underline{w} ( \cdot , \cdot )) \in \mathcal{B} ( \gamma,\delta,\varepsilon)$ such that the map $( \vec{p} , z ) \mapsto w_{0,0}( \vec{p} , z ,  0 , \vec{0} )$ is analytic on $U[ \vec{p}^* ]\times D_{ \mu / 2 }$. We set 
\begin{equation}\label{eq:hatw}
H(\underline{\hat{w}}(\vec{p}, \zeta )):=\mathcal{R}_\rho (H (\underline{w}) )(\vec{p}, \zeta ).
\end{equation}
\end{definition}

\begin{remark}
$\quad$
\begin{itemize}
\item The rescaling of the spectral parameter $\zeta$, given by $E_\rho^{-1}( \vec{p},\zeta )$ in Definition \ref{def_Rrho}, allows us to control the expansion of the relevant part along the iteration of the renormalization map. More precisely, if we set 
\begin{equation*}
\tilde{\mathcal{R}}_\rho (H (\underline{w}) )(\vec{p}, z ) := S_\rho \left( F_{ \chi_\rho(H_f) } \left( H \left( \underline{w} ( \vec{p},z )  \right) , W_{0,0} \left( \underline{w} (\vec{p}, z )   \right) \right) \right)
\end{equation*}
and assume that $z$ is sufficiently small to iterate the Feshbach-Schur map and the scale transformation, then the absolute value of the relevant part increases, and after a large number of iterations, it is not possible anymore to apply the Feshbach-Schur map. This phenomenon is illustrated in Figure 2.1; The marginal part coincides with the fixed point manifold $\mathcal{M}_{fp}$, the irrelevant part coincides with the stable manifold $\mathcal{M}_s$, and the relevant part  with the unstable manifold $\mathcal{M}_u$.

\begin{center}
 \begin{tikzpicture}
\label{fig00}

         \path[fill=black!15] 
             (0,0)--(6,2) to[bend left=45] (8,0.5) -- (2, -2.5)  to[bend left=-45] (0,0)  ;

      \draw[-, very thick] (4,1.3) -- (4.0,4.5);
       \draw[-,dotted,very thick] (4,0) -- (4.0,1.3);
        \draw[-, very thick] (0,0) -- (6.5,2.18);

       \draw (3,-1.2) node[below] {$\mathcal{M}_s$};
      \draw (4,3) node[right] {$\mathcal{M}_u$};
      \draw (1,0.5) node[above] {$\mathcal{M}_{fp}$};  
     \fill[black] (7,1) circle (0.5mm);
      \fill[black] (5.5,2.4) circle (0.5mm);
        \draw[-,dashed,thick](7,1)--(7,0.5);
         \draw[-,dashed,thick](5.5,2.4)--(5.5,1.5);
          \draw[-, thick](7,1)  to[bend left=30]  (5.5,2.4);
            \draw[->, thick](5.5,2.4) to[bend left=10] (5.4,4);
             \draw (7,1) node[right] {\small $H (\vec{p})- z\mathds{1}$};
              \draw (5.5,2.4) node[right] {\small $\tilde{\mathcal{R}}^{N}_\rho (H (\vec{p})- z\mathds{1}) $};   
                \draw (5,-2.5) node[below] { \small Figure 2.1. The unstable manifold blows up along the iteration of the renormalization map};
                 \draw (1.2,-2.9) node[below] { \small if the spectral parameter is not rescaled. };

        \end{tikzpicture}

\end{center}

\item We anticipate that the operator-valued function $\mathcal{R}_\rho (H (\underline{w}) )( \cdot,\cdot )$ belongs to $ \mathcal{W}_{\mathrm{op}}$, i.e., can be written as a series of Wick monomials. The proof is sketched in the next subsection. In particular, the sequence of kernels $\underline{\hat{w}}$ defined in \eqref{eq:hatw} is uniquely determined by the procedure we use to write $\mathcal{R}_\rho (H (\underline{w}) )( \cdot,\cdot )$ as a sequence of Wick monomials.
\item The condition that  $(\vec{p},z) \mapsto w_{0,0}(\vec{p},z,0,\vec{0})$ is  analytic on $U[\vec{p}^*] \times D_{\mu /2}$ can be weakened. The proof of Lemma   \ref{lm:bihol} (step 3) indeed shows that the condition  that  $z \mapsto w_{0,0}(\vec{p},z,0,\vec{0})$ is analytic on  $ D_{\mu/2}$ for all $\vec{p} \in U[\vec{p}^*]$  is sufficient to ensure existence of the inverse of the scale transformation $E_{\rho}$.
\end{itemize}
\end{remark}

\subsection{Wick ordering} \label{Wicki} We sketch the proof of the claim that $\mathcal{R}_\rho : \mathcal{B} ( \gamma,\delta,\varepsilon) \to  \mathcal{W}_{\text{op}}$. In fact, if $ \gamma,\delta,\varepsilon$ are ``small enough'', a   stronger results holds: $\mathcal{R}_{\rho}$ is codimension-4 contractive and maps  $ \mathcal{B} ( \gamma,\delta,\varepsilon)$ to $ \mathcal{B} ( \gamma+\varepsilon/2,\varepsilon/2,\varepsilon/2)$. The discussion of this essential property is done in Subsection \ref{itere} and the  exact expression of the new operator $H(\underline{\hat{w}})$ is given in Appendix \ref{Wicko}.  In this paragraph, we focus on the reason why $\mathcal{R}_\rho(H(\underline{w}))$ can be rewritten as a series of Wick monomials. We omit the arguments $(\vec{p},z)$ to shorten the length of the formulas. We have that
\begin{align}
 &F_{ \chi_\rho(H_f) } \left( H \left( \underline{w} \right)  , W_{0,0} \left( \underline{w}  \right)  \right)=  W_{0,0} \left( \underline{w}  \right)  + \chi_{\rho}(H_f)  W_{\geq 1} \left( \underline{w}    \right)  \chi_{\rho}(H_f) \notag \\
 & - \chi_{\rho}(H_f)  W_{\geq 1} \left( \underline{w}   \right)  \overline{\chi}_{\rho}(H_f)  \big ( W_{0,0} \left( \underline{w}  \right)+ \overline{\chi}_{\rho}(H_f) W_{\geq 1} \left( \underline{w}   \right)  \overline{ \chi}_{\rho}(H_f) \big )^{-1} \notag \\
 & \quad \overline{\chi}_{\rho}(H_f) W_{\geq 1} \left( \underline{w}  \right)   \chi_{\rho}(H_f). \label{Wickl}
\end{align}
The two terms on the first line of  the right-hand side of \eqref{Wickl} are already written with Wick monomials. To rewrite  the second line of \eqref{Wickl} with Wick monomials,  we need to expand the inverse of $W_{0,0} \left( \underline{w}  \right)+ \overline{\chi}_{\rho}(H_f) W_{\geq 1} \left( \underline{w}   \right)  \overline{ \chi}_{\rho}(H_f)$ in a convergent Neumann series, and to normal order the product of annihilation and creation operators of  each term of this series. The Neumann series expansion of the last term in \eqref{Wickl} reads
\begin{equation}
\label{sumL}
\sum_{L=2}^{\infty} (-1)^{L-1}  \chi_{\rho}(H_f)  W_{\geq 1} \left( \underline{w}   \right)   \frac{\overline{\chi}_{\rho}(H_f) }{W_{0,0} \left( \underline{w}  \right)}    \left( \overline{\chi}_{\rho}(H_f) W_{\geq 1} \left( \underline{w}   \right)  \frac{ \overline{ \chi}_{\rho}(H_f)}{W_{0,0} \left( \underline{w}  \right)} \right)^{L-2} \overline{\chi}_{\rho}(H_f) W_{\geq 1} \left( \underline{w}  \right)   \chi_{\rho}(H_f).
\end{equation}
 Let $L \in \mathbb{N}$, $L \ge 2$, and consider the $L^{th}$  term in the sum \eqref{sumL}. Each operator $W_{\geq 1} \left( \underline{w}   \right)$ is a series of Wick monomials; We label   the  $L $  operators  $W_{\geq 1}\left( \underline{w}   \right)$ by an index $i$, $i=1,...,L$.  For each   $W_{\geq 1,i}\left( \underline{w}   \right)$,  we  choose one term $W_{M_i,N_i}(\underline{ w})$ in the series defining  $W_{\geq 1,i}\left( \underline{w}   \right)$ and we  normal order the annihilation and creation operators appearing in the product
 \begin{align}
 \chi_{\rho}(H_f)  W_{M_1,N_1} \left( \underline{w}   \right)   \frac{\overline{\chi}_{\rho}(H_f) }{W_{0,0}  \left( \underline{w}  \right)} \prod_{i=2}^{L-1}  & \left( \overline{\chi}_{\rho}(H_f) W_{M_i,N_i} \left( \underline{w}   \right)  \frac{ \overline{ \chi}_{\rho}(H_f)}{W_{0,0} \left( \underline{w}  \right)} \right) \notag \\
& \quad  \overline{\chi}_{\rho}(H_f) W_{M_{L},N_{L}} \left( \underline{w}  \right)   \chi_{\rho}(H_f).  \label{prewick}
\end{align}
Normal ordering is carried out  with the help of Wick's Theorem and the pull-through formula (whose proofs are standards; see \cite{BaFrSi98_01}).
\begin{lemma}[Wick ordering]
\label{Wo}
 Set $\mathbb{N}_p:=\lbrace 1,...,p \rbrace$,    $b^{+}(\underline{k}_j):=b^{*}(\underline{k}_j)$,   $b^{-}(\underline{k}_j):=b(\underline{k}_j)$, and let $\sigma_i= \pm$, $i=1,...,p$. For any ordered finite subset $\mathcal{P}=\lbrace i_1,...,i_p\rbrace$ of $\mathbb{N}$, $i_1<...<i_p$, we set 
\begin{equation*}
\prod_{i \in \mathcal{P}} b^{\sigma_i} (\underline{k}_i)= b^{\sigma_{i_1}} (\underline{k}_{i_1})... b^{\sigma_{i_p}} (\underline{k}_{i_p}).
\end{equation*}
Then, 
\begin{equation}
\label{wiwi}
\prod_{i=1}^{p} b^{\sigma_i}(\underline{k}_i)= \sum_{\mathcal{P} \subset  \mathbb{N}_p}\langle \Omega \vert  \prod_{i \in \mathbb{N}_p \setminus \mathcal{P} } b^{\sigma_i}(\underline{k}_i)  \Omega \rangle \text{ } :\prod_{j \in \mathcal{P}} b^{\sigma_j}(\underline{k}_j) : ,
\end{equation}
where  $: \cdot :$ denotes the Wick-ordered product of creation and annihilation operators:
\begin{equation*}
 :\prod_{j \in \mathcal{P}} b^{\sigma_j}(\underline{k}_j) :=\prod_{j \in \mathcal{P}, \sigma_j=+} b^{\sigma_j}(\underline{k}_j)  \prod_{l \in \mathcal{P}, \sigma_l=-} b^{\sigma_l}(\underline{k}_l).
\end{equation*}
\end{lemma}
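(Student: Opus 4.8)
The plan is to establish \eqref{wiwi} by induction on $p$, peeling off the leftmost factor $b^{\sigma_1}(\underline{k}_1)$ at each step and using the canonical commutation relations together with the elementary recursion
\[
\langle \Omega \vert\, b^{-}(\underline{k}_1) \prod_{i\in S} b^{\sigma_i}(\underline{k}_i)\, \Omega \rangle = \sum_{\substack{j\in S \\ \sigma_j = +}} \langle \Omega \vert\, b^{-}(\underline{k}_1) b^{+}(\underline{k}_j)\, \Omega \rangle \, \langle \Omega \vert\, \prod_{i\in S\setminus\{j\}} b^{\sigma_i}(\underline{k}_i)\, \Omega \rangle ,
\]
valid for any finite ordered set $S$, which itself follows by commuting $b^{-}(\underline{k}_1)$ to the right until it annihilates $\Omega$. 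The base case $p=0$ is trivial: both sides equal $\mathds{1}$, since $\mathcal{P} = \emptyset$ is the only subset of $\mathbb{N}_0 = \emptyset$ and the associated coefficient is $\langle \Omega \vert \Omega \rangle = 1$; the case $p=1$ is immediate as well.

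Assume \eqref{wiwi} holds for products of $p$ factors, and consider a product of $p+1$ factors, written as $b^{\sigma_1}(\underline{k}_1)\,\big( \prod_{i=2}^{p+1} b^{\sigma_i}(\underline{k}_i) \big)$. Expanding the parenthesis by the induction hypothesis applied to the index set $\{2,\dots,p+1\}$ and multiplying on the left by $b^{\sigma_1}(\underline{k}_1)$, I distinguish two cases according to the sign $\sigma_1$.

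If $\sigma_1 = +$, a creation operator placed on the far left is already in normal order with respect to the creation part of each Wick monomial, so $b^{+}(\underline{k}_1) :\!\prod_{j\in\mathcal{P}} b^{\sigma_j}\!: \, = \, :\!\prod_{j\in\{1\}\cup\mathcal{P}} b^{\sigma_j}\!:$ for every $\mathcal{P} \subset \{2,\dots,p+1\}$; conversely, in the claimed $(p+1)$-factor formula the terms indexed by $\mathcal{P}'$ with $1 \notin \mathcal{P}'$ carry the coefficient $\langle \Omega \vert\, b^{+}(\underline{k}_1) (\cdots) \Omega \rangle = 0$, because $\langle \Omega \vert\, b^{+}(\underline{k}_1) = 0$. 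Hence that formula collapses to the sum over $\mathcal{P}' = \{1\}\cup\mathcal{P}$, $\mathcal{P}\subset\{2,\dots,p+1\}$, which is exactly what the induction hypothesis produces. If $\sigma_1 = -$, I write each Wick monomial as $:\!\prod_{j\in\mathcal{P}} b^{\sigma_j}\!: \, = C_{\mathcal{P}} A_{\mathcal{P}}$, with $C_{\mathcal{P}}$ and $A_{\mathcal{P}}$ the products of, respectively, its creation and its annihilation operators, and commute $b^{-}(\underline{k}_1)$ rightward through $C_{\mathcal{P}}$ using $[b^{-}(\underline{k}_1),b^{+}(\underline{k}_j)] = \delta_{\lambda_1 \lambda_j}\delta(\vec{k}_1 - \vec{k}_j) = \langle \Omega \vert\, b^{-}(\underline{k}_1)b^{+}(\underline{k}_j)\, \Omega \rangle$. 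This yields the uncontracted term $C_{\mathcal{P}}\, b^{-}(\underline{k}_1)\, A_{\mathcal{P}} = \, :\!\prod_{j\in\{1\}\cup\mathcal{P}} b^{\sigma_j}\!:$ plus, for each creation label $j\in\mathcal{P}$, a contracted term $\langle \Omega \vert\, b^{-}(\underline{k}_1)b^{+}(\underline{k}_j)\, \Omega \rangle \, :\!\prod_{l\in\mathcal{P}\setminus\{j\}} b^{\sigma_l}\!:$. Multiplying by the induction-hypothesis coefficients $\langle \Omega \vert\, \prod_{i\in\{2,\dots,p+1\}\setminus\mathcal{P}} b^{\sigma_i}\, \Omega \rangle$ and summing over $\mathcal{P}$, the uncontracted contribution reproduces the $1\in\mathcal{P}'$ part of the right-hand side of \eqref{wiwi} for $p+1$ factors, and, after reindexing $\mathcal{P}' = \mathcal{P}\setminus\{j\}$ and invoking the displayed recursion, the contracted contributions assemble into the $1\notin\mathcal{P}'$ part. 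This closes the induction.

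The only genuinely delicate point is the combinatorial bookkeeping in the case $\sigma_1 = -$: one has to check that summing the contracted terms over all admissible contractions of $\underline{k}_1$ reconstructs precisely the vacuum-expectation-value coefficients attached to the subsets not containing $1$, which is the content of the displayed recursion. As usual, all these manipulations are to be understood after smearing the operator-valued distributions $b^{\sharp}(\underline{k})$ against test functions, and the commutators are carried out on the dense subspace of finite-particle vectors; the underlying facts -- Wick's theorem and the pull-through formula -- are standard, and we refer to \cite{BaFrSi98_01} for details.
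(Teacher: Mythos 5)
Your induction on $p$ is correct. The base cases are trivial, the case $\sigma_1=+$ collapses the right-hand side of \eqref{wiwi} to precisely the sum produced by the induction hypothesis (since all coefficients with $1\notin\mathcal{P}'$ vanish by $\langle\Omega| b^+(\underline{k}_1)=0$), and in the case $\sigma_1=-$ the commutation of $b^-(\underline{k}_1)$ through the creation block $C_{\mathcal P}$ produces exactly one uncontracted term, which reindexes to the $1\in\mathcal{P}'$ part, plus contracted terms which, after setting $S=\{2,\dots,p+1\}\setminus\mathcal{P}'$ and invoking your displayed vacuum-expectation recursion, reassemble into the $1\notin\mathcal{P}'$ part. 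The recursion itself is correct: pushing $b^-(\underline{k}_1)$ to the right inside the vacuum expectation picks up one CCR contraction for each creation label in $S$, and the operator term $b^-(\underline{k}_1)\Omega$ vanishes. Note that the paper itself gives no proof of this lemma --- it states that the proofs of Lemmas~\ref{Wo} and~\ref{Pt} are standard and refers to \cite{BaFrSi98_01} --- so there is no in-paper argument to compare against; your inductive argument is the standard proof of Wick's theorem in this setting and is complete and correct. One small presentational point you might make explicit: since creation operators commute among themselves and likewise for annihilation operators, the ordering within each block of $:\!\prod\!:$ is immaterial, which is what makes the identifications $b^+(\underline{k}_1)\!:\!\prod_{\mathcal{P}}\!:\ =\ :\!\prod_{\{1\}\cup\mathcal{P}}\!:$ and $C_{\mathcal P}b^-(\underline{k}_1)A_{\mathcal P}=\ :\!\prod_{\{1\}\cup\mathcal{P}}\!:$ unambiguous independently of any convention on the internal ordering of the normal-ordered blocks.
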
 
\begin{lemma}[Pull-through formula]
\label{Pt}
Let $f:\mathbb{R}\times \mathbb{R}^3 \to \mathbb{C}$ be a  measurable function. Then, for a.e. $\vec{k} \in \mathbb{R}^3$,
\begin{align}
b(\vec{k}) f(H_f,\vec{P}_f)&=f(H_f + \vert \vec{k} \vert ,\vec{P}_f +\vec{k})  b(\vec{k}),\\
 f(H_f,\vec{P}_f) b^*(\vec{k})&=  b^{*}(\vec{k}) f(H_f + \vert \vec{k} \vert ,\vec{P}_f +\vec{k}). 
\end{align}
\end{lemma}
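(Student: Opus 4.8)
The plan is to verify both identities by a direct computation, sector by sector, in the symmetric Fock space $\mathcal{F}_{+}(L^{2}(\underline{\mathbb{R}}^3))$, using the explicit description of $f(H_f,\vec{P}_f)$ given in Paragraph \ref{funcpar} together with the standard action of the annihilation and creation operators. Since the statement is a distributional identity, I would first smear the operator-valued distribution $\underline{k}\mapsto b(\underline{k})$ against a test function $h$ (say $h\in L^{2}(\underline{\mathbb{R}}^3)$ with compact support), turning the first claim into the operator identity
\begin{equation*}
\int_{\underline{\mathbb{R}}^3} h(\underline{k})\, b(\underline{k})\, f(H_f,\vec{P}_f)\, d\underline{k} = \int_{\underline{\mathbb{R}}^3} h(\underline{k})\, f\big(H_f+|\vec{k}|,\vec{P}_f+\vec{k}\big)\, b(\underline{k})\, d\underline{k}
\end{equation*}
on a convenient dense domain $D_0$ (e.g.\ the finite-particle vectors whose components are bounded with compact support). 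The pointwise a.e.\ statement is then recovered from the arbitrariness of $h$.

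Next, for $\Psi=(\psi^{(n)})_{n\ge 0}\in D_0$ and $\underline{k}^{(n)}=(\underline{k}_1,\dots,\underline{k}_n)$, I would compute the $n$-th component of each side. Recalling that $(b(\underline{k})\Psi)^{(n)}(\underline{k}^{(n)})=\sqrt{n+1}\,\psi^{(n+1)}(\underline{k},\underline{k}^{(n)})$ and that, by Paragraph \ref{funcpar}, $f(H_f,\vec{P}_f)$ acts on the $(n+1)$-particle sector by multiplication with $f\big(|\vec{k}|+\Sigma[\underline{k}^{(n)}],\,\vec{k}+\vec{\Sigma}[\underline{k}^{(n)}]\big)$, the left-hand side gives
\begin{equation*}
\big(b(\underline{k})\, f(H_f,\vec{P}_f)\,\Psi\big)^{(n)}(\underline{k}^{(n)}) = \sqrt{n+1}\, f\big(|\vec{k}|+\Sigma[\underline{k}^{(n)}],\,\vec{k}+\vec{\Sigma}[\underline{k}^{(n)}]\big)\, \psi^{(n+1)}(\underline{k},\underline{k}^{(n)}).
\end{equation*}
On the other hand, $f(H_f+|\vec{k}|,\vec{P}_f+\vec{k})$ acts on the $n$-particle sector by multiplication with precisely the same function, and then $b(\underline{k})$ produces the factor $\sqrt{n+1}\,\psi^{(n+1)}(\underline{k},\underline{k}^{(n)})$; hence the right-hand side yields the identical expression. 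Equality in every sector proves the first identity. The second identity then follows by taking the adjoint of the first (both sides are densely defined and closable), or, equivalently, by the completely parallel computation with the creation operator, whose action on the $n$-particle sector adds a variable and symmetrizes.

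The main obstacle is purely one of bookkeeping: one must choose $D_0$ so that interchanging the $d\underline{k}$-integration with the sector decomposition and applying the (possibly unbounded) multiplication operators are all legitimate, and one must keep in mind that $f$ is merely measurable, so the identities hold as equalities of operators on the intersection of the relevant domains. When $f$ is essentially bounded on the set $\mathcal{R}$ of \eqref{RR}, as is the case in all our applications, the bound \eqref{essbo} removes these subtleties entirely. There is no conceptual difficulty; both Wick's theorem (Lemma \ref{Wo}) and the pull-through formula are classical, and we refer to \cite{BaFrSi98_01} for details.
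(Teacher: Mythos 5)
The paper does not actually prove Lemma \ref{Pt}; it declares the proof standard and cites \cite{BaFrSi98_01}. Your argument is precisely that standard proof: smear against a test function, use the explicit sector-by-sector action of $b(\underline{k})$ and the functional-calculus definition of $f(H_f,\vec{P}_f)$ from Paragraph \ref{funcpar}, observe that both sides multiply $\sqrt{n+1}\,\psi^{(n+1)}(\underline{k},\underline{k}^{(n)})$ by the same factor $f(|\vec{k}|+\Sigma[\underline{k}^{(n)}],\vec{k}+\vec{\Sigma}[\underline{k}^{(n)}])$, and obtain the $b^{*}$ identity by adjoints (noting, as is implicit, that the adjoint gives $\bar f$ in place of $f$, which suffices since $f$ is arbitrary). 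This is correct and matches the approach the paper implicitly relies on.
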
 
To explain how to use Lemmas \ref{Wo} and \ref{Pt}, we illustrate  Formula \eqref{prewick} with some pictures. We represent  the operators $w_{M_i,N_i}(H_f,\vec{P_f},\underline{k}_1,...,\underline{k}_{M_i},\underline{\tilde{k}}_1,...,\underline{\tilde{k}}_{N_i})$ by   circular domains and the creation/annihilation operators by plain lines. We set 
\begin{equation}
F:=\frac{ \overline{ \chi}^{2}_{\rho}(H_f)}{W_{0,0} \left( \underline{w}  \right)}.
\end{equation}
The operator \eqref{prewick} is an  integral over the momenta, $\underline{k}_{1}^{(M_1)},...,\underline{k}_{L}^{(M_L)},\underline{\tilde{k}}_{1}^{(N_1)},...,\underline{\tilde{k}}_{L}^{(N_L)}$, of the operators represented by the drawing

\begin{center}
\begin{tikzpicture}[scale=0.85, every node/.style={transform shape}]
 
   \pgfmathsetmacro{\a}{5}
    \pgfmathsetmacro{\b}{12}

   \draw (0,0) circle (1) ;
   \draw (\a,0) circle (1) ;
   \draw (\b,0) circle (1) ;
         
   % \draw[-,thick ] (1,0)--(2.2,0);  
    \draw[-,thick ] (0.8,0.58)--(2.2,0.8); 
     \draw[-,thick ] (0.95,0.31)--(2.2,0.4); 
      \draw[-,thick ] (0.8,-0.58)--(2.2,-0.8); 
    % \draw[-,thick ] (0.95,-0.31)--(2.2,-0.4); 
     
      \draw[-,thick] (-1,0)--(-2.2,0);  
    \draw[-,thick] (-0.8,0.58)--(-2.2,0.8); 
    % \draw[-,thick ] (-0.95,0.31)--(-2.2,0.4); 
      \draw[-,thick ] (-0.8,-0.58)--(-2.2,-0.8); 
      \draw[-,thick ] (-0.95,-0.31)--(-2.2,-0.4);

       \draw[-,thick ](1+\a,0)--(2.2+\a,0);  
    \draw[-,thick ](0.8+\a,0.58)--(2.2+\a,0.8); 
     \draw[-,thick ] (0.95+\a,0.31)--(2.2+\a,0.4); 
    %  \draw[-,thick ] (0.8+\a,-0.58)--(2.2+\a,-0.8); 
     \draw[-,thick ] (0.95+\a,-0.31)--(2.2+\a,-0.4); 
     
      \draw[-,thick] (-1+\a,0)--(-2.2+\a,0);  
    \draw[-,thick] (-0.8+\a,0.58)--(-2.2+\a,0.8); 
     \draw[-,thick ] (-0.95+\a,0.31)--(-2.2+\a,0.4); 
      \draw[-,thick ] (-0.8+\a,-0.58)--(-2.2+\a,-0.8); 
     \draw[-,thick ] (-0.95+\a,-0.31)--(-2.2+\a,-0.4);

      \draw[-,thick, ] (1+\b,0)--(2.2+\b,0);  
    \draw[-,thick ] (0.8+\b,0.58)--(2.2+\b,0.8); 
     \draw[-,thick ] (0.95+\b,0.31)--(2.2+\b,0.4); 
      \draw[-,thick ] (0.8+\b,-0.58)--(2.2+\b,-0.8); 
     \draw[-,thick ] (0.95+\b,-0.31)--(2.2+\b,-0.4); 
     
      \draw[-,thick] (-1+\b,0)--(-2.2+\b,0);  
    \draw[-,thick] (-0.8+\b,0.58)--(-2.2+\b,0.8); 
     % \draw[-,thick ] (-0.95+\b,0.31)--(-2.2+\b,0.4); 
      \draw[-,thick ] (-0.8+\b,-0.58)--(-2.2+\b,-0.8); 
     \draw[-,thick ] (-0.95+\b,-0.31)--(-2.2+\b,-0.4);

    \draw (-1.35,-1) node[below] {\small$M_1$};
    \draw (1.35,-1) node[below] {\small$N_1$};
     \draw (-1.35+\a,-1) node[below] {\small$M_2$};
    \draw (1.35+\a,-1) node[below] {\small$N_2$};
     \draw (-1.35+\b,-1) node[below] {\small$M_L$};
    \draw (1.35+\b,-1) node[below] {\small$N_L$};

 \draw (0,0.3) node[below] {\small$w_{M_1,N_1}$};
  \draw (\a,0.3) node[below] {\small$w_{M_2,N_2}$};
   \draw (\b,0.3) node[below] {\small$w_{M_L,N_L}$};

     \draw (2.56,0.3) node[below] {$F$};
      \draw (2.56+\a,0.3) node[below] {$F$};
      \draw (-2.56+\b,0.3) node[below] {$F$};
     \draw (-2.5,0.3) node[below] {$\chi_{\rho}$};
     \draw ( 2.56+\b,0.3) node[below] {$\chi_{\rho}$};

 \draw[-,thick,dotted] (8.5,0) -- (8.8,0);

  %   \draw (6,-1) node[above] {\small $ D_{1/2}$};
   %\draw (0,-1) node[above] {\small$ D_{1/2}$};
     %\draw (6,-0.14) node[above] { \tiny $\mathcal{ U }$};

\end{tikzpicture}

\end{center}

\noindent In the picture above, we have set $$\chi_{\rho}:=\chi_{\rho}(H_f)$$ and $$w_{M_i,N_i}:=w_{M_i,N_i}(H_f,\vec{P_f},\underline{k}_1,...,\underline{k}_{M_i},\underline{\tilde{k}}_1,...,\underline{\tilde{k}}_{N_i})$$ to shorten notation.  Plain lines starting from the left hand-side of an operator $w_{M_i,N_i}$ represent creation operators, whereas  plain lines starting from its right-hand side  represent annihilation operators.  We  use  Formula $\eqref{wiwi}$ to normal order the product of  creation and annihilation operators that appears in \eqref{prewick}. This amounts to picking out $m_1/n_1$ creation/annihilation operators over  the $M_1/N_1$ creation/annihilation  operators available from $W_{M_1,N_1}$, .... , $m_L/n_L$ creation/annihilation operators over  the $M_L/N_L$  creation/annihilation operators available from $W_{M_L,N_L}$, and  to contracting all the remaining creation/annihilation operators that appear in \eqref{prewick}.  As the kernels $w_{M_i,N_i}$ are symmetric in $\underline{k}_1,...,\underline{k}_{M_i}$ and  $\underline{\tilde{k}}_1,...,\underline{\tilde{k}}_{N_i}$, the contractions with the same numbers  $m_1,...,m_L/n_1,...,n_L$ give rise to the same contribution. There are  therefore 
$$C^{\underline{M},\underline{N}}_{\underline{m}, \underline{n}}:=\prod_{i=1}^{L} \left( \begin{array}{c} M_i\\m_i \end{array} \right)  \left( \begin{array}{c} N_i\\n_i \end{array} \right)$$ contraction schemes giving rise to the same  operator, where we have set $\underline{M}:=(M_1,...,M_L)$.  Finally, we pull-through the uncontracted $m_1+....+m_L$ creation operators to the left of the operator $\chi_{\rho}(H_f)$ located on the left-hand side of \eqref{prewick}, and the $n_1+....+n_L$ uncontracted annihilation operators to the right of the operator $\chi_{\rho}(H_f)$  located on the right-hand side of \eqref{prewick}.  This causes a shift in the kernel arguments via Formula \eqref{Pt}.  Formula  \eqref{wiwi}  tells us  that the contracted part can be rewritten as the vacuum expectation value of an operator $\mathscr{V}^{\underline{M},\underline{N}}_{\underline{m}, \underline{n}}(\vec{p},z,r,\vec{l},\underline{K}^{(\underline{m},\underline{n})})$; See Appendix \ref{Wicko} for more details.  The Wick-ordered operator corresponding to \eqref{prewick} is therefore a sum over $m_1,...,m_L,n_1,...,n_L$ (with $ 0 \leq m_i \leq M_i$, $ 0 \leq n_i \leq N_i$),  of  operators of the form
\begin{equation}
\label{couscous}
 C^{\underline{M},\underline{N}}_{\underline{m}, \underline{n}} \int d\underline{K}^{(\underline{m},\underline{n})}   b^{*}(\underline{k}^{(\underline{m})}) \text{ }  \langle \Omega \vert    \mathscr{V}^{\underline{M},\underline{N}}_{\underline{m}, \underline{n}} (\vec{p},z,r,\vec{l},\underline{K}^{(\underline{m},\underline{n})})  \Omega  \rangle_{r=H_f, \vec{l}=\vec{P}_f}  \text{ }b(\underline{\tilde{k}}^{(\underline{n})}),
 \end{equation}
 where the operator $ \mathscr{V} ^{\underline{M},\underline{N}}_{\underline{m}, \underline{n}} (\vec{p},z,r,\vec{l},\underline{K}^{(\underline{m},\underline{n})})$ sitting within the expectation value depends on  the operator-valued functions $F$ and $w_{M_i,N_i}$, $i=1,...,L$. The operator in \eqref{couscous} is Wick-ordered. Applying this procedure for each $(M_1,N_1),...,(M_L,N_L)$, and for each $L \in \mathbb{N}$, $L\geq 2$,  \eqref{Wickl} can be rewritten as  a series of Wick monomials. The operator norm convergence of this series and the fact that the kernel $\underline{\hat{w}}$ belongs to $\mathcal{W}$ requires a little more effort to be established. We refer the reader to \cite{BaChFrSi03_01} or \cite{HaHe11_01} for  detailed proofs.
\indent

\section{The first decimation step}
\label{firstep}
Let $0 < \gamma,\delta, \varepsilon< 1$.  Constraining the coupling constant $\lambda_0$, any  polydisc $\mathcal{B}(\gamma,\delta, \varepsilon)$  (see Definition $\ref{discc}$) can be reached from the initial Hamiltonian 
 \begin{equation}
\label{Hp3}
H(\vec{p},z): = \frac{\vec{P}_f^2}{2m} -\frac{ \vec{p}}{m} \cdot \vec{P_f}+  \omega_0 \left(\begin{array}{cc} 1 & 0 \\ 0& 0 \end{array} \right) \notag +\lambda_0  H_I + H_{f}- z \mathds{1},
\end{equation}
by applying to $H(\vec{p},z)$ some isospectral transformations. The aim of this section is to highlight how this can be achieved.
\subsection{The first step: Applying the smooth Feshbach-Schur map to $H(\vec{p},z)$}
\subsubsection{Some notations}
Let $(\vert  \uparrow \rangle, \vert  \downarrow \rangle)$ be the orthonormal  basis of  $\mathbb{C}^2$  in which  
\begin{equation*}
 \left(\begin{array}{cc} \omega_0 & 0 \\ 0& 0 \end{array} \right)= \omega _0 P_{\uparrow}+ 0 \cdot P_{\downarrow}, 
\end{equation*}
where we  introduced the orthogonal projections  $ P_{\uparrow}:= \vert  \uparrow \rangle \langle  \uparrow \vert$ and $ P_{\downarrow}:= \vert  \downarrow\rangle \langle \downarrow \vert$. Let $\chi$ and $\overline{\chi}$ be the smooth functions introduced in  Paragraph \ref{smoothfunc}. Let $\rho_0>0$. We  define two new  bounded operators on $\mathbb{C}^2 \otimes \mathcal{F}_{+}(L^{2}( \underline{ \mathbb{R}}^3 ) )$ by
\begin{equation}
\bm{\chi} := P_{\downarrow} \otimes  \chi_{\rho_{0}}(H_f),   \qquad   \qquad  \overline{\bm{\chi}} := P_{\downarrow} \otimes \overline{\chi}_{\rho_0}(H_f) +  P_{\uparrow}  \otimes  \mathds{1}. 
\end{equation}
The reader can  check that $\bm{\chi}^2 + \overline{\bm{\chi}} ^{2}= \mathds{1}$. We   denote by $H_{0}(\vec{p},z)$ the operator
\begin{equation}
H_{0}(\vec{p},z):=\frac{\vec{P}_f^2}{2m} -\frac{ \vec{p}}{m} \cdot \vec{P_f}+  \omega_0 \left(\begin{array}{cc} 1 & 0 \\ 0& 0 \end{array} \right) \notag  + H_{f}- z \mathds{1}.
\end{equation}

\subsubsection{ The Feshbach pair}
\begin{lemma}
\label{first}
Let $\rho_0<\omega_0$. There exists $\lambda_c > 0$ such that, for all $0\leq \lambda_0 < \lambda_c$, for all  $\vert z \vert < \frac{ \mu \rho_{0}}{2}$,   $(H(\vec{p},z),H_{0}(\vec{p},z) )$ is a Feshbach pair for $\bm{\chi}$.
\end{lemma}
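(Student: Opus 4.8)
The plan is to verify the three conditions (a), (b), (c) of Definition \ref{Fesh} for the pair $(H(\vec{p},z),H_0(\vec{p},z))$ relative to the operator $\bm{\chi}$, for $\vec{p}\in U[\vec{p}^*]$ (so that $|\mathrm{Re}\,\vec{p}|<(1-\mu)m$, hence $1-m^{-1}|\mathrm{Re}\,\vec{p}|>\mu$) and $|z|<\mu\rho_0/2$. Here the perturbation is $W:=H(\vec{p},z)-H_0(\vec{p},z)=\lambda_0 H_I$, and the whole argument rests on the standard relative bounds for $H_I$: there is $C_0<\infty$, depending only on the form factor $|\vec{k}|^{1/2}\mathds{1}_{\underline{B}_1}$, with $\|H_I(H_f+1)^{-1/2}\|+\|(H_f+1)^{-1/2}H_I\|\le C_0$ and $\|H_I\,g(H_f)\|\le C_0\sup_r(r+1)^{1/2}|g(r)|$ for bounded $g$ of compact support. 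Condition (a) is immediate: $\bm{\chi}$ and $\overline{\bm{\chi}}$ are built from the atomic projections $P_\uparrow,P_\downarrow$ and from $\chi_{\rho_0}(H_f),\overline{\chi}_{\rho_0}(H_f)$, the atomic part of $H_0(\vec{p},z)$ is $\omega_0 P_\uparrow$, and its field part is a function of $H_f$ and $\vec{P}_f$, which commute; hence $\bm{\chi},\overline{\bm{\chi}}$ commute with $H_0(\vec{p},z)$ on $D$.

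\textbf{Invertibility of $T=H_0(\vec{p},z)$ on $\mathrm{Ran}(\overline{\bm{\chi}})$ (part of (b)).} Since $H_0(\vec{p},z)$ is block diagonal with respect to $P_\uparrow\oplus P_\downarrow$ and commutes with $\overline{\bm{\chi}}$, it suffices to bound it below on $P_\uparrow\mathbb{C}^2\otimes\mathcal{H}_f$, where $\overline{\bm{\chi}}$ acts as $\mathds{1}$, and on $P_\downarrow\mathbb{C}^2\otimes\mathrm{Ran}(\overline{\chi}_{\rho_0}(H_f))$, where $H_f\ge 3\rho_0/4$. Using $|\vec{P}_f|\le H_f$, so $|\langle\psi,\vec{P}_f\psi\rangle|\le\langle\psi,H_f\psi\rangle$, one gets for $\psi\in D$
\[
\mathrm{Re}\langle\psi,H_0(\vec{p},z)\psi\rangle\ \ge\ \big(1-m^{-1}|\mathrm{Re}\,\vec{p}|\big)\langle\psi,H_f\psi\rangle+\langle\psi,(\omega_0 P_\uparrow-\mathrm{Re}\,z)\psi\rangle.
\]
On the $P_\uparrow$-piece this is $\ge(\omega_0-|z|)\|\psi\|^2\ge(\omega_0-\rho_0/2)\|\psi\|^2>0$, where we used precisely the hypothesis $\rho_0<\omega_0$ together with $|z|<\mu\rho_0/2<\rho_0/2$. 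On the $P_\downarrow$-piece it is $\ge\mu\cdot\tfrac{3\rho_0}{4}\|\psi\|^2-\tfrac{\mu\rho_0}{2}\|\psi\|^2=\tfrac{\mu\rho_0}{4}\|\psi\|^2$. The same bounds apply verbatim to $H_0(\vec{p},z)^*$, so $T$ is bounded invertible on $\mathrm{Ran}(\overline{\bm{\chi}})$ with $\|T^{-1}\|\le 4(\mu\rho_0)^{-1}$. Bounding instead $\tfrac{\mu\rho_0}{2}\|\psi\|^2\le\tfrac{2\mu}{3}\langle\psi,H_f\psi\rangle$ on the $P_\downarrow$-piece yields $\mathrm{Re}\langle\psi,H_0(\vec{p},z)\psi\rangle\ge c(\mu)\langle\psi,(H_f+1)\psi\rangle$ with $c(\mu)\sim\mu$ (for $\rho_0$ bounded), whence $\|(H_f+1)^{1/2}T^{-1}\|$ and $\|T^{-1}(H_f+1)^{1/2}\|$ are bounded by some $C_1(\mu,\rho_0)<\infty$ on $\mathrm{Ran}(\overline{\bm{\chi}})$.

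\textbf{Invertibility of $H_{\overline{\bm{\chi}}}$ (rest of (b)) and condition (c).} Write $H_{\overline{\bm{\chi}}}=T+\lambda_0\overline{\bm{\chi}}H_I\overline{\bm{\chi}}=T\big(\mathds{1}+\lambda_0 T^{-1}\overline{\bm{\chi}}H_I\overline{\bm{\chi}}\big)$ on $\mathrm{Ran}(\overline{\bm{\chi}})$. Since $(H_f+1)^{1/2}$ commutes with $\overline{\bm{\chi}}$, one has $T^{-1}\overline{\bm{\chi}}H_I\overline{\bm{\chi}}=\big(T^{-1}(H_f+1)^{1/2}\big)\overline{\bm{\chi}}\big((H_f+1)^{-1/2}H_I\big)\overline{\bm{\chi}}$, of norm $\le C_1(\mu,\rho_0)\,C_0$. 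Choosing $\lambda_c=\lambda_c(\mu,\rho_0)>0$ with $\lambda_c C_1(\mu,\rho_0)C_0\le\tfrac12$, for $\lambda_0<\lambda_c$ the Neumann series converges and $H_{\overline{\bm{\chi}}}$ is bounded invertible on $\mathrm{Ran}(\overline{\bm{\chi}})$ with $\|H_{\overline{\bm{\chi}}}^{-1}\|\le 2\|T^{-1}\|\le 8(\mu\rho_0)^{-1}$; in particular $\overline{\bm{\chi}}H_{\overline{\bm{\chi}}}^{-1}\overline{\bm{\chi}}$ is bounded. For (c), $\overline{\bm{\chi}}W\bm{\chi}=\lambda_0\overline{\bm{\chi}}H_I\big(P_\downarrow\otimes\chi_{\rho_0}(H_f)\big)$, and since $\chi_{\rho_0}$ is supported in $[0,\rho_0]$ we have $\|H_I(P_\downarrow\otimes\chi_{\rho_0}(H_f))\|\le C_0(\rho_0+1)^{1/2}$, so $\overline{\bm{\chi}}H_{\overline{\bm{\chi}}}^{-1}\overline{\bm{\chi}}W\bm{\chi}:D\to\mathcal{H}$ is bounded. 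This establishes (a), (b), (c), so $(H(\vec{p},z),H_0(\vec{p},z))$ is a Feshbach pair for $\bm{\chi}$.

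\textbf{Main obstacle.} The technical heart is the numerical-range lower bound for $H_0(\vec{p},z)$ on $\mathrm{Ran}(\overline{\bm{\chi}})$, made uniform for $\vec{p}\in U[\vec{p}^*]$ and $|z|<\mu\rho_0/2$ — note that on the $P_\uparrow$-block the operator $\overline{\bm{\chi}}$ imposes no infrared cutoff, and it is exactly the assumption $\rho_0<\omega_0$ that keeps this block invertible through the atomic gap — together with the bookkeeping needed to make $C_1$, and hence $\lambda_c$, explicit functions of $\rho_0$ and of $\mu=(m-|\vec{p}^*|)/2m$; the remaining estimates are routine consequences of the standard $H_f^{1/2}$-relative bounds for $H_I$ recalled above.
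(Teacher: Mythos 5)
Your proof is correct and follows essentially the same route as the paper's: condition (a) from commutativity, condition (b) from a lower bound on $H_0(\vec p,z)$ on $\mathrm{Ran}(\overline{\bm\chi})$ (via the atomic gap $\omega_0>\rho_0$ on the $P_\uparrow$ block and the infrared cutoff $H_f\ge 3\rho_0/4$ on the $P_\downarrow$ block) plus a Neumann series controlled by the standard $H_f^{1/2}$-relative bound for $H_I$, and (c) as a byproduct. The only divergence is cosmetic and quantitative: you derive the lower bound by a numerical-range/quadratic-form estimate rather than by the symbolwise functional calculus the paper uses (the two are equivalent here since $H_0$ is a multiplication operator in $(H_f,\vec P_f)$), and you weight the Neumann factors by $(H_f+1)^{\pm 1/2}$ instead of the paper's $(H_f+\rho_0)^{\pm 1/2}$, which gives a coarser $\lambda_c\sim\mu\rho_0$ rather than the paper's $\lambda_c\sim\mu\rho_0^{1/2}$ — immaterial for the statement of this lemma, though the sharper scaling is what the paper reuses in Lemma \ref{polyd}.
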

\begin{proof}
 The proof is standard, but we show that Condition (b) of Definition \ref{Fesh} is satisfied, in order to illustrate how the functional calculus outlined in  Paragraph \ref{funcpar} can be used to control the norm of bounded operators.  The restriction of $H_{0}(\vec{p},z)$ to  $\text{Ran}(\overline{\bm{\chi}})$
 can be decomposed into a sum of two operators,
 \begin{equation}
H_{0}(\vec{p},z)_{\mid\text{Ran}(\overline{\bm{\chi}})}= P_{\downarrow} \otimes b_{1}(\vec{p},z,H_f,\vec{P}_f)+  P_{\uparrow} \otimes b_{2}(\vec{p},z,H_f,\vec{P}_f),
 \end{equation}
where 
\begin{align}
b_{1}(\vec{p},z , r,\vec{l})&:= \left(r+\frac{\vec{l}^2}{2m} -  \frac{1}{m}\vec{p} \cdot  \vec{l}   -z \right) \mathds{1}_{r \ge \frac{3 \rho_{0}}{4}} (r),\\
b_{2}(\vec{p},z , r,\vec{l})&:=r+\frac{\vec{l}^2}{2m} -   \frac{1}{m}\vec{p} \cdot  \vec{l} +   \omega_0 -z,
\end{align}
with $b_{1}(\vec{p},z,H_f,\vec{P}_f)$ and $b_{2}(\vec{p},z,H_f,\vec{P}_f)$  defined by the functional calculus of  Paragraph \ref{funcpar}. Using the bound (\ref{essbo}),   $b_{1}(\vec{p},z,H_f,\vec{P}_f)$  is bounded invertible, if, for any $\vert z \vert < \frac{ \mu \rho_{0}}{2}$, 
 \begin{equation*}
 \underset{r \geq \frac{3 \rho_{0}}{4}, \vert \vec{l} \vert \leq r}{\sup} \left(  \frac{1}{\vert r +\frac{\vec{l}^2}{2m} - \frac{1}{m}\vec{p} \cdot \vec{l}-z  \vert}\right)< \infty.
 \end{equation*} 
This is indeed the case, as
 \begin{align*}
 \Re \left( r +\frac{\vec{l}^2}{2m} - \frac{1}{m} \vec{p}\cdot \vec{l}-z \right)  &\geq r\left( 1 -  \frac{1}{m }\vert  \vec{p} \vert  \right) - \vert \Re(z)\vert >   \frac{ \mu \rho_{0} }{4},
 \end{align*}
and we see that the inverse of $b_{1}(\vec{p},z,H_f,\vec{P}_f)$ is bounded by a constant of order $(\mu \rho_0)^{-1}$. Similarly, $b_{2}(\vec{p},z,H_f,\vec{P}_f)$ is bounded invertible and its inverse is bounded by a constant of order $ (\omega_0 - \mu \rho_0/2)^{-1}$, which implies that the restriction of $H_{0}(\vec{p},z)$ to  $\text{Ran}(\overline{\bm{\chi}})$ is bounded invertible. The bounded invertibility of the restriction of the operator
\begin{equation*}
H_{\overline{\bm{\chi}}}(\vec{p},z):= H_0(\vec{p},z) + \lambda_0 \overline{\bm \chi} H_I   \overline{\bm \chi}
\end{equation*}
to $\text{Ran}(\overline{\bm{\chi}})$ is shown with a Neumann expansion and the standard estimate
\begin{equation}
\label{2.22}
\big \| (H_f+ \rho_{0})^{-\frac12} H_I (H_f+ \rho_{0})^{-\frac12} \big \| \leq C \rho_{0}^{-\frac12}.
\end{equation}
The Neumann expansion for $H^{-1}_{\overline{\bm{\chi}}}(\vec{p},z)$  is formally equal to 
\begin{equation}
\label{Neuman}
\left(  H_0(\vec{p},z)  \right)_{\text{Ran}(\overline{\bm{\chi}})}^{-1} \sum_{n=0}^{\infty} (-\lambda_0)^{n}  \left[\overline{ \bm {\chi}}  H_I  \overline{ \bm {\chi}}  \left(  H_0(\vec{p},z)   \right)_{\text{Ran}(\overline{\bm{\chi}})}^{-1}\right]^{n}.
\end{equation}
Introducing the operator $(H_f+ \rho_{0})^{-\frac12}$ on the left and on the right of each operator $H_I$, we obtain  the bound 
\begin{equation}
\label{37}
 \|  \left[ H_{\overline{\bm{\chi}}}(\vec{p},z) \right]^{-1}_{ \text{Ran}(\overline{\bm{\chi}})}   \|   \le \, \rho_{0}^{-1}    \sum_{n=0}^{\infty} \left(C  \lambda_{0} \rho_{0}^{-\frac{1}{2}}  \right)^{n} \Big \| \left[\frac{H_f + \rho_{0}}{  H_0(\vec{p},z) } \right]_{  \text{Ran}(\overline{\bm{\chi}})} \Big \|^{n+1}  .
\end{equation}
Again, the functional calculus of section \ref{out} shows that $\|  [(H_f + \rho_{0})(  H_0(\vec{p},z) )^{-1}]_{  \text{Ran}(\overline{\bm{\chi}})}\| $ is bounded by 
\begin{equation*}
\underset{r \geq 3 \rho_{0}/4, \vert \vec{l} \vert \leq r }{\sup} \frac{ r+ \rho_{0}}{ \vert r + \frac{\vec{l}^2}{2m} - \frac{1}{m}\vec{p} \cdot \vec{l} -z\vert } +  \underset{r \geq0, \vert \vec{l} \vert \leq r}{\sup} \frac{ r+ \rho_{0}}{ \vert r + \frac{\vec{l}^2}{2m} - \frac{1}{m} \vec{p}  \cdot \vec{l}  + \omega_0-z \vert }=\mathcal{O}(\mu^{-1}).
\end{equation*}  
Therefore, there exists a  positive constant $\lambda_c$ of order $ \mu \rho_{0}^{\frac12} $, such that, for any $0 \leq \lambda_0<\lambda_c$,  the Neumann series in (\ref{Neuman}) converges in norm.  
   \end{proof}
  
  \subsection{The second step: Reaching a  polydisc $\mathcal{B}(\gamma,\delta,\varepsilon)$} 
 As the Feshbach-Schur map is isospectral, the study of the invertibility of the operator $F_{\bm{\chi}}(H(\vec{p},z),H_{0}(\vec{p},z))$ restricted on $P_{\downarrow} \mathbb{C}^2 \otimes \mathds{1}_{H_f \leq \rho_0}\mathcal{H}_f \supset \text{Ran}(\bm{\chi}) $ should provide   important insight about the spectrum of $H(\vec{p})$ near the origin of the complex plane. The restriction of  $F_{\bm{\chi}}(H(\vec{p},z),H_{0}(\vec{p},z))$ to $P_{\downarrow} \mathbb{C}^2 \otimes \mathds{1}_{H_f \leq \rho_0}\mathcal{H}_f$ is equal to
  \begin{equation}
  \label{expre}
P_{\downarrow} \otimes  \Big \langle \big( H_f + \frac{\vec{P}_{f}^{2}}{2m} - \frac{\vec{p}}{m}   \cdot  \vec{P}_f +   \lambda_0 \bm {\chi}   H_I  \bm {\chi}  - \lambda_{0}^{2}  \bm {\chi}  H_I  \overline{\bm {\chi}}   (H_{ \overline{\bm {\chi} }}(\vec{p},z))^{-1} \overline{\bm {\chi}}  H_I  \bm {\chi}  -z \mathds{1} \big) \mathds{1}_{H_f \leq \rho_0} \Big \rangle_{\downarrow},
  \end{equation}
 where $\langle A \rangle_{\downarrow} \in \mathcal{B}( \mathcal{H}_f)$ is the bounded operator associated to the quadratic form  
 \begin{equation*}
\langle \psi  \vert  \langle A \rangle_{\downarrow}    \phi \rangle := \langle \downarrow  \otimes \psi \vert  A (\downarrow  \otimes  \phi) \rangle,  \qquad \forall \psi,\phi \in \mathcal{H}_f, 
 \end{equation*}
  for any operator $A$ in $\mathcal{B}(\mathbb{C}^2 \otimes \mathcal{H}_f)$. If we apply the  scale transformation $S_{\rho_0}$ defined in Subsection \ref{rescaling} to (\ref{expre}) and Wick-order the resulting operator following the procedure sketched in Section \ref{Wicki},  we conclude that the next result holds.
 \begin{lemma}
\label{polyd}
Let $0 < \rho_0 <\omega_0$, $0 < \gamma , \delta , \varepsilon < 1$, and $\rho_0^{3/2} < \xi < 1/\sqrt{8 \pi}$. There exists $\lambda_c > 0$ such that, for all $0 \leq  \lambda_0  < \lambda_c$, for all   $\vert z \vert < \frac{ \mu}{2}$,
\begin{align*}
& \mathcal{S}_{\rho_{0}} \big( [ F_{\bm{\chi}}(H(\vec{p}, \rho_{0} z),H_{0}(\vec{p},  \rho_{0} z)) ]_{P_{\downarrow} \mathbb{C}^2 \otimes \mathds{1}_{H_f \leq \rho_0}\mathcal{H}_f} \big) =: P_{\downarrow} \otimes H( \underline{ w}^{(0)}(\vec{p}, z )), \\
& \text{with} \quad H(\underline{ w}^{(0)}( \cdot ,\cdot)) \in \mathcal{B}(\frac{ \sqrt{3} \rho_{0}}{m}+\gamma,\delta,\varepsilon).
\end{align*} 
\end{lemma}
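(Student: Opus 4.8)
The plan is to combine Lemma~\ref{first} with the explicit formula~\eqref{expre} for the restricted Feshbach--Schur map, to expand the resolvent $(H_{\overline{\bm\chi}})^{-1}$ in the Neumann series whose convergence is furnished by Lemma~\ref{first}, to Wick-order each term as in Subsection~\ref{Wicki}, to apply the scale transformation $S_{\rho_0}$ of Subsection~\ref{rescaling}, and finally to verify the three conditions in Definition~\ref{discc}. First, since $\rho_0<\omega_0$ and $|\rho_0 z|<\mu\rho_0/2$ whenever $|z|<\mu/2$, Lemma~\ref{first} produces a $\lambda_c>0$ (of order $\mu\rho_0^{1/2}$) such that for $\lambda_0<\lambda_c$ the pair $(H(\vec p,\rho_0 z),H_0(\vec p,\rho_0 z))$ is a Feshbach pair for $\bm\chi$; the Feshbach--Schur map is therefore defined and, as explained just before~\eqref{expre}, its restriction to $P_\downarrow\mathbb{C}^2\otimes\mathds{1}_{H_f\le\rho_0}\mathcal{H}_f$ equals $P_\downarrow\otimes\langle\,\cdot\,\rangle_\downarrow$ of the operator in~\eqref{expre} with $z$ replaced by $\rho_0 z$ throughout.

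Next I would decompose that operator into its free part $H_f+\vec P_f^2/(2m)-m^{-1}\vec p\cdot\vec P_f-\rho_0 z$ (already a $W_{0,0}$), the first-order part $\lambda_0\langle\bm\chi H_I\bm\chi\rangle_\downarrow$, which (after the pull-through formula) contributes only to $W_{1,0}$ and $W_{0,1}$ with kernels proportional to $\lambda_0|\vec k|^{1/2}\epsilon_z(\underline k)\chi_{\rho_0}(r)\chi_{\rho_0}(r+|\vec k|)$, and the second-order part $-\lambda_0^2\langle\bm\chi H_I\overline{\bm\chi}(H_{\overline{\bm\chi}}(\vec p,\rho_0 z))^{-1}\overline{\bm\chi}H_I\bm\chi\rangle_\downarrow$, in which I expand $(H_{\overline{\bm\chi}})^{-1}$ via the Neumann series~\eqref{Neuman} and Wick-order each resulting term using Lemmas~\ref{Wo} and~\ref{Pt}. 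This identifies the sequence of kernels $w^{\mathrm{before}}_{m,n}(\vec p,\rho_0 z,\cdot)$, $m+n\ge0$, and, using the relative bound~\eqref{2.22}, the functional-calculus bound~\eqref{essbo}, and the resolvent estimate $\|[(H_f+\rho_0)H_0(\vec p,\rho_0 z)^{-1}]_{\mathrm{Ran}\,\overline{\bm\chi}}\|=\mathcal{O}(\mu^{-1})$ from the proof of Lemma~\ref{first}, one obtains, uniformly in $(\vec p,z)\in U[\vec p^*]\times D_{\mu/2}$: $w^{\mathrm{before}}_{0,0}(\vec p,\rho_0 z,r,\vec l)=r+\vec l^2/(2m)-m^{-1}\vec p\cdot\vec l-\rho_0 z+g(\vec p,z,r,\vec l)$ with $\|g\|^\sharp=\mathcal{O}(\lambda_0^2\mu^{-1})$, and $\|w^{\mathrm{before}}_{m,n}(\vec p,\rho_0 z)\|^\sharp\le C(\mu)\lambda_0\rho_0^{-1/2}$ for $m+n\ge1$, the series $\sum_{m+n\ge1}\xi^{-(m+n)}\|w^{\mathrm{before}}_{m,n}\|^\sharp$ being convergent. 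The $C^1$-control here requires differentiating the Neumann-expanded resolvent in the functional-calculus variables $(r,\vec l)$; each derivative of $\chi_{\rho_0},\overline\chi_{\rho_0}$ costs a factor $\rho_0^{-1}$, while the derivatives of the resolvent are bounded by the same estimates as the resolvent itself.

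Then I would apply $S_{\rho_0}$, which by~\eqref{scaling} replaces each kernel by $s_{\rho_0}(\underline w^{\mathrm{before}})_{m,n}$. For $m+n=0$ this yields
\begin{align*}
w^{(0)}_{0,0}(\vec p,z,r,\vec l) &=\rho_0^{-1}w^{\mathrm{before}}_{0,0}(\vec p,\rho_0 z,\rho_0 r,\rho_0\vec l)\\
&=r+\frac{\rho_0\vec l^2}{2m}-\frac{\vec p}{m}\cdot\vec l-z+\rho_0^{-1}g(\vec p,z,\rho_0 r,\rho_0\vec l),
\end{align*}
while for $m+n\ge1$ the post-rescaling estimates recorded after~\eqref{scaling} give $\|w^{(0)}_{m,n}\|^\sharp\le\rho_0^{2(m+n)-1}\|w^{\mathrm{before}}_{m,n}\|^\sharp$. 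The three conditions of Definition~\ref{discc} now follow. The marginal part of $w^{(0)}_{0,0}$ minus $r-m^{-1}\vec p\cdot\vec l$ equals $\rho_0\vec l^2/(2m)+\rho_0^{-1}[g(\vec p,z,\rho_0 r,\rho_0\vec l)-g(\vec p,z,0,\vec 0)]$, whose $\|\cdot\|^\sharp$-norm is at most $\|\rho_0\vec l^2/(2m)\|^\sharp+\mathcal{O}(\lambda_0^2\mu^{-1})\le\sqrt3\rho_0/m+\gamma$ for $\lambda_0$ small (using $|\vec l|\le r\le1$ on $\mathcal{B}$ and the Cauchy--Schwarz bound $\sum_i|l_i|\le\sqrt3\,|\vec l|$). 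The relevant part is $w^{(0)}_{0,0}(\vec p,z,0,\vec 0)+z=\rho_0^{-1}g(\vec p,z,0,\vec 0)=\mathcal{O}(\lambda_0^2\mu^{-1}\rho_0^{-1})\le\delta$ for $\lambda_0$ small. Finally $\|\underline w^{(0)}(\vec p,z)\|^\sharp_{\xi,\ge1}\le\rho_0^{-1}\sum_{m+n\ge1}(\rho_0^2/\xi)^{m+n}\|w^{\mathrm{before}}_{m,n}\|^\sharp=\mathcal{O}(\lambda_0\rho_0^{-1}\mu^{-1})\le\varepsilon$ for $\lambda_0$ small, the geometric series converging because $\rho_0^2/\xi<\rho_0^{1/2}<1$ by the hypothesis $\xi>\rho_0^{3/2}$. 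Taking $\lambda_c$ to be the smallest of the finitely many thresholds on $\lambda_0$ that appear above (including the Feshbach threshold of Lemma~\ref{first}) completes the argument.

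The main obstacle, everything else being routine, is the $C^1$-estimate of the Wick-ordered kernels in the $\|\cdot\|^\sharp$-norm rather than in operator norm: one must bound the combinatorial series of Wick monomials produced by the normal-ordering procedure of Subsection~\ref{Wicki} together with all of their first derivatives in $(r,\vec l)$. This is a lengthy but standard adaptation of the estimates in~\cite{BaChFrSi03_01} and~\cite{HaHe11_01}; the only genuinely new points are that every kernel now depends on $\vec P_f$ (hence on $\vec l$) in addition to $H_f$, and that the extra kinetic term $\vec P_f^2/(2m)$, absent in those references, survives the first rescaling as the $\mathcal{O}(\rho_0)$ operator $\rho_0\vec P_f^2/(2m)$, which is precisely what forces the polydisc radius of the marginal part to be $\sqrt3\rho_0/m+\gamma$ rather than merely $\gamma$.
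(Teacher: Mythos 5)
Your proposal reproduces the strategy of the paper's Appendix~\ref{polydlem}: start from the Feshbach--Schur expression \eqref{expre}, Neumann-expand the resolvent $(H_{\overline{\bm\chi}})^{-1}$ using Lemma~\ref{first}, Wick-order via the pull-through formula, apply $S_{\rho_0}$, and verify the three polydisc conditions, with the $\sqrt3\rho_0/m$ contribution to the marginal radius coming from the surviving $\rho_0\vec P_f^2/(2m)$ and the hypothesis $\xi>\rho_0^{3/2}$ guaranteeing summability of the irrelevant part. This is essentially the paper's own proof, with the detailed $\|\cdot\|^\sharp$-bounds on the Wick-ordered kernels left as a ``standard adaptation'' rather than written out as the paper does in Steps 2 and 3.
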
 
\noindent The proof is given in Appendix \ref{polydlem}. In particular, we  show there that the critical value $\lambda_c$ can be chosen such that $\lambda_c \le  C  \min ( \gamma , \delta,\varepsilon) \mu^2 \rho_0^{1/2}$ for some positive constant $C$ independent of the problem parameters.
\vspace{1mm}

 \subsection{Analyticity of the effective Hamiltonian $H(w^{(0)}(\vec{p},z))$  in $\vec{p}$ and $z$}
\begin{lemma}
\label{anainit}
If the assumptions of Lemma \ref{polyd} are satisfied, then $( \vec{p} , z ) \mapsto H( \underline{ w}^{(0)}(\vec{p},   z ))$ defined in Lemma \ref{polyd} is analytic on $U[\vec{p}^*] \times D_{\mu/2}$.
\end{lemma}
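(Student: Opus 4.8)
The plan is to follow analyticity through each of the elementary operations out of which $H(\underline{w}^{(0)}(\vec{p},z))$ is manufactured in Section~\ref{firstep}: the affine family $(\vec{p},z)\mapsto H(\vec{p},\rho_0 z)$ of \eqref{Hp3}, the affine rescaling $z\mapsto\rho_0 z$, the smooth Feshbach-Schur map $F_{\bm{\chi}}$, the compression to $P_{\downarrow}\mathbb{C}^2\otimes\mathds{1}_{H_f\le\rho_0}\mathcal{H}_f$, and the scale transformation $S_{\rho_0}$. All of these but one are manifestly analyticity-preserving: $z\mapsto\rho_0 z$ is affine; the compression and $S_{\rho_0}$ are fixed bounded linear maps; and $(\vec{p},z)\mapsto H(\vec{p},z)$, hence also $H_0(\vec{p},z)$, $H_{\bm{\chi}}(\vec{p},z)$ and $H_{\overline{\bm{\chi}}}(\vec{p},z)$, is affine in $(\vec{p},z)$ (jointly, as a quadratic form, and as a bounded-operator-valued map after dressing with $(H_f+\rho_0)^{-1/2}$). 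So the entire content is the analyticity, uniform on $U[\vec{p}^*]\times D_{\mu/2}$, of the dressed inverse appearing in the Feshbach correction term. Throughout I use the standard facts that analyticity of Banach-space-valued maps is preserved under finite sums and products, and that if $A(\cdot)$ is analytic with $A(\lambda)$ boundedly invertible for every $\lambda$, then $A(\cdot)^{-1}$ is analytic.

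The key step is to show that $(\vec{p},z)\mapsto\big[(H_f+\rho_0)H_0(\vec{p},z)^{-1}\big]_{\mathrm{Ran}(\overline{\bm{\chi}})}$ is analytic on $U[\vec{p}^*]\times\{|z|<\mu\rho_0/2\}$. In the functional calculus of Paragraph~\ref{funcpar}, this operator is $f_{(\vec{p},z)}(H_f,\vec{P}_f)$ on $\mathrm{Ran}(\overline{\bm{\chi}})$, with $f_{(\vec{p},z)}(r,\vec{l})=(r+\rho_0)/(r+\vec{l}^2/2m-m^{-1}\vec{p}\cdot\vec{l}-z)$ on the $P_\downarrow$-sector (and the analogous function with $+\omega_0$ on the $P_\uparrow$-sector). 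By the computation in the proof of Lemma~\ref{first}, the denominator has real part $\ge r(1-m^{-1}|\vec{p}|)-|\Re z|\ge\mu\rho_0/4$ on the relevant set $\{r\ge 3\rho_0/4,\ |\vec{l}|\le r\}$; crucially this remains valid for \emph{complex} $\vec{p}\in U[\vec{p}^*]$, since $\Re(\vec{p}\cdot\vec{l})=(\Re\vec{p})\cdot\vec{l}$ and $|\Re\vec{p}|\le|\vec{p}|\le(1-\mu)m$. Hence $(\vec{p},z)\mapsto f_{(\vec{p},z)}$ is analytic as a map into $L^\infty$ of the relevant set (a ratio of polynomials with denominator uniformly bounded away from $0$), and composing with the bounded \emph{linear} map $g\mapsto g(H_f,\vec{P}_f)$ (bounded by \eqref{essbo}) yields analyticity of the operator-valued function. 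Since $(H_f+\rho_0)^{-1/2}H_I(H_f+\rho_0)^{-1/2}$ is a fixed bounded operator by \eqref{2.22}, and $\overline{\bm{\chi}}$ is a fixed operator, each term of the Neumann series \eqref{Neuman}--\eqref{37} for $\big[(H_f+\rho_0)^{1/2}H_{\overline{\bm{\chi}}}(\vec{p},z)^{-1}(H_f+\rho_0)^{1/2}\big]_{\mathrm{Ran}(\overline{\bm{\chi}})}$ is a finite product of analytic and constant operator-valued functions, hence analytic; by Lemma~\ref{first} (with spectral parameter $\rho_0 z$, so that $|z|<\mu/2$ gives $|\rho_0 z|<\mu\rho_0/2$) the series converges in operator norm uniformly on $U[\vec{p}^*]\times D_{\mu/2}$ once $\lambda_0<\lambda_c$ as in Lemma~\ref{polyd}, so its sum is analytic there.

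It then remains to assemble the pieces. Writing the Feshbach correction as
\[
\bm{\chi}\,W\,\overline{\bm{\chi}}\,H_{\overline{\bm{\chi}}}^{-1}\,\overline{\bm{\chi}}\,W\,\bm{\chi}
=\lambda_0^2\,\bm{\chi}\big[H_I(H_f+\rho_0)^{-1/2}\big]\,\overline{\bm{\chi}}\,\big[(H_f+\rho_0)^{1/2}H_{\overline{\bm{\chi}}}^{-1}(H_f+\rho_0)^{1/2}\big]\,\overline{\bm{\chi}}\,\big[(H_f+\rho_0)^{-1/2}H_I\big]\,\bm{\chi}
\]
exhibits it as a product of a fixed bounded operator, an analytic operator-valued function, and a fixed bounded operator, hence analytic; adding the affine term $H_{\bm{\chi}}(\vec{p},\rho_0 z)$, compressing to $P_{\downarrow}\mathbb{C}^2\otimes\mathds{1}_{H_f\le\rho_0}\mathcal{H}_f$, and applying $S_{\rho_0}$ preserves analyticity. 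By Lemma~\ref{polyd} the result is exactly $P_{\downarrow}\otimes H(\underline{w}^{(0)}(\vec{p},z))$, which proves the asserted analyticity as a $\mathcal{B}(\mathcal{H}_{\mathrm{red}})$-valued map. In particular, since each kernel $w^{(0)}_{m,n}$ is recovered from matrix elements of $H(\underline{w}^{(0)})$ between fixed vectors with finitely many photons (cf.\ the injectivity of $\underline{w}\mapsto H(\underline{w})$ on the restricted kernels), one also gets analyticity of each $(\vec{p},z)\mapsto w^{(0)}_{m,n}(\vec{p},z,\cdot)$, and notably of $(\vec{p},z)\mapsto w^{(0)}_{0,0}(\vec{p},z,0,\vec{0})=\langle\Omega,H(\underline{w}^{(0)}(\vec{p},z))\Omega\rangle$, which is the hypothesis needed in Lemma~\ref{lm:bihol}.

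\textbf{Main obstacle.} The technical heart is the second paragraph: making rigorous the analyticity and uniform norm-convergence of the dressed inverse in the presence of the unbounded operators $H_0$ and $H_I$ — checking that the insertions of $(H_f+\rho_0)^{\pm1/2}$ are legitimate on $\mathrm{Ran}(\overline{\bm{\chi}})$, that the functional-calculus bound \eqref{essbo} does transmit $L^\infty$-analyticity to operator-norm analyticity, and that $\lambda_c$ can be chosen (up to powers, independently of $\gamma,\delta,\varepsilon,\mu,\rho_0$) so that the Neumann series converges uniformly over the \emph{full complex} polydisc $U[\vec{p}^*]\times D_{\mu/2}$, complex $\vec{p}$ included. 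Once this is secured, the remainder is soft, analyticity being stable under the finitely many sums, products, compressions and bounded linear maps involved.
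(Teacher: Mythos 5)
Your proof is correct and reaches the same essential reduction as the paper: the only $(\vec{p},z)$-dependence in the Feshbach correction term sits in the dressed free resolvent $[\,(H_f+\rho_0)\,H_0(\vec{p},\rho_0 z)^{-1}\,]_{\mathrm{Ran}(\overline{\bm{\chi}})}$, sandwiched between fixed bounded operators, and the Neumann series for the dressed inverse of $H_{\overline{\bm{\chi}}}$ converges in operator norm uniformly on $U[\vec{p}^*]\times D_{\mu/2}$ by the estimate \eqref{37} established in Lemma~\ref{first}. Where you genuinely diverge from the paper is in \emph{how} you prove analyticity of this single nontrivial factor. The paper passes to weak analyticity via Theorem~\ref{hahn} together with the polarization identity, and then invokes Morera's theorem for several complex variables (Theorem~\ref{morera}) applied to the matrix elements $\langle\,\downarrow\otimes\psi,\cdot\;\downarrow\otimes\psi\,\rangle$ and $\langle\,\uparrow\otimes\psi,\cdot\;\uparrow\otimes\psi\,\rangle$, following the same template as the proof of Lemma~\ref{compo}. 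You instead note that, by \eqref{essbo}, $g\mapsto g(H_f,\vec{P}_f)$ is a \emph{bounded linear} map from $L^\infty$ of the spectral set to $\mathcal{B}(\mathcal{H}_f)$, and you prove analyticity of the symbol $(\vec{p},z)\mapsto f_{(\vec{p},z)}\in L^\infty$ directly from the uniform lower bound on the denominator (correctly extended to complex $\vec{p}\in U[\vec{p}^*]$ using $\Re(\vec{p}\cdot\vec{l})=(\Re\vec{p})\cdot\vec{l}$ and $|\Re\vec{p}|\le|\vec{p}|<(1-\mu)m$, exactly as in Lemmas~\ref{first} and~\ref{cbo}); analyticity of the operator-valued map then follows because bounded linear maps preserve Banach-space-valued analyticity. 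Your route is the more direct and self-contained one for this particular operator, since it avoids the weak-to-strong passage and Morera; the paper's route has the advantage of uniformity with the later argument in Lemma~\ref{compo}, where one must handle an infinite sum of spectral integrals rather than a single explicit symbol and the Morera/weak-analyticity machinery is genuinely needed. Your closing observation, that analyticity of the kernels $w^{(0)}_{m,n}$ and in particular of $(\vec{p},z)\mapsto w^{(0)}_{0,0}(\vec{p},z,0,\vec{0})$ follows, is true and indeed needed to apply Lemma~\ref{lm:bihol}, but that extraction is precisely the content of Lemma~\ref{compo} and not part of Lemma~\ref{anainit} itself.
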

\begin{proof}
The proof  follows from $\eqref{expre}$ and the Neumann expansion for  $ \left[ H_{\overline{\bm{\chi}}}(\vec{p},z) \right]^{-1}_{ \text{Ran}(\overline{\bm{\chi}})} $  in the proof of Lemma \ref{first}. Since $( \vec{p} , z ) \mapsto (H_f + \frac{\vec{P}_{f}^{2}}{2m} - \frac{\vec{p}}{m}   \cdot  \vec{P}_f   - \rho_0 z \mathds{1}) \mathds{1}_{H_f \leq 1} \in \mathcal{B}( \mathcal{H}_{ \mathrm{red} } )$ is analytic  on $U[\vec{p}^*] \times D_{\mu/2}$, and since  $H_I$ is independent of $(\vec{p},z)$, it remains to show that 
\begin{equation}
\bm {\chi}  H_I  \overline{\bm {\chi}}     \left[ H_{\overline{\bm{\chi}}}(\vec{p}, \rho_0 z) \right]^{-1}_{ \text{Ran}(\overline{\bm{\chi}})}   \overline{\bm {\chi}}  H_I  \bm {\chi} \label{eq:aa1}
\end{equation}
is analytic in $\vec{p}$ and $z$.  Introducing $(H_f+ \rho_0)^{-1/2}$ on the left and on the right of $H_I$ in \eqref{Neuman}, we have seen  in the proof of Lemma \ref{first} that the Neumann expansion for \eqref{eq:aa1} converges uniformly in $(\vec{p},z) \in  U[\vec{p}^*] \times D_{\mu/2}$; see \eqref{37}.  As $(H_f+ \rho_0)^{-1/2} H_I(H_f+ \rho_0)^{-1/2}$ is bounded and independent of $ \vec{p}$ and $z$, we only need to show that the operator   $$ \left[ \frac{H_{f}+ \rho_0}  {H_{0}(\vec{p}, \rho_0z)} \right]_{ \text{Ran}(\overline{\bm{\chi}})}$$  is analytic  on $U[\vec{p}^*] \times D_{\mu/2}$.  By the  equivalence between weak and strong analyticity stated in Theorem \ref{hahn}, and the polarization identity for bilinear forms, it is sufficient  to check that $$\langle   \downarrow  \otimes \psi \vert \left[ (H_{f}+ \rho_0)  (H_{0}(\vec{p}, \rho_0z))^{-1} \right]_{ \text{Ran}(\overline{\bm{\chi}})}   \downarrow  \otimes \psi \rangle$$ and $$\langle   \uparrow  \otimes \psi \vert \left[ (H_{f}+ \rho_0)  (H_{0}(\vec{p}, \rho_0z))^{-1} \right]_{ \text{Ran}(\overline{\bm{\chi}})}  \uparrow  \otimes \psi \rangle$$ are analytic in $\vec{p}$ and $z$ for any $  \psi   \in \mathcal{H}_f$.  The functions 
\begin{equation*}
(\vec{p},z,r,\vec{l}) \mapsto \frac{r + \rho_0}{\frac{\vec{l}^2}{2m} - \frac{\vec{p}}{m} \cdot \vec{l} +r -\rho_0z}, \qquad (\vec{p},z,r,\vec{l}) \mapsto \frac{r + \rho_0}{\frac{\vec{l}^2}{2m} - \frac{\vec{p}}{m} \cdot \vec{l} +r +\omega_0 - \rho_0z} ,
\end{equation*}
are analytic in $(\vec{p},z)$ for any  $(r,\vec{l}) \in  \mathbb{R}_+ \times \mathbb{R}^3$ with $r \geq 3 \rho_0/4$ and $\vert \vec{l} \vert \leq r$, and for any $(r,\vec{l}) \in  \mathbb{R}_+ \times \mathbb{R}^3$  with $\vert \vec{l} \vert \leq r$, respectively. Using the  functional calculus of  Paragraph \ref{funcpar}, we   deduce that  $$\langle   \downarrow  \otimes \psi \vert \left[ (H_{f}+ \rho_0)  (H_{0}(\vec{p}, \rho_0z))^{-1} \right]_{ \text{Ran}(\overline{\bm{\chi}})}   \downarrow  \otimes \psi \rangle$$ and $$\langle   \uparrow  \otimes \psi \vert \left[ (H_{f}+ \rho_0)  (H_{0}(\vec{p}, \rho_0z))^{-1} \right]_{ \text{Ran}(\overline{\bm{\chi}})}   \uparrow  \otimes \psi \rangle$$  are analytic on $U[\vec{p}^*] \times D_{\mu/2}$. This is a direct consequence of  Morera's theorem for several complex variables; See Theorem \ref{morera} in Appendix \ref{Hoo}.  For a similar and detailed argumentation, the reader can consult the proof of Lemma \ref{compo} in Section \ref{presan}.
\end{proof}

 \indent
 
 \section{Iteration of the renormalization map and Preservation of analyticity in $\vec{p}$ and $z$}
 The renormalization map $\mathcal{R}_{\rho}$ can be iterated  arbitrarily many times if, initially,  it is  restricted to a sufficiently small polydisc $\mathcal{B}(\gamma, \delta, \varepsilon)$ of operator valued functions. The purpose of this section is to explain under which conditions this iteration can be done and  to  show  that the analyticity of the operator $ H(\underline{w}(\vec{p},z))$ in $(\vec{p},z)$  is preserved under iterations of the renormalization map.

\subsection{Results on Analyticity}
\label{presan}
We denote by $\mathcal{B}^{\text{an}}(\gamma, \delta, \varepsilon)$ the subset  of   $\mathcal{B}(\gamma, \delta, \varepsilon)$ composed of  the analytic maps $U[\vec{p}^*] \times D_{\mu/2} \ni (\vec{p},z) \mapsto H(\underline{w}(\vec{p},z)) \in \mathcal{B}(\mathcal{H}_{\text{red}})$. Lemma \ref{preserved} states that the renormalization map preserves analyticity.
\begin{lemma}
\label{preserved}
Let $0 < \rho < 1/2$, $0< \xi < 1/ (4\sqrt{8 \pi})$,  $0<\gamma \ll  \mu $, and $0< \delta,\varepsilon\ll \rho \mu $.  If  $ H(\underline{w}(\cdot,\cdot)) \in \mathcal{B} ^{\text{an}}( \gamma,\delta,\varepsilon)$, then the $\mathcal{B}(\mathcal{H}_{\text{red}})$-valued function  $H(\underline{\hat{w}}(\vec{p}, \zeta )) = \mathcal{R}_\rho (H (\underline{w}) )(\vec{p}, \zeta )$ is  analytic on $U[\vec{p}^*] \times D_{\mu/2}$.
\end{lemma}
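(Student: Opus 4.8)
The plan is to unwind the definition $\mathcal{R}_\rho = S_\rho \circ F_{\chi_\rho(H_f)} \circ (\,\cdot \circ E_\rho^{-1})$ from Definition \ref{def_Rrho} and to check that each of the three operations preserves analyticity. The first step I would carry out is a reduction to the level of the building blocks: since $H(\underline{w}(\cdot,\cdot)) \in \mathcal{B}^{\text{an}}(\gamma,\delta,\varepsilon)$, the map $(\vec{p},z)\mapsto H(\underline{w}(\vec{p},z))\in\mathcal{B}(\mathcal{H}_{\mathrm{red}})$ is analytic; using the injectivity of $\underline{w}\mapsto H(\underline{w})$ on the restricted domain (remark following Lemma \ref{lm:Hbounded_injective}) together with the recovery of the kernels from the operator as in \cite{BaChFrSi03_01}, each $(\vec{p},z)\mapsto w_{m,n}(\vec{p},z)\in\mathcal{W}^{\sharp}_{m,n}$ is analytic, so by Lemma \ref{opboundl} (and the bounded map $w_{0,0}\mapsto w_{0,0}(H_f,\vec{P}_f)\mathds{1}_{H_f\le 1}$) the $\mathcal{B}(\mathcal{H}_{\mathrm{red}})$-valued maps $(\vec{p},z)\mapsto W_{0,0}(\underline{w}(\vec{p},z))$ and $(\vec{p},z)\mapsto W_{\ge 1}(\underline{w}(\vec{p},z))=H(\underline{w}(\vec{p},z))-W_{0,0}(\underline{w}(\vec{p},z))$ are analytic on $U[\vec{p}^*]\times D_{\mu/2}$. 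The two outer layers are then immediate. By Lemma \ref{lm:bihol} --- applicable because the parameter ranges of the present lemma are admissible and $(\vec{p},z)\mapsto w_{0,0}(\vec{p},z,0,\vec{0})=\langle\Omega,H(\underline{w}(\vec{p},z))\Omega\rangle$ is analytic --- the map $E_\rho:\mathcal{U}[w_{0,0}]\to U[\vec{p}^*]\times D_{\mu/2}$ is biholomorphic, hence $E_\rho^{-1}$ is holomorphic from $U[\vec{p}^*]\times D_{\mu/2}$ into $\mathcal{U}[w_{0,0}]\subset U[\vec{p}^*]\times D_{\mu/2}$, and composing, $(\vec{p},\zeta)\mapsto W_{0,0}(\underline{w}(E_\rho^{-1}(\vec{p},\zeta)))$ and $(\vec{p},\zeta)\mapsto W_{\ge 1}(\underline{w}(E_\rho^{-1}(\vec{p},\zeta)))$ are analytic. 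Finally $S_\rho(A)=\rho^{-1}\Gamma_\rho A\Gamma_\rho^*$ with $\Gamma_\rho$ a fixed unitary is a bounded linear map on $\mathcal{B}(\mathcal{H}_{\mathrm{red}})$, hence analyticity-preserving; so everything reduces to the Feshbach--Schur map.

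Write $\underline{v}(\vec{p},\zeta):=\underline{w}(E_\rho^{-1}(\vec{p},\zeta))$. By Lemma \ref{cbo} (whose hypotheses are met by the present parameters, and $E_\rho^{-1}(\vec{p},\zeta)\in\mathcal{U}[w_{0,0}]$), the pair $(H(\underline{v}(\vec{p},\zeta)),W_{0,0}(\underline{v}(\vec{p},\zeta)))$ is a Feshbach pair for $\chi_\rho(H_f)$ for every $(\vec{p},\zeta)\in U[\vec{p}^*]\times D_{\mu/2}$, so formula \eqref{Wickl} expresses $F_{\chi_\rho(H_f)}(H(\underline{v}),W_{0,0}(\underline{v}))$ as $W_{0,0}(\underline{v})+\chi_\rho(H_f)W_{\ge 1}(\underline{v})\chi_\rho(H_f)$ minus $\chi_\rho(H_f)W_{\ge 1}(\underline{v})\overline{\chi}_\rho(H_f)\,R(\vec{p},\zeta)\,\overline{\chi}_\rho(H_f)W_{\ge 1}(\underline{v})\chi_\rho(H_f)$, where $R(\vec{p},\zeta)$ is the inverse of $W_{0,0}(\underline{v})+\overline{\chi}_\rho(H_f)W_{\ge 1}(\underline{v})\overline{\chi}_\rho(H_f)$ on $\mathrm{Ran}\,\overline{\chi}_\rho(H_f)$. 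Since $\chi_\rho(H_f)$ and $\overline{\chi}_\rho(H_f)$ are fixed bounded operators and $W_{0,0}(\underline{v})$, $W_{\ge 1}(\underline{v})$ are analytic by the previous paragraph, the first two terms are analytic and the third is a product of analytic $\mathcal{B}(\mathcal{H}_{\mathrm{red}})$-valued functions as soon as $R(\cdot,\cdot)$ is analytic. For that, Lemma \ref{cbo} furnishes that $W_{0,0}(\underline{v})+\overline{\chi}_\rho(H_f)W_{\ge 1}(\underline{v})\overline{\chi}_\rho(H_f)$ is bounded invertible on $\mathrm{Ran}\,\overline{\chi}_\rho(H_f)$ with inverse uniformly bounded (the $\mathcal{O}((\mu\rho)^{-1})$ estimate in its proof); the inverse of a uniformly bounded-invertible analytic family is analytic, either locally via a Neumann series around each point, or globally via the uniformly convergent expansion $R=\sum_{n\ge 0}\big(W_{0,0}(\underline{v})^{-1}\overline{\chi}_\rho(H_f)W_{\ge 1}(\underline{v})\overline{\chi}_\rho(H_f)\big)^{n}W_{0,0}(\underline{v})^{-1}$ of \eqref{eq:a1} (using that $W_{0,0}(\underline{v})^{-1}$ on $\mathrm{Ran}\,\overline{\chi}_\rho(H_f)$ is analytic since $W_{0,0}(\underline{v})$ is analytic and uniformly bounded invertible there), a locally uniform limit of analytic functions. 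Sums, products, compositions, and images under bounded linear maps of analytic operator-valued functions being again analytic, $F_{\chi_\rho(H_f)}(H(\underline{v}),W_{0,0}(\underline{v}))$ is analytic, and therefore so is $H(\underline{\hat{w}}(\vec{p},\zeta))=\mathcal{R}_\rho(H(\underline{w}))(\vec{p},\zeta)=S_\rho(F_{\chi_\rho(H_f)}(H(\underline{v}),W_{0,0}(\underline{v})))$ on $U[\vec{p}^*]\times D_{\mu/2}$.

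The main obstacle is the first step --- upgrading operator-valued analyticity of $H(\underline{w}(\cdot,\cdot))$ to analyticity of the individual pieces $W_{0,0}$ and $W_{\ge 1}$ --- which rests on the injectivity of $\underline{w}\mapsto H(\underline{w})$ and on the (standard but technical) formulas recovering the kernels from the operator, after which one invokes the equivalence of weak and strong analyticity (Theorem \ref{hahn}) and Morera's theorem (Appendix \ref{Hoo}) exactly as in the proof of Lemma \ref{anainit}. The remaining delicate point is to keep every inversion inside the regime controlled by the Feshbach-pair estimates of Lemma \ref{cbo}, so that the Neumann series converges uniformly on all of $U[\vec{p}^*]\times D_{\mu/2}$ and its sum is consequently analytic; this is precisely where the constraints $0<\rho<1/2$, $0<\gamma\ll\mu$, $0<\delta,\varepsilon\ll\rho\mu$ are used. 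Everything else is bookkeeping with the permanence of analyticity under the elementary operations above.
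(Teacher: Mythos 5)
Your proof follows the same high-level decomposition as the paper (peel off $E_\rho^{-1}$, the Feshbach--Schur map, and $S_\rho$ and show each preserves analyticity), and the treatment of $E_\rho^{-1}$ via Lemma \ref{lm:bihol}, of $S_\rho$ as a fixed bounded linear map, and of the Feshbach--Schur transform via a uniformly convergent Neumann series for $H_{\overline{\chi}}^{-1}$ is all correct --- indeed, your Neumann-series justification for analyticity of the resolvent is more explicit than what the paper writes.

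The gap is in your first step. You assert that analyticity of $(\vec{p},z)\mapsto H(\underline{w}(\vec{p},z))$ together with injectivity of the linear map $\underline{w}\mapsto H(\underline{w})$ implies that each $(\vec{p},z)\mapsto w_{m,n}(\vec{p},z)\in\mathcal{W}^{\sharp}_{m,n}$ is analytic as a Banach-space-valued map. This does not follow: the paper's remark only establishes injectivity of $H$, not that its inverse (restricted to the range) is bounded, and a bounded injective linear operator between Banach spaces need not have a bounded inverse unless its range is closed. Moreover, the kernel-recovery formulas you allude to (vacuum expectation values against localized test states, as in the proof of Lemma \ref{compo}) give at best \emph{pointwise} analyticity of $(\vec{p},z)\mapsto w_{0,0}(\vec{p},z,r,\vec{l})$ for fixed $(r,\vec{l})$; they do not by themselves give analyticity into $\mathcal{W}^{\sharp}_{0,0}$, whose norm involves suprema of $\partial_r w_{0,0}$ and $\partial_{l_i} w_{0,0}$ over $\mathcal{B}$. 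The paper avoids this entirely: its Lemma \ref{compo} proves directly that $(\vec{p},z)\mapsto W_{0,0}(\underline{w}(\vec{p},z))\in\mathcal{B}(\mathcal{H}_{\mathrm{red}})$ is analytic, by combining the pointwise analyticity of $w_{0,0}(\vec{p},z,r,\vec{l})$ (obtained via mollifiers and Lemma \ref{7.1}) with dominated convergence, Fubini, and Morera on matrix elements, and then sets $W_{\ge 1}=H-W_{0,0}$. Your downstream argument only actually needs analyticity of $W_{0,0}$ and $W_{\ge 1}$ as $\mathcal{B}(\mathcal{H}_{\mathrm{red}})$-valued functions, which is precisely the content of Lemma \ref{compo}; replace the detour through $\mathcal{W}^{\sharp}$-analyticity of the kernels by a direct citation of that lemma and your proof becomes the paper's.
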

\begin{remark}
$\quad$
\begin{itemize}
\item Different strategies  lead to Lemma \ref{preserved}. The  preservation of analyticity can be proven for kernels: If $(\vec{p},z) \mapsto \underline{w}(\vec{p},z) \in \mathcal{W}^{\sharp}$ is analytic on $U[\vec{p}^*] \times D_{\mu/2}$, one can show that $(\vec{p},z) \mapsto \underline{\hat{w}}(\vec{p},z) \in \mathcal{W}^{\sharp}$ is analytic. To do so, it suffices to consider the explicit expression of  the kernel $\underline{\hat{w}}$, as given in Appendix \ref{Wicko}. One shows that all  the terms appearing in the series \eqref{whatmn} are analytic.  The uniform convergence   in $\vec{p}$ and $z$ of the series in  \eqref{whatmn}  is then sufficient to establish the result. Such a procedure has been used in \cite{HaHe11_01} to show the analyticity of the ground state energy with respect to the coupling constant in the spin boson model.  Here, we prefer to work on the operator level and  we  adopt an approach similar to \cite{GrHa09_01}.
\item We recall that, to pass from the operator $\mathcal{R}_\rho (H (\underline{w}) )(\vec{p}, \zeta )$ given by the expression \eqref{eq:def_Rrho} to its Wick-ordered expression $H(\underline{\hat{w}}(\vec{p}, \zeta ))$, it suffices to follow the procedure described in Paragraph \ref{Wicki} (and detailed in  Appendix \ref{Wicko}), based on the combination of the pull-through formula and Wick-ordering. This uniquely determines the sequence of kernels $\underline{\hat{w}}( \vec{p} , \zeta )$. Naturally, since the two operators coincide, proving analyticity of $\mathcal{R}_\rho (H (\underline{w}) )(\vec{p}, \zeta )$ is equivalent to proving analyticity of $H(\underline{\hat{w}}(\vec{p}, \zeta ))$.
\end{itemize}
\end{remark}
To prove Lemma \ref{preserved}, we first state Lemma \ref{compo} which shows that  the analyticity  of   $(\vec{p},z) \mapsto H(\underline{w}(\vec{p},z)) \in \mathcal{W}_{\text{op}}$   implies the analyticity of its ``components''  in the  following sense.

\begin{lemma}
\label{compo}
Assume that the $\mathcal{B}(\mathcal{H}_{\text{red}})$-valued function $H(\underline{w}(\cdot,\cdot)) \in \mathcal{B}(\gamma, \delta, \varepsilon)$ is analytic on $U[\vec{p}^*] \times D_{\mu/2}$.  Then $(\vec{p},z) \mapsto w_{0,0}(\vec{p},z,0,\vec{0})  \in \mathbb{C} $, $(\vec{p},z) \mapsto W_{0,0}(\underline{w}(\vec{p},z)) \in \mathcal{B}(\mathcal{H}_{\text{red}}) $ and $(\vec{p},z) \mapsto W_{\geq 1}(\underline{w}(\vec{p},z)) \in \mathcal{B}(\mathcal{H}_{\text{red}})$ are analytic on  $U[\vec{p}^*] \times D_{\mu/2}$.
\end{lemma}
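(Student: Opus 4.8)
The plan is to use the decomposition $H(\underline{w}) = W_{0,0}(\underline{w}) + W_{\geq 1}(\underline{w})$ of \eqref{def_H(w)}. Since $(\vec{p},z) \mapsto H(\underline{w}(\vec{p},z))$ is analytic by hypothesis and $W_{\geq 1}(\underline{w}) = H(\underline{w}) - W_{0,0}(\underline{w})$, everything reduces to proving that $(\vec{p},z) \mapsto w_{0,0}(\vec{p},z,0,\vec{0}) \in \mathbb{C}$ and $(\vec{p},z) \mapsto W_{0,0}(\underline{w}(\vec{p},z)) \in \mathcal{B}(\mathcal{H}_{\mathrm{red}})$ are analytic on $U[\vec{p}^*] \times D_{\mu/2}$. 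The scalar statement is immediate: $W_{0,0}(\underline{w})\Omega = w_{0,0}(0,\vec{0})\Omega$, while $W_{m,n}(\underline{w})\Omega = 0$ for $n \geq 1$ and $W_{m,0}(\underline{w})\Omega$ lies in the $m$-photon sector for $m \geq 1$, so $w_{0,0}(\vec{p},z,0,\vec{0}) = \langle \Omega , H(\underline{w}(\vec{p},z))\Omega\rangle$, the composition of the analytic map $(\vec{p},z) \mapsto H(\underline{w}(\vec{p},z))$ with the bounded linear functional $A \mapsto \langle \Omega , A\Omega \rangle$. For the operator statement I will throughout reduce norm analyticity of $\mathcal{B}(\mathcal{H}_{\mathrm{red}})$-valued maps to weak, hence scalar, analyticity (Theorem \ref{hahn}) together with the polarization identity, check joint analyticity on $U[\vec{p}^*] \times D_{\mu/2} \subset \mathbb{C}^4$ one complex variable at a time via Morera's theorem (Theorem \ref{morera}), and use freely that locally uniform limits, and parameter integrals $\tfrac{1}{2\pi}\int_0^{2\pi}(\cdot)\,d\theta$, of analytic $\mathcal{B}(\mathcal{H}_{\mathrm{red}})$-valued functions are again analytic.

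The core of the proof is to recover $W_{0,0}(\underline{w})$ from the single operator $H(\underline{w})$ in a manner that visibly preserves analyticity. First I would pass to the photon-number-conserving part by averaging over the gauge group: with $\mathcal{N}$ the photon number operator on $\mathcal{H}_{\mathrm{red}}$, one has $e^{i\theta \mathcal{N}} W_{m,n}(\underline{w}) e^{-i\theta\mathcal{N}} = e^{i(m-n)\theta} W_{m,n}(\underline{w})$, whence $H^{\mathrm{diag}}(\underline{w}) := \tfrac{1}{2\pi}\int_0^{2\pi} e^{i\theta\mathcal{N}} H(\underline{w}) e^{-i\theta\mathcal{N}}\,d\theta = \sum_{j \geq 0} W_{j,j}(\underline{w})$, the series converging in norm, uniformly in $(\vec{p},z)$, by Lemma \ref{opboundl} and Definition \ref{discc}; hence $(\vec{p},z) \mapsto H^{\mathrm{diag}}(\underline{w}(\vec{p},z))$ is analytic. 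Next, denoting by $P_q$ the orthogonal projection onto the $q$-photon subspace of $\mathcal{H}_{\mathrm{red}}$, I would analyse $P_q H^{\mathrm{diag}}(\underline{w}) P_q = \sum_{j=0}^{q} P_q W_{j,j}(\underline{w}) P_q$ separately on each sector. By the pull-through formula (Lemma \ref{Pt}), the integral kernel on $L^2_{\mathrm{s}}(\underline{B}_1^q)$ of $P_q W_{j,j}(\underline{w}) P_q$ is the symmetrization of $w_{j,j}$ — with its functional-calculus arguments $(r,\vec{l})$ set equal to the total energy and momentum of the $q-j$ ``spectator'' photons — multiplied by $\delta$-functions identifying those spectators; in particular it has diagonal singularity order exactly $q-j$. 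Consequently the decomposition of $P_q H^{\mathrm{diag}}(\underline{w}) P_q$ according to the order of its singularities along the diagonals separates the $q+1$ summands $P_q W_{j,j}(\underline{w}) P_q$, and this separation is implemented by testing against fixed, $(\vec{p},z)$-independent families of product vectors with appropriately disjoint supports — this is precisely the ``reading-off'' of the Wick kernels from matrix elements, as in \cite[Theorem 3.3]{BaChFrSi03_01}. Thus analyticity of $(\vec{p},z) \mapsto P_q H^{\mathrm{diag}}(\underline{w}(\vec{p},z)) P_q$ descends to each $(\vec{p},z) \mapsto P_q W_{j,j}(\underline{w}(\vec{p},z)) P_q$ and to the kernels $(\vec{p},z) \mapsto w_{j,j}(\vec{p},z,r,\vec{l},\underline{K}^{(j,j)})$, for a.e.\ (hence, by the $\mathrm{C}^1$-regularity built into Definition \ref{def_kernels}, every) admissible $(r,\vec{l})$ and $\underline{K}^{(j,j)}$.

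Taking $j = 0$ and letting $q$ range over $\mathbb{N}$ — already $q=2$ realizes every $(r,\vec{l})$ in the interior of $\mathcal{B}$ as the energy-momentum of a photon configuration in $\underline{B}_1$ — yields that $(\vec{p},z) \mapsto w_{0,0}(\vec{p},z,r,\vec{l})$ is analytic for every $(r,\vec{l}) \in \mathcal{B}$. Combined with the uniform bound $\sup_{(\vec{p},z)} \| w_{0,0}(\vec{p},z,\cdot) \|^{\sharp} < \infty$ coming from $H(\underline{w}(\cdot,\cdot)) \in \mathcal{B}(\gamma,\delta,\varepsilon)$, a dominated-convergence argument for matrix elements between finite-particle vectors (using $| w_{0,0}(\vec{p},z,r,\vec{l}) | \leq \| w_{0,0}(\vec{p},z,\cdot) \|^{\sharp}$ as dominating bound), together with the functional-calculus estimate \eqref{essbo} for local boundedness and Theorem \ref{hahn}, gives that $(\vec{p},z) \mapsto W_{0,0}(\underline{w}(\vec{p},z)) = w_{0,0}(\vec{p},z,H_f,\vec{P}_f)\mathds{1}_{H_f \leq 1}$ is analytic on $U[\vec{p}^*]\times D_{\mu/2}$. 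Finally $W_{\geq 1}(\underline{w}) = H(\underline{w}) - W_{0,0}(\underline{w})$ is a difference of analytic maps, which finishes the proof. I expect the only genuinely delicate step to be the separation of the distinct Wick monomials inside a given matrix element of $H(\underline{w})$ — i.e.\ that the correspondence $H(\underline{w}) \mapsto (W_{m,n}(\underline{w}))_{m,n}$ is well defined and does not itself depend on $(\vec{p},z)$, a quantitative counterpart of the injectivity noted in the remark after Lemma \ref{lm:Hbounded_injective}; the rest is the routine complex-analytic bookkeeping indicated above.
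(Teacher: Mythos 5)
Your outline is broadly parallel to the paper's proof, but the step on which everything hinges is left open, and the mechanism you propose for it does not quite do the job. Two preliminary remarks. The gauge averaging $H^{\mathrm{diag}}(\underline{w})=\tfrac{1}{2\pi}\int_0^{2\pi}e^{i\theta\mathcal{N}}H(\underline{w})e^{-i\theta\mathcal{N}}\,d\theta$, though correct, is redundant: once you test against a state $\Psi\in\mathrm{Ran}\,P_q$ of fixed photon number, $\langle\Psi,W_{m,n}(\underline{w})\Psi\rangle=0$ for $m\neq n$ automatically, and the paper exploits precisely that with $q=2$, so only $W_{0,0},W_{1,1},W_{2,2}$ survive without any averaging. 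Your identity $w_{0,0}(\vec{p},z,0,\vec{0})=\langle\Omega,H(\underline{w}(\vec{p},z))\Omega\rangle$ is clean and correct, and slightly slicker than the paper's for that one value, but analyticity of $W_{0,0}(\underline{w})$ requires analyticity of $(\vec{p},z)\mapsto w_{0,0}(\vec{p},z,r,\vec{l})$ for \emph{every} $(r,\vec{l})\in\mathcal{B}$, so the vacuum expectation alone does not finish the job.

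The genuine gap is the ``separation by singularity order'' step. Testing against product vectors with pairwise disjoint supports annihilates \emph{every} spectator pairing, which isolates $P_qW_{q,q}(\underline{w})P_q$, the summand with \emph{no} delta singularity; it does not isolate $P_qW_{0,0}(\underline{w})P_q$. To reach $j=0$ you would need a recursive peel-off (extract $W_{q,q}$, subtract, then $W_{q-1,q-1}$, and so on), and at the last stage a concentration limit is unavoidable, because the diagonal matrix element against a fixed $\Psi$ only yields the smeared quantity $\int|\psi^{(q)}|^2\,w_{0,0}(\vec{p},z,\Sigma(\underline{k}^{(q)}),\vec{\Sigma}(\underline{k}^{(q)}))\,d\underline{K}^{(q)}$, not a point value of $w_{0,0}$. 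The injectivity result you invoke is the qualitative statement that $\underline{w}\mapsto H(\underline{w})$ is one-to-one on the restricted kernel domain; what Lemma \ref{compo} needs is a \emph{quantitative}, $(\vec{p},z)$-uniform reconstruction so that Lemma \ref{conve} applies. The paper supplies exactly that in Appendix \ref{lemacom}: with $\Psi_n=b^*(\eta_{n,\vec{l}})b^*(\eta_{n,\vec{l}'})\Omega$ for a delta-approximating family $\eta_{n,\vec{l}}$, the holomorphic functions $g_n(\vec{p},z)=\langle\Psi_n,\mathds{1}_{H_f\le1}H(\underline{w}(\vec{p},z))\mathds{1}_{H_f\le1}\Psi_n\rangle$ split as $u_n+v_n+w_n$ according to $W_{0,0},W_{1,1},W_{2,2}$, one shows $v_n,w_n\to0$ by a scaling estimate of order $n^{-3}$ and $u_n$ converging, uniformly in $(\vec{p},z)$, to a multiple of $w_{0,0}(\vec{p},z,|\vec{l}|+|\vec{l}'|,\vec{l}+\vec{l}')$, and Lemma \ref{7.1} shows this family of values determines $w_{0,0}$ on all of $\mathcal{B}$. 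You correctly identified this as the delicate step, but deferring it to an injectivity citation leaves the essential uniform estimate undone. The remainder of your argument --- Morera plus dominated convergence for $W_{0,0}(\underline{w})$, and $W_{\geq1}(\underline{w})=H(\underline{w})-W_{0,0}(\underline{w})$ --- matches the paper.
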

\vspace{2mm}

\begin{proof}[Proof of Lemma \ref{preserved}] The proof follows  from  Lemmas \ref{lm:bihol} and \ref{compo}.  We remind the reader that
\begin{equation*}
\mathcal{R}_\rho (H (\underline{w}) )(\vec{p}, \zeta ) = S_\rho \left( F_{ \chi_\rho(H_f) } \left( H \left( \underline{w} \left( E_\rho^{-1}( \vec{p},\zeta )\right)  \right) , W_{0,0} \left( \underline{w} \left(E_\rho^{-1}(\vec{p}, \zeta ) \right) \right) \right) \right).
\end{equation*}
We have seen in Lemma \ref{lm:bihol} that $E_\rho^{-1}( \vec{p},z )$ is holomorphic on $U[\vec{p}^*] \times D_{\mu/2}$.  The composition of two analytic maps being analytic,  $ H \left( \underline{w} \left( E_\rho^{-1}( \vec{p},\zeta )\right) \right)$ and $ W_{0,0} \left( \underline{w} \left(E_\rho^{-1}(\vec{p}, \zeta ) \right)   \right) $ are analytic on $U[\vec{p}^*] \times D_{\mu/2}$.  We now remark that  the smooth Feshbach-Schur map preserves analyticity. Indeed, if $H(\vec{p},z)$ and $T(\vec{p},z)$ are analytic $\mathcal{B}(\mathcal{H}_{\text{red}})$-valued functions, then, with $W \equiv H - T$,
\begin{equation*}
F_{\chi}(H(\vec{p},z) , T(\vec{p},z))=H_{\chi}(\vec{p},z)-\chi W(\vec{p},z) \overline{\chi} [H_{\overline{\chi}}(\vec{p},z)]^{-1}_{ \text{Ran}(\overline{\bm{\chi}})}  \overline{\chi}  W(\vec{p},z) \chi
\end{equation*}
is analytic as  a product of bounded analytic operators, provided that the Feshbach-Schur map is well-defined.  As the scale transformation $S_{\rho}$ preserves analyticity, we deduce that $\mathcal{R}_\rho (H (\underline{w}) )(\vec{p}, \zeta ) $ is analytic on $U[\vec{p}^*] \times D_{\mu/2}$. 
\end{proof}
\begin{proof}[Proof of Lemma \ref{compo}]    We first prove that $(\vec{p},z) \mapsto w_{0,0}(\vec{p},z, r ,\vec{l}) \in \mathbb{C}$ is analytic for any fixed $(r,\vec{l}) \in \mathcal{B}$.   In order to ``extract'' $w_{0,0}$ from  $H(\underline{w}(\cdot,\cdot))$,  we consider a sequence of  smooth functions whose squared   modulus  converges weakly to a Dirac distribution.  Let $\eta$ be a smooth function with compact support, such that $ \int_{\mathbb{R}^{3}} \vert \eta(\vec{x}) \vert^2 d^{3}x =1.$ For any $\vec{l}$ in the unit ball of $ \mathbb{R}^3$, for any $n \in \mathbb{N}$, we set
 \begin{equation}
 \eta_{n, \vec{l}}(\vec{k}):= n^{3/2} \eta\left( n (\vec{k}- \vec{l}) \right).
 \end{equation}
 Let $\vec{l}, \vec{l}'$ be in the unit ball of $\mathbb{R}^3$. We introduce the two-particle state $\Psi_{n}:= b^{*}( \eta_{n, \vec{l}}) b^{*}( \eta_{n, \vec{l}'})   \Omega $ and consider the sequence of holomorphic functions $(g_n)_{n \in \mathbb{N}}$  defined by
 \begin{equation}
 g_n(\vec{p},z):= \langle \Psi_{n} \vert  \mathds{1}_{H_f \leq 1}    H(\underline{w}(\vec{p},z))   \mathds{1}_{H_f \leq 1}  \Psi_{n} \rangle,
\end{equation}
 for all  $n \in \mathbb{N}$ and all $(\vec{p},z) \in U[\vec{p}^*] \times D_{\mu/2}$. 
We show in Appendix \ref{lemacom}  that $(g_n)_n$ converges uniformly  on $U[\vec{p}^*] \times D_{\mu/2}$ to  a multiple of $w_{0,0}(\vec{p},z,\vert \vec{l} \vert + \vert \vec{l}' \vert,\vec{l}+ \vec{l}')$. This condition is sufficient to ensure that $(\vec{p},z) \mapsto w_{0,0}(\vec{p},z, r,\vec{l} )$ is analytic for any fixed $(r,\vec{l}) \in \mathcal{B}$; see  Lemma \ref{7.1}.

We now show that  $(\vec{p},z) \mapsto W_{0,0}(\underline{w}(\vec{p},z)) \in \mathcal{B}(\mathcal{H}_{\text{red}})$ is  analytic on  $U[\vec{p}^*] \times D_{\mu/2}$. To do so, we make use of the equivalence between weak and strong analyticity stated in Theorem \ref{hahn}, and we show that  $(\vec{p},z) \mapsto W_{0,0}(\underline{w}(\vec{p},z)) \in \mathcal{B}(\mathcal{H}_{\text{red}})$ is weakly analytic. Thanks to the polarization identity for bilinear forms,  it is sufficient to prove that  $\langle \Psi  \vert    W_{0,0}(\underline{w}(\vec{p},z))    \Psi \rangle$ is analytic for any $   \Psi  \in \mathcal{H}_{\text{red}}$.  Let $  \Psi  \in \mathcal{H}_{\text{red}}$. We write $  \Psi  $ as a sequence of $n$-photon state functions $(\psi^{(n)})_{n \in \mathbb{N}}$;   By the functional calculus  outlined in Paragraph \ref{funcpar}, we have that
\begin{equation}
\label{serr}
\begin{split}
\langle \Psi  \vert    W_{0,0}(\underline{w}(\vec{p},z))   \Psi \rangle&=\sum_{n=1}^{\infty} \int_{\mathcal{D}(n) } \prod_{i=1}^{n}d\underline{k}_i  \text{ }   \vert  \psi^{(n)}(\underline{k}_1,...,\underline{k}_n)  \vert^{2}  w_{0,0}(\vec{p},z,\Sigma(\underline{k}^{(n)}),\vec{\Sigma}(\underline{k}^{(n)}))\\
& \qquad + w_{0,0}(\vec{p},z,0,\vec{0}) \vert \psi^{(0)} \vert^2,
\end{split}
\end{equation}
where $$\mathcal{D}(n):=\lbrace (\underline{k}_1,...,\underline{k}_n) \in \underline{B}_{1}^{n} \mid  \vert \vec{k_1}  \vert+...+  \vert \vec{k}_n  \vert \leq 1 \rbrace.$$
We use  the theorem of Morera for several complex variables  (see Theorem \ref{morera} in Appendix \ref{Hoo}) to show the analyticity of the map  $(\vec{p},z) \mapsto \langle \Psi  \vert    W_{0,0}(\underline{w}(\vec{p},z))   \Psi \rangle \in \mathbb{C}$.  Let $V_i:=l_1 \times \Delta_i \times l_3 \times l_4 $, $(i=1,...,4)$, be a  surface included in the domain $U[\vec{p}^*] \times D_{\mu/2}$ that satisfies the criteria of Theorem \ref{morera}.  $H(\underline{w}) \in \mathcal{B}(\gamma,\delta,\varepsilon)$, and there exists a constant $C>0$ such that  $ \vert  w_{0,0}(\vec{p},z,r,\vec{l}) \vert \leq C$, for all $(\vec{p},z) \in U[\vec{p}^*] \times D_{\mu/2}$ and $(r,\vec{l}) \in \mathcal{B}$. As  
\begin{equation*}
\sum_{n=0}^{\infty} \int_{\mathcal{D}(n)} \prod_{i=1}^{n}d\underline{k}_i \text{ }  \vert  \psi(\underline{k}_1,...,\underline{k}_n)  \vert^{2} = \| \Psi \|^2<\infty,
\end{equation*}
we deduce that  the function $(\vec{p},z) \mapsto \langle \Psi  \vert    W_{0,0}(\underline{w}(\vec{p},z))  \Psi \rangle$ is continuous; Furthermore, by integrating \eqref{serr} over $\partial V_i$,  the sum over $n$  and the  integral over $\partial V_i$ can be exchanged. Fubini's theorem  implies that 
\begin{align}
&\int_{\partial V_i}     \langle \Psi  \vert    W_{0,0}(\underline{w}) \Psi \rangle = \int_{ \partial V_i} \text{ } w_{0,0}(\cdot ,\cdot,0,\vec{0}) \vert \psi^{(0)} \vert^2  \notag \\
&+\sum_{n=1}^{\infty} \int_{\underline{\mathbb{R}}^n}  \prod_{i=1}^{n}d\underline{k}_i    \vert  \psi^{(n)}(\underline{k}_1,...,\underline{k}_n)  \vert^{2}  \int_{ \partial V_i}   \text{ }   w_{0,0}(\cdot,\cdot,\Sigma(\underline{k}^{(n)}),\vec{\Sigma}(\underline{k}^{(n)})) . \label{marre}
\end{align}
The analyticity of $U[\vec{p}^*] \times D_{\mu/2}  \ni (\vec{p},z) \mapsto w_{0,0}(\vec{p},z,\Sigma(\underline{k}^{(n)}),\vec{\Sigma}(\underline{k}^{(n)}))$ for  any $(\Sigma(\underline{k}^{(n)}),\vec{\Sigma}(\underline{k}^{(n)}))\in \mathcal{B}$ shows that the right-hand side of \eqref{marre} is equal to zero.   Hence,  Morera's theorem implies that  $(\vec{p},z) \mapsto \langle \Psi  \vert    W_{0,0}(\underline{w}(\vec{p},z))  \vert \Psi \rangle$ is analytic for any $\Psi$ in $\mathcal{H}_{\text{red}}$, and we conclude that  $(\vec{p},z) \mapsto     W_{0,0}(\underline{w}(\vec{p},z)) \in \mathcal{B}(\mathcal{H}_{\text{red}})$ is analytic on $U[\vec{p}^*] \times D_{\mu/2}$. Obviously, since $W_{\geq 1}(\underline{w}(\vec{p},z)) = H(\underline{w}(\vec{p},z)) - W_{0,0}(\underline{w}(\vec{p},z))$,  $U[\vec{p}^*] \times D_{\mu/2}  \ni (\vec{p},z) \mapsto W_{\geq 1}(\underline{w}(\vec{p},z))  \in \mathcal{B}(\mathcal{H}_{\text{red}})$ is also analytic.
\end{proof}

\subsection{Iterating the renormalization map}
In Subsection \ref{presan} it has been shown that the renormalization map preserves analyticity. We now investigate under which condition the renormalization map can be iterated.
\subsubsection{Codimension-4 contractivity}
 The renormalization map is codimension-4 contractive, in the sense of the following lemma.
\begin{lemma}
\label{contr}
Let $0<\rho \ll \mu$, $0<\xi<1/ (4\sqrt{8 \pi})$,  $0<\gamma\ll \mu$,  $0<\delta \ll \rho \mu$,  and $0<\varepsilon \ll \rho^2 \mu^2$. Then,
 \begin{equation}
 \mathcal{R}_{\rho}: \mathcal{B}^{\text{an}}(\gamma, \delta, \varepsilon) \rightarrow \mathcal{B}^{\text{an}}(\gamma + \frac{\varepsilon}{2},  \frac{\varepsilon}{2}, \frac{\varepsilon}{2}).
 \end{equation}
\end{lemma}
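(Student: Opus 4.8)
The plan is to reduce the assertion to an estimate on the Wick-ordered kernels of $\mathcal{R}_\rho(H(\underline w))$. Analyticity is already guaranteed by Lemma \ref{preserved}, whose hypotheses are implied by those assumed here (since $\mu < 1/2$, the condition $\rho \ll \mu$ forces $\rho < 1/2$, and $\varepsilon \ll \rho^2\mu^2 \le \rho\mu$, $\delta \ll \rho\mu$); so it suffices to show that for $H(\underline w(\cdot,\cdot)) \in \mathcal{B}^{\mathrm{an}}(\gamma,\delta,\varepsilon)$ the kernels $\underline{\hat w}$ determined by the Wick-ordering procedure of Subsection \ref{Wicki} (written out in Appendix \ref{Wicko}, series \eqref{whatmn}) obey the three defining inequalities of the polydisc $\mathcal{B}(\gamma + \varepsilon/2, \varepsilon/2, \varepsilon/2)$. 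Fix $(\vec p, \zeta) \in U[\vec p^*] \times D_{\mu/2}$ and write $(\vec p, z) := E_\rho^{-1}(\vec p, \zeta)$, which is well defined and analytic by Lemma \ref{lm:bihol} and satisfies $w_{0,0}(\vec p, z, 0, \vec 0) = -\rho\zeta$ by the definition of $E_\rho$. The first decimation step is the Feshbach--Schur formula \eqref{Wickl}; I would organize it by the Neumann index $L$ of the expansion \eqref{sumL}: the ``$L \le 1$'' part, namely $W_{0,0}(\underline w(\vec p,z)) + \chi_\rho(H_f)W_{\ge 1}(\underline w(\vec p,z))\chi_\rho(H_f)$, is already normal ordered and contains no contractions, while the ``$L \ge 2$'' remainder is quadratic or higher in $\underline w$.

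For the ``$L \le 1$'' part the scaling identities of Subsection \ref{rescaling} do the work. The operator $W_{0,0}(\underline w(\vec p,z))$ contributes to $\hat w_{0,0}$ the kernel $\rho^{-1}w_{0,0}(\vec p,z,\rho r,\rho\vec l)$, which at $(r,\vec l) = (0,\vec 0)$ equals $\rho^{-1}(-\rho\zeta) = -\zeta$: the $E_\rho$-tuning of $z$ is precisely what makes the relevant part of $\hat w_{0,0}$ coming from this term equal to $-\zeta$ exactly. Writing $w_{0,0}(\vec p,z,s,\vec u) - w_{0,0}(\vec p,z,0,\vec 0) = (s - m^{-1}\vec p \cdot \vec u) + e(s,\vec u)$, one has $e(0,\vec 0) = 0$ and $\|e\|^\sharp \le \gamma$ since $H(\underline w) \in \mathcal{B}(\gamma,\delta,\varepsilon)$; the marginal part of this contribution is then $(r - m^{-1}\vec p\cdot\vec l) + \rho^{-1}e(\rho r, \rho\vec l)$, and since $e$ vanishes at the origin the definition \eqref{taunorm} gives $\|\rho^{-1}e(\rho\,\cdot)\|^\sharp \le \|e\|^\sharp \le \gamma$. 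The monomials $\chi_\rho(H_f)W_{m,n}(\underline w(\vec p,z))\chi_\rho(H_f)$, $m+n\ge 1$, contribute nothing to $\hat w_{0,0}$; after $S_\rho$ each gains a factor $\rho^{2(m+n)-1}\le\rho$ in $\|\cdot\|_{1/2}$ (and $\rho^{2(m+n)}$ in the derivative parts), the smooth cutoffs and their derivatives reducing to $O(1)$ factors after rescaling, so that summing against the weights $\xi^{-(m+n)}$ (with $\xi < 1/(4\sqrt{8\pi})$) produces a contribution $\lesssim \rho\,\varepsilon$ to $\|\underline{\hat w}\|^\sharp_{\xi,\ge 1}$.

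For the ``$L \ge 2$'' remainder I would estimate the $L$-th term of \eqref{sumL} by $\|W_{\ge 1}(\underline w(\vec p,z))\| \le \varepsilon$ (Lemma \ref{opboundl}) and $\|(W_{0,0}(\underline w(\vec p,z)))^{-1}\overline{\chi}_\rho(H_f)\| = O((\rho\mu)^{-1})$ (Lemma \ref{cbo}, via \eqref{eq:a10}), obtaining operator norm $\lesssim \varepsilon^L(\rho\mu)^{-(L-1)}$; since $\varepsilon \ll \rho\mu$ the sum over $L\ge 2$ is $\lesssim \varepsilon^2(\rho\mu)^{-1}$. To control the $\partial_r, \partial_{l_i}$ parts of the $\sharp$-norm of the resulting $(0,0)$- and $(m,n)$-kernels I would differentiate \eqref{sumL} term by term: each derivative hitting a factor $\overline{\chi}_\rho^2/W_{0,0}$ costs at most an extra $O((\rho\mu)^{-1})$ (a $\rho^{-1}$ from $\chi_\rho'$, or an $(\rho\mu)^{-1}$ from differentiating $(W_{0,0})^{-1}$, using $\|\partial_r w_{0,0} - 1\|_\infty \le \gamma$ and $\|\partial_{l_i}w_{0,0} + p_i/m\|_\infty \le \gamma$), so the full $\sharp$-norm of the ``$L\ge 2$'' contribution to each $\hat w_{m,n}$ before rescaling is $\lesssim \varepsilon^2(\rho\mu)^{-2}$, while its value at the origin stays $\lesssim \varepsilon^2(\rho\mu)^{-1}$. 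Applying $S_\rho$ multiplies only the $(0,0)$ relevant part by $\rho^{-1}$ and adds the gain $\rho^{2(m+n)-1}$ on the $m+n\ge 1$ kernels. Collecting all pieces, the relevant, marginal and irrelevant deviations of $\underline{\hat w}$ are bounded by $C\varepsilon^2\rho^{-2}\mu^{-1}$, $\gamma + C\varepsilon^2\rho^{-2}\mu^{-2}$ and $C\rho\varepsilon + C\varepsilon^2\rho^{-1}\mu^{-2}$ respectively, and the hypotheses $\rho \ll \mu$ and $\varepsilon \ll \rho^2\mu^2$ make these at most $\varepsilon/2$, $\gamma + \varepsilon/2$ and $\varepsilon/2$, which is the claim.

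The main obstacle is precisely the second-order estimate just sketched: differentiating the Neumann series \eqref{sumL} term by term while keeping track of the combinatorial factors $C^{\underline{M},\underline{N}}_{\underline{m},\underline{n}}$ of Subsection \ref{Wicki} and of the shifts of kernel arguments produced by the pull-through formula (Lemma \ref{Pt}) during Wick ordering, and verifying that every derivative that lands on a resolvent $(W_{0,0})^{-1}$ or on a cutoff $\chi_\rho$ stays within the $(\rho\mu)^{-1}$ budget. This is the same bookkeeping as in \cite{BaChFrSi03_01}, the one new feature being that our kernels also depend on $\vec P_f$ through the functional calculus of Subsection \ref{funcpar}; I would carry it out in full in an appendix, following the scheme of Subsection \ref{Wicki} and Appendix \ref{Wicko}.
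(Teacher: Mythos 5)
Your proposal follows the same overall route as the paper's proof in Appendix~\ref{AppD}: decompose the Feshbach--Schur remainder by Neumann order $L$, observe that the choice $z = E_\rho^{-1}(\vec p,\zeta)$ makes the $L=0$ relevant part equal to $-\zeta$ exactly, use the scaling gains $\rho^{2(m+n)-1}$ for the irrelevant $L=1$ piece, and bound the $L\ge 2$ remainder via $\|W_{\ge 1}\|\le\varepsilon$ together with $\|F\|\lesssim(\rho\mu)^{-1}$ and the corresponding bounds on its derivatives. The final deviations you report -- relevant $\lesssim\varepsilon^2\rho^{-2}\mu^{-1}$, marginal $\gamma + O(\varepsilon^2\rho^{-2}\mu^{-2})$, irrelevant $O(\rho\varepsilon) + O(\varepsilon^2\rho^{-1}\mu^{-2})$ -- are consistent with \eqref{cleq2}--\eqref{cfini} and the $A(\zeta)$ estimate of Appendix~\ref{AppD}, and the hypotheses $\rho\ll\mu$, $\varepsilon\ll\rho^2\mu^2$ close them in the same way.

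That said, the step you explicitly defer to an unwritten appendix is the actual core of the proof. Passing from the operator-norm bound $\varepsilon^L(\rho\mu)^{-(L-1)}$ on the $L$-th Neumann term to a bound on $\|\hat w_{M,N}\|^\sharp$ requires the Wick-ordered kernel formula \eqref{whatmn}, the pointwise estimates on $F$, $\partial_r F$, $\partial_{l_j}F$ in Lemma~\ref{bou1} (which is where the joint $(H_f,\vec P_f)$ dependence and the constraint $|\vec l|\le r$ really enter, via \eqref{eq:a10}), and the explicit combinatorial bound of Lemma~\ref{ctkim},
\begin{equation*}
\rho^{\frac{3}{2}(M+N)-1}\big\|V_{\underline{m,p,n,q}}(\vec p,z)\big\|^\sharp \le 5(L+1)\mu^{-L}C_\chi^{L+1}\rho^{2(M+N)-L}\prod_{i=1}^L(\sqrt{8\pi})^{p_i+q_i}\big\|w_{m_i+p_i,n_i+q_i}(\vec p,z)\big\|^\sharp,
\end{equation*}
after which the sums over $L$ and $\underline{m,p,n,q}$ are closed against the $\xi^{-(M+N)}$ weights using $\xi<1/(4\sqrt{8\pi})$ to control the factor $(2\sqrt{8\pi}\xi)^{p_i+q_i}(1/2)^{m_i+n_i}$. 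You have correctly identified where the work lies and your derivative budget (one factor $(\rho\mu)^{-1}$ per resolvent, compensated by the $\rho$ from the rescaled arguments) is the right heuristic, but as written the proposal is an outline of the paper's argument with its decisive technical lemma stated as a goal rather than proved; the ``appendix'' you promise is, in effect, Appendix~\ref{AppD}.
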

  
 The proof of Lemma \ref{contr}  is similar to the proof of  \cite[Theorem 3.8]{BaChFrSi03_01}. We sketch its main idea in the rest of this paragraph.  For the sake of completeness, a  detailed exposition is given in Appendix \ref{AppD}. If  $\rho$ and $\varepsilon$ are sufficiently small, we expect  from  the scaling properties of the kernels in (2.24) that the irrelevant part $W_{\geq 1}(\underline{\hat{w}})$ of the effective operator $H(\underline{\hat{w}}):=\mathcal{R}_{\rho}(H(\underline{w}))$ will have a smaller  norm   than $W_{\geq 1}(\underline{w})$. To  check it, we need to  bound the Wick-ordered series $W_{\geq 1}(\underline{\hat{w}})$. $W_{\geq 1}(\underline{\hat{w}})$ is a sum over $L \geq 1$,  followed by a sum over $M_1,\dots,M_L/N_1,\dots,N_L$ and $m_1,\dots,m_L/n_1,\dots,n_L$ ($ \sum_i m_i +n_i \geq 1$, $0 \leq m_i \leq M_i$, $0 \leq n_i \leq N_i$),  of operators of the form
\begin{equation}
\label{couscous2}
 C^{\underline{M},\underline{N}}_{\underline{m},\underline{n}} \int  d\underline{K}^{(\underline{m},\underline{n})}    b^{*}(\underline{k}^{(\underline{m})}) \text{ }  \langle \Omega \vert \mathscr{V}^{\underline{M},\underline{N}}_{\underline{m},\underline{n}}(\vec{p},z,r,\vec{l},\underline{K}^{(\underline{m},\underline{n})})   \Omega  \rangle_{r=H_f,\vec{l}=\vec{P}_f}  \text{ }b(\underline{\tilde{k}}^{(\underline{n})}).
 \end{equation}
 The norm of  the operator $\mathscr{V}^{\underline{M},\underline{N}}_{\underline{m},\underline{n}}(\vec{p},z,r,\vec{l},\underline{K}^{(\underline{m},\underline{n})})$ in \eqref{couscous2}  can be bounded by the product of  the norms of  the kernels $w_{M_i,N_i}$ with the norms of the  operator valued functions $F$.   After rescaling,  \eqref{couscous2} is  bounded in norm by 
 \begin{equation}
 \label{mau}
  C^{\underline{M},\underline{N}}_{\underline{m},\underline{n}} \rho^{-1} \left( \prod_{i=1}^{L}  \varepsilon (\sqrt{8 \pi }\xi)^{M_i+N_i} \rho^{2(m_i+n_i)} \right) \left(\frac{C}{\rho \mu} \right)^{L-1}.
 \end{equation}
In \eqref{mau},  the term $(C/\rho \mu)^{L-1}$ comes from the  norm of the rescaled operators $F$, and the factors $\varepsilon \xi^{M_i+N_i}$ arise because $\vert \vert \underline{w}_{\geq 1} \vert \vert^{\sharp} \leq \varepsilon$. It is  not difficult to realize that the sum over $L$, $M_1,\dots,M_L/N_1,\dots,N_L$ and $m_1,\dots,m_L/n_1,\dots,n_L$, of  $\eqref{mau}$,   is  bounded if $\rho$ and $\varepsilon$ are small enough. In particular, it becomes smaller than $\varepsilon/2$ for  sufficiently small $\rho,\varepsilon$. 

The relevant part does not contract under renormalization. Indeed,
\begin{equation}
W_{0,0}(\underline{\hat{w}}(\vec{p},\zeta))= S_{\rho} W_{0,0} (\underline{w}(E_{\rho}^{-1}(\vec{p},\zeta))) + \tilde{W}  (\underline{w}(E_{\rho}^{-1}(\vec{p},\zeta))),
\end{equation}
where $\tilde{W}  (\underline{w}(E_{\rho}^{-1}(\vec{p},\zeta)))$ is the sum over  all Wick-ordered  operators of the  form  \eqref{mau} with $m_1=...=m_L=n_1=\dots=n_L=0$. This term can be bounded by $\varepsilon/2$ if  $\rho$ and $\varepsilon$ are sufficiently  small. However, the norm of  $S_{\rho} W_{0,0} (\underline{w}(E_{\rho}^{-1}(\vec{p},\zeta))) $  does not change much. The reader can consult Appendix \ref{AppD} for more detailed calculations.

 \subsubsection{Iteration of the renormalization map}
\label{itere}
Let $\rho, \xi , \gamma, \delta,\varepsilon$  satisfying the constraints of Lemma \ref{contr}. Let $0 < \rho_0 <  \min( \xi^{2/3},\omega_0)$.
Lemmas \ref{polyd} and \ref{anainit} show that there exists a constant $\lambda_c >0$, such that, for all $ 0 \le  \lambda_0  < \lambda_c$,  the operator-valued function $U[\vec{p}^*] \times D_{\mu/2} \ni (\vec{p},z) \mapsto H(\underline{w}^{(0)}(\vec{p},z))$ obtained from $(\vec{p},z) \mapsto H(\vec{p},z)$ after a  rescaling of $z$ by $\rho_0$, a  smooth Feshbach-Schur transformation, and  a scale transformation, $S_{\rho_0}$, lies in  $\mathcal{B}^{\text{an}}( \gamma  ,\varepsilon   ,  \varepsilon )$. The codimension-4 contractivity of Lemma \ref{contr} implies that 
 \begin{equation}\label{eq:a14}
\mathcal{R}_{\rho}: \mathcal{B}^{\text{an}} \left( \gamma , \varepsilon, \varepsilon \right) \rightarrow  \mathcal{B}^{\text{an}} \left( \gamma +  \frac{\varepsilon }{2},  \frac{\varepsilon }{2}, \frac{\varepsilon }{2}\right).
\end{equation}
As $\varepsilon \ll  \rho^2 \mu^2 <\mu$, $\gamma + \varepsilon \ll \mu$, we deduce that  the renormalization map can be iterated indefinitely. For any $n\geq 0$, we set
\begin{equation}
H(\underline{w}^{(n)}):=\mathcal{R}^{n}_{\rho}(H(\underline{w}^{(0)}))  \in \mathcal{B}^{\text{an}} \left( \gamma + \varepsilon r_n,  \frac{\varepsilon}{2^{n}}, \frac{\varepsilon}{2^{n}}\right),
\end{equation}
where $r_n= 1-2^{-n}$ for $n\geq 0$. Thanks to  the isospectrality of the renormalization map, we  keep track of  the   eigenvalue $0$  of the effective Hamiltonians $H(\underline{w}^{(n)})$ in order to find an eigenvalue $z_{\infty}(\vec{p})$ of the initial operator $H(\vec{p})$. We set, for any $n \in \mathbb{N} \cup \lbrace 0 \rbrace$, 
\begin{equation}
\mathcal{U}^{(n)}:=\lbrace(\vec{p},z ) \in U[\vec{p}^*] \times D_{\mu/2}  \mid \text{ } \vert  w_{0,0}^{(n)}(\vec{p},z,0,\vec{0} )  \vert < \frac{ \mu \rho}{2} \rbrace,
\end{equation}
 and we  introduce the maps $E_{\rho,n}: \mathcal{U}^{(n)} \to U[\vec{p}^*] \times D_{\mu/2}$ defined by  
$$   E_{\rho,n}(\vec{p},z):=(\vec{p},-\rho^{-1} w^{(n)}_{0,0}(\vec{p},z,0,\vec{0})),$$
 for all $(\vec{p},z) \in U[\vec{p}^*] \times D_{\mu/2}$.  We explain with some pictures   why   the sets $\mathcal{U}^{(n)}$ and the maps $E_{\rho,n}$ are relevant to  find  the ground state eigenvalue of the initial Hamiltonian $H(\vec{p})$. It is more convenient to work with $\vec{p} \in U[\vec{p}^*]$ fixed and to look at the sets
  \begin{equation}
\mathcal{U}^{(n)}[\vec{p}]:=\lbrace z  \in   D_{\mu/2} \mid \text{ } \vert  w_{0,0}^{(n)}(\vec{p},z,0,\vec{0} )  \vert <\frac{ \mu \rho}{2} \rbrace.
\end{equation}
The sets  $\mathcal{U}^{(n)}[\vec{p}]$ are non-empty for all $\vec{p} \in U[\vec{p}^*]$; see \eqref{inclu}. $\mathcal{U}^{(n)}$ is the union over $\vec{p} \in  U[\vec{p}^*]$  of all sets $ \lbrace \vec{p} \rbrace \times \mathcal{U}^{(n)}[\vec{p}]$. For any $n\geq 0$,  $\mathcal{U}^{(n)}[\vec{p}] \subset  D_{\mu/2} $ and $ \rho^{-1} w_{0,0}^{(n)}(\vec{p},\mathcal{U}^{(n)}[\vec{p}] ,0,\vec{0} ) \subset D_{\mu/2}$.
 
\begin{center}
\begin{tikzpicture}

      \fill[ color=gray!60] (0,0) circle (2) ;
      \fill[fill=black!55] (0,0) circle (1) ;

      \fill[color=gray!60] (6, 0) circle (1) ;
     
            \color{black!80}

    \fill[black!55]     (6,0.3)to[bend left=45] (6.2,0.25)  to[bend left=-60] (6.3,0.2)   to[bend left=40] (6.4,0.1)  to[bend left=60] (6,-0.2) to[bend left=-40] (5.8,-0.19) to[bend left=60] (5.7,-0.15) to[bend left=-10] (5.8,0.2) to[bend left=60] (6,0.3);

 \draw[<-] (0,2)--(6,1);
 \draw[<-] (0,-2)--(6,-1);
 
  \draw[<-,thick,dotted] (0,1)--(6,0.3);
 \draw[<-,thick,dotted] (0,-1)--(6,-0.2);
 
    \draw (6,-1) node[above] {\small $ D_{\mu/2}$};
   \draw (0,-1) node[above] {\small$ D_{\mu/2}$};
     \draw (7.5,0.2) node[above] { \small $\mathcal{ U } ^{(n)}[\vec{p}]$};
      \draw[-] (6.12,-0.05)--(7.45,0.3);
     \draw (3,1.5) node[above] {\small$ \rho^{-1} w_{0,0}^{(n)}(\vec{p},\cdot, 0,\vec{0})$};

\end{tikzpicture}

\end{center}

\noindent The isospectrality  of the renormalization map shows that $\dim[\ker{H(\underline{w}^{(n+1)}(\vec{p},z))}] \neq 0$ iff  $$\dim[\ker{H(\underline{w}^{(0)}(E_{\rho,0}^{-1} \circ...\circ E_{\rho,n}^{-1} (\vec{p},z)))]} \neq 0.$$ If we want $z$ to be located  inside the disc $D_{\mu/2}$ for all $n \geq 0$, the set of  initial spectral values $\Pi_{z} (E_{\rho,0}^{-1} \circ...\circ E_{\rho,n}^{-1} (\vec{p}, D_{\mu/2})) \subset D_{\mu/2}$ has to shrink with $n$. $\Pi_{z} $ is the projection along  the $z$-component in $\mathbb{C}^3 \times \mathbb{C}$, that is
\begin{equation*}
\Pi_{z}(\vec{p}',z'):=z',
\end{equation*}
for any $(\vec{p}',z') \in \mathbb{C}^3 \times \mathbb{C}$.
 
\begin{center}
\begin{tikzpicture}

      \fill[ color=gray!80] (0,0) circle (2) ;
      \fill[fill=black!55] (0,0) circle (1) ;

      \fill[color=gray!80] (6, 0) circle (1) ;
      
          \fill[color=gray!80] (12, 0) circle (1) ;

            \color{black!80}
           
   \fill[black!55]     (6,0.3)to[bend left=45] (6.2,0.25)  to[bend left=-60] (6.3,0.2)   to[bend left=40] (6.4,0.1)  to[bend left=60] (6,-0.2) to[bend left=-40] (5.8,-0.19) to[bend left=60] (5.7,-0.15) to[bend left=-10] (5.8,0.2) to[bend left=60] (6,0.3);

 \fill[black!55, shift={(6 cm,0)}]     (6,0.3)to[bend left=45] (6.2,0.25)  to[bend left=-60] (6.3,0.2)   to[bend left=40] (6.4,0.1)  to[bend left=60] (6,-0.2) to[bend left=-40] (5.8,-0.19) to[bend left=60] (5.7,-0.15) to[bend left=-10] (5.8,0.2) to[bend left=60] (6,0.3);

 \fill[black!75, shift={(10.25 cm,0)},scale=0.3]     (6,0.3)to[bend left=45] (6.2,0.25)  to[bend left=-60] (6.3,0.2)   to[bend left=40] (6.4,0.1)  to[bend left=60] (6,-0.2) to[bend left=-40] (5.8,-0.19) to[bend left=60] (5.7,-0.15) to[bend left=-10] (5.8,0.2) to[bend left=60] (6,0.3);

 \draw[<-] (0,2)--(6,1);
 \draw[<-] (0,-2)--(6,-1);
  \draw[-] (7.2,0.65)-- (6.12,-0.03); 
  \draw[<-,thick,dotted] (0,1)--(6,0.3);
 \draw[<-,thick,dotted] (0,-1)--(6,-0.2);
 
    \draw (6,-1) node[above] {\small $ D_{\mu/2}$};
   \draw (0,-1) node[above] {\small$ D_{\mu/2}$};
     \draw (7.2,0.6) node[above] { \small $\mathcal{ U }^{(n)}[\vec{p}]$};
     \draw (3,1.5) node[above] {\small$ \rho^{-1} w_{0,0}^{(n)}(\vec{p},\cdot, 0,\vec{0})$};
       \draw (12,-1) node[above] {\small $ D_{\mu/2}$};
   \draw (13.6,-0.1) node[above] { \small $\mathcal{ U }^{(0)}[\vec{p}]$};
    \draw (11.5,1) node[above] { \small $ \Pi_{z} ( E_{\rho,0}^{-1} \circ ...\circ  E_{\rho,n}^{-1} (\vec{p},D_{\mu/2}))$};
      
       \draw[<-,thick,dotted] (6,0.3)--(12,0.05);
 \draw[<-,thick,dotted] (6,-0.2)--(12,-0.05);

               \draw[-] (12.4,1.1)--(12.1,0.0);
                \draw[-] (13.,0.2)--(12.3,0.0);
      
\end{tikzpicture}

\end{center}

\noindent In the limit where  $n$ tends to infinity, we expect  the set $\Pi_{z} (E_{\rho,0}^{-1} \circ...\circ E_{\rho,n}^{-1} (\vec{p}, D_{\mu/2}))$ to shrink to a point located near $0$, which will turn out to be the eigenvalue of $H(\vec{p})$ rescaled by a factor $\rho_{0}^{-1}$.  We  define  the  sequence  $( e_{0,n})_{n}$ of complex-valued functions  on  $U[\vec{p}^*] \times D_{\mu/2}$ by  
\begin{equation}
\label{emn1}
 e_{ 0,n}(\vec{p},z):= \Pi_{z} (E_{\rho,0}^{-1} \circ...\circ E_{\rho,n}^{-1} (\vec{p}, z)),
\end{equation}
 for all $n \geq 0$ and for all $(\vec{p},z) \in U[\vec{p}^*] \times D_{\mu/2}$.
\vspace{2mm}

\begin{lemma}
\label{44}
Let  $0<\rho\ll \mu$, $0<\xi<1/ (4\sqrt{8 \pi})$,  $0<\gamma\ll \mu$,  $0<\delta \ll \rho \mu$,  and $0<\varepsilon \ll \rho^2 \mu^2$.  The sequence of functions $ (e_{0, n})_n$ converges uniformly  on $U[\vec{p}^*] \times D_{\mu/2}$ to a function $e_{ 0,\infty}$.
$e_{ 0,\infty}(\vec{p},z )$  is independent of $z$.
\end{lemma}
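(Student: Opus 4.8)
The plan is to make precise the picture drawn before the lemma: for fixed $\vec{p}\in U[\vec{p}^*]$ set
$A_n(\vec{p}):=e_{0,n}(\vec{p},D_{\mu/2})=\Pi_z\bigl(E_{\rho,0}^{-1}\circ\cdots\circ E_{\rho,n}^{-1}(\{\vec{p}\}\times D_{\mu/2})\bigr)$.
These sets are nested and their diameters shrink geometrically in $n$, uniformly in $\vec{p}$; the common limit point will be $e_{0,\infty}(\vec{p})$, and the geometric bound on $\mathrm{diam}\,A_n(\vec{p})$ will give both uniform convergence and $z$-independence.

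\textbf{Preliminaries.} First I would record that each $E_{\rho,n}^{-1}$ is well defined and biholomorphic: Lemma \ref{lm:bihol} applies to $w^{(n)}_{0,0}$ since $H(\underline{w}^{(n)})\in\mathcal{B}^{\mathrm{an}}(\gamma+\varepsilon r_n,\varepsilon/2^n,\varepsilon/2^n)$ with $\varepsilon/2^n\le\varepsilon\ll\rho\mu$, $\gamma+\varepsilon\ll\mu$, and analyticity of $(\vec{p},z)\mapsto w^{(n)}_{0,0}(\vec{p},z,0,\vec{0})$ follows from Lemma \ref{compo}. Moreover $E_{\rho,n}^{-1}$ leaves the $\vec{p}$-variable unchanged and, for fixed $\vec{p}$, $E_{\rho,n}^{-1}(\{\vec{p}\}\times D_{\mu/2})=\{\vec{p}\}\times\mathcal{U}^{(n)}[\vec{p}]\subset\{\vec{p}\}\times D_{\mu\rho/2+\varepsilon/2^n}$, by the second inclusion in \eqref{inclu} applied with $\delta=\varepsilon/2^n$. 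Since $\mathcal{U}^{(n+1)}[\vec{p}]\subset D_{\mu/2}$, one has $e_{0,n+1}(\vec{p},z)=e_{0,n}\bigl(E_{\rho,n+1}^{-1}(\vec{p},z)\bigr)=e_{0,n}(\vec{p},z')$ with $z'=\Pi_zE_{\rho,n+1}^{-1}(\vec{p},z)\in\mathcal{U}^{(n+1)}[\vec{p}]\subset D_{\mu/2}$; hence both $e_{0,n}(\vec{p},z)$ and $e_{0,n+1}(\vec{p},z)$ lie in $A_n(\vec{p})$, so $|e_{0,n+1}(\vec{p},z)-e_{0,n}(\vec{p},z)|\le\mathrm{diam}\,A_n(\vec{p})$.

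\textbf{Geometric decay of the diameter.} The main step is a one-step contraction estimate. Write $w^{(j)}_{0,0}(\vec{p},z,0,\vec{0})=-z+g_j(\vec{p},z)$; since $H(\underline{w}^{(j)})$ lies in the polydisc, $|g_j(\vec{p},z)|\le\delta_j:=\varepsilon/2^j$ on $D_{\mu/2}$, and Cauchy's formula gives $|\partial_z g_j(\vec{p},z)|\le 4\delta_j/\mu$ for $z\in D_{\mu\rho/2+\delta_j}$ (this disc is at distance $\ge\mu/4$ from $\partial D_{\mu/2}$ once $\rho<1/2$ and $\delta_j\ll\mu$). If $z_i:=\Pi_zE_{\rho,j}^{-1}(\vec{p},u_i)\in\mathcal{U}^{(j)}[\vec{p}]\subset D_{\mu\rho/2+\delta_j}$ with $u_1,u_2\in D_{\mu/2}$, then $z_i-g_j(\vec{p},z_i)=\rho u_i$, whence $|z_1-z_2|\le\rho|u_1-u_2|+(4\delta_j/\mu)|z_1-z_2|$ and therefore $|z_1-z_2|\le\kappa_j|u_1-u_2|$ with $\kappa_j:=\rho\,(1-4\delta_j/\mu)^{-1}$. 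Composing these $n+1$ estimates for $j=n,n-1,\dots,0$ (each intermediate point belongs to some $\mathcal{U}^{(i)}[\vec{p}]\subset D_{\mu/2}$, so the estimate at the next lower level applies) yields, for all $u_1,u_2\in D_{\mu/2}$, $|e_{0,n}(\vec{p},u_1)-e_{0,n}(\vec{p},u_2)|\le\bigl(\prod_{j=0}^n\kappa_j\bigr)|u_1-u_2|$, so $\mathrm{diam}\,A_n(\vec{p})\le\bigl(\prod_{j=0}^n\kappa_j\bigr)\,\mu\le 2\mu\,\rho^{n+1}$, uniformly in $\vec{p}$, since $\prod_{j\ge0}(1-4\delta_j/\mu)^{-1}\le\exp\bigl(8\varepsilon/\mu\bigr)\le2$ for $\varepsilon\ll\mu$.

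\textbf{Conclusion.} Combining the last two steps, $|e_{0,n+1}(\vec{p},z)-e_{0,n}(\vec{p},z)|\le 2\mu\,\rho^{n+1}$ uniformly in $(\vec{p},z)\in U[\vec{p}^*]\times D_{\mu/2}$; since $\rho<1$ the right-hand side is summable, so $(e_{0,n})_n$ is uniformly Cauchy and converges uniformly to a function $e_{0,\infty}$ on $U[\vec{p}^*]\times D_{\mu/2}$. For any $z_1,z_2\in D_{\mu/2}$ one has $|e_{0,n}(\vec{p},z_1)-e_{0,n}(\vec{p},z_2)|\le\mathrm{diam}\,A_n(\vec{p})\le 2\mu\,\rho^{n+1}\to0$, so $e_{0,\infty}(\vec{p},z_1)=e_{0,\infty}(\vec{p},z_2)$, i.e.\ $e_{0,\infty}$ does not depend on $z$. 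The only delicate points are keeping the Cauchy estimate on $\partial_z g_j$ valid -- i.e.\ ensuring that all iterates stay in a disc bounded away from $\partial D_{\mu/2}$ -- and checking convergence of the infinite product of the factors $(1-4\delta_j/\mu)^{-1}$; both are guaranteed by the smallness of $\varepsilon$ relative to $\rho\mu$ assumed in the lemma.
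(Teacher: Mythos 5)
Your proof is correct and takes essentially the same route as the paper: both derive a Lipschitz constant $<1$ for the scalar maps $z\mapsto\Pi_z E_{\rho,j}^{-1}(\vec{p},z)$ from a Cauchy estimate on $\partial_z\big(w_{0,0}^{(j)}(\vec{p},z,0,\vec{0})+z\big)$, then chain these contractions along the composition defining $e_{0,n}$ to obtain both the uniform Cauchy-in-$n$ estimate and the $z$-independence in the limit. The only cosmetic differences are that you phrase the argument via the diameters of the image sets $A_n(\vec{p})$ and track a step-dependent factor $\kappa_j=\rho(1-4\delta_j/\mu)^{-1}$ whose infinite product you control by the summability of $\delta_j=\varepsilon/2^j$, whereas the paper reuses the uniform constant $C$ extracted in the proof of Lemma \ref{lm:bihol} to obtain the single contraction rate $\rho/(1-C)$.
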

\vspace{2mm}

\begin{proof} 
We  define  the  sequence  $( e_{(m,n)})_{m,n}$ of functions  on  $U[\vec{p}^*] \times D_{\mu/2}$,   $n \geq m$,  by  
\begin{equation}
\label{emn}
e_{(m,n)}(\vec{p},z):= \Pi_{z} (E_{\rho,m}^{-1} \circ...\circ E_{\rho,n}^{-1} (\vec{p}, z)).
\end{equation}
Let  $m \geq 0$. We   show that the  sequence $(e_{(m,n)}(\vec{p},z))_n$ converges uniformly  in $(\vec{p},z)$  to some function $e_{m, \infty}(\vec{p},z)$.  We have seen in the proof of Lemma \ref{lm:bihol} that $\vert \partial_z w^{(m)}_{0,0}(\vec{p},z,0,\vec{0}) + 1 \vert<C <1$, for all  $(\vec{p},z) \in \mathcal{U}^{(m)}$. The constant $C$ does not depend on $m$, because $H(\underline{w}^{(m)}) \in \mathcal{B}^{\text{an}}(\gamma+ \varepsilon, \varepsilon,\varepsilon)$ for all $m \geq 0$. We denote by $h^{(m)}$ the complex-valued analytic function defined  by  $ (\vec{p},h^{(m)}(\vec{p},z)):=E^{-1}_{\rho,m}(\vec{p},z)$, for any $(\vec{p},z) \in U[\vec{p}^*] \times D_{\mu/2}$.  $\vert \partial_z h^{(m)}(\vec{p},z) \vert<  \rho/( 1-C)$ for all $(\vec{p},z) \in U[\vec{p}^*] \times D_{\mu/2}$.  If $\rho$ is  sufficiently small, $\rho/( 1-C)<1$. Let $ n,k \geq 0$, $n \geq m$,  and $(\vec{p},z) \in U[\vec{p}^*] \times D_{\mu/2}$. Then, $e_{(m,n)} =h^{(m)} \circ ... \circ h^{(n)}$, and 
\begin{align*}
\vert e_{m,n}(\vec{p},z)- e_{m,n+k}(\vec{p},z) \vert &< \frac{\rho}{1-C} \vert  e_{m+1,n}(\vec{p},z)- e_{m+1,n+k}(\vec{p},z) \vert < ... \\
&<  \left( \frac{\rho}{1-C} \right)^{n-m} \vert  e_{n,n}(\vec{p},z)- e_{n,n+k}(\vec{p},z) \vert < \left( \frac{\rho}{1-C} \right)^{n-m}\mu. 
\end{align*}
  The sequence $ (e_{m,n}(\vec{p},z))_n$ is Cauchy and converges to $e_{m, \infty}(\vec{p},z)$. The convergence is uniform in $(\vec{p},z)$, as $\vert e_{m,n}(\vec{p},z)- e_{m,\infty}(\vec{p},z) \vert \leq (\rho/(1-C))^{n-m} \mu$ for any $(\vec{p},z) \in U[\vec{p}^*] \times D_{\mu/2}$. Let $z,z'\in D_{\mu/2}$ and $\vec{p} \in U[\vec{p}^*]$. Then
$$ \vert e_{m,n}(\vec{p},z)- e_{m,n}(\vec{p},z') \vert < \left( \frac{\rho}{1-C} \right)^{n-m} \vert z-   z' \vert,$$
  and we deduce that $e_{m,\infty}(\vec{p},z)$ does not depend on $z$. As $H(\underline{w}^{(m)})  \in \mathcal{B}^{\text{an}} \left( \gamma+  \varepsilon r_m,  \frac{\varepsilon}{2^{m}}, \frac{\varepsilon}{2^{m}}\right)$,  
\begin{equation*}
\vert e_{m,\infty}(\vec{p},z)- \rho e_{m+1,\infty}(\vec{p},z) \vert \leq 2^{-m} \varepsilon.
\end{equation*}
We deduce that
\begin{equation*}
\vert e_{0,\infty}(\vec{p},0) \vert \leq \sum_{j=0}^{\infty} \rho^{j} \vert e_{j,\infty}(\vec{p},0)- \rho e_{j+1,\infty}(\vec{p},0) \vert \leq \frac{2\varepsilon}{2- \rho}, 
\end{equation*}
 for any $ \vec{p} \in U[\vec{p}^*]$. Therefore, $e_{0,\infty}(\vec{p},z)=e_{0,\infty}(\vec{p})$ belongs to a small disk of radius $2\varepsilon$ centered at the origin of the complex plane.\end{proof}
\vspace{2mm}

As  $n$ tends to infinity, the perturbation $W_{\geq 1}(\underline{w}^{(n)})$ tends to zero, and we expect the sequence of operators $H(\underline{w}^{(n)}(\vec{p},e_{n,\infty}(\vec{p},0)))$ to  converge uniformly to $\alpha(\vec{p}) H_f + \vec{\beta}(\vec{p}) \cdot \vec{P}_f $.  The operator $ \alpha(\vec{p}) H_f + \vec{\beta}(\vec{p}) \cdot \vec{P}_f$ has an eigenvalue $0$ with associated eigenvector $  \Omega$,   and  we expect  the complex sequence $( \rho_0 e_{0,n}(\vec{p},0))_n$  to converge to an eigenvalue of the initial ``Hamiltonian'' $H(\vec{p})$. 
\vspace{2mm}

\begin{lemma}
\label{fixp}
Let  $0<\rho\ll \mu$, $0<\xi<1/ (4\sqrt{8 \pi})$,  $0<\gamma\ll \mu$,  $0<\delta \ll \rho \mu$,    $0<\varepsilon \ll \rho^2 \mu^2$, and $\vec{p} \in U[\vec{p}^*]$.  Then $H(\underline{w}^{(n)}(\vec{p},e_{n,\infty}(\vec{p},0)))$ converges  in norm to an operator $H_{\text{Fix}}(\vec{p}):=\alpha(\vec{p}) H_f +  \vec{\beta}(\vec{p}) \cdot \vec{P}_f $, where $\alpha(\vec{p})   \in \mathbb{C}$ and  $\vec{\beta}(\vec{p}) \in \mathbb{C}^3$.  If $\vec{p} \in \mathbb{R}^3 \cap U[\vec{p}^*]$, $\alpha(\vec{p})  \in \mathbb{R}$ and  $\vec{\beta}(\vec{p}) \in \mathbb{R}^3$.
\end{lemma}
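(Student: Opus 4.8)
The plan is to extract the limits $\alpha(\vec p)$ and $\vec\beta(\vec p)$ from the $0,0$-kernel of $H(\underline w^{(n)})$ evaluated at the correctly chosen spectral parameter $z = e_{n,\infty}(\vec p,0)$, and to show that the remaining Wick monomials and the relevant part both tend to $0$ in norm. First I would fix $\vec p \in U[\vec p^*]$ and set $z_n := e_{n,\infty}(\vec p, 0)$. By Lemma~\ref{44}, $(\rho_0 e_{0,n}(\vec p,0))_n$ converges; moreover, by construction of the maps $E_{\rho,m}$ and the estimate $|e_{m,\infty}(\vec p,z) - \rho e_{m+1,\infty}(\vec p,z)| \le 2^{-m}\varepsilon$ obtained in the proof of Lemma~\ref{44}, we have $|z_n| \le 2\varepsilon/(2-\rho) < \mu\rho/2$, so $(\vec p, z_n) \in \mathcal U^{(n)}$ and the Feshbach pair used to define $H(\underline w^{(n+1)})$ from $H(\underline w^{(n)})$ exists at this point. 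Crucially, $z_n$ is chosen precisely so that the relevant part vanishes along the iteration: indeed $E_{\rho,n}(\vec p, z_n) = (\vec p, -\rho^{-1} w_{0,0}^{(n)}(\vec p, z_n, 0,\vec 0))$ and, by the definition of $e_{n,\infty}$ as a fixed point of the composed flow, one checks that $E_{\rho,n}(\vec p, z_n) = (\vec p, z_{n-1})$, i.e. $w_{0,0}^{(n)}(\vec p, z_n, 0, \vec 0) = -\rho z_{n-1} \to 0$. Hence $|w_{0,0}^{(n)}(\vec p, z_n, 0, \vec 0)| \le \rho \cdot 2\varepsilon/(2-\rho)$ is small uniformly in $n$.

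Next I would control the other pieces. Since $H(\underline w^{(n)}) \in \mathcal B^{\mathrm{an}}(\gamma + \varepsilon r_n, \varepsilon 2^{-n}, \varepsilon 2^{-n})$, Lemma~\ref{opboundl} (via Lemma~\ref{lm:Hbounded_injective}) gives $\|W_{\ge 1}(\underline w^{(n)}(\vec p, z_n))\| \le \|\underline w^{(n)}(\vec p,z_n)\|^{\sharp}_{\xi,\ge 1} \le \varepsilon 2^{-n} \to 0$. It remains to analyze $W_{0,0}(\underline w^{(n)}(\vec p, z_n)) = w_{0,0}^{(n)}(\vec p, z_n, H_f, \vec P_f) \mathds 1_{H_f\le 1}$. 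Writing $t^{(n)}(r,\vec l) := w_{0,0}^{(n)}(\vec p, z_n, r, \vec l) - w_{0,0}^{(n)}(\vec p, z_n, 0, \vec 0)$, the polydisc bound $\|t^{(n)} - (r - m^{-1}\vec p \cdot \vec l)\|^{\sharp} \le \gamma + \varepsilon r_n$ controls the values together with the first derivatives $\partial_r t^{(n)}$ and $\partial_{l_i} t^{(n)}$ in sup norm. A subsequence of $(\partial_r t^{(n)}, \partial_{l_i} t^{(n)})$ converges weakly-$*$ in $L^\infty(\mathcal B)$, but to identify a genuine limit I would instead exploit that the renormalization map actually \emph{contracts} the deviation of the marginal part toward the fixed-point manifold $\mathcal M_{fp}$: one shows (as in the contractivity analysis of Appendix~\ref{AppD}, cf. \cite[Thm.~3.8]{BaChFrSi03_01}) that the affine part $r \mapsto w_{0,0}^{(n)}(\vec p, z, r, \vec 0)$ and the linear part in $\vec l$ have coefficients $\alpha_n(\vec p)$, $\vec\beta_n(\vec p)$ that form Cauchy sequences, with $|\alpha_{n+1} - \alpha_n|$ and $|\vec\beta_{n+1} - \vec\beta_n|$ bounded by a geometric factor times $\varepsilon 2^{-n}$, since only the irrelevant Wick monomials (of total norm $\le \varepsilon 2^{-n}$) feed back into the marginal part under one step of $\mathcal R_\rho$. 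This gives $\alpha_n(\vec p) \to \alpha(\vec p)$, $\vec\beta_n(\vec p) \to \vec\beta(\vec p)$, and the $\|\cdot\|^{\sharp}$ bound on the remainder then forces $\|w_{0,0}^{(n)}(\vec p,z_n,\cdot,\cdot)\mathds 1_{H_f\le 1} - (\alpha_n H_f + \vec\beta_n \cdot \vec P_f)\| \to 0$ via the functional calculus estimate \eqref{essbo}. Combining, $H(\underline w^{(n)}(\vec p, z_n)) \to \alpha(\vec p) H_f + \vec\beta(\vec p)\cdot\vec P_f =: H_{\mathrm{Fix}}(\vec p)$ in norm.

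Finally, for the reality statement: if $\vec p \in \mathbb R^3 \cap U[\vec p^*]$, then $H(\vec p)$ is self-adjoint, $E(\vec p) = \inf\sigma(H(\vec p))$ is real, and the rescaled eigenvalue $z_\infty(\vec p) = \rho_0 e_{0,\infty}(\vec p, 0)$ — which we will identify in Section~\ref{fin} with $E(\vec p)$ up to the subtraction of $\vec p^2/2m$ — is real. One way to conclude directly: for real $\vec p$ and real $z$, the effective Hamiltonian $H(\underline w^{(0)}(\vec p,z))$ inherits a symmetry (it is the Feshbach transform of a self-adjoint operator restricted to a reducing-type subspace, followed by the real scale transformation $S_{\rho_0}$), which is preserved by $\mathcal R_\rho$ whenever the spectral parameter is kept real; since $z_n$ is real for real $\vec p$ (the $e_{m,\infty}$ are then real, being uniform limits of real iterates of the real-analytic-in-$z$, real-valued-on-reals functions $h^{(m)}$), each $w_{0,0}^{(n)}(\vec p, z_n, r, \vec l)$ is real for $(r,\vec l) \in \mathcal B$, whence $\alpha_n(\vec p), \vec\beta_n(\vec p) \in \mathbb R \times \mathbb R^3$ and the limits $\alpha(\vec p), \vec\beta(\vec p)$ are real. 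The main obstacle I anticipate is making the convergence of the marginal coefficients $\alpha_n, \vec\beta_n$ rigorous: one must verify that a single application of $\mathcal R_\rho$ changes the marginal part only by a contribution controlled by $\varepsilon 2^{-n}$ (not merely bounded by $\gamma$), which requires tracking, in the Wick-ordering expansion of Appendix~\ref{Wicko}, exactly which contractions contribute to $w_{0,0}$ and checking that each carries at least one factor from $W_{\ge 1}$; this is the heart of codimension-$4$ contractivity and is where the bulk of the estimation work lies.
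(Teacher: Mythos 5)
Your overall strategy matches the paper's: show the relevant part $w_{0,0}^{(n)}(\vec p,z_n,0,\vec 0)\to 0$ (your index is off by one — the fixed-point relation reads $E_{\rho,n}(\vec p,z_n)=(\vec p,z_{n+1})$, so $w_{0,0}^{(n)}(\vec p,z_n,0,\vec 0)=-\rho z_{n+1}$, not $-\rho z_{n-1}$, but both tend to zero so this is harmless), show $\|W_{\ge 1}(\underline w^{(n)})\|\le\varepsilon 2^{-n}\to 0$, and show the marginal coefficients converge. You correctly discard the weak-$*$ compactness idea, and you correctly locate the heart of the matter in codimension-$4$ contractivity. However, there is a genuine gap in the last step: from $\alpha_n:=\partial_r w_{0,0}^{(n)}(\vec p,z_n,0,\vec 0)\to\alpha$ and the polydisc bound you claim that ``the $\|\cdot\|^{\sharp}$ bound on the remainder'' forces $\|w_{0,0}^{(n)}(\vec p,z_n,\cdot,\cdot)\mathds 1_{H_f\le 1}-(\alpha_n H_f+\vec\beta_n\cdot\vec P_f)\|\to 0$. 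This does not follow. The only a priori bound on the deviation from linearity is the polydisc estimate $\|w_{0,0}^{(n)}(\vec p,z_n,r,\vec l)-w_{0,0}^{(n)}(\vec p,z_n,0,\vec 0)-(r-m^{-1}\vec p\cdot\vec l)\|^{\sharp}\le\gamma+\varepsilon r_n$, which is bounded by $\gamma+\varepsilon$ and does \emph{not} shrink as $n\to\infty$. Concretely, a sequence of kernels of the form $\alpha r+\vec\beta\cdot\vec l+\gamma f(r,\vec l)$, with $f$ fixed, $f(0,\vec 0)=0$, $\nabla f(0,\vec 0)=0$, satisfies both your Cauchy condition at the origin and the polydisc bound, yet does not converge to an affine operator.

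The paper closes this gap by exploiting that the rescaling $S_\rho$ inside the renormalization step pushes the evaluation point toward the origin. It introduces $T_n(r,\vec l)=w_{0,0}^{(n)}(\vec p,z_n,r,\vec l)-w_{0,0}^{(n)}(\vec p,z_n,0,\vec 0)$, $\Delta T_n(r,\vec l)=T_n(r,\vec l)-\rho^{-1}T_{n-1}(\rho r,\rho\vec l)$, proves $\|\partial_r\Delta T_n\|_\infty,\|\partial_{l_j}\Delta T_n\|_\infty\le\varepsilon 2^{-(n-1)}$ from the exact Wick-ordered expression \eqref{w_0,0^eff}, and then telescopes:
\begin{equation*}
T_n(\vec p,r,\vec l)-\rho^{m-n}T_m(\vec p,\rho^{n-m}r,\rho^{n-m}\vec l)=\sum_{i=m+1}^n\rho^{i-n}\Delta T_i(\vec p,\rho^{n-i}r,\rho^{n-i}\vec l).
\end{equation*}
Differentiating gives $|\partial_r T_n(r,\vec l)-\partial_r T_m(\rho^{n-m}r,\rho^{n-m}\vec l)|\le\varepsilon 2^{-(m-1)}$. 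The crucial extra step is that $\rho^{n-m}(r,\vec l)\to(0,\vec 0)$ as $n\to\infty$, so that continuity of $\partial_r T_m$ at $(0,\vec 0)$ — \emph{not} any shrinking polydisc bound — gives $\sup_{(r,\vec l)\in\mathcal B}|\partial_r T_n(r,\vec l)-\alpha(\vec p)|\to 0$. You need this argument; without it your proof does not establish the operator-norm convergence to $\alpha(\vec p)H_f+\vec\beta(\vec p)\cdot\vec P_f$. The reality claim you handle with more detail than the paper (which asserts it without elaboration); your symmetry/real-iterate argument is a reasonable way to justify it, but the most direct route, and the one implicit in the paper, is simply that the $\alpha_n(\vec p),\vec\beta_n(\vec p)$ are constructed from real iterates when $\vec p$ and $z_n$ are real, hence their limits are real.
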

\vspace{2mm}

\begin{proof}
Let $\vec{p} \in U[\vec{p}^*] $. We  show that the sequence of kernels  $(w_{0,0}^{(n)}(\vec{p},e_{n,\infty}(\vec{p},0),r,\vec{l}))_n$ converges uniformly on $\mathcal{B}$  to  $\alpha(\vec{p}) r +  \vec{\beta}(\vec{p}) \cdot \vec{l}$. We introduce the function
\begin{equation*}
T_{n}(\vec{p},r,\vec{l}):=w_{0,0}^{(n)}(\vec{p},e_{n,\infty}(\vec{p},0),r,\vec{l})- w_{0,0}^{(n)}(\vec{p},e_{n,\infty}(\vec{p},0),0,\vec{0}),
\end{equation*}
defined for any $(r,\vec{l}) \in \mathcal{B}$.
We show that the continuous  functions $(r,\vec{l}) \mapsto (\partial_r T_{n})(\vec{p},r,\vec{l}) \in \mathbb{C}$ and $(r,\vec{l}) \mapsto (\partial_{l_j} T_{n})(\vec{p},r,\vec{l})  \in \mathbb{C}$ converge to constants $\alpha(\vec{p})$ and $\beta_j(\vec{p})$ uniformly on $\mathcal{B}$, respectively.  We set 
\begin{equation*}
\Delta T_{n}(\vec{p},r,\vec{l}):=T_{n}(\vec{p},r,\vec{l})-\rho^{-1}T_{n-1}( \vec{p},\rho r, \rho \vec{l}).
\end{equation*}
For any $m,n \in \mathbb{N}$ with $n>m$,
\begin{equation}
\label{premeq}
T_{n} (\vec{p},r,\vec{l})- \rho^{m-n}T_{m}(\vec{p},\rho^{n-m}r, \rho^{n-m}\vec{l})=\sum_{i=m+1}^{n} \rho^{i-n} \Delta T_{i} (\vec{p},\rho^{n-i}r,\rho^{n-i}\vec{l}).
\end{equation}
From the exact formula \eqref{w_0,0^eff} for the kernels $w_{0,0}^{(m)}$, we deduce that
\begin{align*}
& \Delta T_{n}(\vec{p},r,\vec{l}) \\
&=\rho^{-1} \sum_{L=2}^{\infty} (-1)^{L-1} \underset{\underset{p_i+q_i \geq 1}{\underline{p} , \underline{q}}}{\sum}  \left( V^{(n)}_{\underline{0,p,0,q}} (\vec{p}, e_{n-1,\infty}(\vec{p},0),r,\vec{l}) - V^{(n)}_{\underline{0,p,0,q}} (\vec{p}, e_{n-1,\infty}(\vec{p},0),0,\vec{0}) \right).
\end{align*}
The proof of Lemma  \ref{contr} (see Appendix \ref{AppD}) shows that  $\Delta T_{n}(\vec{p},r,\vec{l})$ is differentiable with respect to the variables $r$ and $l_j$, for $j=1,\dots,3$, as the series of partial derivatives converge uniformly on $\mathcal{B}$. Furthermore,  $ \| \partial_r \Delta T_{n}(\vec{p}) \|_{\infty} \leq  \varepsilon/2^{n-1}$ and  $\| \partial_{l_j} \Delta T_{n}(\vec{p}) \|_{\infty} \leq \varepsilon/2^{n-1}$. Differentiating \eqref{premeq} with respect to $r$, we have that 
\begin{equation*}
 \vert  (\partial_r T_{n}) (\vec{p},r,\vec{l})- (\partial_r T_{m})(\vec{p},\rho^{n-m}r, \rho^{n-m}\vec{l}) \vert \leq  \frac{\varepsilon}{2^{m-1}},
\end{equation*}
 for all $(r,\vec{l}) \in \mathcal{B}$ and $n>m$. In particular, for $(r,\vec{l}) =(0,\vec{0})$,
\begin{equation*}
 \vert  (\partial_r T_{n}) (\vec{p}, 0,\vec{0})- (\partial_r T_{m})(\vec{p},0,  \vec{0}) \vert \leq  \frac{\varepsilon}{2^{m-1}}.
 \end{equation*}
 $( (\partial_r T_{n}) (\vec{p}, 0,\vec{0}))_n$ is a complex Cauchy sequence and converges to a complex number that we denote by $\alpha(\vec{p})$. The same result holds for the partial derivatives with respect to $l_j$, and we denote their limits by $\beta_j(\vec{p})$.  $\alpha(\vec{p})$ and $\beta_j(\vec{p}) \in \mathbb{R}$ if $\vec{p} \in \mathbb{R}^3 \cap U[\vec{p}^*]$. We show that the sequence of  functions $(T_{n})_n$ converges to $\alpha(\vec{p})r + \vec{\beta}(\vec{p}) \cdot \vec{l}$ uniformly on $\mathcal{B}$. Let $\eta>0$. There exists   $N (\eta) \in \mathbb{N}$ such that, for any $n \geq N (\eta)$, $\vert  (\partial_r T_{n}) (\vec{p}, 0,\vec{0}) -\alpha(\vec{p}) \vert < \eta$ and  $\frac{\varepsilon}{2^{n-1}} <\eta$. Then, for any $n> N(\eta)$, for any $(r,\vec{l}) \in \mathcal{B}$,
 \begin{align*}
 \vert &  (\partial_r T_{n}) (\vec{p},r,\vec{l})- \alpha(\vec{p})   \vert \leq \vert     (\partial_r T_{n}) (\vec{p},r,\vec{l})- (\partial_r T_{N(\eta)})(\vec{p},\rho^{n-N(\eta)}r, \rho^{n-N(\eta)}\vec{l})  \vert \\
 &+   \vert  (\partial_r T_{N(\eta)})(\vec{p},0, \vec{0})  -\alpha(\vec{p})  \vert+  \vert (\partial_r T_{N(\eta)})(\vec{p},\rho^{n-N(\eta)}r, \rho^{n-N(\eta)}\vec{l}) - (\partial_r T_{N(\eta)})(\vec{p},0, \vec{0})    \vert\\
& <2\eta+  \vert (\partial_r T_{N(\eta)})(\vec{p},\rho^{n-N(\eta)}r, \rho^{n-N(\eta)}\vec{l}) - (\partial_r T_{N(\eta)})(\vec{p},0, \vec{0})    \vert.
 \end{align*}
 The function $\partial_r T_{N(\eta)}$ is continuous  in $(0,\vec{0})$ and there exists $M(\eta)>N(\eta)$, such that , for any $n> M(\eta)$, for any $(r,\vec{l}) \in \mathcal{B}$,
 \begin{equation*}
  \vert  (\partial_r T_{n}) (\vec{p},r,\vec{l})- \alpha(\vec{p})   \vert <3 \eta.
 \end{equation*}
 We deduce that $(\partial_r T_{n})_n$ converges uniformly on $\mathcal{B}$ to the constant $\alpha(\vec{p})$. A similar result holds for the partial derivatives with respect to $l_j$. As $T_{n} (\vec{p},0,\vec{0})=0$,   $T_{n} (\vec{p},r,\vec{l})$ converges uniformly to $\alpha(\vec{p})r + \vec{\beta}(\vec{p}) \cdot \vec{l}$  on $\mathcal{B}$.  $w_{0,0}^{(n)}(\vec{p},e_{n,\infty}(\vec{p},0),0,\vec{0})= - \rho e_{n+1,\infty}(\vec{p},0)$ converges to zero, and therefore, $w_{0,0}^{(n)}(\vec{p},e_{n,\infty}(\vec{p},0),r,\vec{l})$ converges uniformly on $\mathcal{B}$ to $\alpha(\vec{p})r + \vec{\beta}(\vec{p}) \cdot \vec{l}$.
\end{proof}
\vspace{2mm}

\begin{remark}
   $\vert \alpha(\vec{p}) -1 \vert \leq  \gamma + \varepsilon \ll \mu$, and   $\vert \vec{\beta}(\vec{p}) + \vec{p}/m \vert   \leq  \gamma + \varepsilon \ll \mu$. Therefore, the ``effective'' momentum $-m \vec{\beta}(\vec{p})$ stays close to  $\vec{p}$ and  the renormalization map can be iterated indefinitely.\end{remark}

\noindent We set 
\begin{equation}
z_{\infty}(\vec{p}):= \rho_0 e_{0,\infty}(\vec{p},0).
\end{equation} 
\vspace{2mm}
The next lemma states that   $z_{\infty}(\vec{p})$ is an eigenvalue of $H(\vec{p})$.
\begin{lemma}
\label{ee}
  $z_{\infty}(\vec{p})$ is a non-degenerate eigenvalue of  $H(\vec{p})$. If  $\vec{  p} \in \mathbb{R}^3 \cap U[\vec{p}^*]$,  $z_{\infty}(\vec{p})$ lies  at the bottom of the spectrum of $H( \vec{p})$, i.e.
\begin{equation}
z_{\infty}(\vec{p})= \inf \sigma( H(\vec{p})).
\end{equation}
\end{lemma}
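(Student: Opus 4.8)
The strategy is to convert the isospectrality of the smooth Feshbach--Schur map (Theorem \ref{th1}), and hence of the renormalization map, into a statement about $H(\vec p)$, using the control of the renormalization flow obtained in Lemmas \ref{44} and \ref{fixp}. Set $\zeta_n := e_{n,\infty}(\vec p,0)$. By Lemma \ref{44}, $|\zeta_n|\le 2\varepsilon/(2-\rho)$, and, as established in the proof of Lemma \ref{fixp}, $w^{(n)}_{0,0}(\vec p,\zeta_n,0,\vec 0)=-\rho\,\zeta_{n+1}$, so $|w^{(n)}_{0,0}(\vec p,\zeta_n,0,\vec 0)|<\mu\rho/2$; thus $(\vec p,\zeta_n)\in\mathcal U^{(n)}$ and, by Lemma \ref{cbo}, $\big(H(\underline w^{(n)}(\vec p,\zeta_n)),W_{0,0}(\underline w^{(n)}(\vec p,\zeta_n))\big)$ is a Feshbach pair for $\chi_\rho(H_f)$, while the first decimation step (Lemmas \ref{first} and \ref{polyd}) applies at $z=\rho_0\zeta_0$. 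Since $\zeta_n=h^{(n)}(\vec p,\zeta_{n+1})$ in the notation of the proof of Lemma \ref{44}, Definition \ref{def_Rrho} together with Theorem \ref{th1}(2) shows that for every $n$ the operator $\Gamma_\rho\chi_\rho(H_f)$ is a linear isomorphism of $\ker H(\underline w^{(n)}(\vec p,\zeta_n))$ onto $\ker H(\underline w^{(n+1)}(\vec p,\zeta_{n+1}))$, with inverse $\phi\mapsto Q_{\chi_\rho(H_f)}\Gamma_\rho^*\phi$, and the first step gives $\ker\big(H(\vec p)-z_\infty(\vec p)\big)\cong\ker H(\underline w^{(0)}(\vec p,\zeta_0))$; hence $d(\vec p):=\dim\ker H(\underline w^{(n)}(\vec p,\zeta_n))$ is independent of $n$ and equals $\dim\ker\big(H(\vec p)-z_\infty(\vec p)\big)$. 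To see $d(\vec p)\ge1$, observe that $\|H(\underline w^{(n)}(\vec p,\zeta_n))\Omega\|\to0$, since $W_{0,0}(\underline w^{(n)}(\vec p,\zeta_n))\Omega=-\rho\,\zeta_{n+1}\Omega$ and $\|W_{\ge1}(\underline w^{(n)}(\vec p,\zeta_n))\Omega\|\le\|\underline w^{(n)}(\vec p,\zeta_n)\|^{\sharp}_{\xi,\ge1}\le\varepsilon\,2^{-n}$ by Lemma \ref{lm:Hbounded_injective}. Let $\psi_N$ be the vector obtained by applying to $\Omega$ the reconstruction maps $Q_{\chi_\rho(H_f)}\Gamma_\rho^*$ of levels $N,\dots,1$ and then the reconstruction map $Q_{\bm\chi}$ of the first step. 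Because $\chi_\rho(0)=1$ and $\overline{\chi}_\rho(0)=0$, one has $Q_{\chi_\rho(H_f)}\Gamma_\rho^*\Omega-\Omega=-\overline{\chi}_\rho(H_f)H_{\overline{\chi}}^{-1}\overline{\chi}_\rho(H_f)W_{\ge1}(\underline w^{(n)}(\vec p,\zeta_n))\Omega$, of norm $O(\varepsilon 2^{-n}(\rho\mu)^{-1})$, while each reconstruction map has norm $1+O(\varepsilon 2^{-n}(\rho\mu)^{-1})$; hence $(\psi_N)$ is Cauchy, with limit $\psi_\infty$. The same identities show that applying a reconstruction map does not change the overlap with the (rescaled) vacuum vector, so $\langle\downarrow\otimes\Omega,\psi_N\rangle=1$ for all $N$ and $\psi_\infty\ne0$; and iterating the relation $H Q_\chi\phi=z Q_\chi\phi$ when $F_\chi(H-z,T-z)\phi=0$ from the Remark after Theorem \ref{th1}, in its quantitative form (a consequence of the resolvent formula there), gives $\|(H(\vec p)-z_\infty(\vec p))\psi_N\|\le C\,\|H(\underline w^{(N)}(\vec p,\zeta_N))\Omega\|\to0$. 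Since $H(\vec p)-z_\infty(\vec p)$ is closed, $(H(\vec p)-z_\infty(\vec p))\psi_\infty=0$, so $z_\infty(\vec p)$ is an eigenvalue.

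The heart of the matter is that $d(\vec p)\le1$. Suppose $d(\vec p)\ge2$ and choose linearly independent $\phi,\phi'\in\ker H(\underline w^{(0)}(\vec p,\zeta_0))$; then $\phi'':=\langle\Omega,\phi'\rangle\phi-\langle\Omega,\phi\rangle\phi'$ is a nonzero element of this kernel with $\langle\Omega,\phi''\rangle=0$. Pushing $\phi''$ down $k$ levels by the isomorphisms $\Gamma_\rho\chi_\rho(H_f)$ and estimating through their inverses, whose norms multiply to $C_1:=\prod_{j\ge0}\big(1+O(\varepsilon 2^{-j}(\rho\mu)^{-1})\big)<\infty$, yields $\phi''_k\in\ker H(\underline w^{(k)}(\vec p,\zeta_k))$ with $\|\phi''_k\|\ge C_1^{-1}\|\phi''\|>0$. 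On the other hand $(\Gamma_\rho\chi_\rho(H_f))^k=\Gamma_\rho^k\,g_k(H_f)$ with $g_k$ supported in $[0,\rho^k]$ and $\|g_k\|_\infty\le1$, so $\|\phi''_k\|\le\|\mathds 1_{H_f\le\rho^k}\phi''\|$; since $H_f$ has $\{0\}$ as an eigenvalue with eigenspace $\mathbb C\Omega$ and $\phi''\perp\Omega$, $\|\mathds 1_{H_f\le\rho^k}\phi''\|\to0$ as $k\to\infty$, contradicting the lower bound. Hence $d(\vec p)=1$ and $z_\infty(\vec p)$ is a non-degenerate eigenvalue of $H(\vec p)$. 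This contraction argument is the step I expect to be the main obstacle: the limiting fixed-point operator $\alpha(\vec p)H_f+\vec\beta(\vec p)\cdot\vec P_f$ of Lemma \ref{fixp} has $0$ as a non-degenerate eigenvalue embedded in its essential spectrum, so neither a compactness argument nor semicontinuity of $\dim\ker$ under the norm limit of Lemma \ref{fixp} is available, and one must use the precise interplay between the bounded expansion of the reconstruction maps $Q_{\chi_\rho(H_f)}$ and the geometric collapse of $\mathds 1_{H_f\le\rho^k}$ onto the vacuum projection.

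Finally, for $\vec p\in\mathbb R^3\cap U[\vec p^*]$, $H(\vec p)$ is self-adjoint and the whole construction stays real, so $\zeta_n\in\mathbb R$ and $z_\infty(\vec p)\in\mathbb R$; being an eigenvalue, $\inf\sigma(H(\vec p))\le z_\infty(\vec p)$. A standard relative bound on $H_I$ together with $H_0(\vec p,0)\ge0$ (valid since $|\vec p|<m$) gives $\inf\sigma(H(\vec p))\ge-C\lambda_0$, while $\langle\downarrow\otimes\Omega,H(\vec p)\,\downarrow\otimes\Omega\rangle=0$ gives $\inf\sigma(H(\vec p))\le0$, so for $\lambda_0$ small enough $\inf\sigma(H(\vec p))$ lies in the small disc around $0$ to which the renormalization analysis applies. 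For any real $z<z_\infty(\vec p)$ in that disc I would run the recursion $\zeta^{(0)}=z/\rho_0$, $\zeta^{(m+1)}=-\rho^{-1}w^{(m)}_{0,0}(\vec p,\zeta^{(m)},0,\vec 0)$ as long as $|w^{(m)}_{0,0}(\vec p,\zeta^{(m)},0,\vec 0)|<\mu\rho/2$: as in the proof of Lemma \ref{44}, $z\mapsto h^{(m)}(\vec p,z)$ is increasing with $|\partial_z h^{(m)}|<\rho/(1-C)$, so the real deviation $\zeta^{(m)}-e_{m,\infty}(\vec p)$ keeps the sign of $z-z_\infty(\vec p)<0$ and its modulus grows geometrically, forcing the recursion to exit at some finite $n$ with $w^{(n)}_{0,0}(\vec p,\zeta^{(n)},0,\vec 0)>\mu\rho/2-O(\varepsilon)>0$. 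Writing $w^{(n)}_{0,0}(\vec p,\zeta^{(n)},r,\vec l)=w^{(n)}_{0,0}(\vec p,\zeta^{(n)},0,\vec 0)+t(r,\vec l)$ with $t(r,\vec l)\ge(\mu-\gamma-\varepsilon)r\ge0$ in the real case, $W_{0,0}(\underline w^{(n)}(\vec p,\zeta^{(n)}))$ is bounded invertible on $\mathcal H_{\mathrm{red}}$ with inverse bounded by $2(\mu\rho)^{-1}\ll\varepsilon^{-1}$; a Neumann series using $\|W_{\ge1}(\underline w^{(n)}(\vec p,\zeta^{(n)}))\|\le\varepsilon$ then shows $H(\underline w^{(n)}(\vec p,\zeta^{(n)}))$ is bounded invertible, and Theorem \ref{th1}(1) applied along the recursion shows $H(\vec p)-z$ is bounded invertible. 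Therefore $\sigma(H(\vec p))$ contains no point below $z_\infty(\vec p)$ in the disc, and since $\inf\sigma(H(\vec p))$ belongs to it, $\inf\sigma(H(\vec p))=z_\infty(\vec p)$.
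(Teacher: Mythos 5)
Your proposal is correct as far as I can verify, and it is substantially more detailed than the paper's own treatment: the paper does not actually argue Lemma~\ref{ee} in text, but delegates the three claims to external references (\cite[Theorem~3.12]{BaChFrSi03_01} for the existence of the eigenvector via the $Q$-reconstruction, \cite[Theorem~2.1]{HaHe12_01} for non-degeneracy, and \cite[Theorem~2.1~(i)]{HaHe11_01} for $z_\infty(\vec p)=\inf\sigma(H(\vec p))$). Your first paragraph essentially re-derives the content of Lemma~\ref{447}, which the paper states separately and also defers to \cite{BaChFrSi03_01}; one could simply invoke it. The decay $|\zeta_{n}|\le \tfrac{2\varepsilon 2^{-n}}{2-\rho}$ (which you need so that $\|H(\underline w^{(N)}(\vec p,\zeta_N))\Omega\|\to 0$, not merely stays $O(\varepsilon)$) is not stated explicitly but follows from the telescoping estimate $|e_{m,\infty}-\rho\,e_{m+1,\infty}|\le\varepsilon 2^{-m}$ in the proof of Lemma~\ref{44}; it is worth spelling out, since $\|W_{0,0}^{(N)}\Omega\|=\rho\,|\zeta_{N+1}|$ and this is the term that actually has to vanish.

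The non-degeneracy argument is the part genuinely worth recording. The paper offers only a citation; you replace it by a short self-contained argument combining the level-to-level kernel isomorphisms $\Gamma_\rho\chi_\rho(H_f)$, whose inverses $Q_n\Gamma_\rho^*$ have norms multiplying to a finite $C_1$, with the operator identity $(\Gamma_\rho\chi_\rho(H_f))^k=\Gamma_\rho^k\,\chi_{\rho^k}(H_f)$ (which follows from $\Gamma_\rho\chi_\rho(H_f)=\chi_1(H_f)\Gamma_\rho$ and the fact that $\chi_{\rho^{k+1}}\chi_{\rho^k}=\chi_{\rho^{k+1}}$ since $\rho<3/4$). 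This reduces injectivity to the strong convergence $\mathds 1_{H_f\le\rho^k}\to P_\Omega$, which kills any putative second kernel vector orthogonal to $\Omega$, contradicting the lower bound $\|\phi_k''\|\ge C_1^{-1}\|\phi''\|$. As you yourself note, the point of this route is that neither norm-semicontinuity of $\dim\ker$ nor compactness is available, since $0$ is embedded in the essential spectrum of the limit $\alpha(\vec p)H_f+\vec\beta(\vec p)\cdot\vec P_f$; this is exactly the sort of argument one would expect the cited reference to contain, and it is valuable to have it spelled out. For the bottom-of-spectrum claim your monotone escape argument for the real recursion is correct, but two points should be tightened. First, the derivative bound $|\partial_z w^{(m)}_{0,0}(\vec p,z,0,\vec 0)+1|<C$ is established in Lemma~\ref{lm:bihol} only for $z\in\mathcal U[w^{(m)}_{0,0}]$, whereas you use it along the whole real recursion; you should observe that the Cauchy estimate there (together with the uniform bound $|w^{(m)}_{0,0}+z|\le\varepsilon 2^{-m}$ on $D_{\mu/2}$) in fact yields it on $D_{\mu/4}$, and that once $\zeta^{(m)}$ leaves $D_{\mu/4}$ the recursion has already exited with $w^{(m)}_{0,0}(\vec p,\zeta^{(m)},0,\vec 0)>\mu\rho/2$. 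Second, as stated your argument proves that the geometric multiplicity of $z_\infty(\vec p)$ is one; for real $\vec p$ this coincides with algebraic simplicity by self-adjointness, which is what enters the definition of $\Pi(\vec p)$ in Theorem~\ref{main1}.
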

\vspace{2mm}

The vector $ \Omega $ is  non-zero  eigenvector of $H_{\text{Fix}}$ with eigenvalue $0$.  The Feshbach-Schur theorem \ref{th1} asserts  that we can  find an eigenvector  of $H(\vec{p})$ corresponding to    $z_{\infty}(\vec{p})$ with the help of the auxiliary operators $Q_{\chi}$  introduced in \eqref{QQ}. We consider the  sequence of vectors
\begin{equation}
 \Psi_{n}(\vec{p}):=Q_{-1}(\vec{p}) \Gamma_{ \rho_0}^{*}  Q_{0}(\vec{p}) \Gamma_{ \rho}^{*}\cdots\Gamma_{ \rho}^{*} Q_{n}(\vec{p}) (    \downarrow   \otimes    \Omega) ,
 \end{equation}
where
\begin{equation}
\label{Qm}
Q_{n}(\vec{p}):=Q_{\chi_{\rho}(H_f)} \left( H (\underline{w}^{(n)}(\vec{p},e_{(n,\infty)}(\vec{p}))), W_{0,0}(w_{0,0}^{(n)}(\vec{p},e_{(n,\infty)}(\vec{p}))) \right),
\end{equation}
and
\begin{equation}
\label{Q-1}
Q_{(-1)}(\vec{p}):=Q_{\bm{\chi}}  \left( H (\vec{p},z_{\infty}(\vec{p})),  H (\vec{p},z_{\infty}(\vec{p})) - \lambda_0 H_I\right).
\end{equation}
\vspace{2mm}

\begin{lemma}
\label{447}
Under the conditions of Lemma \ref{44}, the limit
\label{psis}
\begin{equation}
\Psi_{ \infty}(\vec{p}):= \underset{n \rightarrow \infty}{\lim} \Psi_{n}(\vec{p}) 
\end{equation}
exists, is non zero, and belongs to  $\mathrm{ker}[ H(\vec{p},z_{\infty}(\vec{p}))]$. The convergence of  $\Psi_{n}(\vec{p}) $ to $\Psi_{ \infty}(\vec{p})$ is uniform in $\vec{p} \in U[\vec{p}^*]$. \end{lemma}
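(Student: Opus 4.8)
The plan is to prove the three assertions -- existence of the limit, non-vanishing, and membership in $\ker[H(\vec{p},z_\infty(\vec{p}))]$ -- in that order, using a telescoping estimate for the first two and the isospectral reconstruction of Theorem~\ref{th1} for the third, with every bound uniform in $\vec{p}\in U[\vec{p}^*]$. The starting remark is that $\chi_\rho(H_f)\Omega=\chi(0)\Omega=\Omega$, $\bm{\chi}(\downarrow\otimes\Omega)=\downarrow\otimes\Omega$ and $\Gamma_\rho^*\Omega=\Omega$, so that if in the definition of $\Psi_n(\vec{p})$ each operator $Q_j(\vec{p})$ were replaced by the corresponding cutoff operator ($\chi_\rho(H_f)$ for $j\ge0$, $\bm{\chi}$ for $j=-1$), the sequence $\Psi_n(\vec{p})$ would be constant and equal to $\downarrow\otimes\Omega$. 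To quantify the deviation I would use the expression \eqref{QQ} of $Q_\chi$, the bound $\|W_{\ge1}(\underline{w}^{(n)}(\vec{p},z))\|\le\varepsilon\,2^{-n}$ (which follows from $H(\underline{w}^{(n)})\in\mathcal{B}^{\text{an}}(\gamma+\varepsilon r_n,\varepsilon 2^{-n},\varepsilon 2^{-n})$ and Lemma~\ref{opboundl}), and the bound $\mathcal{O}((\mu\rho)^{-1})$ on the norm of the inverse of $H_{\overline{\chi}_\rho(H_f)}$ on $\mathrm{Ran}\,\overline{\chi}_\rho(H_f)$ obtained exactly as in the proof of Lemma~\ref{cbo}; this gives $\|Q_n(\vec{p})-\chi_\rho(H_f)\|\le C_n:=C\varepsilon(\mu\rho)^{-1}2^{-n}$ for $n\ge0$, and $\|Q_{-1}(\vec{p})-\bm{\chi}\|\le C_{-1}=\mathcal{O}(\lambda_0(\mu\rho_0^{1/2})^{-1})$ using \eqref{2.22} and the Neumann expansion of the first step. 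The telescoping identity
\[
\Psi_{n+1}(\vec{p})-\Psi_n(\vec{p})=Q_{-1}(\vec{p})\Gamma_{\rho_0}^*Q_0(\vec{p})\Gamma_\rho^*\cdots\Gamma_\rho^*Q_n(\vec{p})\,\Gamma_\rho^*\big(Q_{n+1}(\vec{p})-\chi_\rho(H_f)\big)(\downarrow\otimes\Omega),
\]
together with $\|Q_j(\vec{p})\|\le1+C_j$, then yields $\|\Psi_{n+1}(\vec{p})-\Psi_n(\vec{p})\|\le e^{\sum_{j\ge-1}C_j}C_{n+1}$, which is summable; hence $\Psi_n(\vec{p})\to\Psi_\infty(\vec{p})$ uniformly on $U[\vec{p}^*]$. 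Summing all telescoping differences (including that between $\Psi_0$ and $\downarrow\otimes\Omega$) gives $\|\Psi_\infty(\vec{p})-\downarrow\otimes\Omega\|=\mathcal{O}\big(\sum_{j\ge-1}C_j\big)$, which is $<1=\|\downarrow\otimes\Omega\|$ under the hypotheses $\lambda_0<\lambda_c$, $\varepsilon\ll\rho^2\mu^2$; therefore $\Psi_\infty(\vec{p})\neq0$.

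\textbf{Membership in the kernel.} Here I would set $\mathcal{K}_n:=H(\underline{w}^{(n)}(\vec{p},e_{n,\infty}(\vec{p})))$ and $\mathcal{T}_n:=W_{0,0}(\underline{w}^{(n)}(\vec{p},e_{n,\infty}(\vec{p})))$ and exploit the tower of isospectral relations linking these to $H(\vec{p},z_\infty(\vec{p}))$. The $z$-independence of $e_{m,\infty}$ and the definition of $E_{\rho,m}^{-1}$ give the fixed-point relation $E_{\rho,m}(\vec{p},e_{m,\infty}(\vec{p}))=(\vec{p},e_{m+1,\infty}(\vec{p}))$, i.e. $w_{0,0}^{(m)}(\vec{p},e_{m,\infty}(\vec{p}),0,\vec{0})=-\rho\,e_{m+1,\infty}(\vec{p})$; substituting this into $H(\underline{w}^{(m+1)})=\mathcal{R}_\rho(H(\underline{w}^{(m)}))$ and unwinding Definition~\ref{def_Rrho} yields $\mathcal{K}_{m+1}=S_\rho F_{\chi_\rho(H_f)}(\mathcal{K}_m,\mathcal{T}_m)$ for $m\ge0$, while Lemma~\ref{polyd} evaluated at $z=e_{0,\infty}(\vec{p})$ gives the analogue for the first step, $P_\downarrow\otimes\mathcal{K}_0=\mathcal{S}_{\rho_0}F_{\bm{\chi}}(H(\vec{p},z_\infty(\vec{p})),H_0(\vec{p},z_\infty(\vec{p})))$ on $P_\downarrow\mathbb{C}^2\otimes\mathds{1}_{H_f\le\rho_0}\mathcal{H}_f$. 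I would then invoke the reconstruction identity behind Theorem~\ref{th1} (the one made explicit in Subsection~\ref{Fesch1}, valid in the smooth setting as well): $H\,Q_\chi\phi=F_\chi(H,T)\phi$ for every $\phi$ in the relevant subspace $V\supset\mathrm{Ran}\,\chi$. Applying it successively -- checking at each stage that the argument lies in the appropriate energy-restricted subspace (from $\chi_\rho(H_f)\Omega=\Omega$, $\bm{\chi}(\downarrow\otimes\Omega)=\downarrow\otimes\Omega$, the fact that each $Q_j$ maps $\mathds{1}_{H_f\le\rho}\mathcal{H}_f$ into $\mathds{1}_{H_f\le1}\mathcal{H}_f$, and that $\Gamma_\rho^*$ maps $\mathds{1}_{H_f\le1}\mathcal{H}_f$ into $\mathds{1}_{H_f\le\rho}\mathcal{H}_f$), and using $S_\rho^{-1}=\rho\,\Gamma_\rho^*(\cdot)\Gamma_\rho$ and the analogous identity for $\mathcal{S}_{\rho_0}$ -- one unwinds the whole tower to obtain (up to the identification $P_\downarrow\mathbb{C}^2\otimes\mathcal{H}_f\simeq\mathcal{H}_f$)
\[
H(\vec{p},z_\infty(\vec{p}))\,\Psi_N(\vec{p})=\rho_0\,\rho^{N}\,\Gamma_{\rho_0}^*(\Gamma_\rho^*)^{N}\,\mathcal{K}_N\,Q_N(\vec{p})(\downarrow\otimes\Omega).
\]

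\textbf{Passage to the limit.} It remains to bound the right-hand side. Writing $Q_N(\vec{p})(\downarrow\otimes\Omega)=\downarrow\otimes\Omega+(Q_N(\vec{p})-\chi_\rho(H_f))(\downarrow\otimes\Omega)$, using $\|\mathcal{K}_N\|\le\|\underline{w}^{(N)}(\vec{p},e_{N,\infty}(\vec{p}))\|^{\sharp}_\xi=\mathcal{O}(1)$ (Lemma~\ref{lm:Hbounded_injective}), the estimate $\|\mathcal{K}_N\Omega\|\le|w_{0,0}^{(N)}(\vec{p},e_{N,\infty}(\vec{p}),0,\vec{0})|+\|W_{\ge1}(\underline{w}^{(N)})\|=\rho|e_{N+1,\infty}(\vec{p})|+\varepsilon 2^{-N}=\mathcal{O}(\varepsilon 2^{-N})$ (with $|e_{N+1,\infty}(\vec{p})|\le\varepsilon 2^{-(N+1)}/(1-\rho/2)$ read off from the proof of Lemma~\ref{44}), and the bound $C_N$ on $\|(Q_N(\vec{p})-\chi_\rho(H_f))(\downarrow\otimes\Omega)\|$ from the first step, one gets $\|\mathcal{K}_N\,Q_N(\vec{p})(\downarrow\otimes\Omega)\|=\mathcal{O}(\varepsilon 2^{-N}(\mu\rho)^{-1})$ and hence $\|H(\vec{p},z_\infty(\vec{p}))\,\Psi_N(\vec{p})\|=\mathcal{O}(\rho_0\rho^{N-1}\varepsilon\mu^{-1}2^{-N})\to0$ as $N\to\infty$, uniformly in $\vec{p}$. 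Since each $\Psi_N(\vec{p})$ lies in $D=D(H_f+\vec{P}_f^2)$, since $\Psi_N(\vec{p})\to\Psi_\infty(\vec{p})$ in $\mathcal{H}_{\vec{p}}$, and since $H(\vec{p},z_\infty(\vec{p}))$ is closed on $D$, closedness forces $\Psi_\infty(\vec{p})\in D$ and $H(\vec{p},z_\infty(\vec{p}))\Psi_\infty(\vec{p})=0$, i.e. $\Psi_\infty(\vec{p})\in\ker[H(\vec{p},z_\infty(\vec{p}))]$.

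\textbf{Main obstacle.} The hard part will be the second step: threading the reconstruction identity $HQ_\chi=F_\chi(H,T)$ cleanly through the entire tower -- in particular reconciling the first decimation step, with its extra $\mathbb{C}^2$ factor and the cutoff $\bm{\chi}$, with the uniform subsequent steps using $\chi_\rho(H_f)$ -- and making sure that at every stage the vector acted upon stays inside the subspace on which that identity holds, while keeping exact track of the accumulated rescaling factors $\rho_0$, $\rho^{N}$ and of the unitaries $\Gamma_{\rho_0}$, $\Gamma_\rho$. Everything else is a routine combination of Lemmas~\ref{opboundl}, \ref{lm:Hbounded_injective}, \ref{cbo}, \ref{contr} and~\ref{44}.
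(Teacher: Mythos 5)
Your proposal is, in effect, a careful reconstruction of the proof the paper delegates to Bach--Chen--Fr\"ohlich--Sigal: the text states only that ``the proof of Lemma \ref{psis} is carried out in detail in \cite{BaChFrSi03_01}''. The telescoping estimate with $\|Q_n-\chi_\rho(H_f)\|=\mathcal{O}(\varepsilon(\mu\rho)^{-1}2^{-n})$, the fixed-point relation $w_{0,0}^{(m)}(\vec{p},e_{m,\infty}(\vec{p}),0,\vec{0})=-\rho\,e_{m+1,\infty}(\vec{p})$, the unwinding of the tower of Feshbach--Schur transforms and scalings, and the closing argument via closedness of $H(\vec{p},z_{\infty}(\vec{p}))$ are exactly the BCFS mechanism; the quantitative bounds you draw from Lemmas \ref{opboundl}, \ref{lm:Hbounded_injective}, \ref{cbo}, \ref{contr} and \ref{44} are correct and correctly assembled, and the uniformity in $\vec{p}\in U[\vec{p}^*]$ comes out as you say.

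One point needs care. The reconstruction identity you invoke, $H\,Q_\chi\phi=F_\chi(H,T)\phi$, is literally true in the orthogonal-projection setting of Subsection \ref{Fesch1} (because $PP^{\perp}=0$), but in the smooth case it acquires an extra cutoff on the right: using $T\chi=\chi T$, the relation $T H_{\overline{\chi}}^{-1}=1-\overline{\chi}W\overline{\chi}H_{\overline{\chi}}^{-1}$ on $\mathrm{Ran}\,\overline{\chi}$, and $\chi^2+\overline{\chi}^2=\mathds{1}$, one finds
\begin{equation*}
H\,Q_\chi \;=\; \chi\, F_\chi(H,T).
\end{equation*}
Consequently your displayed identity for $H(\vec{p},z_\infty(\vec{p}))\,\Psi_N(\vec{p})$ is not quite right: the prefactor $\rho_0\rho^{N}\Gamma_{\rho_0}^{*}(\Gamma_\rho^{*})^{N}$ must also carry, interleaved with the dilations, the cutoff operators $\bm{\chi}$ and $\chi_\rho(H_f)$ accumulated at each decimation step (which, after commuting through $\Gamma_\rho^{*}$, become rescaled cutoffs). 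This does not damage the conclusion --- all of these are contractions, so the estimate $\|H(\vec{p},z_\infty(\vec{p}))\,\Psi_N(\vec{p})\|=\mathcal{O}(\rho_0\rho^{N-1}\varepsilon\mu^{-1}2^{-N})$ survives verbatim, as does the closedness argument --- but the unwinding as written is not an exact operator identity, and getting the accumulated cutoffs and the discrepancy between the first decimation ($\bm{\chi}$, $\mathcal{S}_{\rho_0}$, extra $\mathbb{C}^2$ factor) and the subsequent uniform steps precisely right is indeed where the remaining bookkeeping lies, as you yourself anticipate.
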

\vspace{2mm}

 The detailed proof  of the assertion -- ``$\rho_{0} e_{0,\infty}(\vec{p})$ is an eigenvalue of $H(\vec{p})$'' -- in Lemma \ref{ee} is similar to  \cite[Theorem 3.12]{BaChFrSi03_01}, and we refer the reader to this paper for details. The proof of the non-degeneracy  of $z_{\infty}(\vec{p})$ is a direct consequence of  \cite[Theorem 2.1]{HaHe12_01}. Furthermore, the fact that $z_{\infty}(\vec{p})$ lies at the bottom of the spectrum if $\vec{p} \in U[\vec{p}^*] \cap \mathbb{R}^3$, is, in all points, similar to the proof of \cite[Theorem 2.1 (i)]{HaHe11_01}. The proof of Lemma \ref{psis} is carried out  in detail in  \cite{ BaChFrSi03_01}.

\section{Proof of  Theorem \ref{main1} }
\label{fin}

\subsection{Proof of the main theorem}
 The proof of Theorem \ref{main1}  follows  by  gathering   the results of Lemmas \ref{anainit}, \ref{preserved},  \ref{contr},   \ref{ee} and \ref{447}.  Lemmas \ref{anainit} and  \ref{contr} show that the map $(\vec{  p},z) \mapsto H(\underline{w}^{(n)}(\vec{p},z) )\in \mathcal{B}(\mathcal{H}_{\text{red}})$ is analytic on $U[\vec{p}^*] \times D_{\mu/2}$ for all $n \in \mathbb{N} \cup \lbrace 0 \rbrace$. We deduce from  Lemmas \ref{compo} and  \ref{lm:bihol}  that  $E^{-1}_{\rho,n} (\vec{ p},z)$ is analytic on $U[\vec{p}^*] \times D_{\mu/2}$, for any $n \geq 0$. As a composition of analytic maps, $e_{(0,n)}(\cdot,0)$ defined in \eqref{emn1} is analytic on $U[\vec{p}^*]$ for any $n \geq 0$. The uniform convergence of the sequence  of functions  $(e_{(0,n)}(\cdot,0))_n$ to $e_{(0,\infty)}$ on  $U[\vec{p}^*]$ shows that  $e_{(0,\infty)}$, and therefore $z_{\infty}$, are analytic on $U[\vec{p}^*]$. 

The analyticity of $\vec{  p} \mapsto \Psi_{ \infty}(\vec{  p})$ is proven in a similar way. To shorten   notations, we set $H^{(n)}(\vec{p}):=H (\underline{w}^{(n)}(\vec{p},e_{(n,\infty)}(\vec{p})))$ and $W_{\geq 1}^{(n)}(\vec{p}):=W_{\geq 1}(\underline{w}^{(n)}(\vec{p},e_{(n,\infty)}(\vec{p})))$,  for any $n\geq 0$.  Lemmas  \ref{compo}  and \ref{contr} imply that $\vec{p} \mapsto H^{(n)}(\vec{p}) \in \mathcal{B}(\mathcal{H}_{\text{red}})$ and   $\vec{p} \mapsto W_{\geq 1}^{(n)}(\vec{p}) \in \mathcal{B}(\mathcal{H}_{\text{red}})$ are analytic on $U[\vec{p}^*]$.  From  \eqref{Qm} and the definition of $Q_{\chi}$ in \eqref{QQ},  we deduce that 
\begin{equation*}
  Q_{n}(\vec{p})= \chi_{\rho}(H_f) - \overline{\chi}_{\rho} (H_f)  [ H^{(n)}(\vec{p})]^{-1}_{\text{Ran}(\overline{\chi }(H_f))} \overline{\chi}_{\rho} (H_f)   W_{\geq 1}^{(n)}(\vec{p})   \chi_{\rho}(H_f) \in \mathcal{B}(\mathcal{H}_{\text{red}})
\end{equation*}
 is analytic on $U[\vec{p}^*]$,  for any $n \geq 0$,  being a product of bounded analytic operators. Furthermore,
   \begin{equation*}
Q_{-1}(\vec{p})= \bm{\chi} - \lambda_0 \overline{\bm{\chi}}  [ H_{\overline{\bm{\chi}}}  (\vec{p},z_{\infty}(\vec{p}))]_{\text{Ran}(\overline{\bm{\chi}})}^{-1}   \overline{\bm{\chi}}    H_I \bm{\chi}  \in \mathcal{B}(\mathcal{H})
\end{equation*}
is  analytic in $\vec{p}$. Indeed, the Neumann expansion of  $\overline{\bm{\chi}} [ H_{\overline{\bm{\chi}}}  (\vec{p},z_{\infty}(\vec{p}))]_{\text{Ran}(\overline{\bm{\chi}})}^{-1}  (H_{f}+ \rho_0)^{1/2} \overline{\bm{\chi}}$ converges uniformly on $U[\vec{p}^*]$ (see the proof of Lemma \ref{first}), and  the  bounded operator $[(H_f+\rho_0)   (H_{0}  (\vec{p},z_{\infty}(\vec{p})))]_{\text{Ran}(\overline{\bm{\chi}})}^{-1} $ is analytic on $U[\vec{p}^*]$; See  the proof of Lemma \ref{anainit}.  Therefore, $\Psi_{n}(\vec{ p} )$ is analytic on $U[\vec{p}^*]$, for any $n \geq 0$. The uniform convergence of $(\Psi_{n})_{n}$ to $\Psi_{ \infty}$ on $U[\vec{p}^*]$ completes the proof.

\begin{appendix}

\section{Lemmas about holomorphic functions}
\label{Hoo}
In this appendix, we gather some useful theorems for  Banach space-valued and complex-valued analytic functions of several  variables. The reader is referred to \cite{Vlad}, \cite{Ru}, \cite{Kato} and \cite{BarlDrag} for more detailed expositions. Let $U$ be a connected open set in $\mathbb{C}^n$.
\begin{definition} [Analytic function]
Let $n \geq 1$ and let $X$ be a Banach space. A function $f: U  \subset \mathbb{C}^{n} \rightarrow X$ is  said to be analytic on $U$ (or strongly analytic) if, for any $u=(u_1,\dots ,u_n) \in U$, there exists a neighborhood $\mathcal{V}(u)\subset U$ of $u$,  such that 
\begin{equation}
f(z)= \sum_{\alpha_1,\dots,\alpha_n \geq 0} x_{\alpha_1\dots \alpha_n} (z_1-u_1)^{\alpha_1} \cdots(z_n-u_n)^{\alpha_n} ,
\end{equation}
for any $z=(z_1,\dots,z_n) \in \mathcal{V}(u)$, and where  $ x_{\alpha_1\dots\alpha_n} \in X$.
\end{definition}
It is possible to show that strong and weak analyticity are equivalent for Banach space-valued holomorphic functions (see e.g. \cite{Ru} for functions of one variable, and \cite{BarlDrag} for a generalization to several complex variables).
\begin{theorem}
\label{hahn}
Let  X  be a Banach space, and $X'$ its dual. Let $f:U \subset \mathbb{C}^{n} \to  X$ be a function. Then the following assertions are equivalent:
\begin{enumerate}[(i)]
\item $f$ is analytic on $U$;
\item For any $\Phi \in X'$, $\Phi \circ f$ is  analytic on $U$.
\end{enumerate}
\end{theorem}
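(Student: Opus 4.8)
The implication $(i)\Rightarrow(ii)$ is immediate: if $f(z)=\sum_{\alpha}x_{\alpha}(z-u)^{\alpha}$ converges in $X$ near $u$, then, $\Phi$ being linear and continuous, $\Phi(f(z))=\sum_{\alpha}\Phi(x_{\alpha})(z-u)^{\alpha}$, so $\Phi\circ f$ is analytic. The whole content is therefore $(ii)\Rightarrow(i)$. Since analyticity is a local property, the plan is to prove strong analyticity of $f$ near an arbitrary point $u\in U$. The first step is to show that $f$ is locally bounded: fix a closed polydisc $P=\overline{D}(u_1,\rho_1)\times\cdots\times\overline{D}(u_n,\rho_n)\subset U$; for each $\Phi\in X'$ the scalar function $\Phi\circ f$ is analytic, hence continuous, on a neighbourhood of $P$, so $\sup_{z\in P}|\Phi(f(z))|<\infty$. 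Thus $\{f(z):z\in P\}$ is weakly bounded in $X$, and the uniform boundedness principle (applied to the canonically embedded family $z\mapsto f(z)\in X\subset X''$) yields $M:=\sup_{z\in P}\|f(z)\|<\infty$.

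Next I would run the classical several-variable Cauchy argument at the level of functionals. Let $T:=\{w\in\mathbb{C}^n:|w_j-u_j|=\rho_j,\ j=1,\dots,n\}$ be the distinguished boundary of $P$. For every $\Phi\in X'$ and every $z$ in the open polydisc, the scalar Cauchy integral formula gives
\begin{equation*}
\Phi(f(z))=\frac{1}{(2\pi i)^{n}}\int_{T}\frac{\Phi(f(w))}{(w_1-z_1)\cdots(w_n-z_n)}\,dw_1\cdots dw_n .
\end{equation*}
Expanding the kernel in the multiple geometric series $\prod_j(w_j-z_j)^{-1}=\sum_{\alpha\in\mathbb{N}^{n}}(z-u)^{\alpha}(w-u)^{-\alpha-\mathbf{1}}$, which converges uniformly for $w\in T$ and $z$ in a slightly smaller polydisc, and interchanging sum and integral, one gets $\Phi(f(z))=\sum_{\alpha}c_{\alpha}(\Phi)(z-u)^{\alpha}$ with
\begin{equation*}
c_{\alpha}(\Phi)=\frac{1}{(2\pi i)^{n}}\int_{T}\frac{\Phi(f(w))}{(w-u)^{\alpha+\mathbf{1}}}\,dw ,\qquad |c_{\alpha}(\Phi)|\le \frac{M}{\rho^{\alpha}}\,\|\Phi\| ,
\end{equation*}
where $\rho^{\alpha}=\prod_j\rho_j^{\alpha_j}$. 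Hence $\Phi\mapsto c_{\alpha}(\Phi)$ defines an element $x_{\alpha}$ of the bidual $X''$ with $\|x_{\alpha}\|\le M\rho^{-\alpha}$.

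The one genuinely non-routine point is that $x_{\alpha}$ in fact lies in (the canonical image of) $X$, not merely in $X''$. This can be secured either by the classical one-variable theorem of Dunford, applied coordinate by coordinate — freezing all but one variable, the same Cauchy-kernel estimate shows the difference quotients of $f$ form norm-Cauchy nets, so $f$ is separately, and then jointly, strongly differentiable, which realizes each $x_{\alpha}$ as a norm limit of $X$-valued Riemann-type sums — or, equivalently, by observing that $f$ is weakly continuous on the compact set $T$ and that the Pettis integral of such a function takes values in $X$. Granting this, $\Phi(x_{\alpha})=c_{\alpha}(\Phi)$ for all $\Phi$, and the series $\sum_{\alpha\in\mathbb{N}^{n}}x_{\alpha}(z-u)^{\alpha}$ converges absolutely in $X$ for $|z_j-u_j|<\rho_j$, since $\sum_{\alpha}\|x_{\alpha}\|\,|z-u|^{\alpha}\le\sum_{\alpha}M\prod_j(|z_j-u_j|/\rho_j)^{\alpha_j}<\infty$. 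Call its sum $g(z)\in X$. Pairing with an arbitrary $\Phi\in X'$ and using continuity of $\Phi$ gives $\Phi(g(z))=\sum_{\alpha}c_{\alpha}(\Phi)(z-u)^{\alpha}=\Phi(f(z))$ throughout the polydisc; since $\Phi$ is arbitrary, Hahn--Banach forces $g(z)=f(z)$. Thus $f$ has a convergent $X$-valued power series expansion about every point of $U$, which is $(i)$.

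The main obstacle is thus confined to the step $X''\rightsquigarrow X$ for the Taylor coefficients — equivalently, the fact that the vector-valued Cauchy (Pettis) integral of a bounded, weakly continuous $X$-valued function is again an element of $X$; everything else is scalar complex analysis together with the uniform boundedness principle. Alternatively, one could simply cite the scalar several-variable result combined with a vector-valued Hartogs-type theorem, as in \cite{BarlDrag}, reducing to the one-variable case treated in \cite{Ru}. I would keep the self-contained route above, since it makes transparent why the only input beyond linear and complex analysis is Banach--Steinhaus.
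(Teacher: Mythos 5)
Your proof is correct. For the record, the paper itself does not prove Theorem~\ref{hahn}: in Appendix~\ref{Hoo} it merely cites \cite{Ru} for the one-variable case and \cite{BarlDrag} for the several-variable generalization, so there is no ``paper proof'' to compare against step by step. Your argument is the standard Dunford--Grothendieck route, and you isolate exactly the one non-routine point: after Banach--Steinhaus gives local norm-boundedness and the scalar several-variable Cauchy formula produces Taylor coefficients $x_\alpha\in X''$, one must descend from $X''$ to $X$. Both of the mechanisms you name work. The Dunford one-variable argument is the more elementary: the Cauchy-kernel estimate shows the difference quotients are norm-Cauchy, and in fact already shows $f$ is locally norm-Lipschitz, so the coefficient integral $\frac{1}{(2\pi i)^n}\int_T f(w)(w-u)^{-\alpha-\mathbf{1}}\,dw$ exists as an ordinary $X$-valued Riemann integral of a norm-continuous integrand, which settles the issue directly. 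The Pettis-integral route you offer as an alternative is also valid but is strictly heavier, since ``weakly continuous on a compact set $\Rightarrow$ Pettis integral lies in $X$'' ultimately rests on Krein's theorem that the closed convex hull of a weakly compact set is weakly compact; calling the two routes ``equivalent'' is a mild overstatement, but both close the gap. The remainder (dominated interchange of sum and integral on the distinguished boundary, absolute convergence of $\sum_\alpha x_\alpha(z-u)^\alpha$ from $\|x_\alpha\|\le M\rho^{-\alpha}$, and Hahn--Banach to identify the sum with $f$) is routine and correctly executed.
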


Theorem \ref{hahn} is  useful,  since we can use the standard results for complex-valued analytic functions  to investigate the analyticity of Banach space-valued  functions. In particular, for functions $f: U  \subset \mathbb{C}^{n} \rightarrow B(\mathcal{H})$ - where $B(\mathcal{H})$ is the Banach space of bounded operators on the Hilbert space $\mathcal{H}$-  it is sufficient to check that $\langle \phi \vert f(\cdot)  \psi \rangle$ is analytic for any $ \phi, \psi   \in \mathcal{H}$. 
A theorem due to Hartogs says that a complex-valued function of several variables is analytic if and only if it is holomorphic with respect to each variable individually. This theorem establishes the equivalence between analyticity and complex differentiability (or holomorphy) for complex-valued functions of several variables. To find out whether a complex-valued function of  several variables is holomorphic on $U$ or not, we will use the generalization of Cauchy-Poincar\'e and Morera's theorems to several complex variables (see \cite{Vlad}):

\begin{theorem} [Cauchy-Poincar\'e]
If $f:U\subset \mathbb{C}^n  \rightarrow \mathbb{C}$ is holomorphic, then, for any $(n+1)-$dimensional bounded surface $V \subset U$ with class  $\mathrm{C}^1$ boundary such that $\partial V $is an $n$-dimensional piecewise-smooth surface, 
\begin{equation}
\int_{\partial V} f =0. 
\end{equation}
\end{theorem}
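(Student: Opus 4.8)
The plan is to derive the identity from Stokes' theorem together with the Cauchy--Riemann equations in several complex variables. First, one interprets the symbol $\int_{\partial V} f$ as the integral over the oriented $n$-dimensional piecewise-smooth surface $\partial V$ of the holomorphic $n$-form $\omega := f\, dz_1 \wedge \cdots \wedge dz_n$. Since $f$ is holomorphic on the open set $U$ in the sense of the definition above --- equivalently, by Hartogs' theorem, holomorphic in each variable separately --- it is of class $\mathrm{C}^\infty$ on $U$ (for instance by iterating the one-variable Cauchy integral formula), so $\omega$ is a smooth $n$-form on $U$; and by hypothesis $V$ is a bounded $(n+1)$-dimensional surface with $\mathrm{C}^1$ boundary whose closure lies in $U$, so the hypotheses needed to apply Stokes' theorem to $\omega$ on $V$ are in place.

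The next step is the purely algebraic observation that $\omega$ is closed. Writing the full exterior derivative of $f$ as $df = \sum_{j=1}^n (\partial f / \partial z_j)\, dz_j + \sum_{j=1}^n (\partial f / \partial \bar z_j)\, d\bar z_j$, holomorphy of $f$ means precisely that $\partial f / \partial \bar z_j = 0$ for every $j$ (the Cauchy--Riemann equations; in the applications in this paper this is exactly what is verified, e.g.\ via Morera's theorem as in the proofs of Lemmas \ref{anainit} and \ref{compo}). Hence $df = \sum_j (\partial f / \partial z_j)\, dz_j$ and therefore
\begin{equation*}
d\omega = df \wedge dz_1 \wedge \cdots \wedge dz_n = \sum_{j=1}^n \frac{\partial f}{\partial z_j}\, dz_j \wedge dz_1 \wedge \cdots \wedge dz_n = 0,
\end{equation*}
since every summand repeats the factor $dz_j$.

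Finally, Stokes' theorem yields
\begin{equation*}
\int_{\partial V} f = \int_{\partial V} \omega = \int_V d\omega = 0 ,
\end{equation*}
which is the assertion. The only points requiring genuine care --- and the place where I would simply quote the precise statement from \cite{Vlad} rather than reprove it --- are the differential-geometric bookkeeping: checking that a bounded $(n+1)$-dimensional surface with $\mathrm{C}^1$ boundary and piecewise-smooth $n$-dimensional topological boundary $\partial V$ is an admissible domain for Stokes' theorem (it is a manifold with piecewise-smooth boundary up to a set of measure zero), and that $\partial V \subset U$ so that $\omega$ is smooth on a neighbourhood of everything in sight. The conceptual content of the theorem is entirely contained in the identity $d(f\, dz_1 \wedge \cdots \wedge dz_n) = 0$ for holomorphic $f$, so I do not expect any substantial obstacle beyond this routine verification.
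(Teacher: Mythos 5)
The paper does not actually prove this theorem: it is stated in Appendix~A as a recollection of background material and is cited to \cite{Vlad} without argument. So there is no proof in the paper to compare against, and your proposal should be judged on its own.

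Your Stokes-theorem argument is correct, and it is in fact the standard proof of the Cauchy--Poincar\'e theorem (as found in Vladimirov). The conceptual heart is exactly as you say: interpret $\int_{\partial V} f$ as $\int_{\partial V} f\, dz_1\wedge\cdots\wedge dz_n$, observe that on $\mathbb{C}^n\simeq\mathbb{R}^{2n}$ the exterior derivative of a holomorphic $f$ has no $d\bar z_j$ component, so every term of $d\bigl(f\, dz_1\wedge\cdots\wedge dz_n\bigr)$ repeats a $dz_j$ and the form is closed; Stokes then gives zero. Your remark that one needs $f$ to be $\mathrm{C}^\infty$ is correct (Hartogs plus iterated one-variable Cauchy integral formula), and your flagged caveat about ensuring $\overline{V}\subset U$ and that $V$ is an admissible domain for Stokes is the right one to flag --- the paper's phrasing ``$V\subset U$ bounded'' is slightly loose on this point and the precise hypotheses should indeed be taken verbatim from \cite{Vlad}. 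In short: correct proof, same route that the cited reference takes, and a reasonable level of detail given that the paper itself delegates this to the literature.
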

\vspace{1mm}

\begin{theorem}[Morera]
\label{morera}
 Let $f: U \subset \mathbb{C}^n \to \mathbb{C}$ be a continuous  function. We assume that
\begin{equation}
\int_{\partial V_k} f =0
\end{equation}
 for any   arbitrary surface $V_k=l_1 \times... \times l_{k-1} \times \Delta_k \times l_{k+1} \times... \times l_n$ included in $U$, where the $l_j$'s are class $\mathrm{C}^1$  curves with ends $z'_j$ and $z''_j$ in the $z_j$-plane, and $\Delta_k$ is a closed bounded simply connected domain in the $z_k$-plane with a piecewise-smooth boundary, $k=1,...,n$.  Then $f$ is holomorphic on $U$.
\end{theorem}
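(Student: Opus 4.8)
The plan is to derive holomorphy of $f$ from holomorphy in each variable separately, and then invoke Hartogs' theorem (recalled above) to pass from separate to joint holomorphy; the separate holomorphy will come from the classical one-variable Morera theorem. Throughout, $\int_{\partial V}f$ is understood as $\int_{\partial V}f\,dz_1\wedge\cdots\wedge dz_n$, exactly as in the Cauchy-Poincar\'e theorem above.

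First I would fix an index $k\in\{1,\dots,n\}$ and a point $a=(a_1,\dots,a_n)\in U$. Choose a small closed, bounded, simply connected domain $\Delta_k$ with piecewise-smooth boundary in the $z_k$-plane around $a_k$, and for $\varepsilon>0$ small let $l_j$ (for $j\ne k$) be the straight segment from $a_j$ to $a_j+\varepsilon$; shrink $\varepsilon$ and $\Delta_k$ so that
$$V_k:=l_1\times\cdots\times l_{k-1}\times\Delta_k\times l_{k+1}\times\cdots\times l_n\subset U .$$
This $V_k$ is an admissible surface in the sense of the statement, so the hypothesis yields $\int_{\partial V_k}f=0$. Expanding $\partial V_k$ by the Leibniz rule for boundaries of product chains, every face in which some coordinate $z_j$ (with $j\ne k$) is frozen at an endpoint of $l_j$ contributes $0$, since there $dz_1\wedge\cdots\wedge dz_n$ pulls back to $0$; only the face obtained by replacing $\Delta_k$ with $\partial\Delta_k$ survives. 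Writing $\hat l:=l_1\times\cdots\times l_{k-1}\times l_{k+1}\times\cdots\times l_n$ and $d\hat z:=dz_1\cdots dz_{k-1}\,dz_{k+1}\cdots dz_n$, this gives
$$0=\int_{\partial V_k}f\,dz_1\wedge\cdots\wedge dz_n=\pm\int_{\hat l}\Big(\oint_{\partial\Delta_k}f(z_1,\dots,z_n)\,dz_k\Big)\,d\hat z ,$$
the sign being immaterial.

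Next I would divide this identity by $\varepsilon^{\,n-1}=\prod_{j\ne k}\int_{l_j}dz_j$ and let $\varepsilon\to0$. Since $f$ is continuous, the function $g$ of the $n-1$ variables $(z_j)_{j\ne k}$ defined by $g:=\oint_{\partial\Delta_k}f\,dz_k$ is continuous, and by uniform continuity of $f$ on the compact polydisc containing $V_k$ the normalized iterated integral over the shrinking segments $l_j$ converges to the value of $g$ at $(a_j)_{j\ne k}$. Hence $\oint_{\partial\Delta_k}f(a_1,\dots,z_k,\dots,a_n)\,dz_k=0$. As $\Delta_k$ may be chosen to be an arbitrary small triangle about $a_k$, the classical one-variable Morera theorem shows that $z_k\mapsto f(a_1,\dots,z_k,\dots,a_n)$ is holomorphic near $a_k$. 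Since $k$ and $a$ were arbitrary, $f$ is holomorphic in each variable separately on $U$; being in addition continuous, it is holomorphic on $U$ by Hartogs' theorem stated above.

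I expect the only genuinely delicate point to be the chain-level identity passing from $\int_{\partial V_k}f$ to the displayed iterated integral, i.e. the boundary computation for product chains that already underlies the Cauchy-Poincar\'e theorem; the remaining steps (the limit $\varepsilon\to0$, the one-variable Morera step, and the appeal to Hartogs) are routine, the only real requirement being that $\Delta_k$ and the segments $l_j$ can be kept inside $U$, which holds since $U$ is open and holomorphy is a local property.
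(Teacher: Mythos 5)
The paper does not prove Theorem \ref{morera}: it is stated in Appendix \ref{Hoo} as a known result cited to Vladimirov \cite{Vlad}, so there is no ``paper's own proof'' to compare against. Your argument is a correct and essentially the standard reconstruction of that cited proof: the boundary computation is right (for $j\neq k$ the face $l_1\times\cdots\times\partial l_j\times\cdots\times l_n$ freezes $z_j$, so $dz_j$ and hence $dz_1\wedge\cdots\wedge dz_n$ pull back to zero, leaving only the $\partial\Delta_k$ face), the normalized limit $\varepsilon\to0$ of the outer iterated integral legitimately isolates the inner contour integral by continuity and uniform continuity on a compact neighborhood, one-variable Morera over small triangles then gives separate holomorphy in $z_k$ near each point, and Hartogs passes to joint holomorphy. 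Two small remarks. First, since you already have continuity of $f$ in hand, you could replace the appeal to Hartogs by the more elementary Osgood lemma (separate holomorphy plus continuity implies joint holomorphy), which avoids invoking a substantially deeper result; the paper's prose statement of Hartogs makes your choice natural in context, but it is worth noting that it is not needed at full strength. Second, your phrase ``an arbitrary small triangle about $a_k$'' is slightly loose: the $\varepsilon\to0$ argument in fact delivers $\oint_{\partial\Delta_k}f(a_1,\dots,\cdot,\dots,a_n)\,dz_k=0$ for every triangle $\Delta_k$ contained in the open slice $\{z_k:(a_1,\dots,z_k,\dots,a_n)\in U\}$, not only those containing $a_k$, which is exactly what one-variable Morera requires; stating it this way removes any appearance of a gap.
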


Finally, we state a well known convergence theorem for sequences of holomorphic functions.
\begin{lemma}
\label{conve}
Let $X$ a Banach space, and $(f_k)_{k \in \mathbb{N}}$ a sequence of holomorphic functions  $f_k:U \rightarrow X$, which converges  to a function $f$ uniformly on any compact subset $K \subset U$. Then $f$ is holomorphic on $U$.
\end{lemma}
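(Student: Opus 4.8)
The plan is to reduce to the scalar-valued case by means of the equivalence between weak and strong analyticity (Theorem \ref{hahn}), and then to invoke the several-variables Morera theorem (Theorem \ref{morera}). First I would fix an arbitrary $\Phi \in X'$ and observe that $\Phi \circ f_k : U \to \mathbb{C}$ is holomorphic for every $k$, being the composition of a holomorphic function with a bounded linear functional, and that $\Phi \circ f_k \to \Phi \circ f$ uniformly on every compact $K \subset U$, since $| \Phi( f_k(z) ) - \Phi( f(z) ) | \le \| \Phi \| \, \| f_k(z) - f(z) \|$. By Theorem \ref{hahn}, it suffices to prove that $\Phi \circ f$ is holomorphic on $U$ for every such $\Phi$; in other words, it suffices to prove the lemma in the case $X = \mathbb{C}$.

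So assume $X = \mathbb{C}$. I would first check that $f$ is continuous on $U$: every point of $U$ has a compact neighborhood $K \subset U$, on which $f_k \to f$ uniformly, and a uniform limit of continuous functions is continuous. Next, let $V_k = l_1 \times \cdots \times l_{k-1} \times \Delta_k \times l_{k+1} \times \cdots \times l_n$ be an arbitrary surface of the type appearing in Theorem \ref{morera}, included in $U$. Since each $f_j$ is holomorphic on $U$, the Cauchy-Poincar\'e theorem gives $\int_{\partial V_k} f_j = 0$ for all $j$. The boundary $\partial V_k$ is a closed subset of the compact set $V_k$, hence a compact subset of $U$, so $f_j \to f$ uniformly on $\partial V_k$ and therefore $\int_{\partial V_k} f = \lim_{j \to \infty} \int_{\partial V_k} f_j = 0$. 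Since $V_k$ was arbitrary and $f$ is continuous, Morera's theorem (Theorem \ref{morera}) yields that $f$ is holomorphic on $U$.

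Combining the two steps, $\Phi \circ f$ is holomorphic for every $\Phi \in X'$, hence $f$ is (strongly) analytic on $U$ by Theorem \ref{hahn}. I do not expect any serious obstacle here; the only points requiring a little care are the reduction to the scalar case, which is immediate once Theorem \ref{hahn} is available, and the passage of the limit through the contour integrals $\int_{\partial V_k}$, which is justified by the compactness of $\partial V_k$ together with the hypothesis of uniform convergence on compacts.
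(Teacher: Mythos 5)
Your proof is correct, and it uses precisely the tools the paper assembles in the same appendix (Theorem~\ref{hahn} for the weak-strong reduction, Cauchy--Poincar\'e to get the vanishing of $\int_{\partial V_k} f_j$, and Theorem~\ref{morera} to conclude). The paper itself does not prove Lemma~\ref{conve}; it states it as a well-known fact with citations, so there is no author's argument to compare against, but your route is the natural one given the surrounding material. The only point worth stating a bit more explicitly is why $\partial V_k$ is a compact subset of $U$: it is closed and bounded as the boundary of the compact product $V_k$, and $V_k \subset U$ by the hypothesis of Theorem~\ref{morera}, so uniform convergence on $\partial V_k$ is indeed available and the interchange of limit and integral is justified.
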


\section{Wick ordered expression of the renormalized operator}
\label{Wicko}
 In Subsection \ref{Wicki}, we explained shortly why $\mathcal{R}_{\rho} (H(\underline{w}))$ can  be rewritten as a series of Wick monomials. In this appendix,  we give the exact expression of the new kernel $\hat{ \underline{w} }$ and complete the intuitive picture of Subsection \ref{Wicki}.  We use the notations of Paragraph \ref{notation}, and, to avoid a large increase of the length of our formulas,   we introduce
 \begin{align}
&r_{i} := \sum_{j=1}^{i-1} \Sigma [ \underline{\tilde{k}}_j^{(n_j)} ] +\sum_{j=i+1}^L \Sigma [ \underline{k}_j^{(m_j)} ] , \quad \vec{l}_{i} := \sum_{j=1}^{i-1} \vec{\Sigma} [ \underline{\tilde{k}}_j^{(n_j)} ] +\sum_{j=i+1}^L \vec{\Sigma} [ \underline{k}_j^{(m_j)} ], \notag \\
&\tilde{r}_{i} := \sum_{j=1}^{i} \Sigma [ \underline{\tilde{k}}_j^{(n_j)} ] +\sum_{j=i+1}^L \Sigma [ \underline{k}_j^{(m_j)} ] , \quad \vec{\tilde{l}}_{i} := \sum_{j=1}^{i} \vec{\Sigma} [ \underline{\tilde{k}}_j^{(n_j)} ] +\sum_{j=i+1}^L \vec{\Sigma} [ \underline{k}_j^{(m_j)} ]. \label{def_mu_L}
\end{align}
We remind the reader that
 \begin{equation}
 \label{Fi}
 F(\vec{p},z , r,\vec{l}):= \frac{ \overline{\chi}_{\rho}^{2} (r) }{ w_{0,0}(\vec{p},z,r, \vec{l})}.
 \end{equation} 

In order to get the expression of the new kernels, we have to investigate how the arguments of the kernels $w_{M_i,N_i}$ are modified by the pull-through of the annihilation and creation operators.  We set $p_i:=M_i-m_i$, $q_i:=N_i-n_i$  to follow the  notations already present in the literature.  Once we have contracted   $p_1+....+p_L$ creation  operators with  $q_1+...+q_L$ annihilation operators and pulled through the $m_1+...+m_L$ uncontracted creation operators to the left and the $n_1+...+n_L$ uncontracted  annihilation operators to the right, the  third and the fourth arguments of the  operators $w_{m_i+p_i,n_i+q_i}(\vec{p},z,H_f,\vec{P}_f,\underline{K}^{(m_i+p_i,n_i+q_i)}_{i})$  and $F(\vec{p},z,H_f,\vec{P}_f)$ are shifted by  a certain amount, which depends on their position $i$. Namely, they are modified by the uncontracted  creation operators that were originally sitting on their right hand-side and have been pulled-through to the left, and the uncontracted  annihilation operators that were originally sitting on their left hand-side, and have been pulled-through to the right. This leads to a shift 
$$H_f \rightarrow H_f+ \sum_{j=i+1}^{L} \Sigma[\underline{k}^{(m_j)}] +   \sum_{j=1}^{i-1} \Sigma[\underline{\tilde{k}}^{(n_j)}]= H_f + r_i$$ for the  third argument of $w_{m_i+p_i,n_i+q_i}$, and a shift 
$$H_f \rightarrow H_f+ \sum_{j=i+1}^{L} \Sigma[\underline{k}^{(m_j)}] +   \sum_{j=1}^{i} \Sigma[\underline{\tilde{k}}^{(n_j)}]=H_f + \tilde{r}_i$$ for the third argument of the operator-valued function  $F(\vec{p},z,H_f,\vec{P}_f)$ sitting directly on the right of $w_{m_i+p_i,n_i+q_i}(\vec{p},z,H_f,\vec{P}_f)$. The shifts for $\vec{P}_f$ are given by the same formulas, with $\Sigma$ replaced by $\vec{\Sigma}$.  Thanks to Formula \eqref{wiwi}, the contracted part  can be rewritten as the vacum expectation value of an operator $\mathscr{V}_{\underline{m},\underline{n}}^{\underline{m+p},\underline{n+q}}(\vec{p},z,r,\vec{l},\underline{K}^{(\underline{m},\underline{n})})$, whose expression after rescaling is   given  inside the brackets of Equation \eqref{VVm}.  
 \vspace{2mm}
 
 \noindent We are ready to state the exact expression of the new kernel $\underline{\hat{w}}=(\hat{w}_{M,N}) $ which satisfies $H(\underline{\hat{w}})= \mathcal{R}_{\rho} (H(\underline{w}))$. We introduce
  \begin{align} 
W_{p,q}^{m,n}(\vec{p}, z , r , \vec{l} , \underline{K}^{(m,n)}) := \mathds{1}_{H_f \leq 1} \int_{\underline{B}_{1}^{p+q}}  & b^{*}  (\underline{x}^{p})  w_{m+p, n+q}(\vec{p},z , H_f + r , \vec{P}_f + \vec{l} , \underline{k}^{(m)},\underline{x}^{(p)}, \underline{\tilde{k}}^{(n)}, \underline{\tilde{x}}^{(q)})  \notag \\
& b  (\underline{\tilde{x}}^{q}) d\underline{X}^{(p,q)}   \mathds{1}_{H_f \leq 1}. \label{Wii}
\end{align}
We also define the function
   \begin{align}
  & V_{\underline{m,p,n,q}} ( \vec{p},z,r,\vec{l},\underline{K}^{(M,N)}) :=   \chi_{1}( r+ \tilde{r}_0) \chi_{1}( r+ \tilde{r}_L) \notag \\
  & \quad \Big \langle \Omega \big \vert  \prod_{i=1}^{L-1} \Big(  W_{p_i,q_i}^{m_i,n_i} (\vec{p},z,\rho(r+r_i),\rho(\vec{l}+ \vec{l}_i ), \rho \underline{K}_{i}^{(m_i,n_i)})  F(\vec{p}, z , H_f+ \rho(r+ \tilde{r}_i), \vec{P}_f+  \rho(\vec{l} + \vec{\tilde{l}}_i)) \Big) \notag \\
  & \quad \qquad W_{p_L,q_L}^{m_L,n_L} (\vec{p},z,\rho(r+r_L),\rho(\vec{l}+ \vec{l}_L), \rho \underline{K}_{L}^{(m_L,n_L)})  \Omega \Big \rangle,   \label{VVm}
  \end{align}
  \vspace{1mm}
  
  \noindent  where we have set $\underline{m} := (m_1 , \cdots , m_L )$, $\underline{n} := (n_1 , \cdots , n_L )$, $\underline{p} := (p_1 , \cdots , p_L )$, $\underline{q} := (q_1 , \cdots , q_L )$, $\underline{m,p,n,q}:=(m_1,p_1,n_1,q_1, \cdots , m_L,p_L,n_L,q_L)$, $$\underline{K}_{i}^{(m_i,n_i)}:=(\underline{k}_{i}^{(m_i)}, \underline{\tilde{k}}_{i}^{(n_i)}) \quad  \text{ and } \quad \underline{k}_{i}^{(m_i)}:=(\underline{k}_{m_1+...+m_{i-1}+1},...,\underline{k}_{m_1+...+m_{i}})$$ to simplify notation. We introduce  the constants \indent
\begin{equation}
C_{\underline{m},\underline{n}}^{\underline{m+p},\underline{n+q}} :=\prod_{i=1}^{L} \left( \begin{array}{c} m_i+p_i\\p_i \end{array}\right)  \left( \begin{array}{c} n_i+q_i\\q_i \end{array}\right).
 \end{equation}
Setting $(\vec{p},z): = E_\rho^{-1}(\vec{p}, \zeta )$,  the sequence $(\hat{w}_{M,N})$ is given by the following expressions:  For $M+N \ge 1$,
  \begin{align}
&  \hat{w}_{M,N}[ \vec{p},\zeta, r, \vec{l},\underline{K}^{(M,N)}] \notag \\
  &=     \sum_{L=1}^{\infty} (-1)^{L-1}  \rho^{\frac{3}{2}(M+N)-1} \underset{ \tiny \begin{array}{c} \underline{m, p,n,q}  \\ m_1+...+m_L=M, \\ n_1+...+n_L=N \\  m_i+n_i+p_i+q_i \geq 1 \end{array}}{\sum}   C_{\underline{m},\underline{n}}^{\underline{m+p},\underline{n+q}}  \text{ } V^{\mathrm{sym}}_{\underline{m,p,n,q}} (\vec{p},z,r,\vec{l},\underline{K}^{(M,N)}).   \label{whatmn}
  \end{align}
The notation $ f^{\mathrm{sym}}\left(\underline{K}^{(M,N)} \right)$ appearing in \eqref{whatmn} denotes the symmetrization of $f$ w.r.t. the variables $\underline{k}^{(M)}$ and $\underline{\tilde{k}} \phantom{}^{(N)}$, that is
\begin{equation}
\begin{split}
&f^{\mathrm{sym}}\left(\underline{K}^{(M,N)} \right) := \frac{1}{M!N!}\sum_{\pi\in S_M}\sum_{\tilde{\pi}\in S_N} f\left( \underline{k}_{\pi(1)} , \dots , \underline{k}_{\pi(M)} , \underline{\tilde{k}}_{\tilde{\pi}(1)} , \dots , \underline{\tilde{k}}_{\tilde{\pi}(N)} \right).
\end{split}
\end{equation} 
\vspace{1mm}

\noindent Finally, for $(M,N) = (0,0)$,
 \begin{equation}
\label{w_0,0^eff}
 \hat{w}_{0,0}(\vec{p},\zeta,r,\vec{l})= \rho^{-1} w_{0,0}(\vec{p},z, \rho r , \rho \vec{l})+ \rho^{-1} \sum_{L=2}^{\infty} (-1)^{L-1} \underset{\underset{p_i+q_i \geq 1}{\underline{p} , \underline{q}}}{\sum}  V_{\underline{0,p,0,q}} (\vec{p},z,r,\vec{l}),
 \end{equation} 
where 
\begin{align}
& V_{\underline{0,p,0,q}} (\vec{p},z,r,\vec{l}) \notag \\
&:= \chi_{1}^{2}(r) \Big \langle \Omega \big \vert  \prod_{i=1}^{L-1} \Big(  W^{0,0}_{p_i,q_i} (\vec{p}, z,\rho r,\rho \vec{l} )   F (\vec{p},z , H_f+ \rho r, \vec{P}_f+ \rho \vec{l}) \Big)  W^{0,0}_{p_L,q_L} (\vec{p}, z,\rho r,\rho \vec{l} )    \text{ }\Omega \Big  \rangle. \label{def_DL_2}
\end{align}
\indent

\section{Proof of Lemma  \ref{polyd}}
\label{polydlem}
  To render the  structure of the proof of Lemma \ref{polyd} clearer, we subdivide it  into three steps.  All the positive  constants of order one appearing in the estimates  are denoted by  the capital letter $C$ if they are independent of the parameters $\rho_0$, $\lambda_0$ and $\mu$.
\subsection{Step 1: Explicit  expressions for the kernels}
We start from \eqref{expre}.  We set
 \begin{align*}
H^{(0)}(\vec{p},z):= \mathcal{S}_{\rho_{0}} \Big ( \big \langle & \big( H_f + \frac{\vec{P}_{f}^{2}}{2m} - \frac{\vec{p}}{m}   \cdot  \vec{P}_f +   \lambda_0 \bm {\chi}   H_I  \bm {\chi} \\[4pt]
& - \lambda_{0}^{2}  \bm {\chi}  H_I  \overline{\bm {\chi}}   (H_{ \overline{\bm {\chi} }}(\vec{p}, \rho_0 z))^{-1} \overline{\bm {\chi}}  H_I  \bm {\chi}  - \rho_0 z \mathds{1}  \big)  \mathds{1}_{H_f \leq \rho_0} \big \rangle_{\downarrow} \Big).
\end{align*}
One has that
\begin{align}
 H^{(0)}(\vec{p},z) = &  \hat{w}_{0,0}(\vec{p},z,H_f,\vec{P}_f) \mathds{1}_{H_f \leq 1} \notag \\[4pt]
 &+   \lambda_0 \mathcal{S}_{\rho_{0}} \left(  \big \langle \bm {\chi}   H_I  \bm {\chi}\big \rangle_{\downarrow}\right)  - \lambda_{0}^{2}  \mathcal{S}_{\rho_{0}} \left( \big \langle  \bm {\chi}  H_I  \overline{\bm {\chi}}   (H_{ \overline{\bm {\chi} }}(\vec{p},\rho_0 z))^{-1} \overline{\bm {\chi}}  H_I  \bm {\chi}  \big \rangle_{\downarrow} \right),\label{rescler}
\end{align}
where  $\hat{w}_{0,0}(\vec{p},z,r,\vec{l}):=r + \rho_{0} \frac{\vec{l}^2}{2m} - \frac{1}{m}  \vec{p} \cdot  \vec{l} - \rho_0 z$. An obvious calculation shows that $$\underset{(\vec{p},z) \in U[\vec{p}^*] \times D_{\mu/2}}{\sup} \|\hat{w}_{0,0}(\vec{p}, z , r , \vec{l})-\hat{w}_{0,0}(\vec{p}, z , 0 , \vec{0})  - ( r -  m^{-1} \vec{p} \cdot \vec{l} ) \|^{\sharp} \leq \frac{\sqrt{3} \rho_{0}}{m},$$ where we used that $\vert \vec{l}  \vert \leq 1$. We remind the reader that
\begin{equation*}
 \| \hat{w}_{0,0} \|^{\sharp} := | \hat{w}_{0,0}( 0 , \vec{0} ) | + \| \partial_{r}  \hat{w}_{0,0} \|_{\infty} + \sum_{i=1}^{3} \| \partial_{l_i}  \hat{w}_{0,0} \|_\infty . \end{equation*}
  The second term on the right hand side of \eqref{rescler} is  already cast into a sum of  Wick monomials. A straightforward calculation making use of the Pull-through formula to restrict the integration range  implies that 
\begin{equation}\label{eq:a2}
 \lambda_0 \mathcal{S}_{\rho_{0}} \left(  \big \langle \bm {\chi}   H_I  \bm {\chi}\big \rangle_{\downarrow}\right)  =- i  \lambda_0   \rho_{0}   \int_{\underline{B}_{1}} \vert \vec{k} \vert^{\frac12}  \vec{\epsilon} (\underline{k}) \cdot \vec{e}_z \text{ } \chi(H_f) \left( b(\underline{k}) - h.c. \right)  \chi(H_f) \, d \underline{k} 
\end{equation}
The associated kernels are independent of $\vec{p},z$ and $\vec{l}$, and are given by 
\begin{equation}\label{eq:a3}
\hat{w}_{1,0}( r, \underline{k})= -  \hat{w}_{0,1}( r, \underline{k})=  i \lambda_0  \rho_{0} \vert \vec{k} \vert^{1/2}  \text{ } \vec{\epsilon} (\underline{k}) \cdot \vec{e}_z \text{ } \chi(r+ \vert k \vert)  \chi(r). 
\end{equation}
 Their norm  $\| \cdot \|_{\frac{1}{2}}$ is of order $ \lambda_0  \rho_{0}$. The analysis of  the third term on the right-hand side of  \eqref{rescler}  requires more work.  The procedure is similar to the discussion of    Paragraph \ref{Wicki}. We write down the Neumann expansion for  $   \left[ H_{\overline{\bm{\chi}}}(\vec{p}, \rho_0 z) \right]^{-1}_{ \text{Ran}(\overline{\bm{\chi}})}$ and  Wick-order  it to rewrite $$  \mathcal{S}_{\rho_{0}} \left( \big \langle  \bm {\chi}  H_I  \overline{\bm {\chi}}    \left[ H_{\overline{\bm{\chi}}}(\vec{p}, \rho_0 z) \right]^{-1}_{ \text{Ran}(\overline{\bm{\chi}})} H_I  \bm {\chi}  \big \rangle_{\downarrow} \right)$$ as a series of Wick monomials.   We use the formulas  in Appendix \ref{Wicko} and get that 
 \begin{equation}
- \lambda_{0}^{2}   \mathcal{S}_{\rho_{0}} \left( \big \langle  \bm {\chi}  H_I  \overline{\bm {\chi}}    \left[ H_{\overline{\bm{\chi}}}(\vec{p}, \rho_0 z) \right]^{-1}_{ \text{Ran}(\overline{\bm{\chi}})} H_I  \bm {\chi}  \big \rangle_{\downarrow} \right)=H(\underline{\tilde{w}}(\vec{p},z)),
\end{equation}
where, for $M,N \ge 0$,
\begin{align}
& \tilde{w}_{M,N}(\vec{p},z,r,\vec{l},\underline{K}^{(M,N)}) \notag \\
&= \sum_{L=2}^{\infty} (-1)^{L-1} \rho_{0}^{\frac{3}{2}(M+N)-1} \underset{ \tiny \begin{array}{c} \underline{m,p,n,q} , \\ m_1+\cdots+m_L=M, \\ n_1+\cdots+n_L=N \\ m_i+n_i+p_i+q_i =1 \end{array}}{\sum}   V^{\text{sym}}_{\underline{m,p,n,q}}(\vec{p},\rho_0 z,r, \vec{l}, \underline{K}^{(M,N)})  , \label{compliq}
\end{align}
with  
\begin{align}
V_{\underline{m,p,n,q}} &( \vec{p},\rho_0 z,r,\vec{l},\underline{K}^{(M,N)}) :=  \chi_{1}( r+ \tilde{r}_0)   \chi_{1}( r+ \tilde{r}_L)  \Big \langle \downarrow \otimes \Omega \Big \vert  \prod_{i=1}^{L-1} \Big(  W_{p_i,q_i}^{m_i,n_i} ( \rho_0 \underline{K}_{i}^{(m_i,n_i)})  \notag \\
   Ê&   F(\vec{p}, \rho_0 z , H_f+ \rho_0(r+ \tilde{r}_i), \vec{P}_f+  \rho_0(\vec{l} + \vec{\tilde{l}}_i)) \Big)  W_{p_L,q_L}^{m_L,n_L} ( \rho_0 \underline{K}_{L}^{(m_L,n_L)})   \text{ }  \downarrow \otimes\Omega \Big \rangle .\label{Vmpnq}
\end{align}
\vspace{2mm}

\noindent The function $F$ is given by 
\begin{equation}
\label{5.24}
F(\vec{p}, z , r, \vec{l})= P_{\downarrow} \otimes b^{-1}_{1}(\vec{p},z , r ,  \vec{l}) \overline{\chi}_{\rho_0}^{2} (r) + {P}_{\uparrow} \otimes  b^{-1}_{2}(\vec{p},z , r ,  \vec{l}),
\end{equation}
with
\begin{align}
\label{B1}
b_1(\vec{p},z,r, \vec{l})&= \Big ( r + \frac{\vec{l}^2}{2m} - \frac{\vec{ p}} {m} \cdot \vec{l} -z \Big ) 
\mathds{1}_{r \geq 3 \rho_0/4} \\
\label{B2}
b_2(\vec{p},z,r, \vec{l}) &= r + \frac{\vec{l}^2}{2m}  + \omega_0- \frac{\vec{p}}{m} \cdot \vec{l} -z.
\end{align}
\vspace{2mm}

\noindent The sum in \eqref{compliq} is only carried out for $m_i+p_i+n_i+q_i=1$ and the operators $W_{p_i,q_i}^{m_i,n_i}$  do not depend on $(\vec{p},z)$. In \eqref{Vmpnq},
\begin{align}
 & W_{p_i ,q_i}^{m_i,n_i} (  \underline{K} ^{(m_i ,n_i )} ) \notag \\
 &=  \int_{\underline{B}_{1}} b^{*} ( \underline{x}^{(p_i)}) \text{ } w_{m_i+p_i,n_i+q_i} (  \underline{k}^{(m_i)},\underline{x}^{(p_i)},\underline{\tilde{k}}^{(n_i)},\underline{\tilde{x}}^{(q_i)} )  \text{ } b ( \underline{\tilde{x}}^{(q_i)}) d \underline{X}^{(p_i,q_i)}.  \label{Wmp}
\end{align}
 As $H_I$ is linear in annihilation and creation operators, the only non-zero   kernels  are the $\mathbb{C}^2$-valued functions $w_{0,1}$ and  $w_{1,0}$, given by
\begin{equation}
w_{0,1} ( \underline{k} ) =- w_{1,0} ( \underline{k} )=i  \lambda_0 \vert \vec{k} \vert^{1/2}  \vec{\epsilon} (\underline{k})  \cdot \vec{\sigma}  \text{ }\mathds{1}_{ \vert \vec{k} \vert \leq 1}.
\end{equation}
Therefore, $W_{p ,q}^{m,n} $ is non-zero if, and only if, 
\begin{equation*}
m+p=0, \text{ }n+q=1 \qquad \text{or} \qquad m+p=1, \text{ }n+q=0.
\end{equation*}
If $m+n=1$,
\begin{equation}
\label{W00mn}
W_{0,0}^{m,n} (\rho_{0} \underline{k} ) =  (-1)^{m} i    \lambda_{0} \rho_{0}^{1/2} \vert \vec{k} \vert^{1/2}   \vec{\epsilon} (\underline{k})  \cdot \vec{\sigma} \,  \mathds{1}_{|\vec{k}| \le 1}( \vec{k} ).
\end{equation}
If $m+n=0$,
\begin{equation}
W_{1,0}^{0,0}= -i \lambda_0 \int_{ \underline{B}_1} \vert \vec{k} \vert^{1/2}  \text{ }  \vec{\epsilon}(\underline{k}) \cdot \vec{\sigma} \text{ }  b^{*} (\underline{k}) d \underline{k}, 
\end{equation}
and
\begin{equation}
W_{0,1}^{0,0} = i \lambda_0 \int _{ \underline{B}_1} \vert \vec{k} \vert^{1/2}  \text{ }  \vec{\epsilon}(\underline{k}) \cdot \vec{\sigma} \text{ } b (\underline{k}) d \underline{k}.
\end{equation}
Note that $\mathds{1}_{|\vec{k}| \le 1}( \vec{k} )$ appears in \eqref{W00mn}, because the term $\chi(r+\tilde{r}_0)$ in \eqref{Vmpnq} forces $\vert \vec{k} \vert$ to be smaller than  one. 

\subsection{Step 2: Bounds on the  norm of  kernels}
We construct the sequence of kernels $\underline{\hat{w}}=(\hat{w}_{M,N})$ by  setting $\hat{w}_{0,0}(\vec{p},z,r,\vec{l})=r + \rho_{0} \frac{\vec{l}^2}{2m} - \frac{1}{m}  \vec{p} \cdot  \vec{l} - \rho_0 z$,   $\hat{w}_{1,0}$ and $\hat{w}_{0,1}$ as given in \eqref{eq:a3}, and $\hat{w}_{M,N}=0$ for $M+N>1$. 
$H^{(0)}(\vec{p},z)=H((\underline{\hat{w}} + \underline{\tilde{w}})(\vec{p},z))$, with $\tilde{w}_{M,N}$ as in \eqref{compliq}. In order to show that the Wick-ordered operator  $H(\underline{\hat{w}} + \underline{\tilde{w}})$ belongs to  a polydisc $\mathcal{B}(\gamma,\delta,\varepsilon)$, we need to investigate the convergence of the series in \eqref{compliq}. The assertion of the lemma   follows if this series  converges uniformly  in norm $\|\cdot\|^{\sharp}$ on $U[\vec{p}^*] \times D_{\mu/2}$, and, can be bounded by  an arbitrary small constant after an adequate tuning of the coupling  $\lambda_0$.  We  first bound $\| V_{\underline{m,p,n,q}} \|^{\sharp}$. To avoid a long proof, we only explicitly bound $\| V_{\underline{m,p,n,q}} \|_{\frac{1}{2}}$. The reader can check that the bounds are similar for the partial derivatives. Our proof works  if the ultraviolet cut-off function in the interacting Hamiltonian, $\mathds{1}(\vert \vec{k} \vert \leq 1)$, is replaced by an arbitrary cut-off function in $L^2 (\mathbb{R}^3)$ that  equals one in a neighborhood of  the origin. To shorten notations, we set $R_i := H_f + \rho_0 ( r + \tilde r_i )$, and $\vec{L}_i := \vec{P}_f + \rho_0 ( \vec{l} + \vec{\tilde{l}}_i )$.  We introduce $(H_f+ \rho_{0})^{-1/2}$ to the right and to the left of $W_{0,1}^{0,0}$ and $W_{1,0}^{0,0}$ in (\ref{Vmpnq}). Then,
\begin{align}
\label{Vmpnq2}
& V_{\underline{m,p,n,q}}(\vec{p}, \rho_0 z, r, \vec{l},\underline{K}^{(M,N)}) =  \rho_{0} \chi_{1}( r+ \tilde{r}_0)   \chi_{1}( r+ \tilde{r}_L)  \big \langle \downarrow \otimes \Omega  \big \vert  \notag \\
& \qquad \prod_{i=1}^{L-1} \Big[ (H_f+ \rho_{0})^{-\frac{1}{2}} W_{p_i,q_i}^{m_i,n_i} (\rho_{0} \underline{K}_{i}^{(m_i,n_i)})  (H_f + \rho_{0})^{-\frac{1}{2}} (H_f+ \rho_{0}) F(\vec{p},   \rho_0 z , R_i , \vec{L}_i ) \Big]  \notag \\
& \qquad (H_f + \rho_{0})^{-\frac{1}{2}}    W_{p_L,q_L}^{m_L,n_L} (\rho_{0} \underline{K}_{L}^{(m_L,n_L)})   (H_f + \rho_{0})^{-\frac{1}{2}}  \text{ } \downarrow \otimes \Omega  \big \rangle.
\end{align}
\vspace{2mm}

\noindent For $i \neq  L$, we give an upper bound for
\begin{equation}
A_{p_i,q_i}^{m_i,n_i}:= \|  (H_f+ \rho_{0})^{-\frac{1}{2}} W_{p_i,q_i}^{m_i,n_i}(\rho_{0} \underline{K}_{i}^{(m_i,n_i)})  (H_f+\rho_{0})^{-\frac{1}{2}}   (H_f+ \rho_{0}) F(\vec{p}, \rho_0 z , R_i , \vec{L}_i)  \|,
\end{equation}
 $m_i+n_i+p_i+q_i=1$. There are two different cases:
\begin{enumerate}[(i)]
\item  $m_i+n_i=1$. Then, 
\begin{align}
\label{5.45}
A_{0,0}^{m_i,n_i} & \leq  \lambda_0   \| F(\vec{p},  \rho_0 z, R_i , \vec{L}_i ) \| \vert \rho_{0} \vec{k}_i \vert^{1/2} \| \vec{\epsilon}_{\lambda_i} (\vec{k}_i) \cdot \vec{\sigma } \| \mathds{1}_{|\vec{k}_i| \le 1}( \vec{k}_i ) \notag \\
& \leq C    \frac{ \lambda_{0}}{\mu} | \vec{k}_i |^{1/2}  \mathds{1}_{|\vec{k}_i| \le 1}( \vec{k}_i ) \rho_{0}^{-1/2}. 
\end{align}

\item  $m_i+n_i=0$. Then,
\begin{equation}
\label{5.46}
A_{p_i,q_i}^{0,0} \leq  C \lambda_0 \rho_{0}^{-\frac12}   \| (H_f + \rho_{0}) F(\vec{p},   \rho_0 z , R_i , \vec{L}_i ) \|. 
\end{equation}
\end{enumerate}
We use the functional calculus outlined in Paragraph \ref{funcpar} to bound $\| (H_f + \rho_{0}) F(\vec{p},  \rho_0 z , R_i , \vec{L}_i ) \|$. It  is bounded by the sum of the constants $C_1$ and $C_2$, where
\begin{align*}
& C_1:=  \underset{r' \geq \frac{3 \rho_{0}}{4} ,  \vert \vec{l}' \vert \leq r'}{\sup} \left \vert \frac{ r' + \rho_{0}}{ r' + \rho_0( r + \tilde r_i ) + \frac{ \displaystyle  \vert \vec{l}' + \rho_0 ( \vec{l} + \vec{\tilde{l}}_i ) \vert^{2}}{2} - \frac{\vec{p}}{m} \cdot ( \vec{l}' + \rho_0 ( \vec{l} + \vec{\tilde{l}}_i )) -  \rho_0 z} \right \vert, \\
& C_2:=  \underset{r' \geq 0,  \vert \vec{l}' \vert \leq r'}{\sup} \left \vert \frac{  r' + \rho_{0}}{ \displaystyle   r' + \rho_0( r + \tilde r_i ) + \omega_0+ \frac{\vert \vec{l}' + \rho_0 ( \vec{l} + \vec{\tilde{l}}_i ) \vert^{2}}{2} - \frac{\vec{p}}{m} \cdot (\vec{l}' + \rho_0 ( \vec{l} + \vec{\tilde{l}}_i ) ) -  \rho_0 z} \right \vert .
\end{align*}
Using the assumptions   $\vert z \vert <   \mu   /2$ together with $| \vec{l} | \le r$ and $| \vec{\tilde{l}}_i | \le \tilde{r}_i$, we obtain that
\begin{equation*}
C_1 \leq  \underset{r' \geq \frac{3 \rho_{0}}{4} \text{ } }{\sup} \frac{ r' + \rho_{0}}{  \mu  r'   - \frac{ \mu \rho_{0}}{2}}, \qquad C_2 \leq  \underset{r' \geq 0 \text{ }}{\sup} \frac{ r' + \rho_{0}}{  \mu  r'  + \omega_0 - \mu \frac{\rho_{0}}{2}}.
\end{equation*}
 It follows  that  there is a constant $C>1$, such that $A_{p_i,q_i}^{0,0} \leq C \mu^{-1} \lambda_{0} \rho_{0}^{-\frac12}  $, and the norm $\|\cdot \|_{\frac{1}{2}}$  of $V_{\underline{m,p,n,q}} (\vec{p}, \rho_0 z , r, \vec{l}, \underline{K}^{(M,N)} )$ is   bounded by
\begin{align*}
&  \|V_{\underline{m,p,n,q}} (\vec{p}, \rho_0 z) \|_{\frac{1}{2}} \leq C^{L+1}   \left(\frac{ \lambda_{0}}{\mu \rho_{0}^{1/2}} \right)^{L} \rho_{0}.
\end{align*}
Similar calculations for the partial derivatives show that 
\begin{align*}
\big \| \partial_r V_{\underline{m,p,n,q}} (\vec{p}, \rho_0 z) \big \|_{\frac{1}{2}} &\leq  \mu^{-1}(L+1) C^{L+1} \left(\frac{ \lambda_{0}}{\mu \rho_{0}^{1/2}} \right)^{L} \rho_{0} ,\\
\big \| \partial_{l_k} V_{\underline{m,p,n,q}} (\vec{p}, \rho_0 z) \big \|_{\frac{1}{2}} &\leq \mu^{-1} (L-1) C^{L+1}   \left(\frac{ \lambda_{0}}{\mu \rho_{0}^{1/2}} \right)^{L} \rho_{0},
\end{align*}
which implies that
\begin{equation}
\label{bbbb}
\rho_{0}^{\frac{3}{2}(M+N)-1} \big \| V_{\underline{m,p,n,q}} (\vec{p}, \rho_0 z) \big \|^{\sharp} \leq 5 \mu^{-1} (L+1) C^{L+1}   \rho_{0}^{\frac32 (M+N)} \left ( \mu^{-1}\rho_{0}^{-\frac12}  \lambda_{0} \right )^{L}.
\end{equation}
\vspace{1mm}

\subsection{Step 3: Bound on the norms $\|\cdot \|_{\xi}^{\sharp}$}

\noindent By Plugging (\ref{bbbb}) into \eqref{compliq}, we finally obtain  that
\begin{align*}
\| \underline{\tilde{w}}(\vec{p},z) \|_{\xi , \ge 1}^{\sharp} & \leq 5  \mu^{-1} \sum_{L=2}^{\infty} (L+1) C^{L+1} \left ( 4 \mu^{-1} \rho_{0}^{-\frac12}  \lambda_{0} \right )^{L}  \left(  \left(\sum_{m  \geq 0} \left(   \rho_{0}^{\frac32}  \xi^{-1} \right)^{m } \right)^{2}-1 \right) .
\end{align*}
If we assume that  $\xi >   \rho_{0}^{ \frac32}   $, we get the bound
\begin{equation}
\label{B20}
\| \underline{\tilde{w}}(\vec{p},z) \|_{\xi , \ge 1}^{\sharp} \leq 10 \mu^{-1}  C \frac{\rho_{0}^{3/2} \xi^{-1}}{(1- \rho_{0}^{3/2} \xi^{-1} )^2}  \sum_{L=2}^{\infty} (L+1) \left(4 \mu^{-1} C  \lambda_{0}  \rho_{0}^{-\frac12}   \right)^{L}.
\end{equation}
If $0\leq \lambda_0< \lambda_c(\mu)$, with
\begin{equation*}
\lambda_c(\mu):=   \frac{ \mu \rho_{0}^{\frac12} }{4C},
\end{equation*}
the series in \eqref{B20} converges.  Setting $\alpha:=\lambda_0/\lambda_c<1$, we find that
\begin{equation}
\label{cb}
\| \underline{\tilde{w}} (\vec{p},z) \|_{ \xi , \ge 1}^{\sharp} \leq  10 \mu^{-1} C \frac{\rho_{0}^{3/2} \xi^{-1}}{(1- \rho_{0}^{3/2} \xi^{-1} )^2}   \frac{ 3 \alpha^2}{(1-\alpha)^2}.
\end{equation}
The right-hand side of \eqref{cb} can be made arbitrary small by tuning  $\lambda_0$. To conclude the proof, it is sufficient to check that $\| \tilde{w} _{0,0} (\vec{p},z) \|^{\sharp}$ can also be made  as small as we wish by an appropriate tuning of $\lambda_0$. To this end, we have to give an upper bound for  the norm of  
\begin{equation*}
   \rho_{0}^{-1}  \sum_{L=2}^{\infty} (-1)^{L-1}  \underset{\tiny  \begin{array}{c} \underline{p} , \underline{q} \\  p_i+q_i=1 \end{array}}{\sum} V_{\underline{0,p,0,q}} (\vec{p},\rho_0 z,r, \vec{l}).
\end{equation*}
 Some easy calculations lead  to 
 \begin{equation}
 \label{ineq33}
\big \|  V_{\underline{0,p,0,q}} (\vec{p}, \rho_0 z) \big \|^{\sharp} \leq 5 \mu^{-1} (L+1) C^{L+1}   \rho_{0} ( \lambda_{0} \mu^{-1} \rho_{0}^{-\frac12}  )^{L}.
\end{equation}
We deduce from (\ref{ineq33}) that
\begin{equation*}
\Big \| \rho_{0}^{-1}  \sum_{L=2}^{\infty} (-1)^{L-1}  \underset{ \tiny \begin{array}{c} \underline{p} , \underline{q} \\  p_i+q_i=1 \end{array}}{\sum} V_{\underline{0,p,0,q}} (\vec{p}, \rho_0 z)  \Big \|^{\sharp} \leq  5 \mu^{-1} C \sum_{L=2}^{\infty} (L+1)    (2 \mu^{-1} \lambda_{0} \rho_{0}^{-\frac12}   )^{L}.
 \end{equation*}
For $\lambda_{0} < \mu \rho_{0}^{1/2}/2$, the series on the right-hand side converges, and   can be made as small as we wish by an appropriate  tuning  of $\lambda_0$.  All the estimates are uniform in $(\vec{p},z) \in U[\vec{p}^*] \times D_{\mu/2}$, and  we deduce that $H(\underline{w}^{(0)})$  can belong to any polydisc $\mathcal{B}(\frac{ \sqrt{3} \rho_{0}}{m}+\gamma,\delta,\varepsilon)$ if  the coupling constant $\lambda_0$ is small enough.

\section{Completion of the proof of Lemma \ref{compo}}
\label{lemacom}
 \subsection{Proof of the uniform convergence in $(\vec{p},z) \in U[\vec{p}^*] \times D_{\mu/2}$}
We carry out the steps that were sketched in Lemma \ref{compo}.  
As  $ \Psi_{n}$ is a two-particle state, $g_{n}(\vec{p},z)$ is the sum of three functions, $u_{n}(\vec{p},z)$, $v_{n}(\vec{p},z)$ and $w_{n}(\vec{p},z)$, defined by
 \begin{align*}
 u_{n}(\vec{p},z) &:= \langle \Psi_{n} \vert  \mathds{1}_{H_f \leq 1}    w_{0,0}( \vec{p},z, H_f,\vec{P}_f)   \mathds{1}_{H_f \leq 1}  \Psi_{n} \rangle,\\[4pt]
 v_{n}(\vec{p},z) &:= \langle \Psi_{n} \vert  \mathds{1}_{H_f \leq 1}    W_{1,1}(\underline{w}(\vec{p},z))   \mathds{1}_{H_f \leq 1}  \Psi_{n} \rangle,\\[4pt]
w_{n}(\vec{p},z) &:=  \langle \Psi_{n} \vert  \mathds{1}_{H_f \leq 1}    W_{2,2}(\underline{w}(\vec{p},z))   \mathds{1}_{H_f \leq 1}  \Psi_{n} \rangle.
\end{align*}
We first look at the function $u_n$. We introduce the set $$\mathcal{D}(n;p):=\lbrace (\underline{k}_1,...,\underline{k}_n,\underline{k}_{n+1},...,\underline{k}_{n+p}) \in \underline{B}_{1}^{n+p} \mid  \vert \vec{k_1}  \vert+...+  \vert \vec{k}_n  \vert \leq 1,  \vert \vec{k}_{n+1}  \vert+ ...+ \vert \vec{k}_{n+p}  \vert \leq 1 \rbrace$$
and use the notation $\mathcal{D}(n):=\mathcal{D}(n;0)$.
An easy application of the pull-through formula and the canonical commutation relations shows that
 \begin{equation}
 \label{77}
 \begin{split}
 u_{n}(\vec{p},z)&=\int_{ \mathcal{D}(2)} d \underline{k}_1  d\underline{k}_2    \text{ }  w_{0,0}(\vec{p},z, \vert \vec{k}_1 \vert +  \vert \vec{k}_2 \vert,  \vec{k}_1+ \vec{k}_2)  \text{ } \overline{\eta}_{n,\vec{l}}(\underline{k}_1) \overline{\eta}_{n,\vec{l}'}(\underline{k}_2)  \eta_{n,\vec{l}}(\underline{k}_2) \overline{\eta}_{n,\vec{l}'}(\underline{k}_1) \\
 &+  \int_{\mathcal{D}(2)}  d \underline{k}_1  d\underline{k}_2    \text{ }  w_{0,0}(\vec{p},z, \vert \vec{k}_1 \vert +  \vert \vec{k}_2 \vert,  \vec{k}_1+ \vec{k}_2)  \text{ } \vert \eta_{n,\vec{l}}(\underline{k}_1) \vert^{2}  \vert \eta_{n,\vec{l}'}(\underline{k}_2) \vert^{2}.
 \end{split} 
 \end{equation}
We assume that $\vec{l} \neq \vec{l}'$.  We show that the first term on the right-hand side of  \eqref{77} tends to zero when $n$ tends to infinity, uniformly in $(\vec{p},z)$.  We make the change of variables  $\vec{k'}_1:=n(\vec{k}_1- \vec{l})$, $\vec{k'}_2:= n (\vec{k}_2- \vec{l})$.  Let $R>0$ such that $\text{Supp}(\eta) \subset B_{R}$. As $H(\underline{w}) \in \mathcal{B}(\gamma,\delta,\varepsilon)$, there exists a constant $C(\gamma)>0$ so that the first term on the right-hand side of  \eqref{77}  is   bounded by 
\begin{equation}
\label{integrand}
 C(\gamma) \int_{ \underline{B}_{R}^{2}}  d \underline{k}'_1  d\underline{k}'_2 \text{ }  \vert \overline{\eta}(\vec{k'}_1)  \overline{\eta}(\vec{k'}_2+ n(\vec{l}-\vec{l'})) \eta(\vec{k}_2) \eta(\vec{k'}_1+ n(\vec{l}-\vec{l'}))\vert,
\end{equation}
for any $(\vec{p},z)\in U[\vec{p}^*] \times D_{\mu/2}$. The function $\eta$ has a compact support. Therefore, the integrand in equation \eqref{integrand} goes point-wise to zero for any $(\underline{k}'_1,\underline{k}'_2)$. Furthermore, the integrand in \eqref{integrand} is bounded because $\eta$ is smooth. Lebesgue dominated convergence theorem  implies that \eqref{integrand} goes to zero as $n$ tends  to infinity. In the case where $\vec{l}=\vec{l}'$, the two terms on the right-hand side of   \eqref{77} are similar, and it is therefore sufficient to analyze the second term. This term converges  to $ 4 w_{0,0}(\vec{p},z,\vert \vec{l} \vert + \vert \vec{l'} \vert ,\vec{l} +\vec{l'})$,uniformly in $(\vec{p},z) \in U[\vec{p}^*] \times D_{\mu/2}$, as  a direct consequence of Lebesgue convergence theorem. Indeed, the supremum of $\big \vert  \vec{\nabla}_{r,\vec{l}}  \text{ } w_{0,0}(\vec{p},z,r,\vec{l})  \big \vert$  over  $\mathcal{B}$ is uniformly bounded in $(\vec{p},z)$ and  $\mathcal{B}$ is convex. Therefore, there exists a constant $C>0$ such that $$\vert w_{0,0}(\vec{p},z,r,\vec{l})-w_{0,0}(\vec{p},z,r',\vec{l}') \vert \leq C \vert (r,\vec{l})-(r',\vec{l}') \vert$$ for all $(\vec{p},z) \in U[\vec{p}^*] \times D_{\mu/2}$ and all $ (r,\vec{l}),(r',\vec{l}') \in \mathcal{B}$. The uniform convergence in $(\vec{p},z)$ follows.  We deduce  that $ u_{n}(\vec{p},z)$ converges uniformly to $4   w_{0,0}(\vec{p},z,\vert \vec{l} \vert + \vert \vec{l'} \vert ,\vec{l} +\vec{l'}) +  4   w_{0,0}(\vec{p},z, \vert 2 \vec{l} \vert,  2 \vec{l}) \delta_{\vec{l}, \vec{l'}}$ on $U[\vec{p}^*] \times D_{\mu/2}$. 
\vspace{2mm}

We now show that the sequence of functions $(v_n)_n$ converges to zero uniformly in $(\vec{p},z) \in U[\vec{p}^*] \times D_{\mu/2}$. We remind the reader that
 \begin{align*}
  v_{n}(\vec{p},z) &:= \langle \Psi_{n} \vert  \mathds{1}_{H_f \leq 1}    W_{1,1}(\underline{w}(\vec{p},z))   \mathds{1}_{H_f \leq 1}  \Psi_{n} \rangle,\\
\end{align*}
 where $\Psi_{n}:= b^{*}( \eta_{n, \vec{l}}) b^{*}( \eta_{n, \vec{l}'}) \vert \Omega \rangle$. One has that 
 \begin{equation}
 \label{II2}
 \begin{split}
 v_{n}(\vec{p},z) &=\underset{ \mathcal{D}(2; 2)\times \underline{B}_{1}^{2}}{\int} d\underline{K}  \text{ }\overline{\eta}_{n, \vec{l}}(\underline{k}_1) \overline{\eta}_{n, \vec{l}'}(\underline{k}_2)  \eta_{n, \vec{l}}(\underline{k}_3)    \eta_{n, \vec{l}'}(\underline{k}_4) \text{ } \langle \Omega \vert b(\underline{k}_1) b(\underline{k}_2) b^{*}(\underline{k}_5)\\
 & \qquad \qquad \qquad \qquad  w_{1,1}(\vec{p},z,H_f,\vec{P}_f,\vec{k}_5,\vec{k}_6)   b(\underline{k}_6) b^{*}(\underline{k}_3) b^{*}(\underline{k}_4)   \Omega \rangle,
 \end{split}
 \end{equation}
 
 \noindent  where we have set  $d\underline{K}:=\prod_{i=1}^{6} d \underline{k}_i$. The pull-through formula, together with the canonical commutation relations, imply that $ v_{n}(\vec{p},z)= \sum_{i=1}^{4} v_{n;i}(\vec{p},z)$, where 
   \begin{align*}
 v_{n;1}(\vec{p},z)&= \int_{\mathcal{D}(2;1)}  d\underline{k}_1  d\underline{k}_2  d\underline{k}_3  \text{ }\overline{\eta}_{n, \vec{l}} (\underline{k}_1) \overline{\eta}_{n, \vec{l'}} (\underline{k}_2)  \eta_{n, \vec{l}}(\underline{k}_3)     \eta_{n, \vec{l'}}(\underline{k}_2)      \text{ }  w_{1,1}(\vec{p},z,\vert \vec{k}_2\vert, \vec{k}_2,\vec{k}_1,\vec{k}_3) \mathds{1}_{\vert \vec{k}_3 \vert + \vert \vec{k}_2 \vert \leq 1},  \\[4pt]
 v_{n;2}(\vec{p},z)&= \int_{\mathcal{D}(2;1)}  d\underline{k}_1  d\underline{k}_2  d\underline{k}_3  \text{ }\overline{\eta}_{n, \vec{l}} (\underline{k}_1)  \overline{\eta}_{n, \vec{l'}}(\underline{k}_2)  \eta_{n, \vec{l}}(\underline{k}_2)     \eta_{n, \vec{l'}}(\underline{k}_3)     \text{ }  w_{1,1}(\vec{p},z,\vert \vec{k}_2\vert, \vec{k}_2,\vec{k}_1,\vec{k}_3) \mathds{1}_{\vert \vec{k}_3 \vert + \vert \vec{k}_2 \vert \leq 1} ,  \\[4pt]
  v_{n;3}(\vec{p},z)&= \int_{\mathcal{D}(2;1)} d\underline{k}_1  d\underline{k}_2  d\underline{k}_3 \text{ } \overline{\eta}_{n, \vec{l}}(\underline{k}_1)  \overline{\eta}_{n, \vec{l'}}(\underline{k}_2)  \eta_{n, \vec{l}}(\underline{k}_3)     \eta_{n, \vec{l'}}(\underline{k}_1)   \text{ }   w_{1,1}(\vec{p},z,\vert \vec{k}_1\vert, \vec{k}_1,\vec{k}_2,\vec{k}_3)  \mathds{1}_{\vert \vec{k}_3 \vert + \vert \vec{k}_1 \vert \leq 1},  \\[4pt]
 v_{n;4}(\vec{p},z)&=\int_{\mathcal{D}(2;1)} d\underline{k}_1  d\underline{k}_2  d\underline{k}_3 \text{ } \overline{\eta}_{n, \vec{l}}(\underline{k}_1)  \overline{\eta}_{n, \vec{l'}}(\underline{k}_2)   \eta_{n, \vec{l}}(\underline{k}_1)     \eta_{n, \vec{l'}}(\underline{k}_3)  \text{ }  w_{1,1}(\vec{p},z,\vert \vec{k}_1\vert, \vec{k}_1,\vec{k}_2,\vec{k}_3)   \mathds{1}_{\vert \vec{k}_3 \vert + \vert \vec{k}_1 \vert \leq 1}.
 \end{align*}
\vspace{3mm}

We  use that $\eta$ has a compact support included in $ B_{R}(0)$, $R>0$. We only detail the upper bounds for $v_{n;1}$ and $v_{n;2}$. The reader can check that the bounds for $v_{n;3}$ and $v_{n;4}$ are similar. After  a well-suited change of variables, we find that, for $n$  large enough,
  \begin{align*}
 v_{n;1}(\vec{p},z)&=  n^{-3} \int_{\underline{B}_{R}^{3} }  d\underline{k}'_1  d\underline{k}'_2  d\underline{k}'_3  \text{ }\overline{\eta} (\underline{k}'_1)   \eta (\underline{k'}_3)     \vert \eta (\underline{k}'_2) \vert^{2}      \text{ } f_n(\vec{p},z,\underline{k}'_2;\underline{k}'_1,\underline{k}'_3),  \\
 v_{n;2}(\vec{p},z)&=    n^{-3}  \int_{\underline{B}_{R}^{3} }  d\underline{k}'_1  d\underline{k}'_2  d\underline{k}'_3  \text{ }\overline{\eta}  (\underline{k}'_1)  \overline{\eta}(\underline{k}'_2)  \eta(\underline{k}'_2 + n(\vec{l'}- \vec{l}))     \eta (\underline{k}'_3) \text{ } g_n(\vec{p},z,\underline{k}'_2;\underline{k}'_1,\underline{k}'_3),
 \end{align*}
 where we have  set 
\begin{equation*}
\begin{split}
f_n(\vec{p},z,\underline{k};\underline{k}',\underline{k}'')&=  w_{1,1}(\vec{p},z,\vert   n^{-1} \vec{k} + \vec{l'}\vert,  n^{-1} \vec{k}+  \vec{l'},  n^{-1} \vec{k'}  + \vec{l},  n^{-1} \vec{k''} + \vec{l}),\\
g_n(\vec{p},z,\underline{k};\underline{k}',\underline{k}'')&=  w_{1,1}(\vec{p},z,\vert   n^{-1} \vec{k} + \vec{l'}\vert,  n^{-1} \vec{k}+  \vec{l'},  n^{-1} \vec{k'}  + \vec{l},  n^{-1} \vec{k''} + \vec{l}').
\end{split}
\end{equation*}
$H(\underline{w})$ belongs to $\mathcal{B}(\gamma,\delta,\varepsilon)$. It follows that $\|w_{1,1}(\vec{p},z) \|_{\frac{1}{2}}\leq \varepsilon$ for any $(\vec{p},z)\in U[\vec{p}^*] \times D_{\mu/2}$, and 
  \begin{align*}
  \vert  v_{n;1}(\vec{p},z) \vert&  \leq  C n^{-3} \int_{\underline{B}_{R}^{3} }  d\underline{k}'_1  d\underline{k}'_2  d\underline{k}'_3  \text{ }\vert \overline{\eta} (\underline{k}'_1)   \eta (\underline{k'}_3)  \vert \text{ }   \vert \eta (\underline{k}'_2) \vert^{2},\\
   \vert  v_{n;2}(\vec{p},z) \vert  &\leq C n^{-3}  \int_{\underline{B}_{R}^{3} }  d\underline{k}'_1  d\underline{k}'_2  d\underline{k}'_3  \text{ } \vert \overline{\eta}  (\underline{k}'_1)  \overline{\eta}(\underline{k}'_2)  \eta(\underline{k}'_2 + n(\vec{l'}- \vec{l}))     \eta (\underline{k}'_3) \vert.
  \end{align*}

 The upper bounds go to zero when $n$ goes to infinity, uniformly in $\vec{p}$ and $z$. A similar procedure would show that $(w_n)_n$ converges uniformly to zero.

 \vspace{2mm}
 \subsection{Knowing $f(\vert \vec{x} \vert + \vert \vec{y} \vert, \vec{x} +  \vec{y})$ is sufficient to know $f(r, \vec{l})$ for any $(r,\vec{l}) \in \mathcal{B}$.}
We remind the reader that $\mathcal{B}$ has been defined in \eqref{BB}.
\begin{lemma}
 \label{7.1}
Let  $f: \mathcal{B} \rightarrow \mathbb{C}$  be a function. We suppose that for any $\vec{x}, \vec{y} \in B_{1}(0)$ such that $\vert \vec{x} \vert  + \vert  \vec{y} \vert \leq 1$, the value of $f(\vert \vec{x} \vert + \vert \vec{y} \vert, \vec{x} +  \vec{y})$ is known. Then the value of $f$ is known for any point of $\mathcal{B}$. 
 \end{lemma}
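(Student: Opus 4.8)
The plan is to reduce the statement to an explicit parametrization: I will show that every point $(r,\vec{l})\in\mathcal{B}$ can be represented in the form $\big(\,|\vec{x}|+|\vec{y}|\,,\ \vec{x}+\vec{y}\,\big)$ for a suitable pair of vectors $\vec{x},\vec{y}\in B_1(0)$ satisfying $|\vec{x}|+|\vec{y}|\le 1$. Once such a pair is produced, the hypothesis immediately gives $f(r,\vec{l})=f\big(|\vec{x}|+|\vec{y}|,\vec{x}+\vec{y}\big)$, and the right-hand side is a known quantity by assumption, which is precisely what has to be proved.

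First I would fix $(r,\vec{l})\in\mathcal{B}$, so that by the definition \eqref{BB} one has $|\vec{l}|\le r\le 1$. I would then pick a unit vector $\hat{l}$ adapted to $\vec{l}$: set $\hat{l}:=\vec{l}/|\vec{l}|$ if $\vec{l}\neq\vec{0}$, and $\hat{l}:=\vec{e}_1$ (any fixed unit vector) if $\vec{l}=\vec{0}$; in both cases $|\vec{l}|\,\hat{l}=\vec{l}$. With this choice I would define
\[
\vec{x}:=\frac{r+|\vec{l}|}{2}\,\hat{l},\qquad \vec{y}:=\frac{|\vec{l}|-r}{2}\,\hat{l}.
\]

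The verification is then routine. By construction $\vec{x}+\vec{y}=|\vec{l}|\,\hat{l}=\vec{l}$. Since $r\ge|\vec{l}|\ge 0$, the coefficient $\tfrac{r+|\vec{l}|}{2}$ is nonnegative and $\big|\tfrac{|\vec{l}|-r}{2}\big|=\tfrac{r-|\vec{l}|}{2}$, so $|\vec{x}|=\tfrac{r+|\vec{l}|}{2}$, $|\vec{y}|=\tfrac{r-|\vec{l}|}{2}$, and hence $|\vec{x}|+|\vec{y}|=r$. Moreover $|\vec{x}|=\tfrac{r+|\vec{l}|}{2}\le r\le 1$ and $|\vec{y}|=\tfrac{r-|\vec{l}|}{2}\le\tfrac r2\le 1$, so $\vec{x},\vec{y}\in B_1(0)$, while $|\vec{x}|+|\vec{y}|=r\le1$. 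Thus $(\vec{x},\vec{y})$ meets all the constraints appearing in the hypothesis, and $f(r,\vec{l})=f\big(|\vec{x}|+|\vec{y}|,\vec{x}+\vec{y}\big)$ is determined by the given data.

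I do not expect any genuine obstacle here: the statement is elementary. The only minor points of care are the degenerate case $\vec{l}=\vec{0}$, where the direction $\hat{l}$ has to be chosen by hand, and the sign bookkeeping in the identity $|\vec{y}|=\big|\tfrac{|\vec{l}|-r}{2}\big|=\tfrac{r-|\vec{l}|}{2}$, which relies on the inequality $r\ge|\vec{l}|$ — that is, exactly the defining constraint of the set $\mathcal{B}$, so the hypothesis $(r,\vec{l})\in\mathcal{B}$ is used in an essential (and the only) way.
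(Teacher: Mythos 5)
Your proposal is correct, and it is genuinely different from (and somewhat cleaner than) the paper's argument. The paper decomposes $\vec{l}$ as a sum of two \emph{orthogonal} vectors: it picks $\vec{x}\perp\vec{l}$ with $|\vec{x}|=\tfrac12\big(r-\tfrac{|\vec{l}|^2}{r}\big)$, sets $\vec{y}:=\vec{l}-\vec{x}$, and then computes $|\vec{y}|^2=|\vec{x}|^2+|\vec{l}|^2=\tfrac14\big(r+\tfrac{|\vec{l}|^2}{r}\big)^2$ to conclude $|\vec{x}|+|\vec{y}|=r$; it must then treat $\vec{l}=\vec{0}$ as a separate case (and tacitly also the boundary $|\vec{l}|=r$, where the chosen $\vec{x}$ degenerates to $\vec{0}$). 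You instead decompose $\vec{l}$ \emph{collinearly}, taking $\vec{x}$ and $\vec{y}$ to be scalar multiples of the same unit vector $\hat{l}$ with opposite signs balancing out to length $r$. The computations in your version are one-line identities with no square roots, the inequalities $|\vec{x}|,|\vec{y}|\le 1$ are immediate, and the edge cases $\vec{l}=\vec{0}$ and $|\vec{l}|=r$ are handled uniformly within the same formula (the latter giving $\vec{y}=\vec{0}$). Both constructions prove exactly the same surjectivity statement; yours is the more economical parametrization, while the paper's orthogonal construction is a bit more geometric in flavor but requires a case split. Either way the hypothesis $|\vec{l}|\le r$ is used in precisely one place, as you correctly note.
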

 
 \begin{proof}
 Let $(r_0, \vec{x}_0) \in \mathcal{B}$, with $\vec{x}_0 \neq \vec{0}$ and $ \vert    \vec{x}_0 \vert <r_0$. We  can always choose a vector $\vec{x} \in \mathbb{R}^{3}$ orthogonal to $\vec{x}_0$, such that 
 \begin{equation*}
 \vert    \vec{x} \vert  =\frac{1}{2} \left( r_0- \frac{\vert \vec{x}_0 \vert^2}{r_0} \right).
 \end{equation*}
 Then $ 0<\vert    \vec{x} \vert <r_0/2 \leq 1/2$. We consider the vector
 \begin{equation*}
  \vec{y}   : = \vec{x}_0-\vec{x}.
 \end{equation*}
 As $\vec{x}$ and $\vec{x}_0$ are orthogonal, 
 \begin{equation*}
 \vert \vec{y} \vert^2 =  \vert    \vec{x}_0 \vert^2 +\vert    \vec{x} \vert^2 =\frac{1}{4} \left( r_0- \frac{\vert \vec{x}_0 \vert^2}{r_0} \right)^2 + \vert    \vec{x}_0 \vert^2= \frac{1}{4} \left( r_0+ \frac{\vert \vec{x}_0 \vert^2}{r_0} \right)^2.
 \end{equation*}
 Therefore, $\vert \vec{y} \vert<r_0 \leq 1$, and 
 \begin{equation*}
 \vert    \vec{x} \vert +\vert    \vec{y} \vert=r_0.
 \end{equation*}
 We  have found two vectors $(\vec{x}, \vec{y})$ in the unit ball of $\mathbb{R}^3$ such that $\vec{x}+ \vec{y}=\vec{x}_0$ and $\vert \vec{x} \vert + \vert \vec{y}\vert=r_0$. Consequently, $f(r_0, \vec{x}_0)=f(\vert    \vec{x} \vert +\vert    \vec{y} \vert,\vec{x}+ \vec{y})$. The case $\vec{x}_0=\vec{0}$ is trivial, as we can always choose $\vec{x} \in B_1(0)$ of norm $\vert \vec{x} \vert =r_0/2$. Then, $\vert \vec{x} \vert +\vert -\vec{x} \vert=r_0.$ 
 \end{proof}

\vspace{3mm}

\section{Proof of Lemma \ref{contr}}
\label{AppD}
We prove Lemma \ref{contr} with the help of the explicit formulas given in Appendix \ref{Wicko}. This section is cut into three subsections.  We start with a lemma that bounds  the function $F$ defined in \eqref{Fi} and its derivatives. We then show that the norm $\|\cdot\|^{\sharp}$ of the function $V_{\underline{m,p,n,q}}$ -- see \eqref{Vmpnq} -- is finite and  uniformly bounded in $(\vec{p},z)$. Finally, we prove Lemma \ref{contr}.
\subsection{A lemma to bound  the kernels}
 \begin{lemma}
  \label{bou1}
There exists a constant $C_{\chi}>0$ such that, for any $0 < \rho < 1/2$, $0 < \gamma \ll \mu$, $\delta , \varepsilon > 0$, $H(\underline{w})( \cdot , \cdot ) \in \mathcal{B}(\gamma, \delta,\varepsilon)$, and $(\vec{p},z,r,\vec{l}) \in \mathcal{U}[ w_{0,0} ] \times \mathcal{B}$, the function $F$ defined in \eqref{Fi} satisfies the upper bounds:
  \begin{align}
&  \vert F (\vec{p},z,r,\vec{l}) \vert \leq \frac{C_{\chi}}{\mu \rho},  \quad   \vert  \partial_r F (\vec{p},z,r,\vec{l}) \vert  \leq \frac{C_{\chi}}{(\mu \rho)^2} ,  \quad  \vert  \partial_{l_j} F (\vec{p},z,r,\vec{l}) \vert  \leq \frac{C_{\chi}}{(\mu \rho)^2}. \label{bFi}
  \end{align}
  \end{lemma}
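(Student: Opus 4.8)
\textbf{Proof proposal for Lemma \ref{bou1}.}

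The plan is to reduce the bounds on $F$ and its derivatives to the functional calculus estimate \eqref{essbo} applied to explicit rational functions of $(r,\vec l)$, using that $H(\underline{w}) \in \mathcal{B}(\gamma,\delta,\varepsilon)$ and $(\vec{p},z) \in \mathcal{U}[w_{0,0}]$. Recall from \eqref{Fi} that $F(\vec{p},z,r,\vec{l}) = \overline{\chi}_\rho^2(r)/w_{0,0}(\vec{p},z,r,\vec{l})$. The function $\overline{\chi}_\rho^2$ is supported in $\{r \ge 3\rho/4\}$ (since $\chi_\rho(r)=1$ for $r \le 3\rho/4$), so on the support of $F$ we only need a lower bound on $|w_{0,0}(\vec{p},z,r,\vec{l})|$ for $r \ge 3\rho/4$, $|\vec{l}| \le r$. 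This lower bound is \emph{exactly} the estimate \eqref{eq:a10} already established in the proof of Lemma \ref{cbo}: for $\gamma \ll \mu$ one has $|w_{0,0}(\vec{p},z,r,\vec{l})| \ge r(\mu-\gamma) - \mu\rho/2 \ge (3\rho/4)(\mu - \gamma) - \mu\rho/2 \ge c\,\mu\rho$ for some numerical $c>0$. Thus $|F| \le C_\chi/(\mu\rho)$ follows from $\|\overline{\chi}_\rho^2\|_\infty \le 1$.

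For the derivatives, I would differentiate $F = \overline{\chi}_\rho^2(r)\,w_{0,0}^{-1}$ via the product and quotient rules:
\begin{align*}
\partial_r F &= 2\overline{\chi}_\rho(r)\overline{\chi}_\rho'(r)\,w_{0,0}^{-1} - \overline{\chi}_\rho^2(r)\,w_{0,0}^{-2}\,\partial_r w_{0,0}, \\
\partial_{l_j} F &= - \overline{\chi}_\rho^2(r)\,w_{0,0}^{-2}\,\partial_{l_j} w_{0,0}.
\end{align*}
On the support of $\overline{\chi}_\rho$ we have the lower bound $|w_{0,0}| \ge c\mu\rho$ just obtained, hence $|w_{0,0}^{-2}| \le C(\mu\rho)^{-2}$. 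Since $H(\underline{w}) \in \mathcal{B}(\gamma,\delta,\varepsilon)$, the bound $\|w_{0,0}(\vec{p},z,r,\vec{l}) - (r - m^{-1}\vec{p}\cdot\vec{l} + z)\|^\sharp \le \gamma$ together with $|\vec{p}| < (1-\mu)m < m$ and $\gamma \ll \mu$ gives $\|\partial_r w_{0,0}\|_\infty \le 1 + \gamma \le 2$ and $\|\partial_{l_j} w_{0,0}\|_\infty \le |p_j|/m + \gamma \le 2$, uniformly in $(\vec{p},z) \in \mathcal{U}[w_{0,0}] \subset U[\vec{p}^*]\times D_{\mu/2}$. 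Finally $\overline{\chi}_\rho'(r) = -\rho^{-1}\chi'(r/\rho)\chi(r/\rho)\overline{\chi}_\rho(r)^{-1}$ — or, more cleanly, one notes $2\overline{\chi}_\rho\overline{\chi}_\rho' = (\overline{\chi}_\rho^2)' = -(\chi_\rho^2)' = -2\rho^{-1}\chi(r/\rho)\chi'(r/\rho)$, which is bounded by $C\rho^{-1} \le C(\mu\rho)^{-1}\cdot\mu^{-1}\cdot\mu \le C(\mu\rho)^{-2}$ after using $\mu < 1$ (indeed $\mu = (m-|\vec{p}^*|)/2m < 1/2$). Collecting the three contributions yields $|\partial_r F| \le C_\chi(\mu\rho)^{-2}$ and $|\partial_{l_j} F| \le C_\chi(\mu\rho)^{-2}$, with $C_\chi$ depending only on $\|\chi'\|_\infty$ and numerical constants.

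I do not anticipate a genuine obstacle here; the lemma is routine once the lower bound \eqref{eq:a10} is invoked. The only point requiring a little care is the bookkeeping of powers of $\mu$ in the term coming from $\overline{\chi}_\rho'$, where one must use $\mu < 1$ to absorb the single power $\rho^{-1}$ into the claimed $(\mu\rho)^{-2}$; writing out $(\overline{\chi}_\rho^2)'$ directly (rather than $\overline{\chi}_\rho'$ in isolation) avoids any apparent singularity of $\overline{\chi}_\rho^{-1}$ near the boundary of its support. I would also remark that continuity of $F$ and its partial derivatives up to $\partial\mathcal{B}$ follows from the corresponding property of $w_{0,0} \in \mathcal{W}^\sharp_{0,0}$ and the fact that $w_{0,0}$ is bounded away from zero on the support of $\overline{\chi}_\rho^2$, so that $F \in \mathrm{C}^1(\mathcal{B})$ as needed for the subsequent estimates on $V_{\underline{m,p,n,q}}$.
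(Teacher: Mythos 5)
Your proof is correct and follows essentially the same route as the paper's: compute $\partial_r F$ and $\partial_{l_j} F$ via the quotient rule, invoke the lower bound \eqref{eq:a10} on $|w_{0,0}|$ on the support of $\overline{\chi}_\rho$, and control $\partial_r w_{0,0}$, $\partial_{l_j} w_{0,0}$ and $(\overline{\chi}_\rho^2)'$ directly. The only minor imprecision is in your statement of the polydisc condition — the paper normalizes by subtracting $w_{0,0}(\vec{p},z,0,\vec{0})$ rather than adding $z$ — but since these differ only by $(r,\vec{l})$-independent terms, the derivative bounds you draw from it are unaffected, and your observation about rewriting $2\overline{\chi}_\rho\overline{\chi}_\rho'=-(\chi_\rho^2)'$ is a clean way to sidestep any worry about differentiability of $\overline{\chi}_\rho$ at the edge of its support.
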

    \begin{proof}
 One has that
    \begin{equation}
    \label{eqF}
    \begin{split}
    \partial_{r} F(\vec{p},z,r,\vec{l}) &= 2 \rho^{-1}   \frac{  \overline{\chi}' (\rho^{-1}Êr ) \overline{\chi}_{\rho}(r)}{ w_{0,0}(\vec{p},z,r,\vec{l})} -  \frac{ (\partial_r w_{0,0})(\vec{p},z,r,\vec{l}) \text{ }  \overline{\chi}_{\rho}^{2}(r)  }{ w_{0,0}^{2}(\vec{p},z,r,\vec{l})},
    \end{split}
    \end{equation} 
  and
    \begin{equation}
    \partial_{l_j}  F(\vec{p},z,r,\vec{l})=-   \frac{ (\partial_{l_j} w_{0,0})(\vec{p},z,r,\vec{l}) \text{ }  \overline{\chi}_{\rho}^{2}(r )  }{ w_{0,0}^{2}(\vec{p},z,r,\vec{l})}.
    \end{equation}
For all $r \in [ \frac34 \rho , 1 ]$, the bound (\ref{eq:a10}) holds for the function $w_{0,0}(\vec{p},z,r,\vec{l})$, and hence $| F(\vec{p},z,r,\vec{l}) | \le C_\chi ( \rho \mu )^{-1}$. The other bounds also follow directly from  \eqref{eq:a10}. \end{proof}

\subsection{A bound for the norm of  $V$}
\vspace{2mm}

 \begin{lemma}
 \label{ctkim}
 Let $L \in \mathbb{N}$, and $\underline{m,p,n,q} \in \mathbb{N}^{4L}$ with $\sum_{i=1}^{L}m_i=M$,  $\sum_{i=1}^{L}n_i=N$.   $V_{\underline{m,p,n,q}}$ defined in \eqref{VVm} belongs to $\mathcal{W}_{M,N}$ and we have that
 %\small
 \begin{equation}
 \label{ccca}
 \rho^{\frac{3}{2}(M+N)-1} \big \|  V_{\underline{m,p,n,q}}(\vec{p},z) \big \|^{\sharp} \leq 5   (L+1) \mu^{-L} C_{\chi}^{L+1} \rho^{ 2(M+N) -L} \prod_{i=1}^{L}     (\sqrt{8 \pi})^{p_i+q_i} \big \| w_{m_i+p_i,n_i+q_i}(\vec{p},z)  \big \| ^{\sharp} ,
 \end{equation}
 \normalsize
 for all $( \vec{p} , z ) \in U [ \vec{p}^* ] \times D_{ \mu / 2 }$.
 \end{lemma}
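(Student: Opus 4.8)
The plan is to bound the norm $\| V_{\underline{m,p,n,q}}(\vec{p},z) \|^{\sharp}$ directly from its explicit expression \eqref{VVm}, exploiting the product structure of the vacuum expectation value together with the operator-norm estimates already at our disposal. First I would insert the factors $(H_f+\rho)^{-1/2}(H_f+\rho)^{1/2}$ around each building block, exactly as was done in \eqref{Vmpnq2} in the proof of Lemma \ref{polyd}, so that the expectation value becomes an alternating product of operators of the form $(H_f+\rho)^{-1/2} W_{p_i,q_i}^{m_i,n_i}(\rho\underline{K}_i^{(m_i,n_i)})(H_f+\rho)^{-1/2}$ and $(H_f+\rho)\,F(\vec{p},z,R_i,\vec{L}_i)$. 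For the $F$-factors, Lemma \ref{bou1} (applied after noting that multiplication by $(H_f+\rho)$ costs at most a factor $\mathcal{O}(1)$ on $\mathrm{Ran}(\overline{\chi}_\rho(H_f))$, since $\rho\le H_f+\rho\le$ const there up to the scaling) gives a bound of order $C_\chi/(\mu\rho)$ in norm, and of order $C_\chi/(\mu\rho)^2$ for each $r$- or $l_j$-derivative. For the $W_{p_i,q_i}^{m_i,n_i}$-factors, Lemma \ref{opboundl} (the operator bound for Wick monomials) converts $\| W_{p_i,q_i}^{m_i,n_i}(\underline K_i^{(m_i,n_i)})\|$ into $(8\pi)^{(p_i+q_i)/2}\| w_{m_i+p_i,n_i+q_i}\|_{1/2}$, up to the combinatorial prefactor $((m_i+p_i)!(n_i+q_i)!)^{-1/2}\le 1$ and the extra $(H_f+\rho)^{-1/2}$-insertions that contribute only $\mathcal{O}(1)$.

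Having bounded the operator appearing inside $\langle\Omega|\cdots|\Omega\rangle$, I would next account for the scaling. Because of the $\rho$-rescaling of every momentum argument and the prefactor $\rho^{\frac32(M+N)-1}$ in front, the bookkeeping is identical to the one carried out in \eqref{bbbb}: each pair of uncontracted legs $(m_i,n_i)$ contributes a factor $\rho^{2(m_i+n_i)}$ after rescaling the $\|\cdot\|_{1/2}$-norms, the $L-1$ internal $F$-factors contribute $(C_\chi/\mu\rho)^{L-1}$, the two $\chi_1$ cutoffs are harmless, and the remaining $\rho$-powers combine to $\rho^{2(M+N)-L}$ as claimed. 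To get the $\|\cdot\|^{\sharp}$-norm rather than just the $\|\cdot\|_{1/2}$-norm, I would differentiate \eqref{VVm} with respect to $r$ and to each $l_j$; by the Leibniz rule this produces a sum of $L$ terms (one derivative hitting each of the $L$ factors $W^{m_i,n_i}_{p_i,q_i}$ or the adjacent $F_i$, the $\chi_1$-derivatives again being harmless), and each such term is bounded by the same product with one factor $1/(\mu\rho)$ upgraded to $1/(\mu\rho)^2$ — i.e. an extra $(\mu\rho)^{-1}$, absorbed by one of the spare $\rho$'s and by allowing the overall constant to grow. Summing over which factor receives the derivative accounts for the $(L+1)$ in \eqref{ccca} (the ``$+1$'' giving a little slack), and summing the $\|\cdot\|_{1/2}$, $\|\partial_r\cdot\|_{1/2}$, $\sum_j\|\partial_{l_j}\cdot\|_{1/2}$ contributions explains the factor $5$.

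Finally, the membership $V_{\underline{m,p,n,q}}\in\mathcal{W}_{M,N}$ — i.e. that it defines, for each $(\vec{p},z)$, a kernel of class $\mathrm{C}^1(\mathcal{B})$ with values in $\mathcal{W}_{M,N}^{\sharp}$ and is symmetric in the appropriate variables — follows from the finiteness just established together with the observation that each ingredient ($w_{m_i+p_i,n_i+q_i}$, $F$, $\chi_1$) is $\mathrm{C}^1$ in $(r,\vec l)$ and that the formal derivatives converge, so differentiation under the integral/expectation is justified; symmetry in $\underline k^{(M)}$ and $\underline{\tilde k}^{(N)}$ is imposed explicitly by the later symmetrization $V^{\mathrm{sym}}$, so at this stage it suffices that $V_{\underline{m,p,n,q}}$ be measurable in the momentum variables. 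The step I expect to require the most care is the interplay between the $(H_f+\rho)^{\pm1/2}$-insertions and the rescaling $R_i = H_f+\rho(r+\tilde r_i)$, $\vec L_i=\vec P_f+\rho(\vec l+\vec{\tilde l}_i)$: one must check that the shifted arguments still lie in the region where the bound \eqref{eq:a10} on $w_{0,0}$ (hence Lemma \ref{bou1}) applies, which uses the cutoffs $\chi_1(r+\tilde r_i)$ forcing the total energy to stay $\le 1$ and the constraint $|\vec l|\le r$, $|\vec{\tilde l}_i|\le\tilde r_i$ keeping $(R_i,\vec L_i)$ in the functional-calculus region $\mathcal{R}$; this is exactly the estimate for $C_1,C_2$ performed in the proof of Lemma \ref{polyd}, and I would reuse it verbatim.
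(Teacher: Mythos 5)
Your overall roadmap matches the paper's proof in Appendix~\ref{AppD}: bound $|V_{\underline{m,p,n,q}}|$ pointwise by the product of operator norms of the factors inside the vacuum expectation, using Lemma~\ref{bou1} for the $F$'s and the operator bound of Lemma~\ref{opboundl} for the partial Wick monomials $W^{m_i,n_i}_{p_i,q_i}$; then track the $\rho$-scaling of the external momenta and the derivatives via Leibniz. The bookkeeping you describe for the factors $\rho^{2(M+N)-L}$, $(C_\chi/\mu\rho)^{L-1}$, $(\sqrt{8\pi})^{p_i+q_i}$, and the origin of the $(L+1)\mu^{-1}$ are all consistent with the paper's \eqref{ccca}.

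However, the opening step of your argument contains a genuine error. You propose to insert $(H_f+\rho)^{-1/2}(H_f+\rho)^{1/2}$ around each block ``exactly as was done in \eqref{Vmpnq2}'' and then claim the extra $(H_f+\rho)^{-1/2}$-insertions ``contribute only $\mathcal{O}(1)$''. This is false here, and the paper's proof of Lemma~\ref{ctkim} does \emph{not} make that insertion. The device was appropriate in Appendix~\ref{polydlem} because the factors there are the bare interaction $H_I$ with form factor $\propto|\vec k|^{1/2}$, for which the special relative bound \eqref{2.22}, $\|(H_f+\rho_0)^{-1/2}H_I(H_f+\rho_0)^{-1/2}\|\le C\rho_0^{-1/2}$, is available. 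In Lemma~\ref{ctkim} the kernels $w_{m_i+p_i,n_i+q_i}$ are generic elements of $\mathcal W^{\sharp}_{m,n}$, controlled only through Lemma~\ref{opboundl}, i.e.\ through $\|W_{m,n}(w_{m,n})\|$ directly, with no $(H_f+\rho)^{\pm1/2}$ factors involved. For such generic kernels the best you can say is $\|(H_f+\rho)^{-1/2}W_i(H_f+\rho)^{-1/2}\|\le\rho^{-1}\|W_i\|$ (since $\|(H_f+\rho)^{-1/2}\|=\rho^{-1/2}$), while the $(H_f+\rho)$ landing on each $F$ only returns $\mathcal{O}(1)$ (as $H_f+\rho\le 1+\rho$ on $\mathcal{H}_{\mathrm{red}}$), and there is no analogue of the explicit $\rho_0$-prefactor in \eqref{Vmpnq2} to absorb the endpoint factors $(H_f+\rho)^{-1/2}\Omega=\rho^{-1/2}\Omega$. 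The insertion therefore costs an uncompensated $\rho^{-L}$ and destroys the power of $\rho$ in \eqref{ccca}. The cure is simply to drop the insertion: bound $|V_{\underline{m,p,n,q}}|\le C_\chi^{L+1}(\rho\mu)^{-L+1}\prod_i\|\tilde W_i\|$ directly (this is what the paper does), and pass to kernel norms using Lemma~\ref{opboundl} together with $|\rho\underline K_i^{(m_i,n_i)}|^{1/2}=\rho^{(m_i+n_i)/2}|\underline K_i^{(m_i,n_i)}|^{1/2}$, which produces the $\rho^{(M+N)/2}$ and the $(\sqrt{8\pi})^{p_i+q_i}$.
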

\vspace{2mm}

 \begin{proof}
To shorten notations, we set
\begin{align*}
 \tilde{W}_i &:= W^{m_i,n_i}_{p_i,q_i}(\vec{p},z,\rho(r+r_i), \rho(\vec{l} + \vec{l}_i), \rho \underline{K}_{i}^{(m_i,n_i)} ), \\[4pt]
 \tilde{W}'_{i,r} &:= ( \partial_r W^{m_i,n_i}_{p_i,q_i} ) (\vec{p},z,\rho(r+r_i), \rho(\vec{l} + \vec{l}_i), \rho \underline{K}_{i}^{(m_i,n_i)} ), \\[4pt]
  \tilde{W}'_{i,l_j} & := ( \partial_{l_j} W^{m_i,n_i}_{p_i,q_i} ) (\vec{p},z,\rho(r+r_i), \rho(\vec{l} + \vec{l}_i), \rho \underline{K}_{i}^{(m_i,n_i)} ).
\end{align*}
We remind the reader that the operators $ W^{m_i,n_i}_{p_i,q_i}$ have been defined in \eqref{Wii}.  $\partial_rW^{m_i,n_i}_{p_i,q_i}$ and $\partial_{l_j}W^{m_i,n_i}_{p_i,q_i}$  have the same form as $ W^{m_i,n_i}_{p_i,q_i}$, with the  kernel  $w_{m_i+p_i,n_i+q_i}$ replaced by  $\partial_r w_{m_i+p_i,n_i+q_i}$ and $\partial_{l_j} w_{m_i+p_i,n_i+q_i}$, respectively. The  formulas (\ref{VVm}) and (\ref{bFi}) imply that 
 \begin{align*}
& \big \vert   V_{\underline{m,p,n,q}} (\vec{p},z,r, \vec{l}, \underline{K}^{(M,N)}) \big  \vert   \leq  C_{\chi}^{L+1}  \left(\rho \mu \right)^{ -L+1}     \prod_{i=1}^{L} \| \tilde{W}_i \| ,\\
& \big \vert    \partial_r  V_{\underline{m,p,n,q}} (\vec{p},z,r, \vec{l}, \underline{K}^{(M,N)})\big \vert     \leq (L+1) \mu^{-1} C_{\chi}^{ L+1}   \left(\rho \mu \right)^{ -L+1}   \prod_{i=1}^{L}   \Big( \| \tilde{W}_i \| +  \rho \big \| \tilde{W}'_{i,r}  \big \|  \Big)   , \\
& \big  \vert  \partial_{l_j}  V_{\underline{m,p,n,q}} (\vec{p},z,r, \vec{l}, \underline{K}^{(M,N)}) \big \vert    \leq (L+1)   \mu^{-1}  C_{\chi}^{ L+1}  \left(\rho \mu \right)^{-L+1}  \prod_{i=1}^{L}   \Big( \| \tilde{W}_i \| + \rho \big \| \tilde{W}'_{i,l_j} \big \| \Big).
  \end{align*}

Together with Lemma \ref{lm:Hbounded_injective}, the definition of the norm $\| \cdot \|^{\sharp}$ (see \eqref{sharp}--\eqref{def_norme_f_m,n}) implies that   
 \begin{align*}
& \big \|   V_{\underline{m,p,n,q}} (\vec{p},z)  \big \|_{\frac{1}{2}}   \leq  C_{\chi}^{L+1}  \left(\rho \mu \right)^{ -L+1}     \prod_{i=1}^{L}   (\sqrt{8 \pi})^{ p_i+q_i} \big  \| w_{m_i+p_i,n_i+q_i}(\vec{p},z)   \big \|^{\sharp} \rho^{\frac{M+N}{2}},\\[4pt]
& \big \| \partial_r  V_{\underline{m,p,n,q}} (\vec{p},z)\big \|_{\frac{1}{2}}   \leq (L+1) \mu^{-1} C_{\chi}^{ L+1}   \left(\rho \mu \right)^{ -L+1}        \prod_{i=1}^{L}   (\sqrt{8 \pi})^{ p_i+q_i} \big    \| w_{m_i+p_i,n_i+q_i}(\vec{p},z)   \big \|^{\sharp}  \rho^{\frac{M+N}{2}}, \\[4pt]
&  \big \|  \partial_{l_j}  V_{\underline{m,p,n,q}} (\vec{p},z) \big  \|_{\frac{1}{2}}     \leq (L+1)   \mu^{-1}  C_{\chi}^{ L+1}  \left(\rho \mu \right)^{-L+1}       \prod_{i=1}^{L}   (\sqrt{8 \pi})^{ p_i+q_i}  \big  \| w_{m_i+p_i,n_i+q_i}(\vec{p},z)   \big \|^{\sharp} \rho^{\frac{M+N}{2}}.
  \end{align*}
  \normalsize
 \end{proof}

\subsection{Proof of Lemma \ref{contr}}
 \begin{proof}  The rest of the proof of Theorem \ref{contr} is  similar to  \cite[Theorem 3.8]{BaChFrSi03_01}. Let $H ( \underline{w} ( \cdot , \cdot ) ) \in \mathcal{B}( \gamma , \delta , \varepsilon )$ and let $\underline{\hat{w}}$ be defined by \eqref{whatmn}. We have to show that $H ( \underline{\hat{w}} ( \cdot , \cdot ) ) \in \mathcal{B}( \gamma + \varepsilon / 2 , \varepsilon / 2 , \varepsilon / 2 )$.  It follows from \eqref{whatmn} that 
\begin{equation*}
\big \| \hat{w}_{M,N}( \vec{p},\zeta) \big \|^{\sharp}\leq     \sum_{L=1}^{\infty}    \rho^{\frac{3}{2}(M+N)-1} \underset{ \tiny \begin{array}{c} \underline{m, p,n,q} ,  \\ m_1+...+m_L=M, \\ n_1+...+n_L=N , \\  m_i+n_i+p_i+q_i \geq 1 \end{array}}{\sum}   C_{\underline{m},\underline{n}}^{\underline{m+p},\underline{n+q}}     \big \|V_{\underline{m,p,n,q}} (\vec{p},z) \big \|^{\sharp} ,
   \end{equation*}
where, recall, $( \vec{p}  , z ) = E_\rho^{-1} ( \vec{p} , \zeta )$. Since $H(\underline{w})  \in \mathcal{B}( \gamma , \delta , \varepsilon)$, we have that
\begin{equation*}
\big \| w_{m_i+p_i,n_i+q_i}(\vec{  p},z)  \big \|^{\sharp}  \leq  \varepsilon \xi^{m_i+p_i+n_i+q_i} 
\end{equation*}
for any  $m_i+p_i+n_i+q_i \ge 1$. Plugging this bound into \eqref{ccca}, we get that 
\begin{equation*}
\xi^{-(M+N)} \big \| \hat{w}_{M,N}( \vec{p},\zeta) \big \|^{\sharp} \leq   5 C_{\chi}  (2\rho)^{2(M+N)}   \sum_{L=1}^{\infty}    (L+1) \left( \frac{   \varepsilon  C_{\chi}}{\mu \rho} \right)^{L} \underset{ \tiny \begin{array}{c} \underline{m, p,n,q} , \\ m_1+...+m_L=M, \\ n_1+...+n_L=N , \\  m_i+n_i+p_i+q_i \geq 1 \end{array}}{\sum}    \prod_{i=1}^{L} \\ f(m_i,p_i,n_i,q_i),
\end{equation*}
where
\begin{equation*}
 f(m_i,p_i,n_i,q_i)=    ( 2 \sqrt{8 \pi}  \xi)^{   p_i+q_i}  \left(\frac{1}{2}\right)^{m_i+n_i}. 
\end{equation*}
\vspace{1mm}

\noindent  It holds that 
\begin{equation*}
\begin{split}
  \sum_{M+N \leq K}  &  \underset{ \tiny \begin{array}{c} \underline{m, p,n,q} , \\ m_1+...+m_L=M, \\ n_1+...+n_L=N , \\  m_i+n_i+p_i+q_i \geq 1 \end{array}}{\sum}    \prod_{i=1}^{L}  f(m_i,p_i,n_i,q_i)\leq   \sum_{M,N \leq K} \underset{ \tiny \begin{array}{c} \underline{m, p,n,q} , \\ m_1+...+m_L=M, \\ n_1+...+n_L=N , \\  m_i+n_i+p_i+q_i \geq 1 \end{array}}{\sum}    \prod_{i=1}^{L}  f(m_i,p_i,n_i,q_i)  \\
&     \leq     \prod_{i=1}^{L}  \Big( \underset{ \tiny \begin{array}{c}  p_i,q_i   \\ m_i  \leq K, \\ n_i \leq K , \\  m_i+n_i+p_i+q_i \geq 1 \end{array}}{\sum}   f(m_i,p_i,n_i,q_i)   \Big) 
\end{split}
\end{equation*}
for any $K \in \mathbb{N}$.  Since $2 \sqrt{8 \pi}  \xi<1/2$ by assumption, $ f(m_i,p_i,n_i,q_i) <  \left(\frac{1}{2}\right)^{m_i+n_i+p_i+q_i}$, and  taking the limit $K \rightarrow \infty$, we finally get that 
\begin{equation}
\label{cleq2}
\big \| \hat{w}( \vec{p},\zeta) \big \|^{\sharp}_{\xi,\geq 1} \leq   20 C_{\chi}   \rho^2       \sum_{L=1}^{\infty}    (L+1) \left( \frac{    16 \text{ }   \varepsilon C_{\chi}}{\mu \rho} \right)^{L}.
\end{equation}
Choosing $\varepsilon \ll \rho \mu$, we obtain
\begin{equation}
\label{cfini}
\big \| \hat{w}( \vec{p},\zeta) \big \|^{\sharp}_{\xi , \ge 1} \leq  C \frac{  \rho  \varepsilon  }{  \mu } . 
\end{equation}
\eqref{cfini}  is  smaller than $\varepsilon/2$  if $\rho \ll \mu$.
\vspace{2mm}

 Next, writing 
$\hat{w}(\vec{p}, \zeta , r , \vec{l} )=: \hat{t} (\vec{p}, \zeta , r , \vec{l} ) +\hat{w}(\vec{p}, \zeta , 0 , \vec{0})$, we have to show that for any $(\vec{p},\zeta) \in U[\vec{p}^*] \times D_{\mu/2}$,
\begin{align*}
  \big \| \hat{t}(\vec{p}, \zeta , r , \vec{l} ) - ( r - m^{-1}  \vec{p} \cdot \vec{l} ) \big \|^{\sharp} \le \gamma + \varepsilon / 2 ,
 \end{align*}
and that $|\hat{w}(\vec{p}, \zeta , 0 , \vec{0}) + \zeta | \le \varepsilon / 2$. We have that
\begin{equation}
 \label{hatw0}
\hat{w}(\vec{p}, \zeta , r , \vec{l} ) = \rho^{-1} w_{0,0}(\vec{p},z, \rho r , \rho \vec{l}) + \rho^{-1} \sum_{L=2}^{\infty} (-1)^{L-1} \underset{\underset{p_i+q_i \geq 1}{ \underline{p} , \underline{q} }}{\sum} V_{\underline{0,p,0,q}} (\vec{p},z,r,\vec{l}),
 \end{equation}
with $(\vec{p},\zeta) = E_\rho(\vec{p},z)$. Similar estimates as in Lemma \ref{ctkim} lead us to 
 \begin{equation}
 \rho^{-1} \big \|   V_{\underline{0,p,0,q}}(\vec{p},z) \big \|^{\sharp}  \leq 5 (L+1) C_{\chi}^{L+1} (\rho \mu)^{-L} \prod_{i=1}^{L} (\sqrt{ 8 \pi})^{p_i+q_i} \|  w_{ p_i, q_i}(\vec{p},z)   \|^{\sharp}.
 \end{equation}
 We set
 \begin{align}
 A( \zeta ) :=& \big \| \hat{t} (\vec{p}, \zeta , r , \vec{l} )  - ( r - m^{-1} \vec{p} \cdot \vec{l} ) \big \|^{\sharp} \notag \\
 =&\underset{  (r,\vec{l})\in \mathcal{B} }{\sup} \vert \partial_r \hat{w}_{0,0}(\vec{p}, \zeta , r , \vec{l} )  - 1 \vert + \sum_{j=1}^{3} \underset{ (r,\vec{l})\in \mathcal{B} }{\sup} \Big \vert \partial_{l_j} \hat{w}_{0,0}(\vec{p}, \zeta , r , \vec{l} )  + \frac{p_j }{m} \Big \vert.
 \end{align}
From (\ref{hatw0}), we deduce that 
 \begin{align*}
 A( \zeta ) & \leq  \underset{  (r,\vec{l})\in \mathcal{B} }{\sup} \vert \partial_r w_{0,0}(\vec{p},z , r , \vec{l} )  - 1 \vert + \sum_{j=1}^{3} \underset{  (r,\vec{l})\in \mathcal{B} }{\sup} \Big \vert \partial_{l_j} w_{0,0}(\vec{p}, z, r , \vec{l} )  + \frac{p_j}{m} \Big \vert \\[4pt]
 &\quad + \rho^{-1} \sum_{L=2}^{\infty}   \underset{\underset{p_i+q_i \geq 1}{ \underline{p} , \underline{q} }}{\sum} \left(  \underset{  (r,\vec{l})\in \mathcal{B} }{\sup} \vert \partial_r V_{\underline{0,p,0,q}} (\vec{p},z,r,\vec{l})   \vert +  \sum_{j=1}^{3}  \underset{  (r,\vec{l})\in \mathcal{B} }{\sup} \vert \partial_{l_j} V_{\underline{0,p,0,q}} (\vec{p},z , r , \vec{l} )    \vert   \right)\\[4pt]
 & \leq  \gamma +    \rho^{-1} \sum_{L=2}^{\infty}    \text{ } \underset{\underset{p_i+q_i \geq 1}{ \underline{p} , \underline{q} }}{\sum} \big \| V_{\underline{0,p,0,q}} (\vec{p},z) \big \|^{\sharp} \\[4pt]
 & \leq  \gamma +   5 \sum_{L=2}^{\infty}   (L+1) C_{\chi}^{L+1} (\rho \mu)^{-L} \left( \sum_{p+q \geq 1}   (\sqrt{8 \pi})^{p+q} \|  w_{ p , q}(\vec{p},z)  \|^{\sharp} \right)^{L}\\[4pt]
  & \leq  \gamma +   5 \sum_{L=2}^{\infty}   (L+1) C_{\chi}^{L+1} (\rho \mu)^{-L} \left(  \sqrt{8 \pi}  \xi \|  \underline{w}(\vec{p},z)    \|_{ \xi , \ge 1}^{\sharp} \right)^{L} \\[4pt]
  & \leq  \gamma +   5  C_{\chi} \sum_{L=2}^{\infty}   (L+1) \left( \frac{C_{\chi}    \xi       \varepsilon}{\rho \mu} \right)^{L} \leq \gamma  + 60   C_{\chi}   \left( \frac{C_{\chi}  \xi       \varepsilon}{\rho \mu} \right)^{2} \leq \gamma + \frac{\varepsilon}{2},
 \end{align*}
 where we used that $\sum_{L=2}^{\infty} (L+1) a^{L} \leq 12 a^{2}$ for $a \leq 1/2$,  $\varepsilon \ll \rho^2 \mu^2$, and  absorbed $\sqrt{8 \pi}$ into $C_{\chi}$.  To finish the proof of Theorem \ref{contr},  we need to check that $| w_{0,0}(\vec{p},\zeta,0, \vec{0})  + \zeta | \le \varepsilon / 2$. By definition,
 \begin{equation}
 \zeta=  -w_{0,0}(\vec{p},z,0, \vec{0}) \rho^{-1},
 \end{equation}
 which together with equation (\ref{hatw0}) and the above calculation, implies that
 \begin{equation}
\vert w_{0,0}(\vec{p},\zeta,0, \vec{0})  + \zeta \vert \leq \frac{\varepsilon}{2}.
 \end{equation}

 \end{proof}

\end{appendix}

\nocite{*}
\bibliographystyle{plain}
\bibliography{main}

\begin{thebibliography}{10}

\bibitem{AbHa12_01}
A.~Abdesselam and D.~Hasler.
\newblock Analyticity of the ground state energy for massless {N}elson models.
\newblock {\em Comm. Math. Phys.}, 310(2):511--536, 2012.

\bibitem{AmGrGu06_01}
L.~Amour, B.~Gr{\'e}bert, and J.-C. Guillot.
\newblock The dressed mobile atoms and ions.
\newblock {\em Journal Math. Pures et Appl.}, 86(3):177--200, 2006.

\bibitem{faupin}
V.~Bach, T.~Chen, J.~Faupin, J.~Fr{\"o}hlich, and I.~M. Sigal.
\newblock Effective dynamics of an electron coupled to an external potential in
  non-relativistic qed.
\newblock {\em Ann. Henri Poincar{\'e}}, 14(6):1573--1597, 2013.

\bibitem{BaChFrSi03_01}
V.~Bach, T.~Chen, J.~Fr{\"o}hlich, and I.~M. Sigal.
\newblock Smooth {F}eshbach map and operator-theoretic renormalization group
  methods.
\newblock {\em J. Func. Anal.}, 203(1):44--92, 2003.

\bibitem{BaChFrSi07_01}
V.~Bach, T.~Chen, J.~Fr{\"o}hlich, and I.~M. Sigal.
\newblock The renormalized electron mass in non-relativistic quantum
  electrodynamics.
\newblock {\em J. Func. Anal.}, 243(2):426--535, 2007.

\bibitem{BaFrSi98_02}
V.~Bach, J.~Fr{\"o}hlich, and I.~M. Sigal.
\newblock Quantum {E}lectrodynamics of {C}onfined {N}onrelativistic
  {P}articles.
\newblock {\em Adv. Math.}, 137(2):299--395, 1998.

\bibitem{BaFrSi98_01}
V.~Bach, J.~Fr{\"o}hlich, and I.~M. Sigal.
\newblock Renormalization {G}roup {A}nalysis of {S}pectral {P}roblems in
  {Q}uantum {F}ield {T}heory.
\newblock {\em Adv. Math.}, 137(2):205--298, 1998.

\bibitem{BarlDrag}
E.~Barletta and S.~Dragomir.
\newblock Vector valued holomorphic functions.
\newblock {\em B. Math. Soc. Sci. Math.}, 52:211--226, 2009.

\bibitem{Bratteli}
O.~Bratteli and D.W. Robinson.
\newblock {\em Operator Algebras and Quantum Statistical Mechanics 2:
  Equilibrium States. Models in Quantum Statistical Mechanics}.
\newblock Springer, 1981.

\bibitem{Ch08_01}
T.~Chen.
\newblock Infrared renormalization in non-relativistic {QED} and scaling
  criticality.
\newblock {\em J. Func. Anal.}, 254(10):2555--2647, 2008.

\bibitem{ChFr07_01}
T.~Chen and J.~Fr{\"o}hlich.
\newblock Coherent infrared representations in non-relativistic {QED}, in
  spectral theory and mathematical physics: {A} festschrift in honor of {B}arry
  {S}imon's 60th {B}irthday.
\newblock {\em Proc. Sympos. Pure Math.}, 76:25--45, 2007.

\bibitem{ChFrPi09_01}
T.~Chen, J.~Fr{\"o}hlich, and A.~Pizzo.
\newblock Infraparticle scattering states in non-relativistic {QED}. {II}.
  {M}ass shell properties.
\newblock {\em J. Math. Phys.}, 50(1), 2009.

\bibitem{ChFrPi10_01}
T.~Chen, J.~Fr{\"o}hlich, and A.~Pizzo.
\newblock Infraparticle scattering states in non-relativistic {QED}: {I}. {T}he
  {B}loch-{N}ordsieck paradigm.
\newblock {\em Comm. Math. Phys.}, 294(3):761--825, 2010.

\bibitem{photons_atoms}
C.~Cohen-Tannoudji, J.~Dupont-Roc, and G.~Grynberg.
\newblock {\em Photons and Atoms-Introduction to Quantum Electrodynamics}.
\newblock Wiley, 1997.

\bibitem{Ceren}
W.~De~Roeck, J.~Fr{\"o}hlich, and A.~Pizzo.
\newblock Absence of embedded mass shells: Cerenkov radiation and quantum
  friction.
\newblock {\em Ann. Henri Poincar\'e}, 11(8):1545--1589, 2010.

\bibitem{DyPi13_01}
W.~Dybalski and A.~Pizzo.
\newblock Coulomb scattering in the massless nelson model ii. regularity of
  ground states.
\newblock {\em arXiv:1302.5012}, 2013.

\bibitem{faupin2008}
J.~Faupin.
\newblock Resonances of the confined hydrogen atom and the lamb--dicke effect
  in non-relativistic qed.
\newblock {\em Ann. Henri Poincar{\'e}}, 9(4):743--773, 2008.

\bibitem{FaFrSc13_01}
J.~Faupin, J.~Fr{\"o}hlich, and B.~Schubnel.
\newblock On the probabilistic nature of quantum mechanics and the notion of
  closed systems.
\newblock {\em arXiv:1407.2965}, 2014.

\bibitem{Froehlich}
J.~Fr{\"o}hlich.
\newblock On the infrared problem in a model of scalar electrons and massless,
  scalar bosons.
\newblock {\em Ann. Henri Poincar\'e}, 19(1):1--103, 1973.

\bibitem{FrGrSc04_01}
J.~Fr{\"o}hlich, M.~Griesemer, and B.~Schlein.
\newblock Asymptotic completeness for {C}ompton scattering.
\newblock {\em Comm. Math. Phys.}, 252(1-3):415--476, 2004.

\bibitem{FrGrSc07_01}
J.~Fr{\"o}hlich, M.~Griesemer, and B.~Schlein.
\newblock Rayleigh scattering at atoms with dynamical nuclei.
\newblock {\em Comm. Math. Phys.}, 271(2):387--430, 2007.

\bibitem{FrGrIr08}
J.~Fr{\"o}hlich, M.~Griesemer, and I.M. Sigal.
\newblock On spectral renormalization group.
\newblock {\em Rev. Math. Phys.}, 21(04):511--548, 2009.

\bibitem{FrPi10_01}
J.~Fr{\"o}hlich and A.~Pizzo.
\newblock Renormalized electron mass in nonrelativistic {QED}.
\newblock {\em Comm. Math. Phys.}, 294(2):439--470, 2010.

\bibitem{GrHa08_01}
M.~Griesemer and D.~Hasler.
\newblock On the smooth {F}eshbach-{S}chur map.
\newblock {\em J. Func. Anal.}, 254(9):2329--2335, 2008.

\bibitem{GrHa09_01}
M.~Griesemer and D.~Hasler.
\newblock Analytic perturbation theory and renormalization analysis of matter
  coupled to quantized radiation.
\newblock {\em Ann. Henri Poincar\'e}, 10(3):577--621, 2009.

\bibitem{HaHe08_01}
D.~Hasler and I.~Herbst.
\newblock Absence of ground states for a class of translation invariant models
  of non-relativistic {QED}.
\newblock {\em Comm. Math. Phys.}, 279(3):769--787, 2008.

\bibitem{HaHe10_01}
D.~Hasler and I.~Herbst.
\newblock Convergent expansions in non-relativistic qed: Analyticity of the
  ground state.
\newblock {\em J. Func. Anal.}, 261(4):3119--3154, 2011.

\bibitem{HaHe11_01}
D.~Hasler and I.~Herbst.
\newblock Ground states in the spin boson model.
\newblock {\em Ann. Henri Poincar\'e}, 12(4):621--677, 2011.

\bibitem{HaHe12_01}
D.~Hasler and I.~Herbst.
\newblock Uniqueness of the ground state in the feshbach renormalization
  analysis.
\newblock {\em Lett. Math. Phys.}, 100(2):171--180, 2012.

\bibitem{Kato}
T.~Kato.
\newblock {\em Perturbation theory for linear operators}.
\newblock Springer Verlag, 1995.

\bibitem{LoMiSp07_01}
M.~Loss, T.~Miyao, and H.~Spohn.
\newblock Lowest energy states in nonrelativistic {QED}: {A}toms and ions in
  motion.
\newblock {\em J. Func. Anal.}, 243(2):353--393, 2007.

\bibitem{Mo05_01}
J.~S. M{\o}ller.
\newblock The translation invariant massive {N}elson model: {I}. {T}he bottom
  of the spectrum.
\newblock {\em Ann. Henri Poincar{\'e}}, 6(6):1091--1135, 2005.

\bibitem{Pi03_01}
A.~Pizzo.
\newblock One-particle (improper) states in {N}elson's massless model.
\newblock {\em Ann. Henri Poincar{\'e}}, 4(3):439--486, 2003.

\bibitem{Pi05_01}
A.~Pizzo.
\newblock Scattering of an {\it {i}nfraparticle}: the one particle sector in
  {N}elson's massless model.
\newblock {\em Ann. Henri Poincar\'e}, 6(3):553--606, 2005.

\bibitem{Ru}
Walter Rudin.
\newblock Functional analysis. international series in pure and applied
  mathematics, 1991.

\bibitem{Vlad}
V.~S. Vladimirov.
\newblock {\em Methods of the theory of functions of many complex variables}.
\newblock Dover, 2007.

\end{thebibliography}

\end{document}